\newtheorem{theorem}{Theorem}[section]
\newtheorem{assumption}[theorem]{Assumption}
\newtheorem{lemma}[theorem]{Lemma}
\newtheorem{remark}[theorem]{Remark}
\newtheorem{example}[theorem]{Example}
\newtheorem{definition}{Definition}[section]
\newtheorem{corollary}[theorem]{Corollary}
\newtheorem{fact}[theorem]{Fact}
\newtheorem{conjecture}[theorem]{Conjecture}
\newcommand{\Z}{\mathbb Z}
\newcommand{\R}{\mathbb R}
\newcommand{\C}{\mathbb C}
\newcommand{\T}{\mathbb T}
\newcommand{\N}{\mathbb N}
\newcommand{\Q}{\mathbb Q}
\newcommand{\hv}{\hat{v}}
\def\mes{\mathrm{mes}}
\definecolor{deepgreen}{cmyk}{1,0,1,0.5}
\title[Localization on strip and ARC]{Non-perturbative localization on the strip and Avila's almost reducibility conjecture}
\author[R.\ Han]{Rui Han}
\address{Department of Mathematics \\ Louisiana State University  \\  Baton Rouge, LA 70803, USA}
\email{rhan@lsu.edu}
\author[W.\ Schlag]{Wilhelm Schlag}
\address{Department of Mathematics \\ Yale University \\ New Haven, CT 06511, USA}
\email{wilhelm.schlag@yale.edu}
\thanks{
R.\ Han is partially supported by NSF DMS-2143369. 
W.\ Schlag is partially supported by NSF grant DMS-1902691.
}
\begin{document}
\begin{abstract}
We prove non-perturbative Anderson localization and almost localization for a family of quasi-periodic operators on the strip. As an application we establish Avila's almost reducibility conjecture for Schr\"odinger operators with trigonometric potentials and all Diophantine frequencies, whose proof for analytic potentials was announced in 2015 in \cite{Global}.
As part of our analysis, we derive a non-selfadjoint version of Haro and Puig's formula connecting Lyapunov exponents of the dual model to those of the original operator.
\end{abstract}

\thanks{}

\maketitle

\section{Introduction}
In the first part of this paper, we prove non-perturbative Anderson localization for a family of one-dimensional finite-range quasi-periodic Schr\"odinger operators, which can also be viewed as an operator on the strip $\Z\times \{1,...,d\}$ or block-valued quasi-periodic Jacobi matrices.
Here non-perturbative refers to the positive Lyapunov exponent (half of exponents of the corresponding symplectic cocycles are positive) regime.
We also prove almost localization all $\theta$ and any Diophantine $\alpha$ (those satisfy $\beta(\alpha)=0$, see \eqref{def:beta}).
This together with the dual (almost) reducibility techniques developed by Avila-Jitomirskaya in \cite{AJ2}, proves Avila's almost reducibility conjecture (ARC) in \cite{Global} for Schr\"odinger operators with trigonometric potentials and all Diophantine $\alpha$'s. 
Combined with Avila's proof of ARC for Liouville $\alpha$ (those satisfying $\beta(\alpha)>0$), this gives a complete proof of ARC for trigonometric potentials.

Let 
\begin{align}
(H_{\alpha,\theta,\varepsilon}u)_n=u_{n+1}+u_{n-1}+v(\theta+n\alpha+i\varepsilon)u_n,
\end{align}
$v:\T=\R/\Z\to \R$ analytic, be a quasi-periodic Schr\"odinger operator on $\Z$. 
Let
\begin{align}\label{def:AE}
A_{E,\varepsilon}(\theta)=
\left(\begin{matrix} 
E-v(\theta+i\varepsilon) &-1\\
1 &0
\end{matrix}\right)
\end{align}
be the transfer matrix, and $(\alpha, A_{E,\varepsilon})$ be the associated cocycle. Let $L(\alpha, A_{E,\varepsilon})$ be the Lyapunov exponent. 
In his seminal work \cite{Global}, Avila introduced the {\it acceleration}, defined as:
\begin{align}\label{def:acc}
\kappa(\alpha, A_{E,\varepsilon})=\lim_{\delta\to 0^+} \frac{L(\alpha, A_{E,\varepsilon+\delta})-L(\alpha, A_{E,\varepsilon})}{2\pi \delta}.
\end{align}
Furthermore, he proved the acceleration is quantized, meaning $\kappa(\alpha, A_{E,\varepsilon})\in \Z$.
Using the quantized acceleration, Avila divides the spectrum $\sigma(H_{\alpha,\theta,0})=\sigma(H_{\alpha})$ (we restrict to the irrational $\alpha$ case, for which the spectrum is independent of $\theta$) into three regimes: 
the supercritical regime $$\Sigma_{\alpha,+}:=\{E\in \sigma(H_{\alpha}):\, L(\alpha, A_{E,0})>0, \text{ and } \kappa(\alpha, A_{E,0})>0\},$$ the subcritical regime $$\Sigma_{\alpha,0}:=\{E\in \sigma(H_{\alpha}):\, L(\alpha, A_{E,0})=0, \text{ and } \kappa(\alpha, A_{E,0})=0\},$$ and the critical regime $$\Sigma_{\alpha,c}:=\{E\in \sigma(H_{\alpha}):\, L(\alpha, A_{E,0})=0, \text{ and } \kappa(\alpha, A_{E,0})>0\}.$$

In the supercritical regime, non-perturbative proofs of Anderson localization have been developed for the Harper operator with all $\alpha\in \mathrm{DC}$\footnote{$\mathrm{DC}=\bigcup_{c>0}\bigcup_{b>1}\{\alpha\in \T:\, \|n\alpha\|_{\T}\geq c\, |n|^{-b}, n\neq 0\}\subsetneq \{\alpha\in \T:\, \beta(\alpha)=0\}$} in \cite{J99} (where $v(\theta)=2\lambda\cos(2\pi\theta)$ with $|\lambda|>1$), for general analytic potentials with a.e.\ $\alpha$ in \cite{BG}, and for all $\alpha\in\mathrm{DC}$ in the first supercritical stratum (where $\kappa(\alpha, A_{E,0})=1$) in \cite{HS2}.
In the critical regime, the spectrum is conjectured to be singular continuous (see Conjecture 3.1 in \cite{AJM}). This conjecture has been proved for the critical Harper operator (where $v(\theta)=2\cos(2\pi\theta)$) in \cites{A_c,AJM,J_c}.

In the subcritical regime, Avila formulated the almost reducibility conjecture (ARC), which we shall elaborate on below.
A cocycle $(\alpha,A)$ is called (analytically) {\it reducible} if there exists $D\in C^{\omega}(\T, \mathrm{PSL}(2,\R))$ such that the analytic conjugacy $D^{-1}(\theta+\alpha) A(\theta) D(\theta)$ is a  constant.
A cocycle is called {\it almost reducible} if the closure of its analytic conjugacies contains a constant. 
\begin{conjecture}[ARC by Avila, \cite{Global}]
Subcritical cocycles are almost reducible.
\end{conjecture}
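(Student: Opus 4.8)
\emph{Overall strategy.} The plan is to prove the conjecture for trigonometric potentials $v(\theta)=\sum_{|k|\le d}\hat v(k)e^{2\pi i k\theta}$; combined with Avila's periodic-approximation argument for the Liouville regime $\beta(\alpha)>0$ this settles all trigonometric $v$, and the general analytic case is Avila's announcement in \cite{Global}. So the task is to show that a subcritical cocycle $(\alpha,A_{E,0})$ with $v$ trigonometric and $\alpha$ Diophantine ($\beta(\alpha)=0$) is almost reducible. The route is Aubry duality in the form developed by Avila--Jitomirskaya \cite{AJ2}: \emph{almost localization of the Fourier-dual operator, uniformly in the phase, is equivalent to almost reducibility of the cocycle}. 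For trigonometric $v$ the dual of $(\alpha,A_{E,0})$ is a finite-range quasi-periodic operator on $\Z$ --- equivalently a block Jacobi matrix on the strip $\Z\times\{1,\dots,d\}$ --- so the problem reduces to proving non-perturbative almost localization for this family of strip operators at every phase, and then transcribing it back to a conjugacy.

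\emph{Setting up the duality and locating the dual in the non-perturbative regime.} First I would make the Fourier duality precise for trigonometric $v$ with complexified phase $\theta\mapsto\theta+i\varepsilon$, identifying the $2d\times2d$ symplectic transfer cocycle of the dual strip operator and carefully tracking the non-selfadjointness introduced by the shift. The key structural input is a non-selfadjoint analogue of the Haro--Puig formula relating $L(\alpha,A_{E,\varepsilon})$ to the Lyapunov exponents of the dual cocycle; together with Avila's quantization of the acceleration this shows that subcriticality of $(\alpha,A_{E,0})$ forces the dual cocycle into the regime where (half of) its symplectic exponents are positive on a full complex neighborhood of the real axis --- precisely the ``supercritical'' setting on the strip in which Anderson and almost localization are to be expected.

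\emph{Strip localization and the transfer to reducibility.} In that regime I would prove non-perturbative Anderson localization on the strip for Diophantine $\alpha$ and a.e.\ phase, and almost localization for \emph{every} phase. This is the analytic core: extend the Bourgain--Goldstein / Jitomirskaya machinery to matrix-valued cocycles --- subharmonicity and large-deviation estimates for $\log\|A_N\|$ and for matrix entries of $A_N$, the avalanche principle in the symplectic setting, semialgebraic and complexity bounds controlling the exceptional sets of phases and energies, and the elimination of single and double resonances using $\beta(\alpha)=0$ --- so as to obtain exponential decay of generalized eigenfunctions away from a controlled set of resonant phases. Finally, the Avila--Jitomirskaya transfer \cite{AJ2} converts an almost-localized eigenfunction of the dual strip operator, via its Fourier transform, into an approximate analytic conjugacy of $(\alpha,A_{E,0})$ to a constant cocycle (elliptic or parabolic); carrying this out uniformly in the phase and over all subcritical $E$ yields almost reducibility, completing the Diophantine case.

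\emph{Main obstacle.} Essentially all the difficulty is in the strip-localization step. When $d=1$ positivity of the single Lyapunov exponent immediately produces a hyperbolic splitting and exponential transfer-matrix growth; on the strip one has a $2d$-dimensional symplectic cocycle with only $d$ positive exponents, these exponents may be non-simple or nearly collide, and the stable/unstable directions are only approximately defined and must be controlled scale by scale. The large-deviation estimates, and especially the resonance-elimination step --- already delicate and semialgebraic in the scalar Diophantine theory --- must now handle resonances among several Lyapunov branches simultaneously; and the non-selfadjointness from the $i\varepsilon$-shift removes selfadjoint spectral calculus, so one must work throughout via analytic continuation and the non-selfadjoint Haro--Puig identity. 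By comparison, the duality bookkeeping and the final transfer to a conjugacy are careful but essentially known adaptations.
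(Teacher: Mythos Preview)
Your overall strategy matches the paper's: non-selfadjoint Haro--Puig to place the dual strip cocycle in the regime $\hat L_d>0$, non-perturbative almost localization on the strip for all phases when $\beta(\alpha)=0$, then the Avila--Jitomirskaya transfer to an approximate conjugacy. You have correctly located the main difficulty in the strip-localization step.

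Where your sketch diverges is in the technical route through that step. You invoke ``semialgebraic and complexity bounds'' and ``elimination of double resonances'' in the Bourgain--Goldstein style; the paper instead follows Jitomirskaya's path, exploiting that the dual potential is $2\cos$ so the finite-volume determinant is a polynomial of degree $nd$ in a single cosine variable, and controlling resonances via the $\varepsilon_0$-uniformity / Lagrange-interpolation mechanism. More importantly, the paper's central technical idea---absent from your outline---is to take \emph{periodic} rather than Dirichlet boundary conditions for the finite-volume Green's function. This yields the exact identity $|\det(P_n(\theta)-E)|=|\det B|^n\,|\det(M_{n,E}(\theta)-I_{2d})|$, so the denominator lower bound reduces to showing $|\det(M_n-I)|\gtrsim e^{n(\hat L^d-\varepsilon)}$; for the cosine potential this follows from a Herman subharmonic argument, and in general from the avalanche principle along an admissible subsequence together with a symplectic lemma ensuring the expanding $d$-planes of $M_n$ and $M_n^*$ are not nearly orthogonal. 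The numerator likewise requires a new sharp bound---tighter than the random-walk expansions in prior strip work---obtained by row-reducing the bordered periodic matrix to a product of exterior powers at complementary scales. Your worry about near-colliding exponents is less central here: the paper works with $\wedge^d M_n$ throughout, which sees only the gap $\hat L_d-\hat L_{d+1}>0$.
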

As remarked by Avila in \cite{Global}, ``{\it Proving this almost reducibility conjecture would at once provide an almost complete understanding of subcriticality.}''
This conjecture has thus far been proved in two different cases:  
\begin{itemize}
    \item for Diophantine $\alpha$'s: (i)  for general analytic potentials in the small coupling regime  (ii) for the entire subcritical regime of the almost Mathieu (or Harper) operator; both  in \cite{AJ2}. See also \cite{GYZ}.
    \item  for Liouville $\alpha$'s (those $\beta(\alpha)>0$) in \cite{A_ar_ac}. 
\end{itemize}
In this paper, we prove the following non-perturbative quantitative ARC for Schr\"odinger cocycles with trigonometric potentials and all Diophantine $\alpha$. In the rest of the paper, we assume that $v(x)=\sum_{1\leq |k|\leq d} \hat{v}_k e^{2\pi i kx}$ is a real-valued trigonometric polynomial, which implies for $1\leq k\leq d$, $\hat{v}_k=\overline{\hat{v}_{-k}}$. Without loss of generality we assume that $\hat{v}_0=0$.
\begin{theorem}\label{thm:ARC}
Let $\beta(\alpha)=0$, and $E\in \Sigma_{\alpha,0}$. 
Let $0<\varepsilon_0<\frac{\hat{L}_d(\alpha, \hat{A}_E)}{30\hat{L}^d(\hat{A}_E)+100 C_3d}$ where $C_3$ is the absolute constant in Lemma \ref{lem:uniform}, and $\delta=\min(\frac{\hat{L}_d(\alpha, \hat{A}_E)}{5000}, \frac{\varepsilon_0}{25})$. Let $\theta=\theta(E)$ be given by Theorem \ref{thm:bdd_solution}. Let $\{n_j\}$ be the sequence of $\varepsilon_0$-resonances of $\theta$ and $L_j:=\|2\theta-n_j\alpha\|_{\T}$. Then there exists an absolute constant $C>0$ such that for $j>j_0(\alpha, v, E, \varepsilon_0)$ there exists $W\in C^{\omega}_{\frac{\delta}{2\pi}}(\T, \mathrm{SL}(2,\R))$ such that $|\mathrm{deg}(W)|\leq C|n_j|$, $\|W\|_{\frac{\delta}{2\pi}}\leq L_j^{-C}$ and
\begin{align}
\|W(x+\alpha)^{-1} A_E(x) W(x)-R_{-\theta}\|_{\frac{\delta}{2\pi}}\leq e^{-\frac{\hat{L}_d(\alpha, \hat{A}_E)}{300}|n_{j+1}|}.
\end{align}
\end{theorem}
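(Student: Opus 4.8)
The plan is to run the dual-localization-to-reducibility scheme of Avila--Jitomirskaya \cite{AJ2}, feeding it the non-perturbative almost localization on the strip established in the first part of this paper in place of the scalar almost Mathieu input, while tracking every constant.

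\textbf{Step 1: from subcriticality to an almost-localized dual state.} Since $E\in\Sigma_{\alpha,0}$ is subcritical, Avila's duality makes the dual long-range (strip) operator $\hat H$ supercritical, and the non-selfadjoint Haro--Puig formula derived here identifies the Lyapunov exponent controlling the decay of its eigenstates with $\hat L_d(\alpha,\hat A_E)$. Because $\beta(\alpha)=0$, the almost localization theorem for the strip applies and, through Theorem \ref{thm:bdd_solution}, produces the phase $\theta=\theta(E)$ together with a bounded solution $u_n=\psi(\theta+n\alpha)$ of $H_{\alpha,\theta,0}u=Eu$, where $\psi$ is analytic on $\{|\Im z|<\frac{\delta}{2\pi}\}$ (the bound $\delta\le\frac{\hat L_d(\alpha,\hat A_E)}{5000}$ leaves room below the dual decay rate, and $\delta\le\frac{\varepsilon_0}{25}$ leaves room to absorb the $\varepsilon_0$-resonance bumps) with Fourier coefficients decaying at rate $\hat L_d(\alpha,\hat A_E)$ away from the resonances $\{n_j\}$ and a controlled secondary bump near each $n_j$. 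The analytic vector $U(x)=\binom{\psi(x+\alpha)}{\psi(x)}$ is then an invariant section of $(\alpha,A_E)$ up to exponentially small errors.

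\textbf{Step 2: the conjugation $W$.} Fix $j>j_0$ and set $N:=|n_{j+1}|$. I would truncate $\psi$ in frequency at a level safely between $|n_j|$ and $N$, obtaining a trigonometric polynomial $\psi_N$ with $\|\psi-\psi_N\|_{\frac{\delta}{2\pi}}\lesssim e^{-c\,\hat L_d(\alpha,\hat A_E)\,N}$, whose Fourier mass sits only near $0$ and near $\pm n_j$. Build $W$ from the two analytic columns $\binom{\psi_N(x+\alpha)}{\psi_N(x)}$ and its $*$-conjugate $\binom{-\psi_N^{*}(x)}{\psi_N^{*}(x+\alpha)}$, divided by $\bigl(|\psi_N(x+\alpha)|^2+|\psi_N(x)|^2\bigr)^{1/2}$, so that $W\in C^\omega_{\frac{\delta}{2\pi}}(\T,\mathrm{SL}(2,\R))$; its degree is the winding number of the first column, which is $\le C|n_j|$ because $\psi_N$ is, up to exponentially small error, the sum of two exponential blocks at $0$ and at $n_j$. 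The norm estimate requires a strip-uniform lower bound on the normalizing factor: as $(\psi_N(x+\alpha),\psi_N(x))$ is a near-solution vector of a second-order difference equation it cannot vanish, and quantitatively — using $\|2\theta-n_j\alpha\|_\T=L_j$ to bound the only possible destructive interference between the two blocks — one gets $|\psi_N(x+\alpha)|^2+|\psi_N(x)|^2\gtrsim L_j^{\,C}$ on $\{|\Im z|<\frac{\delta}{2\pi}\}$, hence $\|W\|_{\frac{\delta}{2\pi}}\le L_j^{-C}$; Lemma \ref{lem:uniform} and its constant $C_3$ enter when turning the dual decay rate and the constraint on $\varepsilon_0$ into these bounds.

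\textbf{Step 3: the conjugated cocycle, and the main obstacle.} Conjugating, $W(x+\alpha)^{-1}A_E(x)W(x)$ is $\lesssim L_j^{-C}e^{-c\,\hat L_d(\alpha,\hat A_E)\,N}$-close to an upper triangular $\mathrm{SL}(2,\R)$-cocycle with elliptic diagonal, the deviation coming entirely from $\psi-\psi_N$; since $L(\alpha,A_E)=0$ and the cocycle is $\mathrm{SL}(2,\R)$-valued, the residual upper-right entry is removed by a bounded correction absorbed into $W$, worsening the rate and the norm only by fixed factors. The fibered rotation number of the resulting constant, once the $\deg(W)$-shift is taken into account, equals $-\theta$ — this is the rotation-number bookkeeping forced by the definition of $\theta(E)$ in Theorem \ref{thm:bdd_solution} and Aubry duality — so the constant is exactly $R_{-\theta}$, and collecting the truncation loss, the $\delta$-loss and the Step~3 corrections gives the stated constraints on $\varepsilon_0$ and $\delta$ and the final rate $\frac{\hat L_d(\alpha,\hat A_E)}{300}|n_{j+1}|$. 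I expect the genuinely hard parts to be Step 1 and the lower bound in Step 2: upgrading the scalar almost Mathieu almost-localization of \cite{AJ2} to the non-selfadjoint block/strip dual with the \emph{sharp} decay rate $\hat L_d(\alpha,\hat A_E)$ (which is precisely the role of the first part of the paper and of the non-selfadjoint Haro--Puig formula), and then extracting a strip-uniform lower bound of the form $L_j^{C}$ for the normalizing factor, i.e. showing that the resonance size $L_j=\|2\theta-n_j\alpha\|_\T$ is the only obstruction to $W$ being well conditioned; identifying the limit as exactly $R_{-\theta}$ via the degree bookkeeping is a further delicate point.
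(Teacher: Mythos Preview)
Your overall strategy is the right one and matches the paper's Appendix~A: feed the strip almost-localization into the Avila--Jitomirskaya conjugacy scheme. However, several concrete steps are misidentified, and one essential intermediate result is missing.

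\textbf{The $W$ you build is in $\mathrm{SU}(2)$, not $\mathrm{SL}(2,\R)$.} With $U=\binom{\psi_N(x+\alpha)}{\psi_N(x)}$ and second column $J\bar U=\binom{-\overline{\psi_N(x)}}{\overline{\psi_N(x+\alpha)}}$, the two columns are orthogonal of equal norm, so after dividing by $\|U\|$ you land in $\mathrm{SU}(2)$ with $\|W\|\equiv 1$; this neither lies in $\mathrm{SL}(2,\R)$ nor carries the $L_j^{-C}$ norm bound that the statement asserts. The paper instead sets $W_1=(\Re\,U^J,\ \Im\,U^J)$ and $W_2=(\det W_1)^{-1/2}W_1\in C^\omega_{\delta/2\pi}(\T,\mathrm{SL}(2,\R))$; the conjugacy identity $A_E M_J\approx M_J(\cdot+\alpha)\,\mathrm{diag}(e^{2\pi i\theta},e^{-2\pi i\theta})$ with $M_J=(U^J,\overline{U^J})$ then becomes $A_E W_1\approx W_1(\cdot+\alpha)R_{-\theta}$ directly, so your Step~3 (removing a residual upper-right entry) does not arise here---you may be conflating it with the separate $\mathrm{SL}(2,\C)$ triangularization used elsewhere.

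\textbf{The hard lower bound is on $\det M_J$, not on $\|U^J\|^2$.} The quantity controlling $\|W_2\|$ is $\det W_1=\tfrac{1}{2i}\det(U^J,\overline{U^J})$; the paper proves $|\det M_J(x)|\ge L^{-5C_4}$ on the strip (Theorem~\ref{thm:det_MJ}). The norm lower bound $\|U^I\|\ge e^{-2\delta n}$ is a much softer statement (Lemma~\ref{lem:UI_lower}, via Theorem~\ref{thm:p_LI}) and does not by itself control $\|W_2\|$.

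\textbf{A missing intermediate step: polynomial growth of $A_{k,E}$.} The determinant lower bound is obtained by contradiction through Lemma~\ref{lem:U-kappaU}: if $\|\tilde U^J(x_0)-\kappa\,\overline{\tilde U^J(x_0)}\|<L^{-4C_4}$ for some $|\kappa|=1$, one \emph{iterates} the near-invariance up to time $\ell_0=\lfloor L/2\rfloor$ and uses $\|2\theta-n_j\alpha\|=L^{-1}$ to flip a sign and force $\|\tilde U^{\tilde I}\|$ small, contradicting Lemma~\ref{lem:UI_lower}. Iterating to time $\sim L$ without losing everything requires the \emph{polynomial} bound $\|A_{k,E}\|_{\delta/2\pi}\le C_5(1+|k|)^{C_4}$ (Theorem~\ref{thm:Ak_poly}), which the paper proves first via a separate $\mathrm{SL}(2,\C)$ almost-conjugacy built from $U^I$ and Avila's second-column lemma (Lemma~\ref{lem:2nd_col}). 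This two-stage structure---polynomial bound first, then determinant lower bound---is the crux, and your outline skips it.
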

Above, $\hat{L}_d(\alpha, \hat{A}_E)>0$ is the $d$-th Lyapunov exponent of the dual operator, see Corollary \ref{cor:non-sa-HP}.
The core of the proof is to establish the almost localization for the dual model (Theorem \ref{thm:almost_AL} below). Once established, one can combine it with the quantitative duality arguments in \cite{AJ2} to prove Theorem \ref{thm:ARC}. We include the duality argument in Appendix \ref{app:A} in order to keep the paper self-contained. 

One example to which Theorem \ref{thm:ARC} applies is the perturbation of the subcritical almost Mathieu operator, which is also subcritical due to the upper semi-continuity of the acceleration.
\begin{example}
Let $H^{\mathrm{PAMO}}_{\alpha,\theta,\nu, w}$ be the operator with $v(\theta)=2\lambda\cos(2\pi\theta)+\nu\cdot w(\theta)$ and $\varepsilon=0$, where $w(\theta)$ is a trigonometric polynomial. For $|\lambda|<1$ and sufficiently small $|\nu|<\nu_0(\alpha,\lambda,w)$, 
$\sigma(H^{\mathrm{PAMO}}_{\alpha,\theta,\nu, w})=\Sigma_{\alpha,0}$.
\end{example}

\begin{theorem}\label{thm:PAMO_ARC}
For $|\lambda|<1$, $\beta(\alpha)=0$, and sufficiently small $|\nu|<\nu_0(\alpha,\lambda,w)$, for any $E\in\sigma(H^{\mathrm{PAMO}}_{\alpha,\theta,\nu,w})$, the cocycle $(\alpha, A_E^{\mathrm{PAMO}})$ associated to $H^{\mathrm{PAMO}}_{\alpha,\theta,\nu,w}$ is quantitatively almost reducible.
\end{theorem}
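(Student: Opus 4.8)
The plan is to derive the theorem directly from Theorem~\ref{thm:ARC} by showing that, for $|\nu|$ small, the whole spectrum of $H^{\mathrm{PAMO}}_{\alpha,\theta,\nu,w}$ sits in the subcritical regime. Since $w$ is a trigonometric polynomial, $v_{\lambda,\nu,w}:=2\lambda\cos(2\pi\,\cdot\,)+\nu w$ is a real‑valued trigonometric polynomial; subtracting its mean $\nu\hat w_0$ merely translates the energy and changes neither the spectrum nor the reducibility of the cocycle, so $H^{\mathrm{PAMO}}_{\alpha,\theta,\nu,w}$ falls under the hypotheses of Theorem~\ref{thm:ARC}, with $\beta(\alpha)=0$ as assumed. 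Moreover $\alpha$ is irrational, so $\sigma(H^{\mathrm{PAMO}}_{\alpha,\theta,\nu,w})$ is $\theta$‑independent and the cocycle $(\alpha,A^{\mathrm{PAMO}}_E)$ does not depend on $\theta$. Hence it suffices to establish
$$
\sigma\bigl(H^{\mathrm{PAMO}}_{\alpha,\theta,\nu,w}\bigr)\subseteq\Sigma_{\alpha,0}\qquad\text{for }|\nu|<\nu_0(\alpha,\lambda,w),
$$
because then Theorem~\ref{thm:ARC} applies to each $E$ in this set — the positivity $\hat L_d(\alpha,\hat A_E)>0$ needed to choose $\varepsilon_0$ and $\delta$ being supplied by Corollary~\ref{cor:non-sa-HP} — and its conclusion is exactly the quantitative almost reducibility of $(\alpha,A^{\mathrm{PAMO}}_E)$.

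To prove the displayed inclusion (which is the content of the Example above), recall Avila's computation of the Lyapunov exponent of the almost Mathieu cocycle \cite{Global}: for $|\lambda|<1$ one has $L(\alpha,A^{\mathrm{AMO}}_{E,\varepsilon})=\max\{L(\alpha,A^{\mathrm{AMO}}_{E,0}),\,2\pi\varepsilon+\log|\lambda|\}$ for $\varepsilon\ge0$, so for every real $E$ the map $\varepsilon\mapsto L(\alpha,A^{\mathrm{AMO}}_{E,\varepsilon})$ is constant on $|\varepsilon|<\varepsilon_1:=\tfrac1{2\pi}\log\tfrac1{|\lambda|}$ and hence $\kappa(\alpha,A^{\mathrm{AMO}}_{E,\varepsilon})=0$ there. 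Since $\|H^{\mathrm{PAMO}}_{\alpha,\theta,\nu,w}-H^{\mathrm{AMO}}_{\alpha,\theta}\|\le|\nu|\,\|w\|_{L^\infty(\T)}$, for $|\nu|\le1$ the spectrum of $H^{\mathrm{PAMO}}_{\alpha,\theta,\nu,w}$ lies in the fixed compact set $K:=\{E\in\R:\mathrm{dist}(E,\sigma(H^{\mathrm{AMO}}_{\alpha}))\le\|w\|_{L^\infty(\T)}\}$. Now $(\nu,E,\varepsilon)\mapsto A^{v_{\lambda,\nu,w}}_{E,\varepsilon}$ is continuous into $C^\omega_\eta(\T,\mathrm{SL}(2,\C))$ for any fixed $\eta<\varepsilon_1$, the acceleration is upper semicontinuous \cite{Global} and $\ge0$ for $\varepsilon\ge0$, and it vanishes on the image of the compact set $\{0\}\times K\times[0,\eta]$; by quantization of the acceleration, the standard compactness argument gives $\nu_0=\nu_0(\alpha,\lambda,w)\in(0,1]$ such that $\kappa(\alpha,A^{\mathrm{PAMO}}_{E,\varepsilon})=0$ for all $E\in K$, $0\le\varepsilon\le\eta$ and $|\nu|<\nu_0$. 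Finally, for $E\in\sigma(H^{\mathrm{PAMO}}_{\alpha,\theta,\nu,w})\subseteq K$ the cocycle $(\alpha,A^{\mathrm{PAMO}}_E)$ is not uniformly hyperbolic, while $\kappa(\alpha,A^{\mathrm{PAMO}}_{E,\varepsilon})=0$ on $[0,\eta]$ means $\varepsilon\mapsto L(\alpha,A^{\mathrm{PAMO}}_{E,\varepsilon})$ is constant near $0$, so $\varepsilon=0$ is a regular point; Avila's characterization of uniform hyperbolicity \cite{Global} then forces $L(\alpha,A^{\mathrm{PAMO}}_{E,0})=0$, whence $E\in\Sigma_{\alpha,0}$. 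This establishes the inclusion, and the theorem follows.

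The only genuine point is the perturbation step in the second paragraph, namely carrying subcriticality of the almost Mathieu cocycle over to the full, slightly enlarged spectrum of the perturbed operator. Upper semicontinuity of the acceleration — the mechanism flagged in the Example — does the essential work: it preserves the vanishing of the acceleration and, via Avila's uniform‑hyperbolicity criterion, forces the Lyapunov exponent to vanish on the new spectrum. The one thing to watch is that this argument must be run on a fixed analyticity strip of width $\eta<\varepsilon_1=\tfrac1{2\pi}\log\tfrac1{|\lambda|}$, below Avila's critical radius for the almost Mathieu cocycle, so that $\nu_0$ depends only on $\alpha$, $\lambda$ and $\|w\|_{L^\infty(\T)}$; given this, everything else is an immediate appeal to Theorem~\ref{thm:ARC}.
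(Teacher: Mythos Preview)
Your proof is correct and follows the same approach as the paper: the paper treats Theorem~\ref{thm:PAMO_ARC} as an immediate consequence of Theorem~\ref{thm:ARC} together with the Example's assertion that $\sigma(H^{\mathrm{PAMO}}_{\alpha,\theta,\nu,w})=\Sigma_{\alpha,0}$ for small $|\nu|$, justified by upper semicontinuity of the acceleration. You have simply filled in the details of that Example (the AMO computation, the compactness argument, and Avila's regularity/uniform-hyperbolicity dichotomy) more explicitly than the paper does.
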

\begin{remark}
It is well-known that for $E\notin \sigma(H^{\mathrm{PAMO}}_{\alpha,\theta,\nu,w})$, $(\alpha, A_E^{\mathrm{PAMO}})$ is $\mathrm{PSL}(2,\R)$ reducible, see e.g. Theorem 2.1 of \cite{AJ2}.
\end{remark}

As a consequence of Theorem \ref{thm:PAMO_ARC}, we can prove the dry Ten Martini problem for $H^{\mathrm{PAMO}}_{\alpha,\theta,\nu,w}$.
\begin{theorem}\label{thm:dryTen}
For $|\lambda|<1$, $\beta(\alpha)=0$, and for sufficiently small $|\nu|<\nu_0(\alpha,\lambda,w)$, all the spectral gaps of $H^{\mathrm{PAMO}}_{\alpha,\theta,\nu w}$ as predicted by the gap labelling theorem are open.
\end{theorem}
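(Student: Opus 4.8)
The plan is to argue by contradiction, reducing the openness of each labelled gap to a quantitative form of the Moser--Pöschel/Puig mechanism, driven by the quantitative almost reducibility established in Theorem~\ref{thm:PAMO_ARC}. First I would recall that, by the gap labelling theorem, the spectral gaps of $H^{\mathrm{PAMO}}_{\alpha,\theta,\nu w}$ are indexed by integers $m$ (with $N\equiv m\alpha\pmod{1}$ on the gap, $N$ being the integrated density of states), and that every nonzero $m$ is realized. Fixing $m\neq0$, I would suppose for contradiction that $G_m:=\{E:\ N(E)\equiv m\alpha\}$ is a single point $E_0\in\sigma(H^{\mathrm{PAMO}}_{\alpha,\theta,\nu w})=\Sigma_{\alpha,0}$; then the fibered rotation number of $(\alpha,A_{E_0}^{\mathrm{PAMO}})$ lies in $\tfrac12(\Z\alpha+\Z)$ with resonant integer $m$, so $E_0$ is a rotation-number resonance.

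Next I would exploit that $E_0\in\Sigma_{\alpha,0}$ is subcritical, so Theorem~\ref{thm:PAMO_ARC} yields quantitative almost reducibility of $(\alpha,A_{E_0}^{\mathrm{PAMO}})$; combined with the $\alpha$-rationality of the rotation number at $E_0$ and $\beta(\alpha)=0$, the standard resonance-cancellation argument (as in the work of Puig, and of Avila--You--Zhou for the almost Mathieu operator, made effective here by the quantitative bounds of Theorem~\ref{thm:PAMO_ARC}) upgrades this to analytic reducibility: there is $B\in C^{\omega}(\T,\mathrm{PSL}(2,\R))$ with $B(x+\alpha)^{-1}A_{E_0}^{\mathrm{PAMO}}(x)B(x)=R$, $R\in\mathrm{SL}(2,\R)$ constant, and the behavior of the rotation number under conjugacy forces $\mathrm{deg}(B)=-m$ and $\rho(\alpha,R)=0$. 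Hence $R$ is hyperbolic, parabolic, or $\pm\mathrm{Id}$. The hyperbolic case is impossible, since it would give uniform hyperbolicity and $E_0\notin\sigma$. If $R$ were a non-identity parabolic, the Moser--Pöschel analysis, using the monotonicity of the Schr\"odinger cocycle in $E$ (here $\partial_E A_E=\left(\begin{smallmatrix}1&0\\0&0\end{smallmatrix}\right)$ is positive semidefinite), would exhibit $E_0$ as a one-sided edge of an open gap with label $m$, contradicting collapse. So $R=\pm\mathrm{Id}$.

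The heart of the proof is to rule out $R=\pm\mathrm{Id}$. In that case $A_{E_0}^{\mathrm{PAMO}}=\pm B(\cdot+\alpha)B(\cdot)^{-1}$, and for $E=E_0+\varepsilon$ one has $B(x+\alpha)^{-1}A_{E}^{\mathrm{PAMO}}(x)B(x)=\pm\mathrm{Id}+\varepsilon\,\Psi(x)$ with $\Psi(x)=B(x+\alpha)^{-1}\left(\begin{smallmatrix}1&0\\0&0\end{smallmatrix}\right)B(x)$; a short computation with $(A_{E_0}^{\mathrm{PAMO}})^{-1}$ shows the average $\overline{\Psi}:=\int_{\T}\Psi(x)\,dx$ is traceless. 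I would then invoke the classical dichotomy: if $\overline{\Psi}$ is hyperbolic, i.e. $\det\overline{\Psi}<0$, then $(\alpha,A_E^{\mathrm{PAMO}})$ is uniformly hyperbolic for every $E\neq E_0$ near $E_0$, so $E_0$ is isolated in $\sigma$; since the integrated density of states is continuous, $N$ is then locally constant near $E_0$, which forces $G_m$ to be a nondegenerate interval and contradicts collapse. Conversely, if $G_m$ is collapsed then $\overline{\Psi}$ must be elliptic or nilpotent. Thus it suffices to prove $\det\overline{\Psi}<0$.

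For the unperturbed almost Mathieu operator ($\nu=0$) this is precisely Puig's computation, which uses the explicit $2\lambda\cos$-structure (equivalently, that $\hat v_{\pm1}=\lambda$ is the only nonzero Fourier pair). For the PAMO I would establish $\det\overline{\Psi}<0$ by a \emph{uniform-in-$m$} perturbation argument: by Theorem~\ref{thm:PAMO_ARC} the edges $E_m^{\pm}(\nu)$ of $G_m$ and the reducing conjugacy $B=B_m$ depend analytically on $\nu$, with quantitative control of $B_m$ in terms of the label $m$, so that $\det\overline{\Psi}=\det\overline{\Psi}_m$ differs from its (known, strictly negative) value at $\nu=0$ by at most $C|\nu|$ times an explicit factor. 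The hard part will be the uniformity: although $G_m$ and the non-degeneracy $|\det\overline{\Psi}_m|$ may be exponentially small in $|m|$, one must show that the $\nu$-perturbation degrades them at a strictly slower exponential rate, so that a single threshold $\nu_0(\alpha,\lambda,w)$ works for all labels $m$ at once. Granting this, $G_m$ is a nondegenerate interval for every $m\neq0$ and every $|\nu|<\nu_0$, which is Theorem~\ref{thm:dryTen}.
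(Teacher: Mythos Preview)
Your outline matches the paper's up to the point where $(\alpha,A_{E_0})$ is reduced to a constant $R$ with $\mathrm{tr}\,R=\pm2$; the paper does this via Theorem~\ref{thm:ARC_consequence} and then simply invokes \cite{Puig} to conclude that $E_0$ lies on the boundary of a genuine spectral gap. Your treatment of the hyperbolic and the genuinely parabolic cases is also fine. The problem is your mechanism for excluding $R=\pm\mathrm{Id}$. Using $A_{E_0}(x)B(x)=\pm B(x+\alpha)$ one finds
\[
\Psi(x)=B(x+\alpha)^{-1}\begin{pmatrix}1&0\\0&0\end{pmatrix}B(x)=\pm\,B(x)^{-1}\begin{pmatrix}0&0\\-1&0\end{pmatrix}B(x)=\pm\begin{pmatrix}a(x)b(x)&b(x)^2\\-a(x)^2&-a(x)b(x)\end{pmatrix},
\]
where $(a,b)$ is the first row of $B\in\mathrm{SL}(2,\R)$; hence
\[
\det\overline{\Psi}=\Big(\int_{\T}a^2\Big)\Big(\int_{\T}b^2\Big)-\Big(\int_{\T}ab\Big)^2\;\ge\;0
\]
by Cauchy--Schwarz. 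So $\overline{\Psi}$ is \emph{never} hyperbolic; generically it is elliptic, which is exactly the first-order signature of a \emph{collapsed} gap. Your plan to establish $\det\overline{\Psi}<0$, and in particular to inherit it from the almost Mathieu case by perturbation, therefore cannot succeed. This is also a misreading of Puig: his argument does not compute $\det\overline{\Psi}$ at all.

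What Puig actually does, and what the paper means by ``due to the arguments in~\cite{Puig}'', is a duality argument. If $R=\pm\mathrm{Id}$ then both columns of $B$ are analytic Bloch waves for $H$ with the same Floquet exponent, and passing to Fourier coefficients produces two linearly independent exponentially decaying eigenfunctions of the dual operator $\hat H_{\alpha,\theta}$ at the same energy $E_0$ and phase $\theta$; this contradicts the simplicity of the point spectrum of a one-dimensional Schr\"odinger-type operator. That contradiction is what forces $R$ to be a non-identity parabolic, after which the Moser--P\"oschel step (your parabolic case) yields an open gap. If you want a perturbative route from the almost Mathieu operator, the quantity to track uniformly in $m$ is the off-diagonal entry $c_m(\nu)$ of the parabolic $R$, not $\det\overline{\Psi}$.
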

This is an analogue of Theorem 1.1 of \cite{AJ2}, we include its proof in the appendix for completeness. Next, we introduce the dual model of $H_{\alpha,\theta,\varepsilon}$:
\begin{align}
(\hat{H}_{\alpha,\theta,\varepsilon}\phi)_n=\sum_{0<|k|\leq d} \hat{v}_k \phi_{n-k} e^{-2\pi k\varepsilon}+2\cos(2\pi(\theta+n\alpha)) \phi_n.
\end{align}
This long-range operator can also be written in terms of block-valued Jacobi matrix, acting on $\ell^2(\Z, \C^d)$, as follows.
\begin{align}\label{def:H_dual}
(\hat{H}_{\alpha,\theta,\varepsilon}\Phi)_n=B_{\varepsilon} \Phi_{n+1}+\tilde{B}_{\varepsilon} \Phi_{n-1}+C_{\varepsilon}(\theta+nd\alpha) \Phi_n,
\end{align}
where $\Phi_n=(\phi_{nd+d-1},...,\phi_{nd+1},\phi_{nd})^T\in \C^d$, and $B_{\varepsilon}, \tilde{B}_{\varepsilon}$ and $C_{\varepsilon}$ are defined in Section \ref{sec:transfer_matrix}.
The transfer matrix formulation of the block-valued Hamiltonian eigenvalue equation $\hat{H}_{\alpha,\theta,\varepsilon}\Phi=E\Phi$ is
\begin{align}\label{eq:hH_transfer_block}
\left(\begin{matrix}
B_{\varepsilon}\Phi_{n+1}\\
\Phi_n
\end{matrix}\right)=
M_E^{\varepsilon}(\theta+nd\alpha)
\left(\begin{matrix}
B_{\varepsilon}\Phi_n\\
\Phi_{n-1}
\end{matrix}\right),
\end{align}
where
\begin{align}\label{def:ME^eps}
M_E^{\varepsilon}(\theta)=
\left(\begin{matrix}
(E-C_{\varepsilon}(\theta))B_{\varepsilon}^{-1} &-\tilde{B}_{\varepsilon}\\
B_{\varepsilon}^{-1} &0
\end{matrix}\right).
\end{align}
We denote its Lyapunov exponents by $\hat{L}_1(\alpha,M_E^{\varepsilon})\geq...\geq \hat{L}_{2d}(\alpha,M_E^{\varepsilon})$.
The transfer matrix $M_E^{\varepsilon}$ can also be viewed as $d$-step transfer matrix for the scalar-valued Hamiltonian, where
\begin{align}\label{eq:M=Ad}
M_E^{\varepsilon}(\theta)=\mathrm{diag}(B_{\varepsilon}, I_d) \cdot \hat{A}_{d,E}^{\varepsilon}(\alpha,\theta)\cdot  \mathrm{diag}(B_{\varepsilon}^{-1}, I_d),
\end{align}
in which the one step transfer matrix $\hat{A}_E^{\varepsilon}$ is as in~\eqref{def:hAE}.
It is easy to see by \eqref{eq:M=Ad} that for $1\leq j\leq 2d$,
\begin{align}
\hat{L}_j(d\alpha, M_E^{\varepsilon})=d\cdot \hat{L}_j (\alpha, \hat{A}_E^{\varepsilon}).
\end{align}
The transfer matrix $M_E^{0}$ at an arbitrary energy $E\in \R$ is complex symplectic, see \eqref{eq:symplectic}. Hence 
\begin{align}
\hat{L}_j(d\alpha, M_E^0)=-\hat{L}_{2d+1-j}(d\alpha, M_E^0), \text{ for } 1\leq j\leq 2d.
\end{align}
Proving Anderson localization in the non-perturbative $\hat{L}_d(\alpha,M_E^0)>0$ regime has been a long standing problem in quasi-periodic Schr\"odinger operators, while the Anderson model counterpart has been known for over 40 years, see \cites{G,L,KLS,MS}. 
Resolving this problem is one of the main accomplishments of this paper.
\begin{theorem}\label{thm:AL_cos}
Let $\beta(\alpha)=0$ and $\theta$ be such that 
\begin{align}\label{def:gamma_alpha_theta}
\gamma(\alpha,\theta):=\lim_{n\to\infty} \left(-\frac{\log \|2\theta+n\alpha\|_{\T}}{|n|}\right)=0,
\end{align}
Then $\hat{H}_{\alpha,\theta,0}$ has pure point spectrum with exponentially decaying eigenfunctions in $$\sigma(\hat{H}_{\alpha,\theta,0})\cap \{E:\, \hat{L}_d(d\alpha, M_E^0)>0\}.$$
\end{theorem}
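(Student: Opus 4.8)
The plan is to run the Aubry-André-type duality/eigenfunction-parametrization strategy, in the form perfected by Bourgain-Goldstein and Jitomirskaya, but now for the block-valued (strip) operator $\hat H_{\alpha,\theta,0}$ in the regime $\hat L_d(d\alpha,M_E^0)>0$. The starting observation is that although $M_E^0$ is only complex symplectic rather than $\mathrm{SL}(2d,\R)$-valued, the positivity $\hat L_d>0$ separates the Lyapunov spectrum into a ``top'' $d$-dimensional block of exponents $\hat L_1\ge\cdots\ge\hat L_d>0$ and a ``bottom'' block $\hat L_{d+1}\ge\cdots\ge\hat L_{2d}$, with the symmetry $\hat L_j=-\hat L_{2d+1-j}$ forcing the bottom $d$ exponents to be negative. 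This spectral gap at the middle is exactly what is needed to apply the multiplicative-ergodic/uniform-upper-triangularization machinery: one gets stable and unstable $d$-dimensional measurable subbundles, and — via a large-deviation theorem for $\frac1n\log\|M_{n,E}^0\|$ together with the avalanche principle — quantitative control on the decay of solutions generated from the unstable directions. I would first isolate, as a lemma, the large deviation estimate (LDT) for the block transfer matrices $M_{n,E}^0(\theta)$ and for the associated determinants $\wedge^d M_{n,E}^0$, valid for $\beta(\alpha)=0$; this is where the arithmetic assumption on $\alpha$ enters, and it is the analytic heart of the argument. The second ingredient is avoiding double resonances, which is where $\gamma(\alpha,\theta)=0$ is used.

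Here are the steps in order. (1) Reduce the $2\pi\varepsilon$-periodicity and notation to the $d$-step scalar cocycle $\hat A_E^0$ via \eqref{eq:M=Ad}, so that $\hat L_j(d\alpha,M_E^0)=d\,\hat L_j(\alpha,\hat A_E^0)$ and one may work with a single analytic $\mathrm{GL}(2d,\C)$ cocycle. (2) Establish the LDT: for any $\kappa>0$ there are $C,c$ with
\[
\mathrm{mes}\Big\{\theta:\ \big|\tfrac1n\log\|M_{n,E}^0(\theta)\|-\hat L_1\big|>\kappa\Big\}<Ce^{-cn},
\]
and analogously for $\log\|\wedge^d M_{n,E}^0(\theta)\|$ around $\hat L_1+\cdots+\hat L_d$; deduce uniform upper-triangularization on a large-measure set and the existence of ``good'' lengths where the transfer matrix has a $d$-dimensional most-expanding subspace with quantitative separation $\sim e^{n(\hat L_d-\hat L_{d+1})}=e^{2n\hat L_d}$. (3) Take $E\in\sigma(\hat H_{\alpha,\theta,0})$ with a nontrivial polynomially bounded generalized eigenfunction $\Phi$ (Schnol/Berezanskii); using the block Green's function identity and the fact that $\Phi$ cannot lie in the contracting direction at both ends, show that $\Phi$ lies in the unstable subbundle, hence decays exponentially at rate $\hat L_d$, \emph{provided} the relevant blocks $[x,x+n]$ are ``non-resonant'', i.e. $2\cos(2\pi(\theta+m\alpha))$ stays away from the relevant part of the spectrum of the truncated operator for all $m$ in the block except near the center. (4) Resonances: a block is resonant if $\|2\theta+2m\alpha-E'\|$ is small, equivalently if $m$ is close to a solution of $2\cos(2\pi(\theta+\cdot\,\alpha))\approx$ (something pinned by $E$); the hypothesis $\gamma(\alpha,\theta)=0$ together with $\beta(\alpha)=0$ guarantees that two resonances cannot be simultaneously deep and close, i.e. one rules out double resonances, exactly as in the Bourgain-Goldstein/Jitomirskaya semialgebraic-set argument (or Jitomirskaya's ``$\theta$ is not a resonance of itself'' dichotomy). (5) Combine: off a measure-zero set of $E$, every generalized eigenfunction decays exponentially, so $\sigma(\hat H_{\alpha,\theta,0})\cap\{\hat L_d>0\}$ is pure point with exponentially decaying eigenfunctions.

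The main obstacle, I expect, is Step (2)–(3) \emph{in the block/non-symmetric setting}: the classical non-perturbative localization proofs exploit $\mathrm{SL}(2,\R)$ structure (a single Lyapunov exponent, explicit $2\times2$ Green's function formulas, and the Thouless formula/subharmonicity of $\frac1n\log|\det|$ being automatic). Here $M_E^0$ is complex symplectic of size $2d$, so one must (a) prove the LDT simultaneously for all intermediate exterior powers $\wedge^k M_{n,E}^0$ to pin down the full Lyapunov spectrum and get the spectral gap quantitatively — this uses subharmonicity of $\theta\mapsto\frac1n\log\|\wedge^k M_{n,E}^0(\theta+i\cdot)\|$ and avoids the place where acceleration might jump, which is why $\hat L_d>0$ (a ``stratum'' condition) rather than merely $\hat L_1>0$ is assumed; and (b) replace the scalar Green's-function/Poisson-kernel estimate with its $d\times d$ block analogue, controlling $\|(\hat H_{[x,x']}-E)^{-1}\|$ via Cramer's rule in terms of $d\times d$ minors of transfer matrices, and showing that non-resonant blocks give exponentially small off-diagonal Green's function with loss only polynomial in the block size. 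Getting the bookkeeping of minors/exterior powers to interact cleanly with the resonance combinatorics of Step (4) is the delicate point; everything else is a by-now-standard adaptation of the AJ/BG scheme to $\beta(\alpha)=0$ and $\gamma(\alpha,\theta)=0$.
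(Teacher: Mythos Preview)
Your outline correctly identifies the architecture (LDT for $\wedge^d M_{n,E}$, Green's function control, resonance elimination via $\gamma(\alpha,\theta)=0$), and the paper does follow a Jitomirskaya-type scheme rather than Bourgain--Goldstein energy elimination. But the step you label ``delicate bookkeeping of minors/exterior powers'' is exactly where the paper introduces the new ideas, and your proposal does not supply them. The Dirichlet Green's function $\tilde G_{E,n}=\tilde\mu_{n,x,y}/\tilde f_{E,n}$ fails on both ends in the block case: the numerator is no longer a product of shorter determinants, and the denominator $\tilde f_{E,n}=|\det B|^n|\det M_{n,E}^{UL}|$ is only the upper-left $d\times d$ corner of the transfer matrix, for which there is no mechanism forcing $\sup_\theta|\tilde f_{E,n}|\sim e^{n\hat L^d}$ (the naive Thouless argument only gives weak-$*$ convergence of the density of states, useless against $\log|E-\cdot|$ for real $E$). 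The paper's resolution is to switch to \emph{periodic} boundary conditions, so that the denominator becomes $|f_{E,n}(\theta)|=|\det B|^n|\det(M_{n,E}(\theta)-I_{2d})|$; for the cosine potential a lower bound $\int_\T\log|f_{E,n}|\,d\theta\ge0$ then drops out of Herman's subharmonic trick. The numerator is handled by an explicit row reduction that factors the cofactor through a product involving $M_\ell$ and $M_{n-\ell}$ separately, yielding the tight bound $|\mu_{n,1,x}|\lesssim|\det B|^n\max\big(e^{\ell\hat L^{d-1}+(n-\ell)\hat L^d},\,e^{\ell\hat L^d+(n-\ell)\hat L^{d+1}}\big)$, which is exactly sharp enough (loss $\hat L_d$ per unit length) to beat the denominator. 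None of this is a standard adaptation; the choice of periodic over Dirichlet is the key structural move.

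Your Step~(4) also misidentifies the resonance. It is purely arithmetic: $k$ is a resonance of $\theta$ if $\|2\theta-k\alpha\|_\T\le e^{-\varepsilon_0|k|}$, with no reference to $E$ or to eigenvalues of truncated operators. The mechanism is Jitomirskaya's polynomial structure, which survives on the strip because the diagonal potential is still $2\cos$: the periodic determinant $f_N(\theta)$ is a polynomial of degree $dN$ in $\cos\big(2\pi(\theta+\tfrac{dN-1}{2}\alpha)\big)$. One chooses intervals $I_1\ni0$, $I_2\ni y$ with $-n_j\notin(I_1\cup I_2)+(I_1\cup I_2)$ so that the set $\{\theta+j\alpha\}_{j\in I_1\cup I_2}$ is $(C_3d\varepsilon_0)$-uniform; if $|f_N(\theta+j\alpha)|$ were small for every such $j$, Lagrange interpolation would contradict the lower bound above. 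This yields a $j\in I_2$ with large denominator, hence Green's function decay and $|\hat u_y|\le C e^{-\hat L_d|y|/10}$. The condition $\gamma(\alpha,\theta)=0$ then simply means the resonance sequence $\{n_j\}$ is finite, so almost localization upgrades to localization. The semialgebraic/double-resonance machinery you sketch is not what is used and would require an entirely different setup.
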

We briefly discuss the difficulties. We write $\hat{L}_j(d\alpha, M_E^0)$ as $\hat{L}_j$.
The proof of Anderson localization in the literature relies on the exponential decay of the Green's function (with Dirichlet boundary condition). In fact, let $\hat{H}_{\alpha,n}(\theta)$ be the finite volume operator restricted to $[0,nd-1]$, see \eqref{def:H_Dirichlet}.
Let $\tilde{G}_{E,n}(\theta):=(\hat{H}_{\alpha,n}(\theta)-E)^{-1}$ be the resolvent. The key to prove localization is to prove exponential decay of  $|G_{E,n}(\theta+j\alpha;1,x)|$ for $x\sim n/2$. By Cramer's rule, one has
$$G_{E,n}(\theta+j\alpha;1,x)=\tilde{\mu}_{n,1,x}(\theta+j\alpha)/\tilde{f}_{E,n}(\theta+j\alpha),$$ where $\tilde{f}_{E,n}(\theta)=\det(\hat{H}_{\alpha,n}(\theta)-E)$.

In the scalar-valued case, $d=1$, the numerator $\hat{\mu}_{n,1,x}(\theta+j\alpha)$ is equal to $\tilde{f}_{E,n-x}(\theta+(x+j)\alpha)$, which is an element (out of four in total) of $M_{n-x,E}^0(\theta+(x+j)\alpha)$, and can thus be easily controlled by 
$$\|M_{n-x,E}^0(\theta+(x+j)\alpha)\|\leq \exp((n-x)(\hat{L}^1+\varepsilon)).$$
The key for the $d=1$ case is to find a lower bound for the  denominator $|\tilde{f}_{E,n}(\theta+j\alpha)|$ for a properly chosen $j$. 
It was proved by Jitomirskaya \cite{J99} that $\tilde{f}_{E,n}(\theta)$ is a polynomial in $\cos(2\pi(\theta+(n-1)\alpha/2))$ of degree $n$, hence if it is uniformly small at $(n+1)$ uniformly distributed $\theta+j\alpha$'s and at three consecutive scales $n_0-2, n_0-1, n_0$, then $$\sup_{n\in \{n_0-2,n_0-1,n_0\}} \sup_{\theta\in \T} |\tilde{f}_{E,n}(\theta)|\leq \exp(n(\hat{L}_1-\varepsilon)).$$
However in (and {\it only in}) the scalar-valued case, we have $$\sup_{\theta\in \T} \|M_{n_0,E}^0(\theta)\|\leq \max_{n\in \{n_0-2, n_0-1, n_0\}} \sup_{\theta\in \T} |\tilde{f}_{E,n}(\theta)|\leq \exp(n(\hat{L}_1-\varepsilon)),$$
which would lead to a contradiction.

In the block-valued case $d\geq 1$, there are significant difficulties in estimating both the numerator and the denominator.
First of all the numerator can no longer be identified with a determinant $\tilde{f}_{E,n}$.
In \cites{BJ,SK_block} and \cite{AJ2}\footnote{In \cite{AJ2}, $d=\infty$}, the numerators are estimated through a random walk type expansion, where the expansion is only efficient in the large coupling regime (replacing $2\cos$ with $2\lambda^{-1} \cos$ and $\lambda$ small). 
In the non-perturbative regime, a much tighter upper bound is need, in fact we prove \[|\mu_{n,1,x}(\theta)|\leq |\hat{v}_d|^{nd}\cdot \max\left(e^{d^{-1}x(\hat{L}^{d-1}+\varepsilon)+(n-d^{-1}x)(\hat{L}^d+\varepsilon)}, e^{d^{-1}x(\hat{L}^d+\varepsilon)+(n-d^{-1}x)(\hat{L}^{d+1}+\varepsilon)}\right)\] uniformly in $\theta$ and $\mu_{n,1,x}(\theta)$ is the numerator of the Green's function for the finite volume operator $P_{\alpha,n}(\theta)$ with periodic boundary condition (see \eqref{def:Pn}). 
We can prove a similar bound for $\tilde{\mu}_{n,1,x}$ as well. The reason why we choose the periodic instead Dirichlet boundary condition lies in the estimate of the denominator which we elaborate on below.

For the denominator, we only know that $|\tilde{f}_{E,n}(\theta)|/|\hat{v}_d|^{nd}$ is the determinant of the upper left $d\times d$ corner of $M_{n,E}^0(\theta)$, hence it is unclear why $\sup_{\theta}|\tilde{f}_{E,n}(\theta)|/|\hat{v}_d|^{nd}$ is close to $$\sup_{\theta}\|\wedge^d M_{n,E}^0(\theta)\|=\exp(n(\hat{L}^d+o(1))).$$ 
One might attempt to argue that 
\begin{align}\label{eq:wrong}
\frac{1}{nd}\log |\tilde{f}_{E,n}(\theta)|=&\int_{\R} \log |E-E'|\, \hat{\mu}_{n,\alpha,\theta}(\mathrm{d}E')\longrightarrow \\ &\qquad\qquad\qquad \int_{\R} \log |E-E'|\, \widehat{\mathcal{N}}(\mathrm{d} E')=\hat{L}^d(\alpha,\hat{A}_E^0)+\log |\hat{v}_d|,
\end{align}
where $\hat{\mu}_{n,\alpha,\theta}$ is the finite-scale density of states measure (see \eqref{def:finite_DOS}) and the second line is the Thouless formula \eqref{eq:Thouless} that was proved in the $d>1$ case in \cites{CS2,KS,HP}.
However the finite volume density of states measure $\hat{\mu}_{n,\alpha,\theta}$ only converges in the weak-$*$  sense (when integrated against continuous function of compact support) to the density of states measure $\widehat{\mathcal{N}}$, which does not apply when integrated against $f(\cdot)=\log|E-\cdot|$ for $E\in \R$. 
Therefore the approach \eqref{eq:wrong} does not lead to the desired result when $E\in \R$. See Remark \ref{rmk:Dirichlet_vs_periodic}, too.
We resolve this difficulty by studying the Green's function $G_{E,n}(\theta)=(P_{\alpha,n}(\theta)-E)^{-1}$ with a periodic boundary condition.
Let $f_{E,n}(\theta)=\det(P_{\alpha,n}(\theta)-E)$ be the denominator of the Green's function $G_{E,n}(\theta)$. We prove that
$$\log |f_{E,n}(\theta)|=nd\log |\hat{v}_d|+\log |\det (M^0_{n,E}(\theta)-I_{2d})|.$$
We then show that for an admissible sequence of $n$, on a large measure set of $\theta$, 
\[\log |\det (M^0_{n,E}(\theta)-I_{2d})|=n(\hat{L}^d+o(1)).\]
The choice of the admissible sequence relies crucially on Lemma 3.2 of \cite{GS3}, which guarantees that the most expanding $d$-planes of $M_{n,E}^0$, respectively of $(M_{n,E}^0)^*$,  are not entirely orthogonal to each other.
Our lower bound Lemma \ref{lem:deno} for the denominator is robust, and can be applied to establish Anderson localization for operators $\hat{H}_{\alpha,\theta}$ on strips with general analytic potentials defined on higher-dimensional tori. We will address this problem in an upcoming work. For the cosine potential which arises in this paper, there is a more elementary argument using Herman's subharmonic argument, see Lemma~\ref{lem:deno_ARC}.

A stronger version of {\it almost localization} holds for arbitrary $\theta$, see Theorem \ref{thm:almost_AL}.
The link between Theorem \ref{thm:almost_AL} and Theorem \ref{thm:ARC} is provided by Corollary \ref{cor:non-sa-HP} that if $E\in \Sigma_{\alpha,0}$, then the dual Lyapunov exponent satisfies $\hat{L}_d(d\alpha, M_E^0)>0$.
This was announced in Jitomirskaya's 2022 ICM talk, and is indeed a consequence of a non-selfadjoint variant of Haro and Puig's result on dual Lyapunov exponents, which we elaborate on below.

In \cite{HP}, Haro and Puig proved the following remarkable  theorem concerning Lyapunov exponents of dual models. The  techniques in~\cite{HP} are  inspired by Johnson's  parametrization of the stable and unstable subspaces in the uniformly hyperbolic setting via the Weyl-Kodaira $M$-matrix, see~\cites{Joh, JPS, JONNF}.

\begin{theorem}\cite{HP}*{Theorem 1.3 and Corollary 1.6}\label{thm:HP}
For any irrational $\alpha$, 
\begin{align}
L^1(\alpha,A_{E,0})=\hat{L}^d(\alpha,\hat{A}_E^0)+\log |\hat{v}_{-d}|
=\sum_{j: \hat{L}_j(\alpha, \hat{A}_E^0)\geq 0} \hat{L}_j(\alpha, \hat{A}_E^0)+\log |\hat{v}_{-d}|.
\end{align}
\end{theorem}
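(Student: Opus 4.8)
The plan is to deduce the identity from Aubry duality together with two Thouless-type formulas, and to extract the second equality from the symplectic structure of the dual cocycle. First I would invoke Aubry duality: since $v(x)=\sum_{1\le|k|\le d}\hat{v}_k e^{2\pi i kx}$ is a trigonometric polynomial, on the Fourier side multiplication by $v(\theta+n\alpha)$ becomes the finite-range hopping $\phi\mapsto\sum_{0<|k|\le d}\hat{v}_k\phi_{n-k}$, while the Laplacian $u_n\mapsto u_{n+1}+u_{n-1}$ becomes multiplication by $2\cos(2\pi x)$; hence the two-dimensional Fourier transform conjugates the direct integral $\int_{\T}^{\oplus}H_{\alpha,\theta}\,\mathrm{d}\theta$ on $L^2(\T\times\Z)$ to $\int_{\T}^{\oplus}\hat{H}_{\alpha,x}\,\mathrm{d}x$. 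Because $\alpha$ is irrational, the integrated density of states of each direct integral agrees with that of a single fibre, so the density of states $\mathcal{N}$ of $H_{\alpha,\theta}$ and the density of states $\widehat{\mathcal{N}}$ of $\hat{H}_{\alpha,\theta}$ coincide, $\mathcal{N}=\widehat{\mathcal{N}}$.

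Next I would apply the classical Thouless formula to the $\mathrm{SL}(2,\C)$ cocycle $(\alpha,A_{E,0})$ from \eqref{def:AE} and the generalized Thouless formula \eqref{eq:Thouless} to the block Jacobi operator \eqref{def:H_dual}; together with the previous step and $|\hat{v}_d|=|\hat{v}_{-d}|$ (real-valuedness of $v$) these give, for every $E\in\C$,
\[
L^1(\alpha,A_{E,0})=\int_{\R}\log|E-E'|\,\mathcal{N}(\mathrm{d}E')=\hat{L}^d(\alpha,\hat{A}_E^0)+\log|\hat{v}_{-d}|,
\]
which is the first asserted equality; the additive constant $\log|\hat{v}_d|$ is the normalization carried by the leading (off-diagonal) block of $\hat{H}_{\alpha,\theta,0}$. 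For the second equality I would use the complex-symplecticity of $M_E^0$ (see \eqref{eq:symplectic}) and the conjugacy \eqref{eq:M=Ad}, which give $\hat{L}_j(\alpha,\hat{A}_E^0)=-\hat{L}_{2d+1-j}(\alpha,\hat{A}_E^0)$ for $1\le j\le 2d$; in particular $\hat{L}_{d+1}(\alpha,\hat{A}_E^0)\le0\le\hat{L}_d(\alpha,\hat{A}_E^0)$, so every strictly positive Lyapunov exponent sits among the top $d$ and the remaining ones among the top $d$ vanish, whence $\sum_{j:\hat{L}_j\ge0}\hat{L}_j=\hat{L}_1+\dots+\hat{L}_d=\hat{L}^d(\alpha,\hat{A}_E^0)$.

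The main obstacle is the generalized Thouless formula \eqref{eq:Thouless}, the content of \cites{CS2,KS,HP}. The difficulty is exactly the one flagged after \eqref{eq:wrong}: one cannot pass the weak-$*$ limit of the finite-scale densities of states $\hat{\mu}_{n,\alpha,\theta}$ through the kernel $\log|E-\cdot|$ at a real energy $E$. The rigorous route is to verify that both sides are subharmonic in $E\in\C$, carry the Riesz measure $\mathcal{N}$, and have matching behaviour at infinity, the additive constant being pinned down by the leading coefficient of $\hat{H}_{\alpha,\theta,0}$; this needs subharmonicity of $E\mapsto\hat{L}^d(\alpha,\hat{A}_E^0)$ (obtained by writing it as a limit of integrals of $\log\|\wedge^d M_{n,E}^0\|$), a uniform Herman-type upper bound, and the Riesz-measure computation. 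The alternative taken in \cite{HP} instead parametrizes the invariant $d$-planes of $\hat{A}_E^0$ via the Weyl--Kodaira $M$-matrix in the spirit of \cites{Joh,JPS,JONNF} and relates its analytic structure to the decay of the dual eigenstates; I would nevertheless keep the Thouless route as the primary plan since it is the most self-contained.
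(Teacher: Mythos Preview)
Your proposal is correct and matches the paper's description of the proof: this theorem is cited from \cite{HP} rather than proved anew here, and the paper summarizes the Haro--Puig argument as resting on precisely the two ingredients you identify --- the Thouless formula for both the Schr\"odinger operator and its strip dual, together with the equality $\mathcal{N}=\widehat{\mathcal{N}}$ of their densities of states via Aubry duality. Your derivation of the second equality from the symplectic symmetry $\hat{L}_j=-\hat{L}_{2d+1-j}$ is also the intended one.
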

The proof is built on two key ingredients.
One is the Thouless formula, that links the Lyapunov exponent to the density of states measure, for both $H_{\alpha,\theta,\varepsilon=0}$ (the selfadjoint Schr\"odinger operator) and $\hat{H}_{\alpha,\theta,\varepsilon=0}$ (the dual model on the strip, which is also selfadjoint). 
The second is the fact that the two dual selfadjoint operators share the same density of states measure.

The following non-selfadjoint version of the Haro, Puig formula was announced in Jitomirskaya's ICM talk in 2022.
\begin{theorem}\label{announce}[Jitomirskaya, ICM talk 2022]
For any irrational $\alpha$,
\begin{align}
L^1(\alpha,A_{E,\varepsilon})=L^1(\alpha,A_{E,0})-\sum_{j: 0\leq \hat{L}_j(\alpha,\hat{A}_E^0)<2\pi\varepsilon} \hat{L}_j(\alpha,\hat{A}_E^0)+\#\{j: 0\leq \hat{L}_j(\alpha,\hat{A}_E^0)<2\pi\varepsilon\}\cdot (2\pi\varepsilon).
\end{align}
\end{theorem}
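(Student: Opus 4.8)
\emph{Strategy and the shift lemma.} The plan is to reduce Theorem~\ref{announce} to three ingredients: a complexified Haro--Puig identity valid for all $\varepsilon\ge 0$, an elementary gauge lemma for the dual Lyapunov exponents, and the self-adjoint formula of Theorem~\ref{thm:HP}. For the gauge lemma, note that the diagonal transformation $D_\varepsilon=\mathrm{diag}(e^{-2\pi n\varepsilon})_{n\in\Z}$ conjugates $\hat{H}_{\alpha,\theta,\varepsilon}$ to $\hat{H}_{\alpha,\theta,0}$: the $n\to n-k$ hopping $\hat{v}_k e^{-2\pi k\varepsilon}$ becomes $\hat{v}_k e^{-2\pi k\varepsilon}e^{2\pi k\varepsilon}=\hat{v}_k$, while the cosine potential is unchanged. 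Since $D_\varepsilon$ has geometric diagonal entries, it conjugates the transfer cocycle $(\alpha,\hat{A}_E^{\varepsilon})$ to $(\alpha,\hat{A}_E^{0})$ by a $\theta$-independent constant matrix, up to a scalar factor $e^{2\pi\varepsilon}$ per step; hence
\[
\hat{L}_j(\alpha,\hat{A}_E^{\varepsilon})=\hat{L}_j(\alpha,\hat{A}_E^{0})-2\pi\varepsilon ,\qquad 1\le j\le 2d,\ \ \varepsilon\ge 0 ,
\]
equivalently $\hat{L}_j(d\alpha,M_E^{\varepsilon})=\hat{L}_j(d\alpha,M_E^{0})-2\pi d\varepsilon$. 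By the complex-symplectic symmetry $\hat{L}_{2d+1-j}(d\alpha,M_E^0)=-\hat{L}_j(d\alpha,M_E^0)$ the exponents $\hat{L}_1(\alpha,\hat{A}_E^{0}),\dots,\hat{L}_d(\alpha,\hat{A}_E^{0})$ are $\ge 0$, so the positive exponents of $\hat{A}_E^{\varepsilon}$ are precisely those among $\hat{L}_1(\alpha,\hat{A}_E^{0}),\dots,\hat{L}_d(\alpha,\hat{A}_E^{0})$ that exceed $2\pi\varepsilon$.

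\emph{The complexified Haro--Puig identity.} The core step is to prove that
\[
L^1(\alpha,A_{E,\varepsilon})=\sum_{j\,:\,\hat{L}_j(\alpha,\hat{A}_E^{\varepsilon})>0}\hat{L}_j(\alpha,\hat{A}_E^{\varepsilon})+\log|\hat{v}_{-d}|+2\pi d\varepsilon ,\qquad \varepsilon\ge 0 ,
\]
an identity I will call $(\star)$; at $\varepsilon=0$ it is precisely Theorem~\ref{thm:HP}. For $\varepsilon>0$ I would run the Haro--Puig mechanism in the non-self-adjoint category. The Aubry/Fourier conjugation still intertwines the Schr\"odinger family $H_{\alpha,\cdot,\varepsilon}$, whose potential $v(\cdot+i\varepsilon)$ is the (complex) trigonometric polynomial with Fourier coefficients $\hat{v}_k e^{-2\pi k\varepsilon}$, with the block dual $\hat{H}_{\alpha,\cdot,\varepsilon}$; consequently these families share a common density of states $\mathcal{N}_\varepsilon$, now a probability measure on $\C$. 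Thouless-type identities on the two sides then represent $L^1(\alpha,A_{E,\varepsilon})$, and respectively $\sum_{j:\hat{L}_j(\alpha,\hat{A}_E^{\varepsilon})>0}\hat{L}_j(\alpha,\hat{A}_E^{\varepsilon})+\log|\hat{v}_{-d}|+2\pi d\varepsilon$ — the additive constant being the log-modulus $\log|\hat{v}_{-d}e^{2\pi d\varepsilon}|$ of the leading hopping coefficient of $\hat{H}_{\alpha,\theta,\varepsilon}$ — as one and the same logarithmic potential $\int_{\C}\log|E-z|\,\mathcal{N}_\varepsilon(\mathrm{d}z)$, so that subtracting gives $(\star)$. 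An alternative, Thouless-free route uses that, by the shift lemma, the right-hand side of $(\star)$ is convex and piecewise affine in $\varepsilon$ with slopes in $2\pi\Z$, while $\varepsilon\mapsto L^1(\alpha,A_{E,\varepsilon})$ has the same structure by Avila's quantization of the acceleration; the two sides agree at $\varepsilon=0$ (Theorem~\ref{thm:HP}) and for all large $\varepsilon$ (where both equal $2\pi d\varepsilon+\log|\hat{v}_{-d}|$, since $\kappa(\alpha,A_{E,\varepsilon})\equiv d$ and the leading term of $v(\cdot+i\varepsilon)$ fixes the asymptotics), so it suffices to match right-derivatives, i.e.\ to establish the acceleration formula $\kappa(\alpha,A_{E,\varepsilon})=\#\{1\le j\le d:\ 0\le\hat{L}_j(\alpha,\hat{A}_E^{0})<2\pi\varepsilon\}$; integrating this against $\varepsilon$ then reproduces Theorem~\ref{announce} directly.

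\emph{Conclusion, and the main obstacle.} To finish from $(\star)$, substitute the shift lemma, so that $\sum_{j:\hat{L}_j(\alpha,\hat{A}_E^{\varepsilon})>0}\hat{L}_j(\alpha,\hat{A}_E^{\varepsilon})=\sum_{1\le j\le d,\ \hat{L}_j(\alpha,\hat{A}_E^{0})>2\pi\varepsilon}\bigl(\hat{L}_j(\alpha,\hat{A}_E^{0})-2\pi\varepsilon\bigr)$, and eliminate $\log|\hat{v}_{-d}|$ using Theorem~\ref{thm:HP} in the form $\log|\hat{v}_{-d}|=L^1(\alpha,A_{E,0})-\sum_{j=1}^{d}\hat{L}_j(\alpha,\hat{A}_E^{0})$. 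Splitting $\sum_{j=1}^d$ according to whether $\hat{L}_j(\alpha,\hat{A}_E^{0})$ exceeds $2\pi\varepsilon$, and using $\#\{1\le j\le d:\hat{L}_j(\alpha,\hat{A}_E^{0})>2\pi\varepsilon\}=d-\#\{1\le j\le d:\,0\le\hat{L}_j(\alpha,\hat{A}_E^{0})\le 2\pi\varepsilon\}$, the $2\pi d\varepsilon$ terms cancel and one is left with exactly the asserted formula (an index with $\hat{L}_j=2\pi\varepsilon$ contributes $0$ to the sum and $2\pi\varepsilon$ to $2\pi\varepsilon$ times the count, so the strict and non-strict versions coincide). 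The only genuinely hard step is $(\star)$ for $\varepsilon>0$. The delicate point is exactly the one the authors stress around~\eqref{eq:wrong}: finite-volume densities of states converge only weak-$*$, so one must justify integrating $\log|E-\cdot|$ against the limiting measure; for $\varepsilon>0$ the complex density of states should separate from the real energy $E$, which restores the logarithmic-potential representation, but proving the non-self-adjoint block Thouless formula with the correct leading-coefficient constant, and controlling the limit $\varepsilon\downarrow 0$, is where the real work lies. The quantization route instead needs to pin down that the number of positive Lyapunov exponents of $(\alpha,\hat{A}_E^{\varepsilon})$ equals $d-\kappa(\alpha,A_{E,\varepsilon})$ — the same content, extracted from the dynamics of the complexified cocycle rather than from potential theory.
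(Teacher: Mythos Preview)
Your reduction is exactly the paper's: Theorem~\ref{announce} is equivalent to the complexified Haro--Puig identity $(\star)$ (the paper's Theorem~\ref{thm:non-sa-HP}) together with the shift lemma $\hat{L}_j(\alpha,\hat{A}_E^{\varepsilon})=\hat{L}_j(\alpha,\hat{A}_E^{0})-2\pi\varepsilon$ (the paper's \eqref{eq:LA_eps=LA_0}) and Theorem~\ref{thm:HP}. Your final algebraic step is correct. The gap is entirely in your proposed proofs of~$(\star)$.

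Your Route~A --- running the Haro--Puig argument with a complex density of states --- is precisely the approach the paper says \emph{does not} go through directly: ``A major difficulty in extending Haro and Puig's formula to the non-selfadjoint case \ldots\ is the lack of spectral theorems. Hence there is no density of states measure.'' You acknowledge this is ``where the real work lies,'' but you do not supply that work, and the paper's point is that this route is obstructed rather than merely technical. Your Route~B is circular: the acceleration identity $\kappa(\alpha,A_{E,\varepsilon})=\#\{j:\,0\le\hat{L}_j(\alpha,\hat{A}_E^{0})\le 2\pi\varepsilon\}$ is Corollary~\ref{cor:Avila_acc}, which in the paper is \emph{derived from}~$(\star)$, not the other way around. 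Matching two convex piecewise-affine functions that agree at $0$ and at infinity still requires knowing their slopes agree, and you offer no independent argument for that.

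The paper's actual proof of~$(\star)$ takes a genuinely different path: it avoids the infinite-volume density of states altogether by working at rational frequencies $p/q$ with finite-volume periodic operators on the cyclic group $\Z_q$. There Aubry duality is implemented by the finite Fourier transform (Lemma~\ref{lem:tH=hH}), a Chambers-type formula (Lemma~\ref{lem:tD_expression}) controls the $\theta$-dependence, and the characteristic determinants on both sides are related to $\det(A_{q,E,\varepsilon}-I_2)$ and $\det(\hat{A}_{q,E}^{\varepsilon}-I_{2d})$ respectively (Lemmas~\ref{lem:D=Aq}, \ref{lem:hD=hAq}). This yields~$(\star)$ at rational $\alpha$ up to $O(\varepsilon_0)$; continuity of Lyapunov exponents at irrational $\alpha$ (Theorem~\ref{thm:conti_L_alpha}) and $1$-regularity finish the job. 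So the missing idea in your proposal is this finite-scale, rational-approximation substitute for the non-existent non-self-adjoint spectral theorem.
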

Noting that $\hat{L}_j(\alpha,\hat{A}_E^\varepsilon)=\hat{L}_j(\alpha,\hat{A}_E^0)-2\pi \varepsilon$, see \eqref{eq:LA_eps=LA_0}, and taking Haro and Puig's formula (which holds at $\varepsilon=0$) into account, we see that Theorem \ref{announce} in the same as a non-selfadjoint version of Haro and Puig's formula, see Theorem~\ref{thm:non-sa-HP} below.

Since the proof of Theorem \ref{announce} does not seem available in the literature, and because it appears in our resolution of the almost reducibility conjecture, we provide a proof in Section~\ref{sec:HP} for the reader's convenience.
\begin{theorem}\label{thm:non-sa-HP}
For any irrational $\alpha$,
\begin{align}\label{eq:non-sa_HP}
L^1(\alpha,A_{E,\varepsilon})=\sum_{j: \hat{L}_j(\alpha, \hat{A}_E^\varepsilon)\geq 0} \hat{L}_j(\alpha, \hat{A}_E^\varepsilon)+\log |\hat{v}_{-d} e^{2\pi d\varepsilon}|.
\end{align}
\end{theorem}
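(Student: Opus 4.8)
The plan is to reduce Theorem~\ref{thm:non-sa-HP} to the already-proved selfadjoint Haro--Puig formula (Theorem~\ref{thm:HP}) by an analytic continuation / complexification in the parameter $\varepsilon$, together with the acceleration bookkeeping recorded in Theorem~\ref{announce}. Concretely, the three quantities to track as functions of $\varepsilon\ge 0$ are: the left side $L^1(\alpha,A_{E,\varepsilon})$; the dual exponents $\hat L_j(\alpha,\hat A_E^\varepsilon)$, which by \eqref{eq:LA_eps=LA_0} are just the affine translates $\hat L_j(\alpha,\hat A_E^0)-2\pi\varepsilon$; and the right-hand side of \eqref{eq:non-sa_HP}. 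I would first establish the statement at $\varepsilon=0$, where it is exactly Theorem~\ref{thm:HP} after noting $\log|\hat v_{-d}e^{0}|=\log|\hat v_{-d}|$ and that $\{j:\hat L_j(\alpha,\hat A_E^0)\ge 0\}$ is the index set appearing there. Then I would extend to general $\varepsilon$.

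For the extension step, the key is to understand how the right-hand side of \eqref{eq:non-sa_HP} changes as $\varepsilon$ increases. Write $S(\varepsilon)=\sum_{j:\hat L_j(\alpha,\hat A_E^\varepsilon)\ge 0}\hat L_j(\alpha,\hat A_E^\varepsilon)$. Since $\hat L_j(\alpha,\hat A_E^\varepsilon)=\hat L_j(\alpha,\hat A_E^0)-2\pi\varepsilon$, the summand for a fixed $j$ decreases linearly at rate $2\pi$ until it hits $0$ — precisely when $2\pi\varepsilon=\hat L_j(\alpha,\hat A_E^0)$ — and is dropped from the sum thereafter. A direct computation gives
\begin{align}
S(\varepsilon)&=\sum_{j:\hat L_j(\alpha,\hat A_E^0)\ge 2\pi\varepsilon}\bigl(\hat L_j(\alpha,\hat A_E^0)-2\pi\varepsilon\bigr)\\
&=S(0)-\sum_{j:0\le \hat L_j(\alpha,\hat A_E^0)<2\pi\varepsilon}\hat L_j(\alpha,\hat A_E^0)-\#\{j:\hat L_j(\alpha,\hat A_E^0)\ge 2\pi\varepsilon\}\cdot 2\pi\varepsilon.
\end{align}
Also $\log|\hat v_{-d}e^{2\pi d\varepsilon}|=\log|\hat v_{-d}|+2\pi d\varepsilon$, and there are exactly $d$ dual exponents that are $\ge 0$ at $\varepsilon=0$ (by the pairing $\hat L_j=-\hat L_{2d+1-j}$ at $\varepsilon=0$, assuming none vanish, with the degenerate case handled by continuity), so $\#\{j:\hat L_j(\alpha,\hat A_E^0)\ge 2\pi\varepsilon\}+\#\{j:0\le \hat L_j(\alpha,\hat A_E^0)<2\pi\varepsilon\}=d$. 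Substituting, the right-hand side of \eqref{eq:non-sa_HP} becomes
\begin{align}
S(0)+\log|\hat v_{-d}|-\sum_{j:0\le \hat L_j(\alpha,\hat A_E^0)<2\pi\varepsilon}\hat L_j(\alpha,\hat A_E^0)+\#\{j:0\le \hat L_j(\alpha,\hat A_E^0)<2\pi\varepsilon\}\cdot 2\pi\varepsilon,
\end{align}
which, using the $\varepsilon=0$ case $S(0)+\log|\hat v_{-d}|=L^1(\alpha,A_{E,0})$, is exactly the right-hand side of the formula in Theorem~\ref{announce}. Hence Theorem~\ref{thm:non-sa-HP} is equivalent to Theorem~\ref{announce}, as the paragraph preceding the statement asserts, and it remains to prove Theorem~\ref{announce}.

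For Theorem~\ref{announce} itself, I would use the defining property of the acceleration \eqref{def:acc} and its quantization. Consider $g(\varepsilon):=L^1(\alpha,A_{E,\varepsilon})$ on $\varepsilon\in[0,\infty)$; it is convex, piecewise linear (by quantization of $\kappa$), with right-derivative $2\pi\kappa(\alpha,A_{E,\varepsilon})$. The claim is that the slope jumps are governed exactly by the values $\hat L_j(\alpha,\hat A_E^0)/(2\pi)$: as $\varepsilon$ crosses such a value the acceleration increases by the multiplicity of that value among the dual exponents. The cleanest route is: (i) for $\varepsilon$ large, both sides are known — $A_{E,\varepsilon}$ has a dominant entry $\sim \hat v_{\pm d}e^{2\pi d\varepsilon}$ coming from the largest Fourier mode of $v$, giving $L^1(\alpha,A_{E,\varepsilon})=2\pi d\varepsilon+\log|\hat v_{-d}|+o(1)$, and simultaneously all dual exponents $\hat L_j(\alpha,\hat A_E^\varepsilon)=\hat L_j(\alpha,\hat A_E^0)-2\pi\varepsilon$ become negative, so the asserted formula reads $L^1=0+\log|\hat v_{-d}e^{2\pi d\varepsilon}|$, matching; (ii) integrate the acceleration back down from $\varepsilon=\infty$ to the given $\varepsilon$, showing that the total drop equals $\sum_{j:\hat L_j(\alpha,\hat A_E^0)\ge 2\pi\varepsilon}(\hat L_j(\alpha,\hat A_E^0)-2\pi\varepsilon)$ plus the linear term. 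Step (ii) is where Haro--Puig at $\varepsilon=0$ is the anchor: combining the $\varepsilon=0$ and $\varepsilon=\infty$ endpoints pins down the sum of all positive dual exponents, and the interior structure (which $\hat L_j$ crosses zero at which $\varepsilon$, with what multiplicity) must be matched to the slope changes of $g$. To get this matching I would invoke the structure of the dual cocycle: the exponents $\hat L_j(\alpha,\hat A_E^\varepsilon)$ are themselves accelerations/pieces controlled by the same complexification, and one identifies the break points of $g$ with the zeros of the $\hat L_j$ by applying the $\varepsilon=0$ Haro--Puig formula not just at $E$ but along a deformation, or equivalently by using that $\hat L^j(\alpha,\hat A_E^\varepsilon):=\sum_{i\le j}\hat L_i(\alpha,\hat A_E^\varepsilon)$ is convex in $\varepsilon$ with controlled slopes and that $L^1(\alpha,A_{E,\varepsilon})-\log|\hat v_{-d}e^{2\pi d\varepsilon}|$ equals $\sum_{j}\max(\hat L_j(\alpha,\hat A_E^\varepsilon),0)-(\text{const})$ by Theorem~\ref{thm:HP} applied after noting the selfadjoint formula is stable under the substitution $\theta\mapsto\theta+i\varepsilon$ in the appropriate sense.

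The main obstacle I anticipate is precisely this last point: Theorem~\ref{thm:HP} as stated is a statement at $\varepsilon=0$ and its proof in \cite{HP} uses selfadjointness of both $H_{\alpha,\theta,0}$ and $\hat H_{\alpha,\theta,0}$ (Thouless formula plus equality of densities of states), neither of which survives for $\varepsilon>0$. So one cannot simply "plug in $\varepsilon$". The honest argument must either (a) complexify correctly — work with the analytic family $\varepsilon\mapsto A_{E,\varepsilon}$ and $\varepsilon\mapsto\hat A_E^\varepsilon$, show both sides of \eqref{eq:non-sa_HP} are continuous piecewise-linear in $\varepsilon$ with matching break points (via acceleration quantization on both sides) and matching asymptotics at $\varepsilon=\infty$, and conclude they agree because they agree at one point ($\varepsilon=0$, by Haro--Puig) and have the same derivative everywhere — or (b) redo the Haro--Puig subordinacy/$M$-matrix analysis directly in the non-selfadjoint setting. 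Approach (a) is cleaner and is what I would attempt; the technical heart is proving the break points of $\varepsilon\mapsto L^1(\alpha,A_{E,\varepsilon})$ coincide with the values $2\pi\varepsilon=\hat L_j(\alpha,\hat A_E^0)$ with correct multiplicities, for which I expect to need a quantitative relation between the acceleration of the scalar cocycle and the dual exponents — essentially a derivative form of Haro--Puig — obtained by differentiating the $\varepsilon=0$ identity along perturbations or by a direct upper/lower bound argument on $\|\wedge^k M_{n,E}^\varepsilon\|$ as in the discussion of the denominator earlier in the paper.
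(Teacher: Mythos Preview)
Your reduction of Theorem~\ref{thm:non-sa-HP} to Theorem~\ref{announce} is correct (and the paper itself notes this equivalence in the paragraph preceding the statement). The difficulty is entirely in what comes next, and here your proposal has a genuine gap.

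Your plan (a) hinges on matching the break points of the convex piecewise-linear function $\varepsilon\mapsto L^1(\alpha,A_{E,\varepsilon})$ with the values $\hat L_j(\alpha,\hat A_E^0)/(2\pi)$. You correctly observe that endpoint data ($\varepsilon=0$ via Haro--Puig, $\varepsilon\to\infty$ via the dominant Fourier mode asymptotics) do not determine a convex piecewise-linear function, so the break-point matching is, as you say, the ``technical heart''. But your proposed mechanisms for this matching do not work. ``Differentiating the $\varepsilon=0$ identity along perturbations'' is not available: the Haro--Puig proof uses the Thouless formula and equality of densities of states, both of which rely on selfadjointness and give a single-point identity, not a germ. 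And the statement you would need --- that the right-derivative $\kappa(\alpha,A_{E,\varepsilon})$ equals $\#\{j:0\le\hat L_j(\alpha,\hat A_E^0)\le 2\pi\varepsilon\}$ --- is exactly Corollary~\ref{cor:Avila_acc}, which in the paper is \emph{deduced from} Theorem~\ref{thm:non-sa-HP}, not proved independently. So approach (a) as outlined is circular.

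The paper's proof is entirely different and avoids this circularity: it fixes $\varepsilon$ and proves \eqref{eq:non-sa_HP} directly by rational approximation. For $\alpha=p/q$ one works with finite periodic operators on $\ell^2(\Z_q)$, establishes Aubry duality there via the Fourier transform on the cyclic group (Lemma~\ref{lem:tH=hH}), derives a Chambers-type formula (Lemma~\ref{lem:tD_expression}), and relates the relevant determinants to $\det(A_{q,E,\varepsilon}-I_2)$ and $\det(\hat A_{q,E}^\varepsilon-I_{2d})$ respectively (Lemmas~\ref{lem:D=Aq} and~\ref{lem:hD=hAq}). This yields an exact finite-scale identity, and one passes to irrational $\alpha$ using regularity (Theorem~\ref{thm:Lpq_regular_no_dev}) and continuity of Lyapunov exponents (Theorem~\ref{thm:conti_L_alpha}). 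The point is that at rational frequencies there is no selfadjointness obstruction: one is just comparing characteristic polynomials of explicit finite matrices related by a unitary conjugation. This is the idea your proposal is missing.
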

A major difficulty in extending Haro and Puig's formula to the non-selfadjoint case (i.e.,  when $\varepsilon\neq 0$) is the lack of spectral theorems. Hence there is no density of states measure, which played a crucial role in Haro and Puig's original proof.
We overcome this problem by establishing the Aubry duality, by means of the characters on  cyclic groups, for finite scale operators with periodic boundary conditions and rational frequencies. 
These finite-scale operators naturally have (finite scale) density of states measures via the normalized counting measures of the eigenvalues.
We then prove Thouless formulas for the finite scale rational-frequency operators, which after combining with the duality, lead to \eqref{eq:non-sa_HP} with rational $\alpha$.
Finally, \eqref{eq:non-sa_HP} for irrational $\alpha$ follows from rational approximation and continuity of the Lyapunov exponents at irrational frequencies.

It was also addressed by Jitomirskaya in her ICM talk that Theorem \ref{announce} leads to the following characterization of Avila's acceleration. 
\begin{corollary}\label{cor:Avila_acc}
For irrational $\alpha$, and any $\varepsilon\ge0$
\begin{align}
\kappa(\alpha,A_{E,\varepsilon})=\#\{j: 0\leq \hat{L}_j(\alpha, \hat{A}_E^\varepsilon)\le 2\pi \varepsilon\}.
\end{align}
\end{corollary}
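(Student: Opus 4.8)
The plan is to read the acceleration off the closed formula for $L^1(\alpha,A_{E,\varepsilon})$ supplied by Theorem~\ref{announce} (equivalently Theorem~\ref{thm:non-sa-HP}). Since $\det A_{E,\varepsilon}\equiv 1$, the Lyapunov exponent appearing in the definition~\eqref{def:acc} of the acceleration is the top exponent $L(\alpha,A_{E,\varepsilon})=L^1(\alpha,A_{E,\varepsilon})\geq 0$, so Theorem~\ref{announce} applies directly. First I would rewrite that theorem, for all $\varepsilon\geq 0$, in the form
\begin{align}
L^1(\alpha,A_{E,\varepsilon})=L^1(\alpha,A_{E,0})+\sum_{j\,:\,\hat L_j(\alpha,\hat A_E^0)\geq 0}\max\!\big(2\pi\varepsilon-\hat L_j(\alpha,\hat A_E^0),\,0\big),
\end{align}
which exhibits $\varepsilon\mapsto L^1(\alpha,A_{E,\varepsilon})$ as a continuous, convex, piecewise-affine function of $\varepsilon$ whose slope changes only at the finitely many $\varepsilon$ with $2\pi\varepsilon\in\{\hat L_j(\alpha,\hat A_E^0):\hat L_j(\alpha,\hat A_E^0)\geq 0\}$.

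Next I would compute the one-sided derivative required by~\eqref{def:acc}. The $j$-th summand contributes slope $2\pi$ precisely when $2\pi\varepsilon\geq \hat L_j(\alpha,\hat A_E^0)$ and slope $0$ otherwise (at the breakpoint $2\pi\varepsilon=\hat L_j(\alpha,\hat A_E^0)$ the right-hand slope is $2\pi$), so that
\begin{align}
\kappa(\alpha,A_{E,\varepsilon})=\lim_{\delta\to 0^+}\frac{L^1(\alpha,A_{E,\varepsilon+\delta})-L^1(\alpha,A_{E,\varepsilon})}{2\pi\delta}=\#\{j\,:\,0\leq \hat L_j(\alpha,\hat A_E^0)\leq 2\pi\varepsilon\}.
\end{align}
Finally, I would translate this into the form stated in the corollary by invoking the relation $\hat L_j(\alpha,\hat A_E^\varepsilon)=\hat L_j(\alpha,\hat A_E^0)-2\pi\varepsilon$ from~\eqref{eq:LA_eps=LA_0}, together, where convenient, with the $\varepsilon=0$ symmetry $\hat L_j(\alpha,\hat A_E^0)=-\hat L_{2d+1-j}(\alpha,\hat A_E^0)$ that holds because $M_E^0$ is complex symplectic; this recasts the count in terms of the exponents $\hat L_j(\alpha,\hat A_E^\varepsilon)$ of the $\varepsilon$-deformed dual cocycle, which is the asserted identity.

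The step I would treat most carefully — the only genuinely delicate point — is the bookkeeping at the breakpoints of this piecewise-affine function. Because the acceleration is a derivative from the right, an index $j$ for which $\hat L_j(\alpha,\hat A_E^0)$ equals $2\pi\varepsilon$ exactly must still be counted, which is what produces the closed inequality $\leq 2\pi\varepsilon$ rather than a strict one; this is also exactly the point at which the right-derivative of $L^1$ is identified with the acceleration as defined in~\eqref{def:acc}. Apart from this, the corollary is an immediate consequence of Theorem~\ref{announce}, so all the substance is already contained in that theorem and in the relation~\eqref{eq:LA_eps=LA_0}.
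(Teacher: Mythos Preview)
Your argument is correct and essentially identical to the paper's: both rewrite Theorem~\ref{announce}/\ref{thm:non-sa-HP} as
\[
L^1(\alpha,A_{E,\varepsilon})=L^1(\alpha,A_{E,0})+\sum_{j:\hat L_j(\alpha,\hat A_E^0)\geq 0}\max\bigl(2\pi\varepsilon-\hat L_j(\alpha,\hat A_E^0),0\bigr)
\]
(the paper phrases this via an integral in Section~\ref{sec:HP}) and then read off the right derivative, obtaining
\[
\kappa(\alpha,A_{E,\varepsilon})=\#\{j:\,0\leq \hat L_j(\alpha,\hat A_E^0)\leq 2\pi\varepsilon\}.
\]
Your treatment of the breakpoints (closed inequality from the right derivative) is exactly right.

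One caution about your last paragraph: the ``translation'' you propose from $\hat L_j(\alpha,\hat A_E^0)$ to $\hat L_j(\alpha,\hat A_E^\varepsilon)$ does not actually work. Under~\eqref{eq:LA_eps=LA_0} the condition $0\leq \hat L_j(\alpha,\hat A_E^\varepsilon)\leq 2\pi\varepsilon$ becomes $2\pi\varepsilon\leq \hat L_j(\alpha,\hat A_E^0)\leq 4\pi\varepsilon$, which is a \emph{different} count, and symplectic symmetry does not repair this. In fact the paper's own proof in Section~\ref{sec:HP} stops at the formula in terms of $\hat L_j(0)$ and never performs such a translation; the superscript $\varepsilon$ in the displayed statement of the corollary appears to be a typo for~$0$. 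So you should simply stop after the count in terms of $\hat L_j(\alpha,\hat A_E^0)$ --- that is the intended result, and your vague final step is both unnecessary and incorrect as written.
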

We give a proof of this in Section \ref{sec:HP}.
Clearly, Corollary \ref{cor:Avila_acc} and \eqref{eq:LA_eps=LA_0} lead to the following.
\begin{corollary}\label{cor:non-sa-HP}
If the acceleration of $H_{\alpha,\theta,\varepsilon}$ satisfies $\kappa(\alpha,A_{E,\varepsilon})\equiv 0$ for $|\varepsilon|\leq \varepsilon_1$, then 
\begin{align}
\hat{L}_1(\alpha,\hat{A}_E^0)\geq ...\geq \hat{L}_d(\alpha,\hat{A}_E^0)\geq 2\pi \varepsilon_1>0.
\end{align}
\end{corollary}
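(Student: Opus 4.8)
The plan is to obtain Corollary~\ref{cor:non-sa-HP} as a direct consequence of Corollary~\ref{cor:Avila_acc} together with the identity \eqref{eq:LA_eps=LA_0}, so that essentially the only work is to substitute one into the other and make one elementary deduction. First I would rewrite the acceleration formula of Corollary~\ref{cor:Avila_acc} purely in terms of the dual Lyapunov exponents \emph{at} $\varepsilon=0$: using $\hat L_j(\alpha,\hat A_E^\varepsilon)=\hat L_j(\alpha,\hat A_E^0)-2\pi\varepsilon$ from \eqref{eq:LA_eps=LA_0} (which in particular is valid on the whole interval $0\le\varepsilon\le\varepsilon_1$, since $\hat A_E^\varepsilon$ differs from $\hat A_E^0$ by a $\theta$-independent diagonal conjugation up to the scalar factor $e^{-2\pi\varepsilon}$ per step), the membership condition $0\le\hat L_j(\alpha,\hat A_E^\varepsilon)\le 2\pi\varepsilon$ turns into $2\pi\varepsilon\le\hat L_j(\alpha,\hat A_E^0)\le 4\pi\varepsilon$. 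Hence, for every $\varepsilon\in[0,\varepsilon_1]$,
\begin{align}\label{eq:plan-acc}
\kappa(\alpha,A_{E,\varepsilon})=\#\{j:\, 2\pi\varepsilon\leq\hat{L}_j(\alpha,\hat{A}_E^0)\leq 4\pi\varepsilon\},
\end{align}
and the hypothesis $\kappa(\alpha,A_{E,\varepsilon})\equiv 0$ for $|\varepsilon|\le\varepsilon_1$ says that the right-hand side of \eqref{eq:plan-acc} vanishes throughout $[0,\varepsilon_1]$.

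Next I would record the trivial sign fact $\hat L_d(\alpha,\hat A_E^0)\ge 0$: the cocycle $M_E^0$ is complex symplectic, so $\hat L_j(\alpha,\hat A_E^0)=-\hat L_{2d+1-j}(\alpha,\hat A_E^0)$, whence $\hat L_d(\alpha,\hat A_E^0)=-\hat L_{d+1}(\alpha,\hat A_E^0)$ while $\hat L_d(\alpha,\hat A_E^0)\ge\hat L_{d+1}(\alpha,\hat A_E^0)$ by the ordering, forcing $\hat L_d(\alpha,\hat A_E^0)\ge 0$. Now I would argue by contradiction: assume $\hat L_d(\alpha,\hat A_E^0)<2\pi\varepsilon_1$ and set $\varepsilon:=\hat L_d(\alpha,\hat A_E^0)/(2\pi)$, which then lies in $[0,\varepsilon_1)$. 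For this $\varepsilon$ one has $2\pi\varepsilon=\hat L_d(\alpha,\hat A_E^0)$ and $4\pi\varepsilon=2\hat L_d(\alpha,\hat A_E^0)\ge\hat L_d(\alpha,\hat A_E^0)$, so the index $j=d$ is counted in \eqref{eq:plan-acc} and $\kappa(\alpha,A_{E,\varepsilon})\ge 1$, contradicting the hypothesis. Therefore $\hat L_d(\alpha,\hat A_E^0)\ge 2\pi\varepsilon_1$, and since $\hat L_1(\alpha,\hat A_E^0)\ge\cdots\ge\hat L_d(\alpha,\hat A_E^0)$ by convention and $\varepsilon_1>0$, the asserted chain of inequalities, including the strict positivity, follows.

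I do not anticipate a real obstacle: the content is entirely in Corollary~\ref{cor:Avila_acc} (hence in Theorem~\ref{announce}) and in the shift identity \eqref{eq:LA_eps=LA_0}, both already established. The only points that genuinely need care are bookkeeping ones — that \eqref{eq:LA_eps=LA_0} is available for all $\varepsilon\in[0,\varepsilon_1]$ rather than only infinitesimally (it is, by the uniform diagonal-conjugation structure of the dual transfer matrix), and that the closed-endpoint convention ``$\le 2\pi\varepsilon$'' in Corollary~\ref{cor:Avila_acc} is used consistently; since the target inequality is weak, this is harmless either way. In fact one can push slightly harder: the intervals $[2\pi\varepsilon,4\pi\varepsilon]$, $\varepsilon\in[0,\varepsilon_1]$, cover $[0,4\pi\varepsilon_1]$, which upgrades the conclusion to $\hat L_d(\alpha,\hat A_E^0)\ge 4\pi\varepsilon_1$, so the stated bound already has slack to spare.
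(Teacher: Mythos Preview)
Your argument is correct and follows exactly the route the paper indicates: the paper's own proof is the single sentence ``Clearly, Corollary~\ref{cor:Avila_acc} and \eqref{eq:LA_eps=LA_0} lead to the following,'' and you have simply unpacked that implication by rewriting the counting condition in Corollary~\ref{cor:Avila_acc} in terms of $\hat L_j(\alpha,\hat A_E^0)$ via \eqref{eq:LA_eps=LA_0} and then choosing $\varepsilon=\hat L_d(\alpha,\hat A_E^0)/(2\pi)$ to force a contradiction. The symplecticity observation that $\hat L_d(\alpha,\hat A_E^0)\ge 0$ is the right way to anchor the contradiction, and your remark that the covering by intervals actually yields a bit of slack is a nice bonus (its precise value depends on the endpoint convention in Corollary~\ref{cor:Avila_acc}, but the stated inequality $\hat L_d\ge 2\pi\varepsilon_1$ is safe either way).
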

In the rest of the paper, we shall omit the dependence on $\varepsilon$ in various notations in the $\varepsilon=0$ case.

The paper is organized as follows: some preliminary knowledge is recalled in Sec.~\ref{sec:Pre}, the proofs of Theorem~\ref{thm:AL_cos} and almost localization are given in Sec.~\ref{sec:Loc} assuming the numerator bound of the Green's function established in Sec.~\ref{sec:Numerator} and the denominator bound in Sec.~\ref{sec:Denominator}. Finally we present the proof of Theorem~\ref{thm:non-sa-HP} in Sec.~\ref{sec:HP}.

\medskip

{\bf Notations:} for $x\in \R$, let $\|x\|_{\T}:=\mathrm{dist}(x,\Z)$ be the torus norm, and $\lfloor x\rfloor=\sup\{m\in \Z:\, m\leq x\}$ be the integer part of $x$. If $a\leq Cb$ for some constant $C>0$, we shall write $a=O(b)$.
For a matrix $M\in \C^{d_1\times d_2}$, and the sets of indicies $B_1\subset [1,d_1]\cap \Z, B_2\subset [1, d_2]\cap \Z$, let $M_{B_1, B_2}$ be the submatrix of $M$ with row numbers in $B_1$ and column numbers in $B_2$.

\section{Preliminaries}\label{sec:Pre}
\subsection{Continued fraction expansion}
Give $\alpha\in (0,1)$, let $[a_1,a_2,...]$ be the continued fraction expansion of $\alpha$. For $n\geq 1$, let $p_n/q_n=[a_1,a_2,...,a_n]$ be the continued fraction approximants. 
The following property is well-known for $n\geq 1$,
\begin{align}\label{eq:qn_alpha_min}
\|q_n\alpha\|_{\T}=\min_{1\leq k<q_{n+1}} \|k\alpha\|_{\T},
\end{align}
The $\beta(\alpha)$ exponent measures the exponential closeness of $\alpha$ to rational numbers:
\begin{align}\label{def:beta}
\beta(\alpha):=\limsup_{n\to\infty}\frac{\log q_{n+1}}{q_n}=\limsup_{n\to\infty}\left(-\frac{\log \|n\alpha\|_{\T}}{n}\right).
\end{align}

\subsection{Lyapunov exponents}
Let $(\alpha, A)\in (\T, C^{\omega}(\T, \mathrm{Mat}(k,\C)))$.
Let 
\begin{align}
A_n(\alpha,\theta)=A(\theta+(n-1)\alpha)\cdots A(\theta).
\end{align}
Let the finite-scale and infinite-scale Lyapunov exponents be defined as 
\begin{align}
L_{j,(n)}(\alpha, A):=\frac{1}{n}\int_{\T} \log \sigma_j(A_n(\alpha,\theta))\, \mathrm{d}\theta, \text{ for } 1\leq j\leq k,
\end{align}
where $\sigma_j(A)$ is the $j$-th singular value of $A$, and the $j$-th Lyapunov exponent
\begin{align}
L_j(\alpha, A)=\lim_{n\to\infty}L_{j,(n)}(\alpha, A).
\end{align}
It is easy to see that for $1\leq j\leq k$,
\begin{align}
L^j_{(n)}(\alpha, A):=\sum_{\ell=1}^j L_{\ell,(n)}(\alpha, A)=\frac{1}{n}\int_{\T}\log \|\wedge^j A_n(\alpha,\theta)\|\, \mathrm{d}\theta,
\end{align}
where $\wedge^j A$ is the $j$-th exterior power of $A$. Similarly $L^j(\alpha,A)=\sum_{\ell=1}^j L_{\ell}(\alpha, A)$.
If $\alpha=p/q\in \Q$, we define
\begin{align}\label{def:Lpq}
L^j(p/q,A,\theta)=\lim_{n\to\infty} \frac{1}{n}\log \|\wedge^j A_n(p/q,\theta)\|=\frac{1}{q}\log (\rho(\wedge^j A_q(p/q,\theta))),
\end{align}
where $\rho(A)$ is the spectral radius of $A$. 
Let $L_1(p/q,A,\theta)=L^1(p/q,A,\theta)$ and
\begin{align}
L_j(p/q,A,\theta)=L^j(p/q,A,\theta)-L^{j-1}(p/q,A,\theta), \text{ for } j\geq 2,
\end{align}
and for $1\leq j\leq k$,
\begin{align}
L^j(p/q,A)=\int_{\T} L^j(p/q,A,\theta)\, \mathrm{d}\theta, \text{ and } L_j(p/q,A)=\int_{\T} L_j(p/q,A,\theta)\, \mathrm{d}\theta.
\end{align}

\subsection{Transfer matrix of \texorpdfstring{$\hat{H}_{\alpha,\theta,\varepsilon}$}{Lg}}\label{sec:transfer_matrix}

As we mentioned in the introduction,  $\hat{H}_{\alpha,\theta,\varepsilon}$ can be viewed as a block-valued Jacobi matrix:
\begin{align}
(\hat{H}_{\alpha,\theta,\varepsilon}\Phi)_n=B_{\varepsilon} \Phi_{n+1}+\tilde{B}_{\varepsilon}\Phi_{n-1}+C_{\varepsilon}(\theta+nd\alpha) \Phi_n,
\end{align}
where $\Phi_{n}=(\phi_{nd+d-1},...,\phi_{nd+1},\phi_{nd})^T\in \C^d$, the upper triangular matrix
\begin{align}
B_{\varepsilon}
=\left(\begin{matrix}
\hv_{-d}e^{2\pi d\varepsilon}   &\hv_{-d+1} e^{2\pi (d-1)\varepsilon}\  &... \ &\hv_{-1} e^{2\pi\varepsilon}\\
         &\hv_{-d}e^{2\pi d\varepsilon}         &      &\hv_{-2}e^{4\pi \varepsilon}\\
         \\  &          &\ddots
         \\  &           &   \\      
         &               &       &\hv_{-d}e^{2\pi d\varepsilon}
\end{matrix}\right)
=e^{2\pi d\varepsilon} F_{\varepsilon} B_0 F_{\varepsilon}^{-1},
\end{align}
where $F_{\varepsilon}=\mathrm{diag}(e^{2\pi \varepsilon}, e^{4\pi \varepsilon}, ..., e^{2\pi d\varepsilon})$.
The lower triangular matrix
$\tilde{B}_{\varepsilon}=e^{-2\pi d\varepsilon} F_{\varepsilon} B_0^*F_{\varepsilon}^{-1}$,
and 
\begin{align}
C_{\varepsilon}(\theta)
=&\left(
\begin{matrix}
2\cos2\pi(\theta+(d-1)\alpha) & \hv_1e^{-2\pi \varepsilon} & \hv_2 e^{-4\pi\varepsilon} &... & & \hv_{d-1} e^{-2\pi (d-1)\varepsilon}\\
\hv_{-1}e^{2\pi\varepsilon} & 2\cos2\pi(\theta+(d-2)\alpha) & \hv_1e^{-2\pi\varepsilon} &... &  & \hv_{d-2} e^{-2\pi (d-2)\varepsilon}\\
\hv_{-2}e^{4\pi\varepsilon} & \hv_{-1}e^{2\pi\varepsilon} &\ddots &\ddots & & \vdots \\
\vdots & & \ddots\\
& & & & &\hv_1 e^{-2\pi\varepsilon}\\
\hv_{-d+1}e^{2\pi(d-1)\varepsilon} &... &  & &\hv_{-1}e^{2\pi\varepsilon} &2\cos2\pi\theta
\end{matrix}
\right)\\
&=F_{\varepsilon} C_0(\theta)F_{\varepsilon}^{-1}.
\end{align}
In the rest of the paper, we shall write $C_{\varepsilon=0}(\theta)$ and $B_{\varepsilon=0}$ as $C(\theta)$ and $B$, and $M_E^{\varepsilon=0}=M_E$, $\hat{A}_E^{\varepsilon=0}=\hat{A}_E$,  respectively.
 The eigenvalue equation $\hat{H}_{\alpha,\theta,\varepsilon}\phi=E\phi$ can be reformulated as:
\begin{align}\label{eq:hH_transfer_block2}
\left(\begin{matrix}
B_{\varepsilon}\Phi_{n+1}\\
\Phi_n
\end{matrix}\right)=
M_E^{\varepsilon}(\theta+nd\alpha)
\left(\begin{matrix}
B_{\varepsilon}\Phi_n\\
\Phi_{n-1}
\end{matrix}\right),
\end{align}
where
\begin{align}\label{def:ME^eps2}
M_E^{\varepsilon}(\theta)
=&\left(\begin{matrix}
(E-C_{\varepsilon}(\theta))B_{\varepsilon}^{-1} &-\tilde{B}_{\varepsilon}\\
B_{\varepsilon}^{-1} &0
\end{matrix}\right)\\
=&\left(\begin{matrix}
e^{-2\pi d\varepsilon} F_{\varepsilon} (E-C(\theta)) B^{-1} F_{\varepsilon}^{-1} & -e^{-2\pi d\varepsilon} F_{\varepsilon}B^* F_{\varepsilon}^{-1}\\
e^{-2\pi d\varepsilon} F_{\varepsilon} B^{-1} F_{\varepsilon}^{-1} &0
\end{matrix}\right)\\
=&e^{-2\pi d\varepsilon} \mathrm{diag}(F_{\varepsilon}, F_{\varepsilon}) \cdot M_E(\theta) \cdot \mathrm{diag}(F_{\varepsilon}^{-1}, F_{\varepsilon}^{-1}).
\end{align}
Clearly this implies that for any $1\leq j\leq 2d$,
\begin{align}\label{eq:LM_eps=LM-eps}
L_j(d\alpha, M_E^\varepsilon)=L_j(d\alpha, M_E)-2\pi d\varepsilon.
\end{align}
It is easy to see that 
\begin{align}\label{eq:symplectic}
M_E^*\cdot \Omega \cdot  M_E=\Omega,
\end{align}
where 
\begin{align}\label{def:Omega}
\Omega:=\left(\begin{matrix}
0 & I_d\\
 -I_d &0
\end{matrix}\right).
\end{align}
Hence $M_E^0$ is complex symplectic matrix.
This implies that for any $1\leq j\leq 2d$,
\begin{align}\label{eq:Lj=L2d-j}
\hat{L}_j(d\alpha, M_E)=-\hat{L}_{2d+1-j}(d\alpha, M_E).
\end{align}
Let 
\begin{align}
M_{n,E}(\theta)=\left(\begin{array}{c|c}
M_{n,E}^{UL}(\theta) & M_{n,E}^{UR}(\theta)\\
\hline
M_{n,E}^{LL}(\theta) & M_{n,E}^{LR}(\theta)
\end{array}\right),
\end{align}
where each $M_{n,E}^{\star}$ is a $d\times d$ block, $\star=UL, UR, LL, LR$.
The following recursive relations hold: for $n=1$,
\begin{align}\label{eq:transfer_inductive_1}
\begin{cases}
M_{1,E}^{UL}(\theta)=-(C(\theta)-E)B^{-1}\\
M_{1,E}^{UR}(\theta)=-B^*\\
M_{1,E}^{LL}(\theta)=B^{-1}\\
M_{1,E}^{LR}(\theta)=0
\end{cases}
\end{align}
and for $n\geq 2$,
\begin{align}\label{eq:transfer_inductive_2}
\begin{cases}
M_{n,E}^{UL}(\theta)=-M_{n-1,E}^{UL}(\theta+d\alpha) (C(\theta)-E)B^{-1}+M_{n-1,E}^{UR}(\theta+d\alpha) B^{-1}\\
M_{n,E}^{UR}(\theta)=-M_{n-1,E}^{UL}(\theta+d\alpha) B^*\\
M_{n,E}^{LL}(\theta)=-M_{n-1,E}^{LL}(\theta+d\alpha) (C(\theta)-E)B^{-1}+M_{n-1,E}^{LR}(\theta+d\alpha)B^{-1}\\
M_{n,E}^{LR}(\theta)=-M_{n-1,E}^{LL}(\theta+d\alpha) B^*
\end{cases}
\end{align}

The eigenvalue equation $\hat{H}_{\alpha,\theta,\varepsilon}\phi=E\phi$ (when viewed as a scalar-valued Hamiltonian) can also be rewritten as:
\begin{align}\label{eq:tranfer_hH_varepsilon_1step}
\left(\begin{matrix}
\phi_{n+d}\\
\vdots\\
\phi_{n+1}\\
\phi_n\\
\vdots\\
\phi_{n-d+1}
\end{matrix}\right)
=\hat{A}_{E}^{\varepsilon}(\theta+n \alpha)
\left(\begin{matrix}
\phi_{n+d-1}\\
\vdots\\
\phi_{n}\\
\phi_{n-1}\\
\vdots\\
\phi_{n-d}\end{matrix}\right),
\end{align}
where 
\begin{equation}
\begin{split}
    &\hat{A}_{E}^{\varepsilon}(\theta):=
\frac{1}{\hat{v}_{-d}e^{2\pi d\varepsilon}}\\
&\left(\begin{matrix}
-\hat{v}_{1-d}e^{2\pi (d-1)\varepsilon} &\cdots &-\hat{v}_{-1}e^{2\pi \varepsilon} &E-2\cos(2\pi\theta) &| &-\hat{v}_1e^{-2\pi \varepsilon} &\cdots &-\hat{v}_{d-1}e^{-2\pi (d-1)\varepsilon} &-\hat{v}_d e^{-2\pi d\varepsilon}\\
\hat{v}_{-d}e^{2\pi d\varepsilon} & & & &|\\
&\ddots & & &| & \\
& & \hat{v}_{-d} e^{2\pi d\varepsilon} & &| &\\
\hline
& & &\hat{v}_{-d} e^{2\pi d\varepsilon} &| &\\
& & & &| &\hat{v}_{-d} e^{2\pi d\varepsilon}\\
& & & &| & &\ddots\\
& & & &| & & &\hat{v}_{-d} e^{2\pi d\varepsilon}
\end{matrix}\right),
\end{split}\label{def:hAE}
\end{equation}
is the 1-step transfer matrix.
By \eqref{eq:tranfer_hH_varepsilon_1step} and \eqref{eq:hH_transfer_block2} we have
\begin{align}\label{eq:M=A2}
M_{E}^{\varepsilon}(\theta)
=\mathrm{diag}(B_{\varepsilon}, I_d)\cdot
\hat{A}_{d,E}^{\varepsilon}(\alpha,\theta)\cdot
\mathrm{diag}(B_{\varepsilon}^{-1}, I_d).
\end{align}
Recall that  $\hat{A}_d$ signifies a product of $d$ factors of $\hat{A}$ matrices.
Hence for any $1\leq m\leq 2d$,
\begin{align}\label{eq:LM=d_LA2}
\hat{L}_m(d\alpha, M_{E}^{\varepsilon})=d\cdot \hat{L}_m(\alpha,\hat{A}_{E}^{\varepsilon}).
\end{align}
Combining this with \eqref{eq:LM_eps=LM-eps} yields
\begin{align}\label{eq:LA_eps=LA_0}
\hat{L}_m(\alpha, \hat{A}_{E}^{\varepsilon})=\hat{L}_m(\alpha, \hat{A}_{E})-2\pi\varepsilon.
\end{align}

\subsection{Avila's acceleration}
Let $(\alpha, A)\in (\T, C^{\omega}(\mathrm{SL}(2, \R)))$, and $A_{\varepsilon}(\theta):=A(\theta+i\varepsilon)$.
The (top) Lyapunov exponent $L_1(\alpha, A_{\varepsilon})$ is a convex and even function in $\varepsilon$.
Avila defined the acceleration to be the right-derivative as follows:
\begin{align}
\kappa(\alpha, A_{\varepsilon}):=\lim_{\delta\to 0^+} \frac{L_1(\alpha, A_{\varepsilon+\delta})-L_1(\alpha, A_{\varepsilon})}{2\pi \delta}.
\end{align}
As a cornerstone of his Global theory \cite{Global}, he showed that for $A\in \mathrm{SL}(2,\R)$ and irrational $\alpha$, $\kappa(\alpha, A_{\varepsilon})\in \Z$ is always quantized.

\subsection{Uniform upper semi-continuity}
The following lemma is an easy corollary of the arguments, essentially a subadditivity argument, in the proof of Lemma 5.1 in \cite{AJS}. 
\begin{lemma}\label{lem:upper_semi_cont}
For any small $\varepsilon>0$, there exists $N=N(\varepsilon, \alpha, v, E,d)$ such that for any $1\leq j\leq 2d$ and $n\geq N$,
\begin{align}
\frac{1}{n} \log \left\|\wedge^j M_{n,E}(\theta)\right\|\leq \hat{L}_{1}(d\alpha,\wedge^j M_{E})+\varepsilon=\hat{L}^j(d\alpha,M_{E})+\varepsilon,
\end{align}
uniformly in $\theta\in \T$.
\end{lemma}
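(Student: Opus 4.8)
The plan is to derive this as a uniform (in $\theta$) version of the standard convergence $\frac{1}{n}\int_\T \log\|\wedge^j M_{n,E}(\theta)\|\,d\theta \to \hat L^j(d\alpha, M_E)$, exploiting subadditivity of $\log\|\wedge^j M_{n,E}\|$ together with the fact that $M_{n,E}$ is a cocycle over an irrational rotation. Fix $j$ and write $f_n(\theta) := \log\|\wedge^j M_{n,E}(\theta)\|$. Since $\wedge^j$ is multiplicative on products of matrices and $\|\wedge^j(XY)\|\le \|\wedge^j X\|\,\|\wedge^j Y\|$, and since $M_{n,E}(\theta)=M_{m,E}(\theta+md\alpha)\,M_{n-m,E}(\theta)$ (the $d$-step transfer cocycle composes as iterates of the base map $\theta\mapsto\theta+d\alpha$, which is still an irrational rotation because $\alpha$ is irrational), we get the subadditivity relation
\begin{align}
f_{n+m}(\theta)\le f_{n}(\theta) + f_{m}(\theta+nd\alpha)
\end{align}
for all $\theta$. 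Because $v$ is analytic, $M_E(\theta)$ extends holomorphically to a strip and is bounded there, so each $f_n$ is continuous on $\T$ with a uniform modulus-type control; in particular $\sup_\theta f_n(\theta)\le C n$ for some constant $C=C(\alpha,v,E,d)$.

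The key step is to pass from the $L^1(\T)$ convergence (which is the definition of $\hat L^j_{(n)}\to\hat L^j$) to a uniform upper bound. The standard device, exactly as in Lemma 5.1 of \cite{AJS}: choose first a large scale $m$ with $\frac{1}{m}\int_\T f_m(\theta)\,d\theta \le \hat L^j(d\alpha,M_E)+\varepsilon/2$, which is possible since $\frac1m\int_\T f_m\,d\theta\to \hat L^j(d\alpha,M_E)$. Then for a general large $n$, write $n = km + r$ with $0\le r<m$ and iterate the subadditivity inequality along the orbit $\theta, \theta+md\alpha, \theta+2md\alpha,\dots$ to obtain
\begin{align}
f_n(\theta)\le \sum_{i=0}^{k-1} f_m(\theta + imd\alpha) + f_r(\theta + kmd\alpha).
\end{align}
The last term is $O(m)=o(n)$ uniformly in $\theta$. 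For the main sum, the point is that $\frac{1}{k}\sum_{i=0}^{k-1} f_m(\theta+imd\alpha)$ converges to $\int_\T f_m\,d\theta$ \emph{uniformly in }$\theta$ as $k\to\infty$: this is the uniform Birkhoff/Weyl equidistribution theorem applied to the continuous function $f_m$ under the (uniquely ergodic) irrational rotation by $md\alpha$. Hence for $n$ large enough (depending on $m$, hence on $\varepsilon,\alpha,v,E,d$), $\frac1n f_n(\theta)\le \frac1m\int_\T f_m\,d\theta + \varepsilon/2 \le \hat L^j(d\alpha,M_E)+\varepsilon$ uniformly in $\theta$. Finally, the stated equality $\hat L_1(d\alpha,\wedge^j M_E)=\hat L^j(d\alpha,M_E)=\sum_{\ell=1}^j \hat L_\ell(d\alpha,M_E)$ is just the identity $L_1(\alpha,\wedge^j A)=L^j(\alpha,A)$ recalled in the Lyapunov-exponents subsection, applied to $A=M_E$ over the rotation by $d\alpha$. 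Taking the union bound over $j=1,\dots,2d$ (finitely many choices) and letting $N(\varepsilon,\alpha,v,E,d)$ be the largest of the thresholds completes the proof.

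The main obstacle is making the "uniform in $\theta$" convergence of the ergodic averages precise for irrational (not Diophantine) $\alpha$ — but this is exactly the content of unique ergodicity of irrational rotations for continuous observables, and $f_m$ is continuous because $v$ is analytic (so $M_{m,E}$ is continuous and $\|\wedge^j M_{m,E}(\theta)\|\ge \sigma_j(\wedge^j M_{m,E})\cdot\ldots$ is bounded below away from $0$ — indeed $\wedge^j M_{m,E}(\theta)$ is invertible since $\det M_E\ne0$, so $\log\|\wedge^j M_{m,E}(\theta)\|$ is finite and continuous). One minor technical point to handle carefully is that $f_m$ need not be bounded below uniformly if $M_E$ degenerates, but $M_E(\theta)$ is invertible for every $\theta$ (its lower-left block $B^{-1}$ times structure, or directly $\det M_E=\det(\tilde B B^{-1})\ne0$), so continuity of $f_m$ on the compact $\T$ gives both bounds automatically. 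No Diophantine condition on $\alpha$ is needed, consistent with the lemma's hypotheses.
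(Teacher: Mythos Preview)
Your proof is correct and follows essentially the same route the paper indicates: the paper does not give its own argument but simply points to the proof of Lemma~5.1 in \cite{AJS}, describing it as ``essentially a subadditivity argument,'' which is precisely what you have written out --- subadditivity of $n\mapsto\log\|\wedge^j M_{n,E}(\theta)\|$, a fixed large scale $m$ with $\frac{1}{m}\int f_m$ close to $\hat L^j$, and uniform convergence of the Birkhoff averages of the continuous function $f_m$ under the uniquely ergodic rotation by $md\alpha$. One cosmetic slip: the cocycle identity you quote, $M_{n,E}(\theta)=M_{m,E}(\theta+md\alpha)M_{n-m,E}(\theta)$, has the indices swapped (it should read $M_{n,E}(\theta)=M_{n-m,E}(\theta+md\alpha)M_{m,E}(\theta)$), but the subadditivity inequality you actually use, $f_{n+m}(\theta)\le f_n(\theta)+f_m(\theta+nd\alpha)$, is stated correctly and the rest of the argument is unaffected.
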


\subsection{Large deviation estimates}
The following large deviation estimates play a crucial role in Section \ref{sec:Denominator}.
For Diophantine $\alpha$, a stronger form (with a deviation of size $n^{-\sigma}$) of the following estimate for $\mathrm{SL}(2,\R)$ cocycles was first proved in Lemma 1.1 of \cite{BG} by Bourgain and Goldstein, and further developed in sharp forms in \cites{GS1,GS2} by Goldstein and Schlag. In this paper we need the following weaker form, but it holds for any $\alpha$ such that $\beta(\alpha)=0$.
\begin{lemma}\label{lem:LDT}\cite{HZ}*{Theorem 1.1}
Let
\begin{align}\label{def:Bn_LDT}
\mathcal{B}_{n,E,\varepsilon}^d:=\left\{\theta:\, \frac{1}{n}\log \|\wedge^d M_{n,E}(\theta)\|\leq (1-\varepsilon)\cdot \hat{L}^d(d\alpha,M_E)\right\}
\end{align}
be the large deviation set of $\|\wedge^d M_{n,E}(\theta)\|$.
For $\alpha$ such that $\beta(\alpha)=0$, for any small $\varepsilon>0$, there exists $N=N(\varepsilon, \alpha, \hat{L}_d(d\alpha, M_E),v,d)>0$ and $c_1=c_1(v,d)>0$, such that for $n\geq N$,
\begin{align}\label{eq:mes_B}
\mathrm{mes}(\mathcal{B}_{n,E,\varepsilon}^d)\leq e^{-c_1 n \varepsilon^2\cdot  \hat{L}^d(d\alpha,M_E)}.
\end{align}
\end{lemma}

\begin{remark}
The above large deviation lemma was stated and proved in \cite{HZ} for $\mathrm{SL}(2,\R)$ cocycles, but it is clear that the same proof   applies to $(d\alpha, \wedge^d M_E)$.
\end{remark}

\subsection{Avalanche Principle}
The Avalanche Principle was first introduced for $\mathrm{SL}(2,\R)$ cocycles by Goldstein and Schlag in \cite{GS1}.
Later it was extended to higher-dimensional cocycles in \cites{S1,DK}.
The following form is from \cite{DK}. 

\begin{theorem}\label{thm:AP}\cite{DK}*{Proposition 2.42}
There exists $c_0, \tilde{C}_0>0$ such that given $0<\varepsilon<1$, $0<\kappa<c_0\varepsilon^2$ and $g_0,g_1,...,g_{n-1}\in \mathrm{Mat}(m,\R)\setminus\{0\}$, if 
\begin{align}
\frac{\sigma_1(g_j)}{\sigma_2(g_j)}>\frac{1}{\kappa}, \text{ for all } 0\leq j\leq n-1\\
\frac{\|g_j g_{j-1}\|}{\|g_j\| \|g_{j-1}\|}>\varepsilon, \text{ for all } 1\leq j\leq n-1
\end{align}
then
\begin{align}
\left| \log \|g_{n-1}\cdots g_0\|+\sum_{j=1}^{n-2}\log \|g_j\|-\sum_{j=1}^{n-1} \log \|g_j g_{j-1}\|\right|\leq \tilde{C}_0n \frac{\kappa}{\varepsilon^2}.
\end{align}
\end{theorem}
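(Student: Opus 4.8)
The statement is the Avalanche Principle of \cite{DK}*{Proposition 2.42}; I would organize the proof around the observation that the claimed identity is \emph{exact} for rank--one matrices and then control the perturbation. Indeed, if each $g_j=\|g_j\|\,e_j f_j^{T}$ with unit vectors $e_j,f_j$, then $g_{n-1}\cdots g_0=\bigl(\prod_{j=0}^{n-1}\|g_j\|\bigr)\bigl(\prod_{j=1}^{n-1}\langle f_j,e_{j-1}\rangle\bigr)e_{n-1}f_0^{T}$ and $\|g_jg_{j-1}\|=\|g_j\|\,\|g_{j-1}\|\,|\langle f_j,e_{j-1}\rangle|$, so a direct cancellation of the norm products gives $\log\|g_{n-1}\cdots g_0\|+\sum_{j=1}^{n-2}\log\|g_j\|-\sum_{j=1}^{n-1}\log\|g_jg_{j-1}\|=0$. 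The singular--value gap hypothesis provides the perturbation: for $g$ with $\sigma_1(g)/\sigma_2(g)>1/\kappa$, the SVD gives $g=\|g\|\,(e(g)f(g)^{T}+R_g)$ with $e(g),f(g)$ the leading left/right singular unit vectors, $\|R_g\|=\sigma_2(g)/\|g\|\le\kappa$, and $e(g)^{T}R_g=0=R_g f(g)$. Expanding $g_jg_{j-1}$ in these decompositions yields $\bigl|\,s_j-|\langle f(g_j),e(g_{j-1})\rangle|\,\bigr|\le 3\kappa$ where $s_j:=\|g_jg_{j-1}\|/(\|g_j\|\|g_{j-1}\|)$; hence $s_j>\varepsilon$ forces $|\langle f(g_j),e(g_{j-1})\rangle|>\varepsilon/2$ (after shrinking $c_0$) and $\bigl|\log s_j-\log|\langle f(g_j),e(g_{j-1})\rangle|\bigr|\le C\kappa/\varepsilon$.

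Write $P_k:=g_k\cdots g_0$ with singular values $\sigma_1^{(k)}\ge\sigma_2^{(k)}\ge\cdots$ and leading left singular vector $\hat u(P_k)$. I would prove by induction on $0\le k\le n-1$: (a) $\sigma_2^{(k)}/\sigma_1^{(k)}\le\eta:=C_1\kappa/\varepsilon$; (b) $\|\hat u(P_k)-\epsilon_k e(g_k)\|\le\eta$ for a suitable sign $\epsilon_k\in\{\pm 1\}$; (c) $\bigl|\log\sigma_1^{(k)}-\sum_{j=0}^{k}\log\|g_j\|-\sum_{j=1}^{k}\log|\langle f(g_j),e(g_{j-1})\rangle|\bigr|\le C_2 k\kappa/\varepsilon^2$. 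The base case $k=0$ is immediate. For the step, decompose $P_{k+1}=g_{k+1}P_k$ via $g_{k+1}=\|g_{k+1}\|(e_{k+1}f_{k+1}^{T}+R_{k+1})$ and $P_k=\sigma_1^{(k)}\hat u(P_k)\hat v(P_k)^{T}+Q_k$ with $\|Q_k\|=\sigma_2^{(k)}$. Multiplying out produces four terms, and the key point is that the main term $\sigma_1^{(k)}\langle f_{k+1},\hat u(P_k)\rangle\,e_{k+1}\hat v(P_k)^{T}$ and the term $e_{k+1}f_{k+1}^{T}Q_k$ coming from the tail of $P_k$ \emph{have the same left direction} $e_{k+1}$; their sum is therefore a single rank--one matrix $\|g_{k+1}\|\,e_{k+1}r^{T}$ with $\|r\|\ge\sigma_1^{(k)}(|\langle f_{k+1},e(g_k)\rangle|-2\eta)\ge\sigma_1^{(k)}\varepsilon/4$, while the two remaining terms $\sigma_1^{(k)}(R_{k+1}\hat u(P_k))\hat v(P_k)^{T}$ and $R_{k+1}Q_k$ have total norm $\le 2\kappa\|g_{k+1}\|\sigma_1^{(k)}$. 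Thus $P_{k+1}$ equals a rank--one matrix with left direction $\pm e_{k+1}$ plus an error of relative size $\le 8\kappa/\varepsilon$; by Weyl's inequality and a standard perturbation bound for the top singular vector this gives (a) and (b) at $k+1$ with the \emph{same} $\eta$, provided $\kappa<c_0\varepsilon^2$ so that $\eta\le\varepsilon/8$. Crucially, the error bound involves only the gap parameter $\kappa$ of the single new factor $g_{k+1}$, not $\sigma_2^{(k)}$, so it does not accumulate along the chain.

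For (c), the same decomposition gives $\sigma_1^{(k+1)}=\|g_{k+1}\|\,\|r\|+O(\kappa\|g_{k+1}\|\sigma_1^{(k)})$ and $\|r\|=\sigma_1^{(k)}|\langle f_{k+1},e(g_k)\rangle|+O(\sigma_1^{(k)}\eta)$, hence $\sigma_1^{(k+1)}=\|g_{k+1}\|\sigma_1^{(k)}|\langle f_{k+1},e(g_k)\rangle|\,(1+O(\kappa/\varepsilon^2))$ because $|\langle f_{k+1},e(g_k)\rangle|>\varepsilon/2$; taking logarithms and summing (with $\sigma_1^{(0)}=\|g_0\|$) yields (c). Applying (c) at $k=n-1$ and substituting $\log|\langle f(g_j),e(g_{j-1})\rangle|=\log\|g_jg_{j-1}\|-\log\|g_j\|-\log\|g_{j-1}\|+O(\kappa/\varepsilon)$ from the first paragraph, the pure norm terms telescope --- the coefficients of $\log\|g_0\|$ and $\log\|g_{n-1}\|$ cancel while each interior $\log\|g_j\|$ acquires coefficient $-1$ --- leaving precisely $\bigl|\log\|g_{n-1}\cdots g_0\|+\sum_{j=1}^{n-2}\log\|g_j\|-\sum_{j=1}^{n-1}\log\|g_jg_{j-1}\|\bigr|\le\tilde C_0 n\kappa/\varepsilon^2$, with $\tilde C_0$ absorbing $C_1,C_2$ and the perturbation constants. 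The one genuinely delicate point is the inductive step: one must recognize that the error injected at each stage is governed solely by the near--rank--one structure of the \emph{new} factor (through the cancellation of left directions above) and not by the possibly large second singular value of the accumulated product --- this is exactly what prevents a geometric blow--up of the error and makes $\kappa<c_0\varepsilon^2$ sufficient.
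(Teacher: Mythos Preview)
The paper does not prove this statement; it is quoted verbatim from \cite{DK}*{Proposition 2.42} and used as a black box in Section~\ref{sec:Denominator}. Your sketch is correct and is essentially the argument in \cite{DK} (and, for $m=2$, the original one in \cite{GS1}): write each factor as a rank-one operator plus a remainder of size~$\kappa$ via the SVD, propagate the near-rank-one structure of the partial products inductively, and track the multiplicative error on $\sigma_1^{(k)}$ at each step. The point you isolate --- that the two terms with left direction $e_{k+1}$ merge into a single rank-one piece, so the fresh error at stage $k+1$ is controlled by $\|R_{k+1}\|\le\kappa$ alone and not by $\sigma_2^{(k)}$ --- is precisely the mechanism that keeps the per-step error uniformly $O(\kappa/\varepsilon^2)$ and prevents geometric blow-up. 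There is nothing further to compare against in the present paper.
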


\subsection{Integrated density of states}
Let 
\begin{align}\label{def:H_Dirichlet}
\hat{H}_{\alpha,n}(\theta):=\chi_{[0,nd-1]}\hat{H}_{\alpha,\theta,\varepsilon=0}\,\chi_{[0,nd-1]}
\end{align}
be the restriction of the operator to the interval $[0,nd-1]$ with Dirichlet boundary condition. 
Let $E_{j}(\theta)$, $j=0,1,...,nd-1$ be the eigenvalues of $\hat{H}_{\alpha,n}(\theta)$.
Consider the sequence of measures
\begin{align}\label{def:finite_DOS}
    \hat{\mu}_{n,\alpha,\theta}(\mathrm{d}E):=\frac{1}{nd}\sum_{j=0}^{nd-1} \delta_{E_j(\theta)}(\mathrm{d}E).
\end{align}
It is well-known the weak limit exists and is independent of $\theta$ (up to measure zero):
\begin{align}\label{def:IDS}
    \lim_{n\to\infty} \hat{\mu}_{n,\alpha,\theta}(\mathrm{d}E)=\widehat{\mathcal{N}}(\mathrm{d}E).
\end{align}
The distribution function $\widehat{\mathcal{N}}(E)$ is called the integrated density of states.

The integrated density of states $\mathcal{N}$ are defined analogously for $H_{\alpha,\theta,\varepsilon=0}$ (indeed as the $d=1$ special case of the above). It is well-known that $\mathcal{N}=\widehat{\mathcal{N}}$, see e.g. \cite{HP}

The Thouless formula relates the Lyapunov exponent to the integrated density of states:
\begin{align}\label{eq:Thouless}
\hat{L}^d(\alpha, \hat{A}_E)+\log |\hat{v}_d|=\int_{\R} \log |E'-E|\, \widehat{\mathcal{N}}(\mathrm{d} E').
\end{align}
The scalar case $d=1$ was proved in \cites{AS,CS1}, and the $d>1$ case was proved in \cites{CS2,KS,HP}.

\section{Anderson localization and almost localization}\label{sec:Loc}
\subsection{Green's function}
The key to prove Anderson localization is to estimate the decay of Green's functions.
We consider the finite volume Hamiltonian with periodic boundary conditions. To distinguish  $\hat{H}_{\alpha,n}(\theta)$ (see \eqref{def:H_Dirichlet}), which exhibits Dirichlet boundary conditions, from the operator with periodic ones, we denote the latter by $P_{\alpha,n}(\theta)$. Moreover, for simplicity we shall omit $\alpha$ in the notation in the rest of the section.  Thus, 
\begin{align}\label{def:Pn}
P_{n}(\theta)=
\left(\begin{matrix}
C(\theta+(n-1)d\alpha) & B^* & &  &B\\
B &C(\theta+(n-2)d\alpha) &\ddots \\
& \ddots &\ddots &\ddots \\
& &\ddots &\ddots &B^*\\
B^* & & &B &C(\theta)
\end{matrix}\right).
\end{align}
Let 
\begin{align}\label{def:fn}
f_{E,n}(\theta):=\det(P_{n}(\theta)-E),
\end{align}
and
\begin{align}\label{def:Green}
G_{E,n}(\theta):=(P_{n}(\theta)-E)^{-1}
\end{align}
be the resolvent, which is also called the finite volume Green's function. 
It follows from Cramer's rule that 
\begin{align}\label{eq:Green_ratio}
G_{E,n}(\theta;x,y)=\frac{\mu_{n,x,y}(\theta)}{f_{E,n}(\theta)},
\end{align}
where $\mu_{n,x,y}(\theta)$ is the determinant of the submatrix of $P_{n}(\theta)-E$ obtained by deleting the $x$-th row and $y$-th column.
Let $\hat{u}$ be a solution to the eigenvalue equation $\hat{H}_{\alpha,\theta}\hat{u}=E\hat{u}$. 
For any $k\in \Z$
the following Poisson formula holds for all  $k\leq m\leq k+nd-1$:
\begin{align}
\hat{u}_m=&\sum_{y_1=0}^{d-1} G_{E,n}(\theta+k\alpha; m-k, y_1) \left(B^*\cdot \left(\begin{matrix}\hat{u}_{k+nd-1}-\hat{u}_{k+d-1}\\ \vdots\\ \hat{u}_{k+(n-1)d}-\hat{u}_k\end{matrix}\right)\right)_{y_1}\\
&\qquad\qquad+\sum_{y_2=(n-1)d}^{nd-1}G_{E,n}(\theta+k\alpha;m-k,y_2)\left(B \cdot \left(\begin{matrix}\hat{u}_{k+d-1}-\hat{u}_{k+nd-1}\\ \vdots \\ \hat{u}_k-\hat{u}_{k+(n-1)d}\end{matrix}\right)\right)_{y_2-(n-1)d},
\end{align}
in which $(M)_y$ refers to the element of vector $M$ in row $y$.
This implies 
\begin{align}\label{eq:Poisson}
|\hat{u}_m|\leq &4d^{3/2}\|B\|\cdot \max_{y\in \{0,...,d-1\}\cup \{(n-1)d,...,nd-1\}} |G_{E,n}(\theta+k\alpha; m-k, y)|\cdot \\
&\qquad\qquad\qquad\qquad\qquad\qquad\qquad\qquad\cdot \max_{\ell\in \{0,...,d-1\}\cup \{(n-1)d,...,nd-1\}} |\hat{u}_{k+\ell}|
\end{align}
For the remainder of this section, we shall write $\hat{L}^k(d\alpha, M_E), \hat{L}_k(d\alpha, M_E)$ as $\hat{L}^k, \hat{L}_k$, respectively. 
We split the estimate of the Green's function into Lemma~\ref{lem:numerator} for the numerator, and Lemma~\ref{lem:deno} for the denominator.

\begin{lemma}\label{lem:numerator}
Let $3d\leq y\leq (n-1)d-1$ and $0\leq x\leq d-1$ or $(n-1)d\leq x\leq nd-1$. Then 
for any $\varepsilon>0$, and uniformly in $\theta\in \T$, 
\begin{align}
|\mu_{n,x,y}(\theta)|\leq C_d\|B^{-1}\|^2 \cdot |\det B|^{n} \cdot \max\left(e^{\ell(1+\varepsilon) \hat{L}^{d-1}+(n-\ell)(1+\varepsilon) \hat{L}^d}+e^{\ell(1+\varepsilon)\hat{L}^d+(n-\ell)(1+\varepsilon)\hat{L}^{d-1}}\right),
\end{align}
provided $n>N(\varepsilon,\alpha,E,d,v)$ is large enough. 
Here $C_d$ is a constant depending only on~$d$.
\end{lemma}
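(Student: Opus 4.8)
The plan is to express the minor $\mu_{n,x,y}(\theta)$ of $P_n(\theta)-E$ as a product of transfer matrix blocks, so that it can be bounded by the norms of exterior powers of the block transfer matrix $M_{n,E}$, which in turn are controlled by the uniform upper semi-continuity Lemma \ref{lem:upper_semi_cont}. Concretely, deleting the $x$-th row and $y$-th column of $P_n(\theta)-E$ and expanding the resulting determinant, the periodic boundary term involving the corner blocks $B, B^*$ contributes lower-order corrections (this is where the $C_d\|B^{-1}\|^2$ factor and possibly a few extra powers of $\|B\|,\|B^{-1}\|$ absorbed into $C_d$ come from); the main contribution factors into two pieces corresponding to the ``left'' block $[0,\ell d-1]$ and the ``right'' block $[\ell d, (n-1)d-1]$ of the strip, where $\ell$ is determined by the location of $y$ (so $\ell d \le y \le \ell d + d-1$, i.e. $\ell = \lfloor y/d\rfloor$). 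Because $x$ lies in the first or last $d$-block while $y$ is interior, each of the two pieces is (up to the $\det B$ normalization coming from \eqref{eq:transfer_inductive_1}--\eqref{eq:transfer_inductive_2}, which accounts for the $|\det B|^n$ prefactor) a determinant of a finite-volume operator on an interval, equivalently a single entry — or a $(d-1)\times(d-1)$ or $d\times d$ minor — of the corresponding block transfer matrix $M_{\ell,E}(\cdot)$ and $M_{n-\ell,E}(\cdot)$.

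Next I would estimate each of the two pieces. A $d\times d$ minor of $M_{m,E}(\theta)$ is bounded by $\|\wedge^d M_{m,E}(\theta)\|$, and a $(d-1)\times(d-1)$ minor by $\|\wedge^{d-1} M_{m,E}(\theta)\|$; by Lemma \ref{lem:upper_semi_cont} these are at most $e^{m(\hat L^d+\varepsilon)}$ and $e^{m(\hat L^{d-1}+\varepsilon)}$ respectively, uniformly in $\theta$, once $m$ is large. The Cramer minor, however, is the product of a block coming from deleting a row (a ``$d-1$-type'' object on one side) and a block coming from deleting a column, and a careful bookkeeping of which rows/columns are deleted on which side shows that the exponents distribute in exactly the two ways $\ell\,\hat L^{d-1}+(n-\ell)\hat L^d$ or $\ell\,\hat L^{d}+(n-\ell)\hat L^{d-1}$; taking the maximum over the two possibilities (reflecting the two positions of $x$, at the left end or the right end, relative to $y$) gives the stated bound. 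One must also handle the small-$\ell$ or small-$(n-\ell)$ cases where Lemma \ref{lem:upper_semi_cont} does not yet apply; there one uses the trivial bound $\|\wedge^j M_{m,E}(\theta)\| \le (\sup_\theta \|M_{1,E}(\theta)\|)^{jm}$ for the finitely many small scales, which is absorbed into the constant $C_d$ (this is why $y \ge 3d$ is imposed — it guarantees $\ell \ge 3$, so the left piece is never in the truly degenerate range, and symmetrically $y \le (n-1)d - 1$ controls the right piece; combined with taking $n$ large, both sides are in the asymptotic regime up to constants).

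The main obstacle I anticipate is the combinatorial/linear-algebraic step of identifying the Cramer minor $\mu_{n,x,y}$ with an explicit product of transfer-matrix blocks, correctly tracking the periodic boundary corrections and showing they only cost a $\theta$-independent constant times $|\det B|^n$. In the scalar case $d=1$ this is classical (the minor is literally $\tilde f_{E,n-x}$, an entry of a transfer matrix), but in the block-valued periodic setting the corner blocks $B, B^*$ couple the two ends, so one needs either a Schur-complement/resolvent-expansion argument or a direct cofactor expansion along the first and last block-rows to peel off the periodic coupling. The key quantitative point — that the resulting estimate genuinely interpolates between $\hat L^{d-1}$ and $\hat L^d$ rather than merely between $\hat L^d$ and $0$ as a naive random-walk expansion would give — comes precisely from recognizing that deleting a single row or column from a $d$-dimensional block reduces the relevant exterior power by one, so that $\wedge^{d-1}$ rather than $\wedge^d$ governs one of the two pieces; making this rigorous uniformly in $\theta$ via Lemma \ref{lem:upper_semi_cont} is the heart of the argument.
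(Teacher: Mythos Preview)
Your high-level intuition is sound --- the estimate does ultimately come from bounding minors of $M_{\ell,E}$ and $M_{n-\ell,E}$ via exterior powers and Lemma~\ref{lem:upper_semi_cont} --- but the mechanism you propose has two inaccuracies that would prevent the argument from going through as written.

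First, the periodic corner blocks $B, B^*$ are \emph{not} lower-order corrections that can be peeled off perturbatively; they couple the two ends at full strength. The paper handles this by augmenting $P_n(\theta)-E$ with an extra column ${\bf e}_{dn,x}$ and row ${\bf e}_{dn,y}^*$ to form a $(dn+1)\times(dn+1)$ matrix $R_{x,y}$ with $|\det R_{x,y}|=|\mu_{n,x,y}|$, then performs row reduction (driven by the recursions \eqref{eq:transfer_inductive_1}--\eqref{eq:transfer_inductive_2}) on rows $1$, $2$, and $n-\ell+1$ to collapse everything to $|\det B|^{n-4}$ times a $(4d+1)\times(4d+1)$ determinant $\det S_1$. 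After further manipulation this reduces to $|\det B|^4$ times a $(2d+1)\times(2d+1)$ determinant $\det S_3^{(3)}$ built from $I_{2d}-M_n(0)$ and $M_\ell(0)$; the periodic structure is what produces the $M_n-I_{2d}$ block, not a correction term.

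Second, and more seriously: the minor does not simply factor as a product of one $(d-1)$-minor and one $d$-minor. The paper factors $S_3^{(3)}$ as a \emph{product} of four rectangular matrices (the outer two of sizes $(2d+1)\times 4d$ and $4d\times(2d+1)$, the middle two of size $4d\times 4d$ containing $M_{n-\ell}(\ell)$ and $M_\ell(0)$ respectively) and applies Cauchy--Binet. The structural constraint that emerges is that for a nonvanishing summand, if the $M_\ell(0)$ minor has size $m_0\times m_0$ then the corresponding $M_{n-\ell}(\ell)$ minor has size $(m_0-1)\times(m_0-1)$. This ``off by one'' is exactly where your heuristic lands, but $m_0$ ranges over all of $\{1,\dots,2d\}$, not just $\{d-1,d\}$. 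The resulting bound is $\sum_{m_0} e^{\ell(1+\varepsilon)\hat L^{m_0}+(n-\ell)(1+\varepsilon)\hat L^{m_0-1}}$, and the two terms displayed in the lemma are the dominant ones $m_0=d$ and $m_0=d+1$ (using $\hat L^{d+1}=\hat L^{d-1}$ from symplecticity, and $\hat L^j\le\hat L^{d-1}$ for all $j\ne d$ since $\hat L_d>0>\hat L_{d+1}$). In particular, your attribution of the two terms in the max to ``the two positions of $x$'' is incorrect: both Case~1 ($0\le x\le d-1$) and Case~2 ($(n-1)d\le x\le nd-1$) lead to the same Cauchy--Binet structure and the same final bound.
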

We defer the proof to Section \ref{sec:Numerator}.

\begin{lemma}\label{lem:deno}
Let $N_0=N_0(\varepsilon,\alpha,E,d)$ and $\kappa_0=\kappa_0(\varepsilon,\alpha,E,d,N_0)$ be defined as in Lemma \ref{lem:Mn2} and $\mathcal{N}:=\{n\geq 8N_0: \|nd\alpha\|_{\T}\leq \kappa_0\}$ be the {\it $\kappa_0$-admissible} sequence.
For any $E\in \R$, for any small $0<\varepsilon<\hat{L}_d/(11\hat{L}^d)$, and for $n\geq N(\varepsilon,\alpha,E,d,v)$ and $n\in \mathcal{N}$, we have
\begin{align}
\mathrm{mes}\left\{\theta\in \T: \frac{1}{n}\log |f_{E,n}(\theta)|<|\det B|+(1-8\varepsilon)\hat{L}^d\right\}<\frac{1}{10}.
\end{align}
\end{lemma}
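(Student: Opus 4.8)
\emph{Proof strategy.}
First I would establish the exact identity
\begin{align}
\log|f_{E,n}(\theta)| = n\log|\det B| + \log\big|\det(M_{n,E}(\theta)-I_{2d})\big|,\qquad \theta\in\T,
\end{align}
via the Floquet correspondence for block Jacobi matrices: a vector $\Phi$ solves $(P_n(\theta)-E)\Phi=0$ precisely when the associated phase-space vector is fixed by $M_{n,E}(\theta)$, and comparing the two sides as polynomials in $E$ identifies the proportionality constant $(\det B)^n$ — the corner blocks $B,B^*$ of $P_n(\theta)$ together with the $B^{-1}$ factors inside $M_E$ are exactly what produce it. So the lemma becomes: for $n\in\mathcal N$ large,
\begin{align}
\mathrm{mes}\Big\{\theta\in\T:\ \tfrac1n\log\big|\det(M_{n,E}(\theta)-I_{2d})\big| < (1-8\varepsilon)\,\hat L^d\Big\} < \tfrac1{10}.
\end{align}

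Next I would record the hyperbolic shape of $M_{n,E}(\theta)$. Since $\hat L_d>0$, the symplectic symmetry \eqref{eq:Lj=L2d-j} gives $\hat L_1\ge\cdots\ge\hat L_d>0>-\hat L_d\ge\cdots\ge-\hat L_1$, a genuine Lyapunov gap between indices $d$ and $d+1$. Feeding the large deviation estimate of Lemma \ref{lem:LDT} for $\|\wedge^d M_{n,E}\|$ and the uniform upper semicontinuity of Lemma \ref{lem:upper_semi_cont} for $\|\wedge^{d-1}M_{n,E}\|,\|\wedge^{d+1}M_{n,E}\|$ into $\|\wedge^jM\|=\sigma_1(M)\cdots\sigma_j(M)$, one gets a set of $\theta$ of measure $\ge 1-e^{-c_1n\varepsilon^2\hat L^d}$ on which simultaneously $\tfrac1n\log\|\wedge^dM_{n,E}(\theta)\|\ge(1-\varepsilon)\hat L^d$ and $\tfrac1n\log\big(\sigma_d(M_{n,E}(\theta))/\sigma_{d+1}(M_{n,E}(\theta))\big)\ge 2\hat L_d-O(\varepsilon\hat L^d)\ge\hat L_d$; the hypothesis $\varepsilon<\hat L_d/(11\hat L^d)$ is exactly what keeps this finite-scale gap from being swallowed by the error. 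On that set $M_{n,E}(\theta)$ therefore has a well-defined most expanding $d$-plane $V^u(\theta)$, separated by a factor $\ge e^{n\hat L_d}$ from the complementary least expanding $d$-plane $V^s(\theta)$.

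The decisive step is to pass from ``$\|\wedge^d M_{n,E}(\theta)\|$ exponentially large'' to ``$|\det(M_{n,E}(\theta)-I_{2d})|$ exponentially large''. This is not automatic: $\det(M-I_{2d})$ can be far smaller than $\|\wedge^dM\|$ unless the eigenvalues of $M$ track its singular values, equivalently unless the invariant unstable subspace is close to $V^u$ and transversal to its image. Here the $\kappa_0$-admissible sequence is essential: for $n\in\mathcal N$ the near-recurrence $\|nd\alpha\|_\T\le\kappa_0$ makes the factors of $M_{mn,E}(\theta)=\prod_{k=0}^{m-1}M_{n,E}(\theta+knd\alpha)$ nearly coincide, so the Avalanche Principle (Theorem \ref{thm:AP} applied to $\wedge^d$ of the factors, whose singular gaps are huge by the previous step) forces the most expanding $d$-planes of $M_{n,E}(\theta)$ and of $M_{n,E}(\theta)^*$ to be only mildly rotated relative to each other — quantitatively non-orthogonal. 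This is Lemma 3.2 of \cite{GS3} transplanted to the Grassmannian of $d$-planes, and it is what Lemma \ref{lem:Mn2} packages along with the definitions of $N_0$ and $\kappa_0$. Granting a non-orthogonality angle $\ge e^{-o(n)}$, a block decomposition of $M_{n,E}(\theta)-I_{2d}$ adapted to $V^u(\theta)\oplus V^s(\theta)$ gives, on the good set,
\begin{align}
\big|\det(M_{n,E}(\theta)-I_{2d})\big| \ge c\,e^{-o(n)}\,\|\wedge^d M_{n,E}(\theta)\| \ge e^{n(1-8\varepsilon)\hat L^d},
\end{align}
the worsening of the error from $\varepsilon$ to $8\varepsilon$ providing the slack to absorb the constant $c$, the angle loss, the $\asymp 1$ contribution of the stable block $\det\big((M_{n,E}(\theta)-I_{2d})|_{V^s(\theta)}\big)$, and the sub-exponential errors of Lemma \ref{lem:upper_semi_cont} and the Avalanche Principle. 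Since $e^{-c_1n\varepsilon^2\hat L^d}<\tfrac1{10}$ for large $n$, this proves the claim. I expect this last step to be the real obstacle: Step~1 is algebra and Step~2 assembles quoted estimates, whereas excluding exponential cancellation in $\det(M_{n,E}(\theta)-I_{2d})$ is false for a generic transfer matrix and holds only along the admissible sequence, through the non-orthogonality input imported from \cite{GS3}.
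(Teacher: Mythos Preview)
Your proposal is correct and follows essentially the same route as the paper: the identity $|f_{E,n}(\theta)|=|\det B|^n|\det(M_{n,E}(\theta)-I_{2d})|$ (proved in the paper by explicit row operations rather than the Floquet argument you sketch), then LDT plus upper semicontinuity for the singular gap, then the admissibility/AP argument of Lemma~\ref{lem:Mn2} (packaging the \cite{GS3} non-orthogonality idea) to bound $\|\wedge^d M_{n,E}^2(\theta)\|$ and hence the angle between the top singular $d$-planes of $M_{n,E}$ and $M_{n,E}^*$, and finally the determinant lower bound. Two small remarks: the angle bound obtained is $e^{-O(\varepsilon n)}$ rather than $e^{-o(n)}$, which is exactly why the loss is $8\varepsilon$; and your ``block decomposition'' in $V^u\oplus V^s$ is carried out in the paper via the SVD expansion $\det(M_{n,E}-I_{2d})=\det(D_n-U_{w,n}^*U_{v,n})$ together with the symplectic relation \eqref{eq:Omega_v_w} transferring the angle bound from the expanding to the contracting $d$-planes---the paper itself notes the Schur-complement alternative you have in mind.
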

We present the proof of Lemma \ref{lem:deno} in Section \ref{sec:Denominator}.
Although Lemma \ref{lem:deno} does not hold for arbitrary $n$, it holds for all $\kappa_0$-admissible $n$'s, which form a positive density subsequence. Indeed,  there exists $C_{\kappa_0,d,\alpha}$ such that for any $n\in \N$, there is a $\kappa_0$-admissible $\tilde{n}$ such that $|n-\tilde{n}|\leq C_{\kappa_0,d,\alpha}$.
Hence Lemma \ref{lem:deno} yields the following.
\begin{corollary}\label{cor:deno}
For any $E\in \R$, for any small $0<\varepsilon<\hat{L}_d/(11\hat{L}^d)$, let $N_0, \kappa_0$ be as in Lemma~\ref{lem:deno}.
There exists $C_{\kappa_0,d,\alpha}>0$ such that for any $n\geq N(\varepsilon,\alpha,E,d,v)$, there exists $\kappa_0$-admissible $\tilde{n}$, $|\tilde{n}-n|\leq C_{\kappa_0,d,\alpha}$, so that 
\begin{align}
\mathrm{mes}\left\{\theta\in \T: \frac{1}{\tilde{n}}\log |f_{E,\tilde{n}}(\theta)|<|\det B|+(1-8\varepsilon)\hat{L}^d\right\}<\frac{1}{10}.
\end{align}
\end{corollary}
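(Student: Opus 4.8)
The plan is to obtain Corollary~\ref{cor:deno} as an immediate consequence of Lemma~\ref{lem:deno} together with the fact that the $\kappa_0$-admissible set
\[
\mathcal{N}=\{n\geq 8N_0:\ \|nd\alpha\|_{\T}\leq \kappa_0\}
\]
is syndetic, i.e.\ has uniformly bounded gaps. Given an arbitrary large $n$, I would replace it by a nearby $\tilde n\in\mathcal{N}$ and apply Lemma~\ref{lem:deno} at $\tilde n$, which already supplies the required measure bound for every admissible index.

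The only point that needs an argument is the syndeticity of $\mathcal{N}$. Since $\alpha$ is irrational, so is $d\alpha$, and hence by Weyl's equidistribution theorem the orbit $\{jd\alpha \bmod 1:\ j\geq 0\}$ is dense (indeed equidistributed) in $\T$. I would therefore fix an integer $q=q(\kappa_0,d,\alpha)$ so that the finite set $\{jd\alpha \bmod 1:\ 0\leq j<q\}$ is $\kappa_0$-dense in $\T$. Then for any $n_0\in\N$ there is $j\in\{0,\dots,q-1\}$ with $\|jd\alpha-(-n_0d\alpha)\|_{\T}\leq\kappa_0$, that is $\|(n_0+j)d\alpha\|_{\T}\leq\kappa_0$; so the interval $[n_0,n_0+q)$ always contains an index with $d\alpha$-distance to $\Z$ at most $\kappa_0$. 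Taking $C_{\kappa_0,d,\alpha}:=q$ yields the claimed bounded-gap property (if an effective constant is desired, $q$ can instead be read off from the continued fraction expansion of $d\alpha$ via the three-distance theorem, but equidistribution is all that is needed).

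To finish, I would enlarge the threshold in the statement, taking $N(\varepsilon,\alpha,E,d,v)\geq \max(8N_0,\,N_{\mathrm{Lem}})+C_{\kappa_0,d,\alpha}$, where $N_{\mathrm{Lem}}=N(\varepsilon,\alpha,E,d,v)$ denotes the threshold appearing in Lemma~\ref{lem:deno}. Given $n$ above this threshold, the previous paragraph produces $\tilde n$ with $n\leq \tilde n\leq n+C_{\kappa_0,d,\alpha}$ and $\|\tilde n d\alpha\|_{\T}\leq\kappa_0$; then $|\tilde n-n|\leq C_{\kappa_0,d,\alpha}$, $\tilde n\geq 8N_0$ (so $\tilde n\in\mathcal{N}$ is a legitimate admissible index), and $\tilde n\geq n\geq N_{\mathrm{Lem}}$. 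Hence Lemma~\ref{lem:deno} applies at $\tilde n$ and gives exactly
\[
\mathrm{mes}\left\{\theta\in\T:\ \frac{1}{\tilde n}\log|f_{E,\tilde n}(\theta)|<|\det B|+(1-8\varepsilon)\hat L^d\right\}<\frac{1}{10}.
\]

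There is no genuine obstacle in this deduction; the only care required is the bookkeeping of the several thresholds ($8N_0$, the $N$ of Lemma~\ref{lem:deno}, and the shift $C_{\kappa_0,d,\alpha}$) so that the admissible index $\tilde n$ stays both in the admissible range and above the threshold demanded by Lemma~\ref{lem:deno}, together with the (routine) verification that Bohr sets of the irrational rotation by $d\alpha$ are syndetic.
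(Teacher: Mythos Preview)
Your proposal is correct and follows essentially the same approach as the paper: the paper simply remarks (just before stating the corollary) that the $\kappa_0$-admissible integers form a syndetic set, so any $n$ has an admissible $\tilde n$ within bounded distance, and then invokes Lemma~\ref{lem:deno} at~$\tilde n$. Your write-up supplies the (routine) density argument for syndeticity that the paper leaves implicit, and your bookkeeping of the thresholds is more careful than the paper's one-line deduction.
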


\begin{remark}\label{rmk:Dirichlet_vs_periodic}
The reason why we consider $P_{\alpha,n}(\theta)$ with periodic boundary conditions, instead of the finite volume operator $\hat{H}_{\alpha,n}$ on $[0,nd-1]$ with Dirichlet boundary conditions, lies with the differences between the denominators of their respective Green's functions. In fact, 
as we will show in Lemma~\ref{lem:P=M-I} $$|f_{E,n}(\theta)|=|\det B|^{n}\cdot |\det(M_{n,E}(\theta)-I_{2d})|,$$ while 
$$|\det (\hat{H}_{\alpha,n}(\theta)-E)|=|\det B|^{n} \cdot |\det M_{n,E}^{UL}(\theta)|,$$
see~\cite{BGV}. 
To prove non-perturbative Anderson localization, it is crucial to obtain a lower bound \[|\det(M_{n,E}(\theta)-I_{2d})|\gtrsim \exp(n(\hat{L}^d-\varepsilon))\] for certain $\theta$'s, which we establish in Lemma~\ref{lem:lower} along the $\kappa_0$-admissible sequence. It appears to be more difficult to obtain a similar result for $|\det M_{n,E}^{UL}(\theta)|$. One may attempt to derive the inequality $$|\det M_{n,E}^{UL}(\theta)|\gtrsim \exp(n(\hat{L}^d-\varepsilon))$$ via Thouless' formula.
For $E\notin \R$, such a lower bound for $|\det M_{n,E}^{UL}(\theta)|$ can indeed be deduced directly from the Thouless formula (proved for scalar case $(d=1)$ in \cites{AS,CS1}, and operators on the strip in \cites{CS2,KS,HP}). 
In fact, if $E\notin \R$, then $g(E')=\log |E-E'|$ is a continuous function on $\R$. By the $\sigma^*$-convergence of the finite volume density of states measure to the (infinite volume) density states measure $\widehat{\mathcal{N}}$, it follows that 
\begin{align}
\lim_{n\to\infty} \frac{1}{nd} (\log \left|\det M_{n,E}^{UL}(\theta)\right|+n\log|\det B|)&=\lim_{n\to\infty}\frac{1}{nd}\log |\det (\hat{H}_{\alpha,n}(\theta)-E)|\\
    &=\int_{\T}\log |E-E'|  \, \widehat{\mathcal{N}}(\mathrm{d}E')= \frac{1}{d} \hat{L}^d+\log |\hat{v}_d|
\end{align}
 for a.e.\ $\theta$. 
However this simple proof does not apply to $E\in \R$, since $g$ is not continuous. 
See also~\eqref{eq:wrong}.
\end{remark}

We emphasize the robustness of the lower bound in Lemma \ref{lem:deno} for the denominator. It can be applied to operators $\hat{H}_{\alpha,\theta}$ on the strip with general analytic potentials on higher-dimensional tori. We will use it in an upcoming work to address non-perturbative Anderson localization in the general case. 
For the special case of cosine potential, a simpler lower bound can be obtained via Herman's subharmonic argument.

\begin{lemma}\label{lem:deno_ARC}
For any $n\geq 1$, we have
\begin{align}
    \frac{1}{n}\int_{\T}\log |f_{E,n}(\theta)|\, \mathrm{d}\theta\geq 0=\hat{L}^d+d\log |\hat{v}_d|.
\end{align}
\end{lemma}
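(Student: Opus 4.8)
The plan is to use Herman's subharmonicity trick together with the identity $|f_{E,n}(\theta)| = |\det B|^n \cdot |\det(M_{n,E}(\theta) - I_{2d})|$ from Lemma \ref{lem:P=M-I}. The key observation is that $f_{E,n}(\theta)$, as a function of $\theta$, extends to a function of $z = e^{2\pi i \theta}$ that is (after multiplying by a suitable power of $z$) a polynomial in $z$ and $z^{-1}$; indeed $f_{E,n}$ is a determinant whose entries are trigonometric polynomials in $\theta$, so $\theta \mapsto \log|f_{E,n}(\theta)|$ is the logarithm of the modulus of an analytic function, hence subharmonic where $f_{E,n}$ does not vanish. The cleanest route, however, is to complexify the phase: replace $\theta$ by $\theta + i t$ for $t \in \R$ and study the behavior as $t \to +\infty$.

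First I would recall that $M_{n,E}(\theta) = \mathrm{diag}(B,I_d)\cdot \hat A_{nd,E}(\alpha,\theta) \cdot \mathrm{diag}(B^{-1},I_d)$, so $\det(M_{n,E}(\theta) - I_{2d}) = \det(\hat A_{nd,E}(\alpha,\theta) - I_{2d})$. Now $\hat A_{nd,E}(\alpha,\theta)$ is a product of the one-step matrices $\hat A_E^0(\theta + j\alpha)$, each of which, upon substituting $\theta \mapsto \theta + it$, has entries that are bounded by $C e^{2\pi d|t|}/|\hat v_d|$ once we track the exponential factors $e^{2\pi k t}$ appearing in $C_\varepsilon$, $B_\varepsilon$ — equivalently, by \eqref{eq:M=A2} and the conjugation by $F_\varepsilon$, one has $\hat A_E^{it}(\theta) = F_{it}^{-1}\,(\text{something bounded})\,F_{it}$ up to scalar factors. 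The point of Herman's argument: $\frac{1}{n}\int_{\T}\log|f_{E,n}(\theta+it)|\,d\theta$ is a convex function of $t$ (subharmonicity of $\log|f_{E,n}|$ in the strip, combined with periodicity), it equals $\frac{1}{n}\int_{\T}\log|f_{E,n}(\theta)|\,d\theta$ at $t=0$... no — rather, by subharmonicity and the mean value property, $t \mapsto \frac{1}{n}\int_{\T}\log|f_{E,n}(\theta+it)|\,d\theta$ is convex and even is not quite right either; what is true is that it is convex in $t$, hence $\frac{1}{n}\int_{\T}\log|f_{E,n}(\theta)|\,d\theta \geq \lim_{t\to\infty}\big(\text{slope term}\big)$. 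Concretely, as $t \to +\infty$, the dominant term in $\det(\hat A_{nd,E}^{it} - I_{2d})$ comes from the product of the $(1,1)$-type entries carrying the largest exponential growth; a direct computation shows
\begin{align}
\frac{1}{n}\int_{\T}\log|f_{E,n}(\theta+it)|\,d\theta = nd\log|\hat v_d| + (\text{leading exponential in } t) + o(1),
\end{align}
and matching the convex function to its asymptotic slopes at $t \to \pm\infty$ forces the value at $t=0$ to be at least the value of the affine lower envelope there. Since $\hat L^d + d\log|\hat v_d| = 0$ for the cosine potential (this is exactly Haro–Puig, Theorem \ref{thm:HP}, combined with $L^1(\alpha,A_{E,0}) = 0$ in the subcritical regime — or more elementarily, the normalization $\hat v_0 = 0$ together with the structure of $\hat A_E$), the convex lower bound reads $\frac{1}{n}\int \log|f_{E,n}| \geq 0$.

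Actually the simplest honest version: $\theta\mapsto u(\theta) := \frac1n \log|f_{E,n}(\theta)| - \log|\det B|$ is subharmonic as the log-modulus of the analytic function $\theta \mapsto \det(M_{n,E}(\theta)-I_{2d})$ (after complexifying $\theta$); its average over $\T$ is $\geq$ its value "at $i\infty$" only after a normalization, so instead I would argue: write $\det(M_{n,E}(\theta)-I_{2d})$ as a Laurent polynomial $\sum_{|k|\le K} c_k e^{2\pi i k\theta}$; then $\frac1n\int_\T \log|f_{E,n}| = \log|\det B| + \int_\T \log\big|\sum c_k e^{2\pi ik\theta}\big|\frac{d\theta}{n}$, and by Jensen's formula this integral equals $\frac1n\log|c_{-K}| + \frac1n\sum_{\text{zeros }z_j,\,|z_j|>1}\log|z_j| \geq \frac1n\log|c_{-K}|$ — wait, Jensen on the circle gives $\int_\T\log|\sum c_ke^{2\pi ik\theta}|d\theta = \log|c_K| + \sum_{|z_j|<1}\log\frac{1}{|z_j|}$ for a polynomial $\sum_{k=0}^{2K}c_k z^k$ after factoring out $z^{-K}$; in any case the integral is $\geq \log(\text{geometric mean of leading coeffs}) = \log|\hat v_d|^{nd}\cdot(\dots)$. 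The leading coefficient $c_K$ (or $c_{-K}$) of $\det(M_{n,E}(\theta)-I_{2d})$ in the variable $e^{2\pi i\theta}$ is, up to sign, $\hat v_{-d}^{nd}$ times a nonzero constant independent of $E$, which one computes directly from the triangular structure of $B$ and the form of $\hat A_E$: the top-degree monomial comes uniquely from the product of the superdiagonal/$\hat v_{\pm d}$ entries. This gives $\frac1n\int_\T\log|f_{E,n}| \geq \log|\det B| + d\log|\hat v_{-d}| = d\log|\hat v_d| + d\log|\hat v_d|$... I need to be careful that $|\det B| = |\hat v_{-d}|^d = |\hat v_d|^d$ since $B$ is triangular with diagonal $\hat v_{-d}$. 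So the bound becomes $\frac1n\int_\T \log|f_{E,n}| \geq d\log|\hat v_d| + (\text{contribution of the }\theta\text{-Laurent leading term})$, and identifying that contribution as $\hat L^d$-worth, i.e. as $0 - d\log|\hat v_d|$ by Thouless/Haro–Puig, closes the argument; equivalently one shows the Laurent leading coefficient has modulus exactly realizing $\hat L^d + d\log|\hat v_d| = 0$.

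The main obstacle I anticipate is pinning down the exact leading Laurent coefficient of $\det(M_{n,E}(\theta) - I_{2d})$ in $e^{2\pi i\theta}$ and checking it is nonzero and $E$-independent — i.e., verifying that the highest-frequency harmonic does not accidentally cancel and that subtracting $I_{2d}$ does not lower the degree. This is a combinatorial determinant computation exploiting that only $C(\theta)$ depends on $\theta$, with a single $2\cos(2\pi\theta)$ in one corner and $2\cos(2\pi(\theta+j\alpha))$ down the diagonal, so the top harmonic $e^{2\pi i nd\theta}$ is picked out by a unique term in the determinant expansion. Once that is in hand, Jensen's inequality on the circle (the zeros outside the unit disk only help) delivers $\frac1n\int_\T\log|f_{E,n}(\theta)|\,d\theta \geq 0$, and the equality $0 = \hat L^d + d\log|\hat v_d|$ is exactly Theorem \ref{thm:HP} specialized to $L^1(\alpha, A_{E,0}) = 0$.
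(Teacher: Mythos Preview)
Your core strategy---Herman's subharmonic trick / Jensen on the unit circle---is exactly what the paper does, and the identification $0 = \hat L^d + d\log|\hat v_d|$ via Haro--Puig in the subcritical regime is correct. The difference is that you route everything through $\det(M_{n,E}(\theta)-I_{2d})$ via Lemma~\ref{lem:P=M-I}, which forces you to compute the extremal Laurent coefficient of that determinant and to worry about whether subtracting $I_{2d}$ kills the top harmonic. The paper avoids this entirely by working directly with $f_{E,n}(\theta)=\det(P_n(\theta)-E)$: the only $\theta$-dependent entries of the $nd\times nd$ matrix $P_n(\theta)-E$ are its diagonal entries $2\cos(2\pi(\theta+k\alpha))-E$, $k=0,\dots,nd-1$, and the off-diagonals are constants. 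Hence the coefficient of $z^{-nd}$ in $F_{E,n}(z):=f_{E,n}(\theta)$ (with $z=e^{2\pi i\theta}$) comes from the unique term in the Leibniz expansion picking the $z^{-1}$ piece of every diagonal entry, namely $\prod_{k=0}^{nd-1} e^{-2\pi i k\alpha}$, which has modulus~$1$. So $g_{E,n}(z):=z^{nd}F_{E,n}(z)$ is a polynomial with $|g_{E,n}(0)|=1$, and Jensen gives
\[
\frac{1}{nd}\int_{\T}\log|f_{E,n}(\theta)|\,d\theta
=\frac{1}{nd}\int_{\T}\log|g_{E,n}(e^{2\pi i\theta})|\,d\theta
\ge \frac{1}{nd}\log|g_{E,n}(0)|=0.
\]
Your detour through $M_{n,E}$ is not wrong, but the ``obstacle'' you anticipate (tracking the top coefficient of $\det(M_{n,E}-I_{2d})$, verifying non-cancellation) is self-imposed; working with $P_n(\theta)-E$ directly makes the leading-coefficient computation a one-line observation rather than a combinatorial chase.
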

We present the proof at the end of Section \ref{sec:Denominator}. Note the equality sign above is exactly Haro-Puig's theorem (see Theorem \ref{thm:HP}) combined with the subcritical assumption on the Schr\"odinger cocycle.

\subsection{Almost localization}
The proof of almost localization follows the scheme of \cite{AJ2}, combined with our Lemmas \ref{lem:numerator} and \ref{lem:deno}.
A minor adjustment is that \cite{AJ2} deals with $\alpha\in \mathrm{DC}$ while we assume a slightly weaker condition $\beta(\alpha)=0$. 
The almost localization for the $\beta(\alpha)=0$ case was previously addressed for the extended Harper's model in \cite{DryTen}.
It should be sufficient to assume that $0\leq \beta(\alpha)<c$ for some small constant $c>0$, but we do not pursue this improvement here.

Let $C_3>0$ be the universal constant in Lemma \ref{lem:uniform} below. 
Fix 
\begin{align}\label{def:epsilon_0}
0<\varepsilon_0<\frac{\hat{L}_d(\alpha, \hat{A}_E)}{30\hat{L}^d(\alpha, \hat{A}_E)+100C_3d}<\frac{\hat{L}_d}{11\hat{L}^d}.
\end{align}

\begin{definition}[$\varepsilon_0$-resonance of $\theta$]
Let $\alpha,\theta\in \T$ and constant $\varepsilon_0>0$. We say $k\in \Z$ is an $\varepsilon_0$-resonance of $\theta$ if $\|2\theta-k\alpha\|_{\T}\leq e^{-\varepsilon_0 |k|}$ and $\|2\theta-k\alpha\|_{\T}=\min_{|j|\leq k} \|2\theta-j\alpha\|_{\T}$.
\end{definition}
Since $\beta(\alpha)=0$, there exists $C_0=C_0(\alpha,\varepsilon_1)>0$ such that for $n\neq 0$, 
\begin{align}\label{eq:n_alpha_varepsilon_1}
\|n\alpha\|_{\T}\geq C_0 e^{-\varepsilon_1 |n|}.
\end{align}
In the following we shall fix $\varepsilon_1=\varepsilon_0/1000$.
This implies the following. 
\begin{lemma}\label{lem:resonant_smallest}
If $|k|$ is large enough, and $\|2\theta-k\alpha\|_{\T}\leq e^{-\varepsilon_0 |k|}$, then $\|2\theta-k\alpha\|_{\T}=\min_{|j|\leq k} \|2\theta-j\alpha\|_{\T}$.
\end{lemma}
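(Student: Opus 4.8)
The plan is to show that if $|k|$ is large and $\|2\theta - k\alpha\|_{\T} \le e^{-\varepsilon_0|k|}$, then for every $j$ with $|j| \le |k|$, $j \ne \pm k$, we have $\|2\theta - j\alpha\|_{\T} \ge \|2\theta - k\alpha\|_{\T}$. The key observation is that if both $\|2\theta - k\alpha\|_{\T}$ and $\|2\theta - j\alpha\|_{\T}$ were small, then by the triangle inequality $\|(k-j)\alpha\|_{\T} \le \|2\theta - k\alpha\|_{\T} + \|2\theta - j\alpha\|_{\T}$ would be small, which contradicts the Diophantine-type lower bound \eqref{eq:n_alpha_varepsilon_1} on $\|n\alpha\|_{\T}$ when $n = k - j \ne 0$ is not too large.

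First I would take $j$ with $|j| \le |k|$, $j \ne k$ (the case $j = -k$ is subsumed since $|{-k}| = |k|$ is allowed as a competitor, but $k$ itself realizes the minimum trivially; I should be slightly careful that the definition's minimum is over $|j| \le k$ and whether $-k$ counts — if $\|2\theta-(-k)\alpha\|_{\T}$ is also $\le e^{-\varepsilon_0|k|}$ then $\|2k\alpha\|_{\T}$ is small, contradicting \eqref{eq:n_alpha_varepsilon_1} with $n = 2k$ for $|k|$ large, so this does not actually occur). For any competitor $j$, set $n = k - j$, so $1 \le |n| \le 2|k|$. Suppose toward a contradiction that $\|2\theta - j\alpha\|_{\T} < \|2\theta - k\alpha\|_{\T} \le e^{-\varepsilon_0|k|}$. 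Then
\begin{align}
\|n\alpha\|_{\T} = \|(k-j)\alpha\|_{\T} \le \|2\theta - k\alpha\|_{\T} + \|2\theta - j\alpha\|_{\T} \le 2 e^{-\varepsilon_0 |k|}.
\end{align}
On the other hand, by \eqref{eq:n_alpha_varepsilon_1} with $\varepsilon_1 = \varepsilon_0/1000$ and $n = k-j \ne 0$,
\begin{align}
\|n\alpha\|_{\T} \ge C_0 e^{-\varepsilon_1 |n|} \ge C_0 e^{-\varepsilon_1 \cdot 2|k|} = C_0 e^{-\varepsilon_0 |k|/500}.
\end{align}
Combining the two bounds gives $C_0 e^{-\varepsilon_0|k|/500} \le 2 e^{-\varepsilon_0|k|}$, i.e. $C_0 \le 2 e^{-\varepsilon_0|k|(1 - 1/500)} = 2 e^{-\frac{499}{500}\varepsilon_0|k|}$, which fails once $|k|$ is large enough (depending on $C_0, \varepsilon_0$, hence on $\alpha$ and $\varepsilon_0$). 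This contradiction proves $\|2\theta - j\alpha\|_{\T} \ge \|2\theta - k\alpha\|_{\T}$ for all such $j$, which is exactly the claimed minimality.

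I do not expect any serious obstacle here; the only thing requiring a little care is bookkeeping the range of $|n| = |k - j| \le 2|k|$ so that the exponential loss $e^{-2\varepsilon_1 |k|}$ is still far smaller in exponent than the gain $e^{-\varepsilon_0|k|}$, which is guaranteed precisely by the choice $\varepsilon_1 = \varepsilon_0/1000 \ll \varepsilon_0$. One should also note that the threshold for "$|k|$ large enough" depends only on $\alpha$ (through $C_0$) and on $\varepsilon_0$, consistent with the statement, and that the degenerate competitor $j = -k$ is handled by the same argument applied to $n = 2k$.
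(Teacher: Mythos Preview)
Your proof is correct and follows essentially the same approach as the paper: both use the triangle inequality together with the bound \eqref{eq:n_alpha_varepsilon_1} (with $\varepsilon_1=\varepsilon_0/1000$) on $\|(k-j)\alpha\|_{\T}$ for $|k-j|\le 2|k|$. The paper phrases it as a direct lower bound $\|2\theta-j\alpha\|_{\T}\ge \|(j-k)\alpha\|_{\T}-\|2\theta-k\alpha\|_{\T}\ge C_0 e^{-\varepsilon_0|k|/500}-e^{-\varepsilon_0|k|}\ge e^{-\varepsilon_0|k|}$ rather than a contradiction, but this is only a cosmetic difference.
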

\begin{proof}
In fact,  for $|j|\leq k$ and $j\neq k$,
\begin{align}\label{eq:res_suf}
\|2\theta-j\alpha\|_{\T}
\geq &\|(j-k)\alpha\|_{\T}-\|2\theta-k\alpha\|_{\T}\\
\geq &C_0 e^{-\varepsilon_0 |j-k|/1000}-e^{-\varepsilon_0 |k|}\\
\geq &C_0 e^{-\varepsilon_0 |k|/500}-e^{-\varepsilon_0 |k|}\\
\geq &e^{-\varepsilon_0 |k|}\geq \|2\theta-k\alpha\|_{\T},
\end{align}
in which we used \eqref{eq:n_alpha_varepsilon_1} to estimate $\|(j-k)\alpha\|_{\T}$.
\end{proof}

\begin{definition}
Let $0=|n_0|\leq |n_1|\leq |n_2|\leq ...$ be the $\varepsilon_0$-resonances of $\theta$. If this sequence is infinite, we
say $\theta$ is $\varepsilon_0$-resonant, otherwise we say it is $\varepsilon_0$-non-resonant. Furthermore, if $\theta$ is $\varepsilon_0$-non-resonant with a finite sequence of resonances $0=|n_0|\leq |n_1|\leq ...\leq |n_j|$, we let $n_{j+1}=\infty$.
\end{definition}

Due to $\beta(\alpha)=0$, we can show the following growth of the infinite sequence of $\theta$ resonances.
\begin{lemma}\label{eq:nj_jump}
Let $\beta(\alpha)=0$, then there exists $N_0=N_0(\alpha,\varepsilon_0)$, such that for $|n_j|>N_0$, 
\begin{align}
|n_{j+1}|>250 |n_j|
\end{align}
\end{lemma}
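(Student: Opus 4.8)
The plan is to argue by contradiction, exploiting that $\beta(\alpha)=0$ forces the denominators $\|m\alpha\|_{\T}$ to decay no faster than the exponential rate $e^{-\varepsilon_1|m|}$ with $\varepsilon_1=\varepsilon_0/1000$, which is far slower than the rate $e^{-\varepsilon_0|n_j|}$ at which the resonances approximate $2\theta$. Concretely, fix $N_0=N_0(\alpha,\varepsilon_0)$ (to be pinned down at the end) and suppose $|n_j|>N_0$ while, contrary to the claim, $|n_{j+1}|\le 250|n_j|$. Since $n_j$ and $n_{j+1}$ are consecutive $\varepsilon_0$-resonances of $\theta$ and $|n_{j+1}|\ge|n_j|$, the defining inequality for resonances gives $\|2\theta-n_j\alpha\|_{\T}\le e^{-\varepsilon_0|n_j|}$ and $\|2\theta-n_{j+1}\alpha\|_{\T}\le e^{-\varepsilon_0|n_{j+1}|}\le e^{-\varepsilon_0|n_j|}$.

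Next I would set $m:=n_{j+1}-n_j$, which is nonzero since distinct resonances are distinct integers. The triangle inequality for $\|\cdot\|_{\T}$ then yields
\begin{align}
\|m\alpha\|_{\T}\le \|2\theta-n_j\alpha\|_{\T}+\|2\theta-n_{j+1}\alpha\|_{\T}\le 2e^{-\varepsilon_0|n_j|},
\end{align}
while $|m|\le|n_{j+1}|+|n_j|\le 251|n_j|$. On the other hand, \eqref{eq:n_alpha_varepsilon_1} applied with $\varepsilon_1=\varepsilon_0/1000$ gives $\|m\alpha\|_{\T}\ge C_0 e^{-\varepsilon_1|m|}\ge C_0 e^{-\frac{251}{1000}\varepsilon_0|n_j|}$. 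Combining the two bounds,
\begin{align}
C_0\, e^{\frac{749}{1000}\varepsilon_0|n_j|}\le 2,
\end{align}
which is false once $|n_j|>N_0$ for an appropriate $N_0=N_0(\alpha,\varepsilon_0)$; note that $C_0$ depends only on $\alpha$ and $\varepsilon_1$, hence only on $\alpha$ and $\varepsilon_0$, so this choice is legitimate. This contradiction proves $|n_{j+1}|>250|n_j|$ for all $|n_j|>N_0$.

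The argument is short, and its only genuine input is the near-Diophantine separation \eqref{eq:n_alpha_varepsilon_1} that $\beta(\alpha)=0$ provides. The sole point requiring a moment of care is that replacing $n_j,n_{j+1}$ by their difference $m$ costs a factor of roughly $251$ in the relevant exponent; this is harmless precisely because $\varepsilon_1$ was chosen to be $\varepsilon_0/1000\ll\varepsilon_0$, so that $\tfrac{251}{1000}<1$ with ample room. (The constant $250$ plays no special role: any fixed $K<1000$ in place of $250$ works verbatim, the operative inequality being $\tfrac{K+1}{1000}<1$.)
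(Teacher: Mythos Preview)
Your proof is correct and takes essentially the same approach as the paper: both arguments combine the resonance bound $\|2\theta-n_j\alpha\|_{\T}\le e^{-\varepsilon_0|n_j|}$ with the near-Diophantine estimate \eqref{eq:n_alpha_varepsilon_1} applied to $m=n_{j+1}-n_j$ via the triangle inequality. The only cosmetic difference is that you argue by contradiction (assuming $|n_{j+1}|\le 250|n_j|$ and bounding $|m|\le 251|n_j|$), whereas the paper works forward using $|m|\le 2|n_{j+1}|$ to obtain $e^{-\varepsilon_0|n_j|}\ge \tfrac{C_0}{2}e^{-\varepsilon_0|n_{j+1}|/500}$ directly.
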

\begin{proof}
By \eqref{eq:n_alpha_varepsilon_1}, 
\begin{align}
e^{-\varepsilon_0 |n_j|}\geq \|2\theta-n_j\alpha\|_{\T}
\geq &\|(n_{j+1}-n_j)\alpha\|_{\T}-\|2\theta-n_{j+1}\alpha\|_{\T}\\
\geq &C_0 e^{-\varepsilon_0 |n_{j+1}-n_j|/1000}-e^{-\varepsilon_0 |n_{j+1}|}\\
\geq &C_0 e^{-\varepsilon_0 |n_{j+1}|/500}-e^{-\varepsilon_0 |n_{j+1}|}\\
\geq &\frac{C_0}{2} e^{-\varepsilon_0 |n_{j+1}|/500},
\end{align}
which implies the desired bound.
\end{proof}

\begin{definition}[Almost localization]\label{def:almost_AL}
We say the family $\{\hat{H}_{\alpha,\theta}\}_{\theta\in \T}$ is $(C_1, C_2,\varepsilon_2)$-{\em almost localized} for some constant $C_1, C_2, \varepsilon_2>0$ if for every solution $\hat{u}$ of $\hat{H}_{\alpha,\theta}\hat{u}=E\hat{u}$ for some energy $E\in \R$, satisfying $\hat{u}_0=1$ and $|\hat{u}_k|\leq 1+|k|$, and for every $C_1(1+|n_j|)<|k|<C_1^{-1}|n_{j+1}|$, the bound $|\hat{u}_k|\leq C_2 e^{-\varepsilon_2 |k|}$ holds, where the $n_j$'s are the $\varepsilon_0$-resonances of $\theta$.
\end{definition}

We will prove the following.
\begin{theorem}\label{thm:almost_AL}
There exists a constant $C_2=C_2(v,d)>0$ such that 
$\{\hat{H}_{\alpha,\theta}\}_{\theta\in \T}$ is $(3, C_2, \hat{L}_d/10)$-almost localized.
\end{theorem}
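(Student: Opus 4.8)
\textbf{Proof plan for Theorem \ref{thm:almost_AL}.}

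The plan is to run the block-valued analogue of the Avila--Jitomirskaya almost-localization scheme from \cite{AJ2}, with the Green's function input supplied by Lemmas \ref{lem:numerator} and \ref{lem:deno}. Fix a solution $\hat{u}$ with $\hat{u}_0=1$, $|\hat{u}_k|\le 1+|k|$, and energy $E$. The bulk of the work is a ``good box'' construction. For a point $k$ with $C_1(1+|n_j|)<|k|<C_1^{-1}|n_{j+1}|$, I would choose a scale $n\asymp |k|/d$ (more precisely $nd$ comparable to $2\mathrm{dist}(k,\{$resonances$\})$, chosen so that the interval $I$ of length $nd$ containing $k$ places $k$ near its center, i.e.\ $x:=k-(\text{left endpoint of }I)$ satisfies $3d\le x\le nd-3d$ and $\min(x,nd-x)\gtrsim |k|$). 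By Corollary \ref{cor:deno} I may take $n$ additionally $\kappa_0$-admissible after adjusting by a bounded amount. Then the Poisson formula \eqref{eq:Poisson} gives
\begin{align}
|\hat{u}_k|\le 4d^{3/2}\|B\|\cdot \max_{y} |G_{E,n}(\theta+\ell\alpha;k-\ell,y)|\cdot \max_{m}|\hat{u}_{\ell+m}|,
\end{align}
where $\ell$ ranges over a small set of translates and $m$ over the boundary indices of $I$; the bound $|\hat{u}_{\ell+m}|\le 1+|k|\le e^{o(|k|)}$ is harmless. So everything reduces to estimating $|G_{E,n}(\theta+\ell\alpha;x,y)|=|\mu_{n,x,y}(\theta+\ell\alpha)|/|f_{E,n}(\theta+\ell\alpha)|$ for suitable $\ell$.

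For the numerator, Lemma \ref{lem:numerator} gives, uniformly in the shift, $|\mu_{n,x,y}|\le C_d\|B^{-1}\|^2|\det B|^{n}\cdot e^{x(1+\varepsilon)\hat L^{d-1}/d+(n-x/d)(1+\varepsilon)\hat L^d}$ (the other term in the max being smaller in the regime $x\le nd/2$, or handled symmetrically). For the denominator, the key point is that $2\theta$ is non-resonant in the window between $n_j$ and $n_{j+1}$: for the relevant shifts $\theta+\ell\alpha$ one has $\|2(\theta+\ell\alpha)-m\alpha\|_\T\ge \|2\theta-(m-2\ell)\alpha\|_\T\ge e^{-\varepsilon_0|k|\cdot o(1)}$ for all $m$ up to scale $\asymp |k|$ (using the definition of $\varepsilon_0$-resonance, Lemma \ref{lem:resonant_smallest}, and the growth Lemma \ref{eq:nj_jump}). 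Combining this non-resonance with the large-deviation Lemma \ref{lem:deno} / Corollary \ref{cor:deno}, I get that for the chosen admissible $n$ there is a shift $\ell$ with $|\ell|\lesssim |k|$ small compared to $|n_{j+1}|$, such that simultaneously $\theta+\ell\alpha$ avoids the bad set $\{\frac1n\log|f_{E,n}|<|\det B|+(1-8\varepsilon)\hat L^d\}$ \emph{and} $|k-\ell|$ stays well inside $I$; here one uses that the bad set has measure $<\tfrac1{10}$ and that the orbit $\{\theta+\ell\alpha\}_{|\ell|\le C|k|}$ is $\kappa$-dense in $\T$ because $\beta(\alpha)=0$ forces $\|m\alpha\|_\T\ge C_0e^{-\varepsilon_1|m|}$, so a set of measure $<\tfrac1{10}$ cannot absorb all admissible shifts. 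This yields $|f_{E,n}(\theta+\ell\alpha)|\ge |\det B|^{n}e^{n(1-8\varepsilon)\hat L^d}$.

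Dividing, the $|\det B|^{n}$ factors cancel exactly, and one is left with
\begin{align}
|\hat u_k|\le C e^{o(|k|)}\cdot \exp\!\Big(\tfrac{x}{d}(1+\varepsilon)\hat L^{d-1}+(n-\tfrac{x}{d})(1+\varepsilon)\hat L^d-n(1-8\varepsilon)\hat L^d\Big)=Ce^{o(|k|)}\exp\!\Big(-\tfrac{x}{d}\,\hat L_d+O(\varepsilon n\hat L^d)\Big),
\end{align}
since $\hat L^d-\hat L^{d-1}=\hat L_d>0$ by Corollary \ref{cor:non-sa-HP} and $n-x/d$ is comparable to $x/d$ up to the boundary. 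Because $x/d\gtrsim |k|/d$ and the choice \eqref{def:epsilon_0} of $\varepsilon_0$ (hence of $\varepsilon$) makes $O(\varepsilon n\hat L^d)+o(|k|)$ beat at most $\tfrac12\cdot\tfrac{x}{d}\hat L_d$, the exponent is $\le -\tfrac{\hat L_d}{10}|k|$ after absorbing constants, giving $|\hat u_k|\le C_2 e^{-\hat L_d|k|/10}$ with $C_2=C_2(v,d)$. The main obstacle is the bookkeeping of the block scale $d$ throughout: ensuring the interval $I$ can be chosen so that $k$ is simultaneously $\ge 3d$ from one end and $\le nd-3d$ from the other (needed for Lemma \ref{lem:numerator}), that $n$ can be taken $\kappa_0$-admissible, and that an admissible shift $\ell$ exists avoiding the bad denominator set while keeping $k-\ell$ central --- all within the resonance window $C_1|n_j|<|k|<C_1^{-1}|n_{j+1}|$ with the explicit constant $C_1=3$; the growth Lemma \ref{eq:nj_jump} ($|n_{j+1}|>250|n_j|$) is exactly what gives enough room. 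A secondary subtlety is the symmetric case $x\ge nd/2$ (where the roles of the two boundary pieces in \eqref{eq:Poisson} swap and the other term in the Lemma \ref{lem:numerator} max dominates), which is handled identically.
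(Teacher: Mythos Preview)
Your plan has a genuine gap at the step ``find a shift $\ell$ so that $\theta+\ell\alpha$ avoids the bad set.'' Lemma~\ref{lem:deno} only tells you that the bad set $\{\theta:\tfrac1n\log|f_{E,n}(\theta)|<\log|\det B|+(1-8\varepsilon)\hat L^d\}$ has measure $<\tfrac1{10}$; it gives no control on its structure. Since $f_{E,n}$ is a trigonometric polynomial of degree $nd$, this bad set can be a union of order $nd$ intervals. You want to hit the complement with an orbit segment $\{\theta+\ell\alpha\}$ of length also of order $nd$ (the range of $\ell$ for which $k-\ell$ stays central). Density alone (``$\kappa$-dense because $\beta(\alpha)=0$'') does not imply this: with $O(nd)$ components and $O(nd)$ orbit points, there is no discrepancy or pigeonhole argument that forces an intersection with the good set. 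The slogan ``a set of measure $<\tfrac1{10}$ cannot absorb all admissible shifts'' is simply not justified, and this is exactly the obstruction that the paper's machinery is designed to overcome. Your mention of the non-resonance of $2\theta$ in the window is correct in spirit but is not connected to anything in your argument.

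The paper takes a different route that avoids this issue entirely. The key observation (Fact~\ref{fact:Pn_polynomial}) is that $f_N(\theta)$ is a polynomial of degree $dN$ in $\cos(2\pi(\theta+\tfrac{dN-1}{2}\alpha))$. One then chooses \emph{two} index sets $I_1$ (around $0$) and $I_2$ (around $y$) with $|I_1|+|I_2|=dN+1$, proves that the corresponding sampling points are $(C_3d\,\varepsilon_0)$-uniform in the sense of Definition~\ref{def:uniform} (Lemma~\ref{lem:uniform}; this is where the non-resonance of $\theta$ and $\beta(\alpha)=0$ are actually used, and it occupies most of the proof), and applies Lagrange interpolation: if $|f_N|$ were small at \emph{all} $dN+1$ points it would be uniformly small, contradicting the Herman-type lower bound $\int_\T\log|f_{E,n}|\,d\theta\ge 0$ of Lemma~\ref{lem:deno_ARC}. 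Hence some point in $I_1\cup I_2$ is good (Lemma~\ref{lem:I1_I2_exist_big}), and a second application of the Green's function estimate at $0$ rules out $I_1$ (Corollary~\ref{cor:I2_exist_big}), forcing the good point into $I_2$ and yielding decay at $y$. Note that the paper's proof uses Lemma~\ref{lem:deno_ARC}, not Lemma~\ref{lem:deno}; the latter is included for robustness and future applications, as the text explains.
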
 
\begin{proof}
Let $f_{E,N}(\theta)$ be defined as in \eqref{def:fn}, and we omit the dependence on $E$ for simplicity.
The following observation was made for the Harper operator (with Dirichlet boundary conditions) in \cite{J99}, and extended to the long-range setting in \cite{AJ2}.
The extension to the periodic boundary condition setting is immediate.
\begin{fact}\label{fact:Pn_polynomial}
$f_{N}(\theta)=Q_{dN}(\cos(2\pi (\theta+\frac{dN-1}{2}\alpha)))$, where $Q_k(x)$ denotes a real polynomial in $x$ of degree~$k$.
\end{fact}

Next we introduce of definition of uniformity, introduced in \cite{J99}, that measures the evenness of distribution of a set in $\T$.
\begin{definition}[Uniformity]\label{def:uniform}
For some $\kappa>0$, we say that a set $\{\theta_1,\theta_2,...,\theta_{m+1}\}$ is $\kappa$-uniform if 
\begin{align*}
\max_{z\in [0,1]}\max_{j=1,\cdots,m+1} \Bigg|\prod_{\substack{\ell=1\\ \ell \neq j}}^{m+1} \frac{z-\cos (2\pi \theta_{\ell})}{\cos (2\pi \theta_j)-\cos (2\pi \theta_{\ell})}\Bigg|<e^{\kappa m}.
\end{align*}
\end{definition}

Without loss of generality, assume $3(|n_j|+1)<y<|n_{j+1}|/3$ and $|n_j|>N_0(\alpha,\varepsilon_0)$. The case of negative $y$ can be handled similarly. Choose $n$ such that 
$dq_n\leq y/8< dq_{n+1}$ and let $s$ be the largest positive integer satisfying $sd q_n\leq y/8$.
Define intervals $I_1, I_2\subset \Z$ as follows:
\begin{align*}
I_1=[-2sd q_n, 0]\cap \Z\text{\ \ and\ \ }I_2=[y-2sd q_n+1, y+2sd q_n]\cap \Z,\,\, &\mathrm{if}\ n_j<0      \\
I_1=[-1, 2sd q_n-1]\cap \Z\text{\  \ and\ \ }I_2=[y-2sd q_n+1, y+2sd q_n]\cap \Z,\,\, &\mathrm{if}\ n_j\geq 0
\end{align*}
The key in choosing $I_1, I_2$ is to guarantee that $-n_j\notin (I_1\cup I_2)+(I_1\cup I_2)$.
Let $6sq_n=N$ whence $|I_1|+|I_2|=6dsq_n+1=dN+1$.

\begin{lemma}\label{lem:uniform}
The set 
$$\left\{\theta+j\alpha-\left\lfloor\frac{dN-1}{2}\right\rfloor \alpha+\frac{dN-1}{2}\alpha\right\}_{j\in I_1\cup I_2}\subset \T$$ 
is $(C_3d\, \varepsilon_0)$-uniform for some absolute  constant $C_3\geq 1$ and $y>y_0(\alpha, \varepsilon_0, C_{\kappa_0})$.
\end{lemma}
For $d=1$, when $N-1$ is even, this is Lemma 4.3 of \cite{DryTen}. In the general case, the proof requires a slight modification, which we present below.
\begin{proof}
We treat the $n_j>0$ case in detail and leave the other case to the reader. Note that for $y$ large, $n$ is also large.
Set $z=\cos(2\pi \xi)$, and denote 
\begin{align}\label{def:ttheta_j}
\tilde{\theta}_j:=\theta+j\alpha-\left\lfloor\frac{dN-1}{2}\right\rfloor\alpha+\frac{dN-1}{2}\alpha.
\end{align}
Note that since 
\begin{align}\label{eq:the_-1_0}
    &\max_{z\in [0,1]} \max_{\substack{j\in I_1\cup I_2\\ j\neq -1}} \left| \frac{z-\cos(2\pi\tilde{\theta}_{-1})}{\cos(2\pi\tilde{\theta}_j)-\cos(2\pi\tilde{\theta}_{-1})}\right|\\
    \leq &2\max_{\substack{j\in I_1\cup I_2\\ j\neq -1}} |\cos(2\pi\tilde{\theta}_j)-\cos(2\pi\tilde{\theta}_{-1})|^{-1}\\
    \leq &C\max_{\substack{j\in I_1\cup I_2\\ j\neq -1}} \left(\|\tilde{\theta}_j-\tilde{\theta}_{-1}\|_{\T}^{-1}\cdot \|\tilde{\theta}_j+\tilde{\theta}_{-1}\|_{\T}^{-1}\right)
\end{align}
We use \eqref{eq:n_alpha_varepsilon_1} with $\varepsilon_1=\varepsilon_0/10$ to estimate
\begin{align}\label{eq:the_-1_1}
    \|\tilde{\theta}_j-\tilde{\theta}_{-1}\|_{\T}=\|(j+1)\alpha\|_{\T}\geq C_{\varepsilon_0} e^{-\frac{\varepsilon_0}{10}|j+1|}\geq C_{\varepsilon_0} e^{-\frac{\varepsilon_0}{10}(y+2sdq_n+1)}\geq C_{\varepsilon_0} e^{-\frac{3}{10}\varepsilon_0 dN},
\end{align}
where we used $y+2sdq_n\leq 18sdq_n=3dN$.
Regarding the term
\begin{align}\label{eq:the_-1_2}
\|\tilde{\theta}_j+\tilde{\theta}_{-1}\|_{\T}
=&\|2\theta+(j-1)\alpha-2\left\lfloor (dN-1)/2 \right\rfloor \alpha+(dN-1)\alpha\|_{\T}
\end{align}
we distinguish into two cases.

\underline{Case 1.} 
Suppose
$\|2\theta+(j-1)\alpha-2\left\lfloor (dN-1)/2 \right\rfloor \alpha+(dN-1)\alpha\|_{\T}<e^{-\varepsilon_0 |j-1-2\left\lfloor (dN-1)/2\right\rfloor+(dN-1)|}$.
Then by Lemma \ref{lem:resonant_smallest}, we have $-n_j< -1\leq j-1-2\left\lfloor (dN-1)/2\right\rfloor+(dN-1)=-n_p$ for some $p\leq n-1$.
This implies
\begin{align}\label{eq:the_-1_3}
\|2\theta+(j-1)\alpha-2\left\lfloor (dN-1)/2 \right\rfloor \alpha+(dN-1)\alpha\|_{\T}
=&\|2\theta-n_p\alpha\|_{\T}\\
\geq &\|(n_j-n_p)\alpha\|_{\T}-\|2\theta-n_j\alpha\|_{\T}\\
\geq &C_{\varepsilon_0}e^{-\frac{\varepsilon_0}{10}|n_j-n_p|}-e^{-\varepsilon_0 |n_j|}\\
\geq &C_{\varepsilon_0}e^{-\frac{\varepsilon_0}{5}|n_j|}-e^{-\varepsilon_0 |n_j|}\\
\geq &C_{\varepsilon_0} e^{-\frac{\varepsilon_0}{4}|n_j|}\\
\geq &C_{\varepsilon_0} e^{-\frac{\varepsilon_0}{4}dN},
\end{align}
where in the last inequality we used $|n_j|<y/3\leq dN$.

\underline{Case 2.} 
Suppose $\|2\theta+(j-1)\alpha-2\left\lfloor (dN-1)/2 \right\rfloor \alpha+(dN-1)\alpha\|_{\T}\geq e^{-\varepsilon_0 |j-1-2\left\lfloor (dN-1)/2\right\rfloor+(dN-1)|}$. Then
we directly bound
\begin{align}\label{eq:the_-1_4}
\|2\theta+(j-1)\alpha-2\left\lfloor (dN-1)/2 \right\rfloor \alpha+(dN-1)\alpha\|_{\T}
\geq &e^{-\varepsilon_0 |j-1-2\left\lfloor (dN-1)/2\right\rfloor+(dN-1)|}\\
\geq &e^{-3\varepsilon_0 dN},
\end{align}
where we used $j-1-2\left\lfloor (dN-1)/2\right\rfloor+(dN-1)\leq y+2sdq_n\leq 3dN$.
Taking \eqref{eq:the_-1_0}, \eqref{eq:the_-1_1}, \eqref{eq:the_-1_2}, \eqref{eq:the_-1_3} and \eqref{eq:the_-1_4} into account, it is enough to prove the claimed result with $I_1=[-1,2sdq_n-1]$ replaced with $I_1=[0,2sdq_n-1]$.

\begin{lemma}\label{smallest}\cite{AJ1}
Let $\alpha\in \R\backslash\Q$, $x\in\R$ and $0\leq j_0\leq q_n-1$ be such that $|\sin\pi(x+j_0\alpha)|=\inf_{0\leq j\leq q_n-1}|\sin\pi(x+j\alpha)|$, then for some absolute constant $C_4>0$,
$$-C_4\log q_n\leq \sum_{0\leq j\leq q_n-1, j\neq j_0} \log|\sin\pi(x+j\alpha)|+(q_n-1)\log 2\leq C_4\log q_n$$
\end{lemma}
For any $j\in I_1\cup I_2$, it suffices to estimate
\begin{align*}
\sum_{\ell\in I_1\cup I_2,\ \ell\neq j}\left( \log{|\cos(2\pi \xi)-\cos(2\pi \tilde{\theta}_{\ell})|}-\log{|\cos(2\pi \tilde{\theta}_j)-\cos(2\pi \tilde{\theta}_{\ell})|}\right) =:\sum_1-\sum_2.
\end{align*}
Lemma \ref{smallest} reduces this problem to estimating the minimal terms. We first bound $\sum_1$:
\begin{align}\label{eq:sum1}
\sum_{1}
&=\sum_{\ell\in I_1\cup I_2, \ell\neq j} \log|\cos(2\pi \xi)-\cos(2\pi\tilde{\theta}_{\ell})| \notag\\
&=\sum_{\ell\in I_1\cup I_2, \ell\neq j} \log|\sin{(\pi(\xi+\tilde{\theta}_{\ell}))}|+\sum_{\ell\in I_1\cup I_2, \ell\neq j} \log|\sin(\pi(\xi-\tilde{\theta}_{\ell}))|+(|I_1|+|I_2|-1)\log 2 \notag\\
&=:\sum_{1,+}+\sum_{1,-}+dN\log 2.\\
\end{align}

Dividing $I_1\cup I_2=\bigcup_{k=1}^{6sd}T_k$ into $6sd$ small intervals of integers, each of length $q_n$, and denoting
\begin{align}
|\sin(\pi (\xi+\tilde{\theta}_{\ell_k}))|:=\min_{\ell\in T_k} |\sin(\pi(\xi+\tilde{\theta}_{\ell}))|,
\end{align}
one has 
\begin{align}\label{eq:sum1+_2}
    \sum_{\ell\in I_1\cup I_2, \ell\neq j} \log |\sin(\pi(\xi+\tilde{\theta}_{\ell}))|
    \leq &\sum_{k=1}^{6sd}\sum_{\ell\in T_k, \ell\neq \ell_k}\log |\sin(\pi(\xi+\tilde{\theta}_{\ell}))| \notag\\
    \leq &6sd(C_4\log q_n-(q_n-1)\log 2),
\end{align}
where we used Lemma \ref{smallest} in the last inequality.
Similarly, replacing $\xi$ with $-\xi$, 
\begin{align}\label{eq:sum1-_1}
\sum_{1,-}\leq 6sq(C_4\log q_n-(q_n-1)\log 2).
\end{align}
Combining \eqref{eq:sum1} with \eqref{eq:sum1+_2} and \eqref{eq:sum1-_1}, we arrive at
\begin{align}\label{eq:sum1_final}
    \sum_1
    \leq &12sd(C_4\log q_n-(q_n-1)\log 2)+dN \log 2 \notag\\
    \leq &-6sdq_n\log 2+\varepsilon_0 sdq_n,
\end{align}
for $n$ large.
Next, we estimate $\sum_2$:
\begin{align}\label{eq:sum2_1}
\sum_2=&\sum_{\ell\in I_1\cup I_2, \ell\neq j} \log |\cos(2\pi\tilde{\theta}_j)-\cos(2\pi\tilde{\theta}_{\ell})|\notag\\
=&\sum_{\ell\in I_1\cup I_2, \ell\neq j}\log |\sin(\pi(\tilde{\theta}_j+\tilde{\theta}_{\ell}))|+\sum_{\ell\in I_1\cup I_2, \ell\neq j}\log |\sin(\pi(\tilde{\theta}_j-\tilde{\theta}_{\ell}))|+dN\log 2\notag\\
=:&\sum_{2,+}+\sum_{2,-}+dN\log 2.
\end{align}

Let $T_1=[0, q_n-1]\cap \Z$ be the left-most small interval in $I_1\cup I_2=\cup_{k=1}^{6sd}T_k$, and for $k\geq 1$, let $T_{k+1}$ be on the right-hand side of $T_k$.
Then $\bigcup_{k=1}^{2sd}T_k=I_1$ and $\bigcup_{k=2sd+1}^{6sd}T_k=I_2$.
We also group $T_k$'s into intervals of length $sq_n$, for $1\leq h\leq 6d$, by defining
\begin{align}
    R_h=\bigcup_{k=(h-1)s+1}^{hs} T_k.
\end{align}
Denote the smallest term in each $T_k$, for $1\leq k\leq 6sd$, by:
\begin{align}
|\sin(\pi(\tilde{\theta}_j+\tilde{\theta}_{\ell_k^+}))|=\min_{\ell\in T_k}|\sin(\pi(\tilde{\theta}_j+\tilde{\theta}_{\ell}))|,
\end{align}
and
\begin{align}
|\sin(\pi(\tilde{\theta}_j-\tilde{\theta}_{\ell_k^-}))|=\min_{\ell\in T_k}|\sin(\pi(\tilde{\theta}_j-\tilde{\theta}_{\ell}))|.
\end{align}
Denote the smallest term in each $R_h$, for $1\leq h\leq 6d$, by:
\begin{align}
|\sin(\pi(\tilde{\theta}_j+\tilde{\theta}_{\tilde{\ell}_h^+}))|=\min_{\ell\in R_h}|\sin(\pi(\tilde{\theta}_j+\tilde{\theta}_{\ell}))|,
\end{align}
and
\begin{align}
|\sin(\pi(\tilde{\theta}_j-\tilde{\theta}_{\tilde{\ell}_h^-}))|=\min_{\ell\in R_h}|\sin(\pi(\tilde{\theta}_j-\tilde{\theta}_{\ell}))|.
\end{align}
We now establish the following estimate for $\sum_{2,+}$.
\begin{lemma}\label{lem:sum2+}
For $n$ large, 
\begin{align}
    \sum_{2,+}\geq -300\varepsilon_0 d^2 sq_n-6sdq_n\log 2.
\end{align}
\end{lemma}
\begin{proof}
We bound
\begin{align}\label{eq:sum2+}
    \sum_{2,+}
    =&\sum_{k=1}^{6sd}\sum_{\substack{\ell\in T_k\\ \ell\neq j}}\log |\sin(\pi(\tilde{\theta}_j+\tilde{\theta}_{\ell}))|\\
    \geq &\sum_{k=1}^{6sd}\sum_{\ell\in T_k}\log |\sin(\pi(\tilde{\theta}_j+\tilde{\theta}_{\ell}))|\\
    \geq &\sum_{k=1}^{6sd} \log |\sin(\pi(\tilde{\theta}_j+\tilde{\theta}_{\ell_k^+}))|-(6sd)(C_4\log q_n+(q_n-1)\log 2)\\
    \geq &\sum_{k=1}^{6sd} \log (\|\tilde{\theta}_j+\tilde{\theta}_{\ell_k^+}\|_{\T})-6sdq_n\log 2-\varepsilon_0 sdq_n,
\end{align}
for $n$ large, where we used Lemma \ref{smallest} in the second to last inequality, and $|\sin(\pi x)|\geq 2|x|$ in the final one.
As far as the minimal terms at $\ell_k^+$, $1\leq k\leq 6sd$ are concerned, we now establish the following result. 
\begin{lemma}\label{lem:sum2+_min}
    \begin{align}
        \min_{k=1,\ldots,6sd} \; \|\tilde{\theta}_j+\tilde{\theta}_{\ell_{k}^+}\|_{\T}\geq e^{-36\varepsilon_0 dsq_n}.
    \end{align}
\end{lemma}
\begin{proof}
Within the proof, for $1\leq k\leq 6sd$, let 
\begin{align}
m_k:=j+\ell_{k}^+-2\lfloor (dN-1)/2\rfloor+(dN-1)
\end{align}
for simplicity. Then $\tilde{\theta}_j+\tilde{\theta}_{\ell_{k}^+}=2\theta+m_k\alpha$.
Note that since $j, \ell_k^+\in I_1\cup I_2$, one has
\begin{align}\label{eq:mk<36sdqn}
    -n_j<-3\leq m_k
    \leq 2y+4sdq_n
    \leq 
    \begin{cases}
        \frac{5}{2}y<\frac{5}{6}|n_{j+1}|\\
        16(s+1)dq_n+4sdq_n\leq 36sdq_n
    \end{cases}
\end{align}
Next, we prove Lemma \ref{lem:sum2+_min} by contradiction. Suppose for some $k$ that
\begin{align}\label{eq:assume_j+k1_extreme}
    \|\tilde{\theta}_j+\tilde{\theta}_{\ell_{k}^+}\|_{\T}
    =\left\|2\theta+m_k\alpha\right\|_{\T}
    \leq e^{-36\varepsilon_0 dsq_n}
\end{align}
Then by \eqref{eq:mk<36sdqn}, we conclude that 
\begin{align}\label{eq:mr_resonant}
    \left\|2\theta+m_k\alpha\right\|_{\T}\leq e^{-36\varepsilon_0 dsq_n}\leq e^{-\varepsilon_0 |m_k|}.
\end{align}
Furthermore, for any $0\leq |m|\leq 6sdq_n$, 
 by \eqref{eq:mk<36sdqn} 
\begin{align}
    |m-m_k|\leq 42 sdq_n.
\end{align}
Hence by \eqref{eq:n_alpha_varepsilon_1} with $\varepsilon_1=\varepsilon_0/10$, one obtains 
\begin{align}\label{eq:malpha_large}
    \|2\theta+m\alpha\|_{\T}
    \geq &\left\|(m-m_k)\alpha\right\|_{\T}-\|2\theta+m_k\alpha\|_{\T}\\
    \geq &C_{\varepsilon_0} e^{-\frac{21}{5}\varepsilon_0 sdq_n}-e^{-36\varepsilon_0 dsq_n}\\
    >&e^{-36\varepsilon_0 dsq_n}\\
    \geq &\|2\theta+m_k\alpha\|_{\T},
\end{align}
for $n$ large enough.
Combining \eqref{eq:mr_resonant} with \eqref{eq:malpha_large}, we see that $(-m_k)$ must be an $\varepsilon_0$-resonance.

Since $3|n_j|<y<8(s+1)dq_n\leq 16sdq_n$, we obtain $|n_j|\leq 6sdq_n$. 
Hence by \eqref{eq:malpha_large} with $m=-n_j$, we infer that 
\begin{align}
    \|2\theta-n_j\alpha\|_{\T}>\|2\theta+m_k\alpha\|_{\T}.
\end{align}
This implies $-m_k=n_p$ for some $p\geq j+1$. However, this leads to a contradiction with \eqref{eq:mk<36sdqn} via $|m_k|<|n_{j+1}|$.
\end{proof}

With Lemma \ref{lem:sum2+_min}, we conclude the following.
\begin{lemma}\label{lem:sumSh}
    For any $1\leq h\leq 6d$, 
    \begin{align}
        \sum_{\ell_k^+\in R_h}\log (\|\tilde{\theta}_j+\tilde{\theta_{\ell_k^+}}\|)\geq -40\varepsilon_0 dsq_n.
    \end{align}
\end{lemma}
\begin{proof}
We divide the argument into two cases.

\underline{Case 1}. Suppose $2\|\tilde{\theta}_j+\tilde{\theta}_{\tilde{\ell}_h^+}\|_{\T}\leq |\sin(\pi(\tilde{\theta}_j+\tilde{\theta}_{\tilde{\ell}_h^+}))|\leq e^{-2\varepsilon_0 q_n}$.
For $\ell_k^+, \tilde{\ell}_h^+\in R_h$ and $\ell_k^+\neq \tilde{\ell}_h^+$, we have $|\ell_k^+-\tilde{\ell}_h^+|\leq sq_n$. 
Since $sdq_n\leq y/8<dq_{n+1}$, in particular $sq_n<q_{n+1}$, it follows 
by \eqref{eq:qn_alpha_min} and \eqref{eq:n_alpha_varepsilon_1} with $\varepsilon_1=\varepsilon_0$ that
\begin{align}
    \|(\ell_k^+-\tilde{\ell}_h^+)\alpha\|_{\T}\geq \|q_n\alpha\|_{\T}\geq C_{\varepsilon_0} e^{-\varepsilon_0 q_n}.
\end{align}
This implies, by the triangle inequality, that 
\begin{align}\label{eq:sumSh_1}
\|\tilde{\theta}_j+\tilde{\theta}_{\ell_k^+}\|_{\T}
\geq &\|(\ell_k^+-\tilde{\ell}_h^+)\alpha\|_{\T}-\|\tilde{\theta}_j+\tilde{\theta}_{\tilde{\ell}_h^+}\|_{\T}\\
\geq &C_{\varepsilon_0} e^{-\varepsilon_0 q_n}-\frac{1}{2}e^{-2\varepsilon_0 q_n}\\
\geq &e^{-2\varepsilon_0 q_n},
\end{align}
for $n$ large. 
Combining \eqref{eq:sumSh_1} with Lemma \ref{lem:sum2+_min}, yields
\begin{align}\label{eq:sumSh_2}
    \sum_{\ell_k^+\in R_h}\log(\|\tilde{\theta}_j+\tilde{\theta}_{\ell_k^+}\|_{\T})
    \geq &s\log 2+(s-1) \log (e^{-2\varepsilon_0 q_n})+\log (e^{-36\varepsilon_0 dsq_n})\\
    \geq &-40\varepsilon_0 dsq_n.
\end{align}

\underline{Case 2}. Assume $\pi\|\tilde{\theta}_j+\tilde{\theta}_{\tilde{\ell}_h^+}\|_{\T}\geq |\sin(\pi(\tilde{\theta}_j+\tilde{\theta}_{\tilde{\ell}_h^+}))|> e^{-2\varepsilon_0 q_n}$.
Then for any $\ell_k^+\in R_h$, 
\[\pi\|\tilde{\theta}_j+\tilde{\theta}_{\ell_k^+}\|_{\T}\geq |\sin(\pi(\tilde{\theta}_j+\tilde{\theta}_{\ell_k^+}))|\geq e^{-2\varepsilon_0 q_n}.\]
This implies 
\begin{align}\label{eq:sumSh_3}
    \sum_{\ell_k^+\in R_h}\log(\|\tilde{\theta}_j+\tilde{\theta}_{\ell_k^+}\|_{\T})
    \geq &s\log 2+s\log (e^{-2\varepsilon_0 q_n})\geq -40\varepsilon_0 dsq_n.
\end{align}
The claimed result follows by combining \eqref{eq:sumSh_2} with \eqref{eq:sumSh_3}.
\end{proof}
Lemma \ref{lem:sumSh} implies
\begin{align}\label{eq:sumSh_sum}
\sum_{k=1}^{6sd}\log (\|\tilde{\theta}_j+\tilde{\theta}_{\ell_k^+}\|_{\T})\geq -240\varepsilon_0 d^2sq_n.
\end{align}
Lemma \ref{lem:sum2+} follows by plugging \eqref{eq:sumSh_sum} into \eqref{eq:sum2+}, whence
\begin{align}
    \sum_{2,+}\geq -300\varepsilon_0 d^2sq_n-6dsq_n \log 2,
\end{align}
 as claimed.
\end{proof}

It remains to estimate $\sum_{2,-}$.
Similar to \eqref{eq:sum2+}, one has
\begin{align}\label{eq:sum2-}
\sum_{2,-}
\geq 
\sum_{k=1}^{6sd}\log (\|\tilde{\theta}_j-\tilde{\theta}_{\ell_k^-}\|_{\T})-6sdq_n\log 2-\varepsilon_0 sdq_n.
\end{align}
We will prove the following.
\begin{lemma}\label{lem:sum2-}
For $n$ large, 
\begin{align}
    \sum_{2,-}\geq -300\varepsilon_0 d^2sq_n-6sdq_n\log 2.
\end{align}
\end{lemma}
\begin{proof}
We have
\begin{lemma}\label{lem:sum-_Sh}
\begin{align}
    \sum_{\ell_k^-\in R_h, \ell_k^-\neq j} \log (\|\tilde{\theta}_j-\tilde{\theta}_{\ell_k^-}\|_{\T})\geq -40\varepsilon_0 sdq_n.
\end{align}
\end{lemma}
\begin{proof}
    We split the argument into two different cases depending on whether or not $j\in R_h$. 

    \underline{Case 1}. If $j\in R_h$, then $\tilde{\ell}_h^-=j$.

For $\ell_k^-\in R_h$ and $\ell_k^-\neq j$, we note that 
$0\leq |\ell_k^--j|\leq sq_n<q_{n+1}$.
Hence by \eqref{eq:qn_alpha_min} and \eqref{eq:n_alpha_varepsilon_1}, 
\begin{align}\label{eq:aaaaa}
    \|\tilde{\theta}_{j}-\tilde{\theta}_{\ell_k^-}\|_{\T}=\|(\ell_k^--j)\alpha\|_{\T}\geq \|q_n \alpha\|_{\T}\geq C_{\varepsilon_0} e^{-\varepsilon_0 q_n}.
\end{align}
This implies that 
\begin{align}\label{eq:sum2-_11}
    \sum_{\ell_k^-\in R_h, \ell_k^-\neq j}\log (\|\tilde{\theta}_j-\tilde{\theta}_{\ell_k^-}\|_{\T})\geq (s-1)\log (C_{\varepsilon_0} e^{-\varepsilon_0 q_n})\geq -40dsq_n.
\end{align}

    \underline{Case 2}. Assume $j\notin R_h$. 

    \underline{Case 2.1}. Assume further that 
    $2\|\tilde{\theta}_j-\tilde{\theta}_{\tilde{\ell}_h^-}\|\leq |\sin(\pi(\tilde{\theta}_j-\tilde{\theta}_{\tilde{\ell}_h^-}))|\leq e^{-2\varepsilon_0 q_n}$.

    For $\ell_k^-\in R_h$ and $\ell_k^-\neq \tilde{\ell}_h^-$, 
    analogously to \eqref{eq:aaaaa} one has 
    \begin{align}
        \|\tilde{\theta}_j-\tilde{\theta}_{\ell_k^-}\|_{\T}
         \geq &\|(\ell_k^--\tilde{\ell}_h^-)\alpha\|_{\T}-e^{-2\varepsilon_0 q_n}\\
        \geq &C_{\varepsilon_0} e^{-\varepsilon_0 q_n}-
        e^{-2\varepsilon_0 q_n}\\
        \geq &e^{-2\varepsilon_0 q_n},
    \end{align}
    for $n$ large.
    Also since $|j-\tilde{\ell}_h^-|\leq y+2sdq_n+1\leq 20dsq_n$, we infer by \eqref{eq:n_alpha_varepsilon_1} that
    \begin{align}
        \|\tilde{\theta}_j-\tilde{\theta}_{\tilde{\ell}_h^-}\|_{\T}\geq \|(j-\tilde{\ell}_h^-)\alpha\|_{\T}\geq C_{\varepsilon_0} e^{-20\varepsilon_0 dsq_n}.
    \end{align}
This implies that 
\begin{align}
    \sum_{\ell_k^-\in R_h} \log(\|\tilde{\theta}_j-\tilde{\theta}_{\ell_k^-}\|_{\T})\geq (s-1)\log(e^{-2\varepsilon_0 q_n})+\log(e^{-20\varepsilon_0 dsq_n})\geq -40\varepsilon_0 dsq_n.
\end{align}    
\underline{Case 2.2}. Suppose $\pi \|\tilde{\theta}_j-\tilde{\theta}_{\tilde{\ell}_h^-}\|\geq |\sin(\pi(\tilde{\theta}_j-\tilde{\theta}_{\tilde{\ell}_h^-}))|>e^{-2\varepsilon_0 q_n}$.
Thus, for any $\ell_k^-\in R_h$,
\[\pi \|\tilde{\theta}_j-\tilde{\theta}_{\ell_k^-}\|\geq |\sin(\pi(\tilde{\theta}_j-\tilde{\theta}_{\ell_k^-}))|\geq e^{-2\varepsilon_0 q_n}.\]
This implies 
\begin{align}
    \sum_{\ell_k^-\in R_h} \|\tilde{\theta}_j-\tilde{\theta}_{\ell_k^-}\|_{\T}\geq s\log(\pi^{-1} e^{-2\varepsilon_0 q_n})>-40dsq_n.
\end{align}
Combining the cases above, we have proved Lemma \ref{lem:sum-_Sh}.
\end{proof}
Lemma \ref{lem:sum2-} follows from plugging Lemma \ref{lem:sum-_Sh} into~\eqref{eq:sum2-}.
\end{proof}
Finally, combining \eqref{eq:sum1_final}, \eqref{eq:sum2_1} with Lemmas \ref{lem:sum2+} and \ref{lem:sum2-}, we obtain
\begin{align}
    \sum_{\ell\in I_1\cup I_2,\ \ell\neq j}\left( \log{|\cos(2\pi \xi)-\cos(2\pi \tilde{\theta}_{\ell})|}-\log{|\cos(2\pi \tilde{\theta}_j)-\cos(2\pi \tilde{\theta}_{\ell})|}\right)\leq 650\varepsilon_0 d^2sq_n.
\end{align}
This establishes Lemma \ref{lem:uniform} with $C_3=110$.
\end{proof}

Lemma \ref{lem:uniform} implies the following. Recall that $\hat{L}^d=\hat{L}(d\alpha,M_E)=d\cdot \hat{L}^d(\alpha,\hat{A}_E)$.
\begin{lemma}\label{lem:I1_I2_exist_big}
If $y>y_1(\alpha,\varepsilon_0,v,E)$, then there exists $j\in I_1\cup I_2$ such that
\begin{align}\label{eq:I1_I2_exist_big}
\frac{1}{dN} \log \left|f_N\left(\theta+j\alpha-
\left\lfloor\frac{dN-1}{2}\right\rfloor\alpha\right)\right|\geq\hat{L}^d(\alpha,\hat{A}_E)+\log |\hat{v}_d|-2C_3d\, \varepsilon_0.
\end{align}
\end{lemma}
\begin{proof}
Assuming the lemma is false, 
let $\tilde{\theta}_j$ be defined as in \eqref{def:ttheta_j}.
If \eqref{eq:I1_I2_exist_big} fails for all $j\in I_1\cup I_2$, then 
\begin{align}\label{assume:I1_I2_small}
&\frac{1}{dN} \log \left|f_{N}\left(\theta+j\alpha-\left\lfloor\frac{dN-1}{2}\right\rfloor\alpha\right)\right|\\
&\qquad\qquad =\frac{1}{dN}\log \left|Q_{dN} (\cos(2\pi \tilde{\theta}_j))\right| 
<\hat{L}^d(\alpha,\hat{A}_E)+\log |\hat{v}_d|-2 C_3d\, \varepsilon_0.
\end{align}
Note that $Q_{dN}(x)$ is a polynomial in $x$ of degree $dN\leq dN+1=|I_1|+|I_2|$.
Hence by the Lagrange interpolation theorem, for any $x\in [0,1]$,
\begin{align}\label{eq:Lag_Q}
Q_{dN}(x)
=\sum_{j\in I_1\cup I_2} Q_{dN}(\cos(2\pi \tilde{\theta}_j))
\cdot \prod_{\substack{m\in I_1\cup I_2\\ m\neq j}} \frac{x-\cos(2\pi \tilde{\theta}_m)}{\cos(2\pi \tilde{\theta}_j)-\cos(2\pi \tilde{\theta}_m)}.
\end{align}
By Lemma \ref{lem:uniform}, we infer that 
\begin{align}\label{eq:uniform_Q}
\max_{x\in [0,1]} \max_{j\in I_1\cup I_2} \left|\prod_{\substack{m\in I_1\cup I_2\\ m\neq j}} \frac{x-\cos(2\pi\tilde{\theta}_m)}{\cos(2\pi\tilde{\theta}_j)-\cos(2\pi\tilde{\theta}_m)}\right|\leq e^{C_3\varepsilon_0 d^2 N}.
\end{align}
Combining \eqref{eq:Lag_Q} with \eqref{assume:I1_I2_small} and \eqref{eq:uniform_Q}, we conclude that 
\begin{align}
\max_{\theta\in\T} |f_{N}(\theta)|=\max_{x\in [0,1]} |Q_{dN}(x)|\leq e^{dN (\hat{L}^d(\alpha,\hat{A}_E)+\log |\hat{v}_d|-C_3 d\, \varepsilon_0)}.
\end{align}
However, combining Lemma \ref{lem:deno_ARC} with \eqref{eq:LM=d_LA2}, there exists $\theta$ such that
\begin{align}
|f_N(\theta)|
\geq |\hat{v}_d|^{dN} e^{dN (\hat{L}^d(\alpha,\hat{A}_E)-\frac{C_3d\,\varepsilon_0}{2})},
\end{align}
which is a contradiction. 
\end{proof}

Lemma \ref{lem:I1_I2_exist_big} implies the following result.
\begin{corollary}\label{cor:I2_exist_big}
If $y>y_1(\alpha,\varepsilon_0,v,E)$, then there exists $j\in I_2$ such that
\begin{align}\label{eq:I2_exist_big}
\frac{1}{dN} \log \left|f_{N}\left(\theta+j\alpha-
\left\lfloor\frac{dN-1}{2}\right\rfloor\alpha\right)\right|\geq \hat{L}^d(\alpha,\hat{A}_E)+\log |\hat{v}_d|-2C_3d\,\varepsilon_0.
\end{align}
\end{corollary}
\begin{proof}
It suffices to show that \eqref{eq:I2_exist_big} does not hold for any $j\in I_1$. Thus, assume that~\eqref{eq:I2_exist_big} does hold for some $j_1\in I_1$.
Let 
\begin{align}
\begin{cases}
x_1:=j_1-\left\lfloor\frac{dN-1}{2}\right\rfloor\\
x_2:=x_1+dN-1
\end{cases}
\end{align}
We estimate the Green's function, 
\begin{align}
G_{N}\left(\theta+x_1\alpha; -x_1, m\right)=\frac{\mu_{N,-x_1, m}(\theta+x_1\alpha)}{f_{N}(\theta+x_1\alpha)},
\end{align}
for $0\leq m\leq d-1$.
By Lemma \ref{lem:numerator}, and using $|\det B|=|\hat{v}_d|^{dN}$, one has 
\begin{align}
&|\mu_{N,-x_1,m}(\theta+x_1\alpha)|\leq C_d\|B^{-1}\|\cdot |\hat{v}_d|^{dN}\cdot\\
&\qquad\cdot \max\left(e^{|x_1|(1+\varepsilon_0)\hat{L}^{d-1}(\alpha,\hat{A}_E)+|x_2|(1+\varepsilon_0)\hat{L}^d(\alpha,\hat{A}_E)}, e^{|x_1|(1+\varepsilon_0)\hat{L}^d(\alpha,\hat{A}_E)+|x_2|(1+\varepsilon_0)\hat{L}^{d-1}(\alpha,\hat{A}_E)}\right)
\end{align}
Combining this with the lower bound in \eqref{eq:I2_exist_big} for $j=j_1$ yields
\begin{align}\label{eq:G_I1_L}
&|G_{N}(\theta+x_1\alpha; -x_1, m)|\\
&\qquad \leq C_d\|B^{-1}\| \max\left(e^{-|x_1|\hat{L}_d(\alpha,\hat{A}_E)}, e^{-|x_2|\hat{L}_d(\alpha,\hat{A}_E)}\right)\cdot e^{\varepsilon_0(\hat{L}^d(\alpha,\hat{A}_E)+2C_3d)dN}.
\end{align}
Similarly, for any $0\leq m\leq d-1$,
\begin{align}\label{eq:G_I1_R}
&|G_{N}(\theta+x_1\alpha; -x_1, dN-1-m)|\\
&\qquad\leq C_d\|B^{-1}\| \max\left(e^{-|x_1|\hat{L}_d(\alpha,\hat{A}_E)}, e^{-|x_2|\hat{L}_d(\alpha,\hat{A}_E)}\right)\cdot e^{\varepsilon_0(\hat{L}^d(\alpha,\hat{A}_E)+2C_3d)dN}.
\end{align}
Using the Poisson formula \eqref{eq:Poisson}, and bounding $|\hat{u}_k|\leq 1+|k|$ and $\|B\|\leq C_v$, $\|B^{-1}\|\leq C_v |\hat{v}_d|^{-1}$,  we infer that 
\begin{align}\label{eq:hat_u_0}
|\hat{u}_0|
\leq &C_{d,v}|\hat{v}_d|^{-2} \max\left(e^{-|x_1|\hat{L}_d(\alpha,\hat{A}_E)}, e^{-|x_2|\hat{L}_d(\alpha,\hat{A}_E)}\right)\cdot\\
&\qquad\qquad\qquad \cdot \max(|x_1|, |x_2|)\cdot e^{\varepsilon_0(\hat{L}^d(\alpha,\hat{A}_E)+2C_3d)dN}
\end{align}
Clearly, 
$$dsq_n\leq \min(|x_1|,|x_2|)\leq 5d sq_n.$$
Hence, we can further bound \eqref{eq:hat_u_0} by
\begin{align}\label{eq:hat_u_0_1}
|\hat{u}_0|&\leq C_{d,v}\, |\hat{v}_d|^{-2} e^{-dsq_n(\hat{L}_d(\alpha,\hat{A}_E)-6\varepsilon_0 \hat{L}^d(\alpha,\hat{A}_E)-20 C_3d\, \varepsilon_0)}.
\end{align}
With the choice of $\varepsilon_0$ in \eqref{def:epsilon_0}, we arrive a contradiction with the assumption that $\hat{u}_0=1$ by  sending $n\to \infty$.
\end{proof}
Corollary \ref{cor:I2_exist_big} implies the following bound: \begin{align}\label{eq:u_y_decay}
|\hat{u}_y|\leq C_{d,v}\, |\hat{v}_d|^{-2} e^{-\frac{y}{8}(\hat{L}_d(\alpha,\hat{A}_E)-6\varepsilon_0\hat{L}^d(\alpha,\hat{A}_E)-20C_3d\, \varepsilon_0)}\leq C_{d,v}|\hat{v}_d|^{-2} e^{-\frac{y}{10}\hat{L}_d(\alpha,\hat{A}_E)},
\end{align}
which we used \eqref{def:epsilon_0} in the last inequality.
The proof is similar to arguments in the proof of Corollary~\ref{cor:I2_exist_big}, after replacing the role of $\hat{u}_0$ with $\hat{u}_y$ and using 
$y\geq 8dsq_n$.
This finishes the proof of almost localization.
\end{proof}

\subsection{Localization}
Theorem \ref{thm:AL_cos} follows from Theorem \ref{thm:almost_AL} by noting that any $\theta$ with $\gamma(\alpha,\theta)=0$ is  $\varepsilon_0$-nonresonant for any $\varepsilon_0>0$.
\begin{remark}
    For $\alpha$ satisfying a stronger Diophantine condition 
    \begin{align}\label{def:DC}
    \alpha\in \mathrm{DC}:=\bigcup_{c>0}\bigcup_{b>1}\left\{\alpha\in \T: \|n\alpha\|_{\T}\geq \frac{c}{|n|^b}\right\}
    \end{align}
    one can give another proof of localization by combining the large deviation theorem with the zero count of $\det(P_{\alpha,n}(\theta)-E)$ as in~\cite{HS2}. This approach will allow us to prove a non-perturbative localization for a.e.~$\alpha\in \mathrm{DC}$ for a general class of operators obtained by replacing the cosine potential in $\hat{H}_{\alpha,\theta}$ with arbitrary analytic potentials.
    We will address this problem in an upcoming work.
\end{remark}

\section{Numerator upper bound for periodic boundary conditions}\label{sec:Numerator}

This section is devoted to the proof of Lemma \ref{lem:numerator}.
We consider the finite volume Hamiltonian with periodic boundary conditions. To distinguish it from $\hat{H}_{\alpha,n}(\theta)$, which satisfies Dirichlet boundary conditions, we denote the periodic Hamiltonian by $P_{\alpha,n}(\theta)$. We shall omit $\alpha$ from the notation for simplicity, and denote $C(\theta+jd\alpha)-E$ by $C_j$.  Thus, 
\begin{align}
P_n(\theta)-E=
\left(\begin{matrix}
C_{n-1} & B^* & &  &B\\
B &C_{n-2} &\ddots \\
& \ddots &\ddots &\ddots \\
& &\ddots &\ddots &B^*\\
B^* & & &B &C_0
\end{matrix}\right).
\end{align}
Let $P_{n}(\theta;x,y)$ be the submatrix of $P_n(\theta)-E$ obtained by deleting the $x-$th row and $y-$th column.
We treat the case $0\leq x\leq d-1$ in detail, and briefly go through the other case at the end of this section.

\underline{Case 1}. Suppose $0\leq x\leq d-1$ and $3d\leq y\leq (n-1)d-1$. 
We write $y=\ell d+r$ for some $\ell\in [3,n-2]$ and $r\in [0,d-1]$.
Let 
\begin{align}
R_{x,y}:=\left(\begin{array}{c|c}
P_n(\theta)-E & {\bf e}_{dn,x}\\
\hline
{\bf e}_{dn,y}^{*} & 0
\end{array}\right),
\end{align}
in which ${\bf e}_{m,j}^*=(\delta_j(m-1),...,\delta_j(1),\delta_j(0))$ and $\delta_{j}(k)$ is the Kronecker delta.
Clearly 
\begin{align}\label{eq:muxy=Rxy}
|\mu_{n,x,y}(\theta)|=|\det(P_n(\theta;x,y)-E)|=|\det R_{x,y}|
\end{align}
Next, we carry out elementary row operations to reduce $R_{x,y}$.
These operations do not alter $|\det R_{x,y}|$.
We have
\begin{align}\label{eq:Rnxy}
\qquad
R_{x,y}
=
&\left(\begin{array}{c|c|c|c|c|c|c|c|c|c}
C_{n-1} & B^* & & &  & & & &B &\\
\hline
B &\ddots &\ddots & & & & & & & \\
\hline
&\ddots &\ddots &\ddots & & & & & &\\
\hline
& &\ddots &\ddots &B^* & & & & &\\
\hline
& & & B &C_{\ell} &B^* & & & &\\
\hline
& & & &B &C_{\ell-1} &\ddots & & &\\
\hline
& & & & &\ddots &\ddots &\ddots & &\\
\hline
& & & & & &\ddots &\ddots &B^* &\\
\hline
B^*& & & & & & &B &C_0 &{\bf e}_{d,x}\\
\hline
& & & &{\bf e}_{d,r}^* & & & & &
\end{array}\right)\\
=:
&
\left(\begin{matrix}
\text{Row}_1\\
\text{Row}_2\\
\vdots\\
\text{Row}_{n+1}
\end{matrix}\right)
\end{align}

In the following we also denote $M_{E,k}(\theta+jd\alpha)$ by $M_k(j)$ for simplicity.
We use the following row operations.

\subsection*{Row operations on \texorpdfstring{$\text{Row}_1$}{Lg}}
Replacing the block-valued $\text{Row}_1$ in 
\eqref{eq:Rnxy}, 
$$ \text{Row}_1 \longrightarrow \text{Row}_1-C_{n-1}\cdot B^{-1}\cdot  \text{Row}_2,$$
yields a new row as follows
$$\text{Row}_1^{(1)}=(0, B^*-C_{n-1} B^{-1}C_{n-2}, -C_{n-1} B^{-1}B^*,0,...,0,B,0).$$
Appealing to \eqref{eq:transfer_inductive_1} and \eqref{eq:transfer_inductive_2}, it is easy to see that
$$\text{Row}_1^{(1)}=(0,-M_2^{UL}(n-2)B,\, -M_{2}^{UR}(n-2),0,...,0,B,0).$$
Replacing 
$$\text{Row}_1^{(1)} \longrightarrow \text{Row}_1^{(1)}+M_{2}^{UL}(n-2) \cdot \text{Row}_3,$$
yields the new row 
$$\text{Row}_1^{(2)}=(0, 0, -M_{2}^{UR}(n-2)+M_{2}^{UL}(n-2) C_{n-3},\, M_{2}^{UL}(n-2)B^*,0,...,0,B,0).$$
Appealing to \eqref{eq:transfer_inductive_1} and \eqref{eq:transfer_inductive_2}, it is easy to see that
$$\text{Row}_1^{(2)}=(0, 0, -M_{3}^{UL}(n-3)B,\, -M_{3}^{UR}(n-3),0,...,0,B,0).$$
One can iterate this process, and after replacing 
$$\text{Row}_1^{(n-3)}\longrightarrow \text{Row}_1^{(n-3)}+M_{n-2}^{UL}(2)\cdot \text{Row}_{n-1},$$ we arrive at
$$\text{Row}_1^{(n-2)}=(0,...,0,-M_{n-1}^{UL}(1)B,\, B-M_{n-1}^{UR}(1),0).$$

\subsection*{Row operations on \texorpdfstring{$\text{Row}_2$}{Lg}}
Replacing 
$$\text{Row}_2\longrightarrow \text{Row}_2-C_{n-2}\cdot B^{-1}\cdot \text{Row}_3,$$
yields the following new  row 
\begin{align}
    \text{Row}_2^{(1)}=
    &(B, 0,  B^*-C_{n-2}B^{-1}C_{n-3}, -C_{n-2}B^{-1}B^*,0,...,0)\\
   =&(B, 0, -M_2^{UL}(n-3)B, -M_2^{UR}(n-3), 0,...,0). 
\end{align}
Repeating this process, and after replacing 
$$\text{Row}_2^{(n-4)}\longrightarrow \text{Row}_2^{(n-4)}+M_{n-3}^{UL}(2)\cdot  \text{Row}_{n-1},$$
we arrive at
\begin{align}
    \text{Row}_2^{(n-3)}=(B,0,...,0,-M_{n-2}^{UL}(1)B,-M_{n-2}^{UR}(1),0).
\end{align}
Next, we perform row reduction on $\text{Row}_{n-\ell+1}$. Recall that $3\leq \ell\leq n-2$.

\subsection*{Row operations on \texorpdfstring{$\text{Row}_{n-\ell+1}$ (the row containing $C_{\ell-1}$)}{Lg}.}

Replacing 
$$\text{Row}_{n-\ell+1}\longrightarrow \text{Row}_{n-\ell+1}-C_{\ell-1} B^{-1}\cdot \text{Row}_{n-\ell+2},$$
yields a new row as follows
\begin{align}
\text{Row}_{n-\ell+1}^{(1)}
=&(0,...,0,B,0,B^*-C_{\ell-1} B^{-1}C_{\ell-2}, -C_{\ell-1}B^{-1}B^*,0,...,0)\\
=&(0,...,0,B,0,-M_{2}^{UL}(\ell-2)B,\, -M_{2}^{UR}(\ell-2),0,...,0).
\end{align}
where the matrix $B$ is the $(n-\ell)-$th entry (from the left) of the block-valued vector above, and we have used \eqref{eq:transfer_inductive_1} and \eqref{eq:transfer_inductive_2} to obtain the second line.
Repeating this process, and after replacing
$$\text{Row}_{n-\ell+1}^{(\ell-3)}\longrightarrow \text{Row}_{n-\ell+1}^{(\ell-3)}+M_{\ell-2}^{UL}(2) \cdot \text{Row}_{n-1},$$ 
we arrive at
$$\text{Row}_{n-\ell+1}^{(\ell-2)}=(0,...,0,B,0,...,0,-M_{\ell-1}^{UL}(1)B,-M_{\ell-1}^{UR}(1),0).$$
At this point, $R_{x,y}$ becomes
\begin{align}\label{eq:R2xy_big_matrix}
R^{(1)}_{x,y}
=
\left(\begin{array}{c|c|c|c|c|c|c|c|c|c|c|c|c}
0 & 0 &0 &\cdots & &  & & &\cdots &0 &-M_{n-1}^{UL}(1)B &B-M_{n-1}^{UR}(1) & 0 \\
\hline
B &0 &0 &\cdots & & & & &\cdots &0 &-M_{n-2}^{UL}(1)B &-M_{n-2}^{UR}(1) &0 \\
\hline 
0&B &C_{n-3} &B^* & & & & & & & & &\\
\hline
\vdots& &\ddots &\ddots &\ddots & & & & & & & &\\
\hline
& & & \ddots &\ddots &\ddots & & & & & & &\\
\hline
\vdots & & & &B &C_{\ell} &B^* & & & & & &\\
\hline
0&\cdots & &\cdots &0 &B &0 &0 &\cdots &0 &-M_{\ell-1}^{UL}(1)B &-M_{\ell-1}^{UR}(1) &0\\
\hline
\vdots & & & & &0 &B &C_{\ell-2} &B^* & & &0 &0\\
\hline
& & & & & & &\ddots &\ddots &\ddots & &\vdots &\vdots\\
\hline
\vdots& & & & & & & &\ddots &\ddots &\ddots &0 &\vdots\\
\hline
0& & & & & & & & &\ddots &\ddots &B^* &0\\
\hline
B^* &0 &\cdots & & & & & &\cdots &0 &B &C_0 &{\bf e}_{d,x}\\
\hline
0 &\cdots & &\cdots & 0 &{\bf e}_{d,r}^* & 0 &\cdots & &\cdots &  0 & 0 & 0 
\end{array}\right)
\end{align}
The first, second, $(n-\ell+1)$-th, $n$-th, and $(n+1)$-th rows are
\begin{align}
\left(\begin{matrix}
\text{Row}_1^{(n-2)}\\
\text{Row}_2^{(n-3)}\\
\text{Row}_{n-\ell+1}^{(\ell-2)}\\
\text{Row}_{n}\\
\text{Row}_{n+1}
\end{matrix}
\right)
=\left(\begin{array}{ccccccccccc}
0 &0 &\cdots &0 &0 &0 &\cdots &0 &-M_{n-1}^{UL}(1)B &B-M_{n-1}^{UR}(1) &0\\
B &0 &\cdots &0 &0 &0 &\cdots &0 &-M_{n-2}^{UL}(1)B &-M_{n-2}^{UR}(1) &0\\
0 &0 &\cdots &0 &B &0 &\cdots &0 &-M_{\ell-1}^{UL}(1)B &-M_{\ell-1}^{UR}(1) &0\\
B^* &0 &\cdots &0 &0 &0 &\cdots &0 &B &C_0 &{\bf e}_{d,x}\\
0 &0 &\cdots &0 &{\bf e}_{d,r}^* &0 &\cdots &0 &0 &0 &0
\end{array}\right),
\end{align}
in which only columns $1$, $n-\ell$, $n-1$, $n$, $n+1$ are non-vanishing.
Let
\begin{align}
    S_1:=\left(\begin{array}{ccccc}
0 &0 &-M_{n-1}^{UL}(1)B &B-M_{n-1}^{UR}(1) &0\\
B &0 &-M_{n-2}^{UL}(1)B &-M_{n-2}^{UR}(1) &0\\
0 &B &-M_{\ell-1}^{UL}(1)B &-M_{\ell-1}^{UR}(1) &0\\
B^* &0 &B &C_0 &{\bf e}_{d,x}\\
0 &{\bf e}^*_{d,r} & 0 &0 &0
    \end{array}\right)
\end{align}
be the non-vanishing $(4d+1)\times (4d+1)$ submatrix of rows $1$, $2$, $n-\ell+1$, $n$ and $n+1$.
Let
\begin{align}
    S_2:=
    \left(\begin{array}{cccccccccccccccc}
       B &C_{n-3} &B^* & & & & & & & & & & & & &\\
          &B       &C_{n-4} &\ddots & & & & & & & & & & & &\\
          &        &\ddots  &\ddots & & & & & & & & & & & &\\
                   \\
          & & & & &\ddots &\ddots  & & & & & & & & &\\
          & & & & &\ddots &C_{\ell+2} &B^* & & & & & & & &\\
          & & & & &  &B &C_{\ell+1} &0 & & & & & & &\\
          & & & & &  & &B &B^* &0 & & & & & &\\
          & & & & &  & & &B &C_{\ell-2} &B^* & & & & &\\
          & & & & &  & & & &B &\ddots &\ddots & & & &\\
          & & & & &  & & & & &\ddots & & & & &\\
          \\
          & & & & &  & & & & & & &\ddots &\ddots &\ddots &\\
          & & & & &  & & & & & & & &B &C_3 &B^*\\
          & & & & &  & & & & & & & & &B &C_2   \\ 
          & & & & &  & & & & & & & & & &B          
    \end{array}\right),
\end{align}
which is the submatrix of $R^{(1)}_{x,y}$ after deleting rows $1$, $2$, $n-\ell+1$, $n$, $n+1$ and columns $1$, $n-\ell$, $n-1$, $n$, $n+1$.
Hence 
\begin{align}\label{eq:num1}
|\det R^{(1)}_{x,y}|=
|\det S_1|\cdot |\det S_2|=|\det B|^{n-4}\cdot |\det S_1|,
\end{align}
where we used $\det S_2=(\det B)^{n-4}$. 
It suffices to compute $\det S_1$.
Expanding determinants and using that $M_{k-2}^{UL}(1)=BM_{k-1}^{LL}(1)$, $M_{k-2}^{UR}(1)=BM_{k-1}^{LR}(1)$, we have
\begin{align}\notag
|\det S_1|=
&\left|\det\left(\begin{array}{ccccc}
0 &0 &-M_{n-1}^{UL}(1)B &B-M_{n-1}^{UR}(1) &0\\
B &0 &-BM_{n-1}^{LL}(1)B &-BM_{n-1}^{LR}(1) &0\\
0 &B &-M_{\ell-1}^{UL}(1)B &-M_{\ell-1}^{UR}(1) &0\\
B^* &0 &B &C_0 &{\bf e}_{d,x}\\
0 &{\bf e}^*_{d,r} & 0 &0 &0
    \end{array}\right)\right| \notag\\
=
&|\det B|\cdot \left|\det\left(\begin{array}{ccccc}
0 &0 &-M_{n-1}^{UL}(1) &B-M_{n-1}^{UR}(1) &0\\
B &0 &-BM_{n-1}^{LL}(1) &-BM_{n-1}^{LR}(1) &0\\
0 &B &-M_{\ell-1}^{UL}(1) &-M_{\ell-1}^{UR}(1) &0\\
B^* &0 &I_d &C_0 &{\bf e}_{d,x}\\
0 &{\bf e}^*_{d,r} & 0 &0 &0
    \end{array}\right)\right|\notag
    \end{align}
    Expanding along the first column this simplifies further to
    \begin{align}\label{eq:num2}
|\det S_1|
=&|\det B|^2\cdot |\det B^*| \cdot
\left|\det\left(\begin{array}{ccccc}
0 &0 &-M_{n-1}^{UL}(1) &B-M_{n-1}^{UR}(1) &0\\
I_d &0 &-M_{n-1}^{LL}(1) &-M_{n-1}^{LR}(1) &0\\
0 &B &-M_{\ell-1}^{UL}(1) &-M_{\ell-1}^{UR}(1) &0\\
I_d &0 &(B^*)^{-1} &(B^*)^{-1} C_0 &(B^*)^{-1} {\bf e}_{d,x}\\
0 &{\bf e}^*_{d,r} & 0 &0 &0
    \end{array}\right)\right|\notag\\
=&|\det B|^3 \cdot 
\left|\det\left(\begin{array}{cccccc}
0& 0 &0 &-M_{n-1}^{UL}(1) &B-M_{n-1}^{UR}(1) &0\\
0& I_d &0 &-M_{n-1}^{LL}(1) &-M_{n-1}^{LR}(1) &0\\
0 &0 &B &-M_{\ell-1}^{UL}(1) &-M_{\ell-1}^{UR}(1) &0\\
I_d &0 &0 &0 &-B &0\\
0& I_d &0 &(B^*)^{-1} &(B^*)^{-1} C_0 &(B^*)^{-1} {\bf e}_{d,x}\\
0& 0 &{\bf e}^*_{d,r} & 0 &0 &0
    \end{array}\right)\right|\notag\\
=&|\det B|^3\cdot 
\left|\det\left(\begin{array}{c|c|c|c}
\left(\begin{matrix}0& 0\\
0 &I_d\end{matrix}\right)& &\left(\begin{matrix}0 & B\\ 0 & 0\end{matrix}\right)-M_{n-1}(1) &\\
\hline
 & B &-M_{\ell-1}^U(1) &  \\
 \hline
I_{2d} & &-M_1^{-1}(0) &\left(\begin{matrix}0\\ I_d\end{matrix}\right) (B^*)^{-1}{\bf e}_{d,x}\\
\hline
 &{\bf e}^*_{d,r} & &
    \end{array}\right)\right|\notag\\
=:&|\det B|^3\cdot |\det S_3|,
\end{align}
in which $M_{\ell-1}^U(1)=(M_{\ell-1}^{UL}(1), M_{\ell-1}^{UR}(1))=(I_d, 0)\cdot M_{\ell-1}(1)$. Empty spaces are to be filled in by zeros.  
We carry out the following row operations on $S_3$ that do not alter the determinant of $S_3$. Below $\text{row}_j$ refers to the $j$-th (block-valued) row of $S_3$. Replacing 
$$\text{row}_1\longrightarrow \text{row}_1-\left(\begin{matrix} 0 &0\\ 0 &I_d\end{matrix}\right) \cdot \text{row}_3,$$
and
$$\text{row}_4\longrightarrow \text{row}_4-{\bf e}_{d,r}^* B^{-1}\cdot \text{row}_2,$$
we arrive at
\begin{align}
S_3^{(1)}=\left(\begin{array}{c|c|c|c}
& &\left(\begin{matrix}0 & B\\ 0 & 0\end{matrix}\right)-M_{n-1}(1)+\left(\begin{matrix}0 &0 \\ 0 &I_d\end{matrix}\right)M_1^{-1}(0) &-\left(\begin{matrix}0\\ I_d\end{matrix}\right) (B^*)^{-1}{\bf e}_{d,x}\\
\hline
 & B &-(I_d, 0) \cdot M_{\ell-1}(1) &  \\
 \hline
I_{2d} & &-M_1^{-1}(0) &\left(\begin{matrix}0\\ I_d\end{matrix}\right) (B^*)^{-1}{\bf e}_{d,x}\\
\hline
 & &{\bf e}_{d,r}^*B^{-1}(I_d, 0)\cdot M_{\ell-1}(1) &
    \end{array}\right).
\end{align}
Next, multiplying $\text{col}_3$ of $S_3^{(1)}$ on the right by $M_1(0)$ (for which $\det M_1(0)=1$, so $\det S_3^{(1)}$ is invariant),  yields
\begin{align}
S_3^{(2)}=\left(\begin{array}{c|c|c|c}
& &\left(\begin{matrix}0 & B\\ 0 & 0\end{matrix}\right)M_1(0)-M_{n}(0)+\left(\begin{matrix}0 &0 \\ 0 &I_d\end{matrix}\right)&-\left(\begin{matrix}0\\ I_d\end{matrix}\right) (B^*)^{-1}{\bf e}_{d,x}\\
\hline
 & B &-(I_d, 0)\cdot M_{\ell}(0) &  \\
 \hline
I_{2d} & &-I_d &\left(\begin{matrix}0\\ I_d\end{matrix}\right) (B^*)^{-1}{\bf e}_{d,x}\\
\hline
 & &{\bf e}_{d,r}^*B^{-1}(I_d, 0)\cdot M_{\ell}(0) &
    \end{array}\right).
\end{align}
Expanding the determinant along the first two columns, and noting that $\left(\begin{matrix} 0 &B\\ 0 &0\end{matrix}\right) M_1(0)=\left(\begin{matrix}I_d &0\\ 0 &0\end{matrix}\right)$, yields
\begin{align}\label{eq:num3}
|\det S_3^{(2)}|&=|\det B|\cdot  
\left|\det\left(\begin{array}{c|c}
    -M_n(0)+I_{2d} &-\left(\begin{matrix}0\\ I_d\end{matrix}\right) (B^*)^{-1}{\bf e}_{d,x}\\
    \hline
    {\bf e}_{d,r}^*B^{-1}(I_d, 0)\cdot M_{\ell}(0) &0
\end{array}\right)\right| \notag\\
&=:|\det B|\cdot |\det S_3^{(3)}|.
\end{align}
We expand $S_3^{(3)}$ as the following product: 
\begin{align}
    S_3^{(3)}
    =&\left(\begin{array}{c|c} I_{2d} & \\ \hline &{\bf e}_{d,r}^*B^{-1}(I_d, 0)\end{array}\right)_{(2d+1)\times 4d}
    \cdot 
    \left(\begin{array}{c|c}-M_{n-\ell}(\ell) & I_{2d}\\ \hline I_{2d} & \end{array}\right)
    \cdot 
    \left(\begin{array}{c|c}M_{\ell}(0) & \\ \hline &I_{2d}\end{array}\right)
    \cdot\\
    &\qquad\qquad\qquad\qquad\qquad\qquad\qquad\qquad
    \cdot 
    \left(\begin{array}{c|c} I_{2d} & \\ \hline I_{2d} &-\left(\begin{matrix} 0\\ I_d\end{matrix}\right)(B^*)^{-1}{\bf e}_{d,x}\end{array}\right)_{4d\times (2d+1)}
\end{align}
We bound
\begin{align}\label{eq:num4}
    |\det S_3^{(3)}|
    =&\left|\left(\wedge^{2d+1}\left(\begin{array}{c|c} I_{2d} & \\ \hline &{\bf e}_{d,r}^*B^{-1}(I_d, 0)\end{array}\right)\right)
    \cdot
    \left(\wedge^{2d+1} \left(\begin{array}{c|c}-M_{n-\ell}(\ell) & I_{2d}\\ \hline I_{2d} & \end{array}\right)\right)
    \cdot \right. \\
    &\left. \qquad \cdot
    \left(\wedge^{2d+1} \left(\begin{array}{c|c}M_{\ell}(0) & \\ \hline &I_{2d}\end{array}\right)\right)
    \cdot
    \left(\wedge^{2d+1}
    \left(\begin{array}{c|c} I_{2d} & \\ \hline I_{2d} &-\left(\begin{matrix} 0\\ I_d\end{matrix}\right)(B^*)^{-1}{\bf e}_{d,x}\end{array}\right)\right)\right|\\
    \leq &\sum_{y=2d+1}^{3d} \sum_{\substack{1\leq j_1<...<j_{2d+1}\leq 4d\\ 1\leq k_1<...<k_{2d+1}\leq 4d\\ 1\leq \ell_1<...<\ell_{2d+1}\leq 4d}} \left|\det\left(\begin{array}{c|c} I_{2d} & \\ \hline &{\bf e}_{d,r}^*B^{-1}(I_d, 0)\end{array}\right)_{(1,...,2d,2d+1),(1,...,2d,y)}\right|\\
    &\qquad\qquad\qquad\cdot 
    \det \left|\left(\begin{array}{c|c}-M_{n-\ell}(\ell) & I_{2d}\\ \hline I_{2d} & \end{array}\right)_{(1,...,2d,y),(j_1,...,j_{2d+1})}\right|\\
    &\qquad\qquad\qquad\cdot 
    \left|\det \left(\begin{array}{c|c}M_{\ell}(0) & \\ \hline &I_{2d}\end{array}\right)_{(j_1,...,j_{2d+1}),(k_1,...,k_{2d+1})}\right|\\
    &\qquad\qquad\qquad \cdot \left|\det \left(\begin{array}{c|c} I_{2d} & \\ \hline I_{2d} &-\left(\begin{matrix} 0\\ I_d\end{matrix}\right)(B^*)^{-1}{\bf e}_{d,x}\end{array}\right)_{(k_1,...,k_{2d+1}),(\ell_1,...,\ell_{2d+1})}\right|.
\end{align}
Suppose that
$$k_1<...<k_{m_0}\leq 2d<k_{m_0+1}<...<k_{2d+1}.$$
In order for 
\begin{align}
    \det \left(\begin{array}{c|c}M_{\ell}(0) & \\  \hline &I_{2d}\end{array}\right)_{(j_1,...,j_{2d+1}),(k_1,...,k_{2d+1})}, 
\end{align}
to be non-vanishing, necessarily  
\begin{align}\label{eq:jm0}
j_1<...<j_{m_0}\leq 2d \text{\ \ and\ \ } j_m=k_m \text{ for } m_0+1\leq m\leq 2d+1,
\end{align}
If it is non-vanishing, then 
\begin{align}\label{eq:numerator_1}
        &\left|\det \left(\begin{array}{c|c}M_{\ell}(0) & \\ \hline &I_{2d}\end{array}\right)_{(j_1,...,j_{2d+1}),(k_1,...,k_{2d+1})}\right|\\
        =&\left|\det(M_{\ell}(0))_{\{j_1,...,j_{m_0}\},\{k_1,...,k_{m_0}\}}\right|\\
        \leq &\|\wedge^{m_0} M_{\ell}(0)\|\leq e^{\ell(1+\varepsilon) \cdot \hat{L}^{m_0}},
\end{align}
where we used Lemma \ref{lem:upper_semi_cont} to bound $\|\wedge^{m_0}M_{\ell}(0)\|$. 
Taking \eqref{eq:jm0} into account, one has
\begin{align}\label{eq:numerator_2}
    &\left|\det \left(\begin{array}{c|c}-M_{n-\ell}(\ell) & I_{2d}\\ \hline I_{2d} & \end{array}\right)_{(1,...,2d,y),(j_1,...,j_{2d+1})}\right|\\
    =&
    \left|\det (M_{n-\ell}(\ell))_{\{1,...,2d\}\setminus \{j_{m_0+1}-2d,...,j_{2d+1}-2d\}, \{j_1,...,j_{m_0}\}\setminus\{y-2d\}}\right|\\
    \leq &\|\wedge^{m_0-1} M_{n-\ell}(\ell)\|\leq e^{(n-\ell)(1+\varepsilon)\cdot \hat{L}^{m_0-1}},
\end{align}
where we used Lemma \ref{lem:upper_semi_cont} again.
The following estimates are easy to verify:
\begin{align}\label{eq:numerator_3}
\left|\det\left(\begin{array}{c|c} I_{2d} & \\ \hline &{\bf e}_{d,r}^*B^{-1}(I_d, 0)\end{array}\right)_{(1,...,2d,2d+1),(1,...,2d,y)}\right|\leq C\|B^{-1}\|, \text{ and }\\
\left|\det \left(\begin{array}{c|c} I_{2d} & \\ \hline I_{2d} &-\left(\begin{matrix} 0\\ I_d\end{matrix}\right)(B^*)^{-1}{\bf e}_{d,x}\end{array}\right)_{(k_1,...,k_{2d+1}),(\ell_1,...,\ell_{2d+1})}\right|\leq C\|B^{-1}\|.
\end{align}
Combining \eqref{eq:numerator_1}, \eqref{eq:numerator_2} with \eqref{eq:numerator_3}, one has 
\begin{align}
    |\det S_3^{(3)}|
    \leq &C_d \|B^{-1}\|^2 \sum_{m_0} e^{\ell (1+\varepsilon)\cdot\hat{L}^{m_0}+(n-\ell)(1+\varepsilon)\cdot \hat{L}^{m_0-1}}\\
    \leq &C_d \|B^{-1}\|^2 \max \left(e^{\ell (1+\varepsilon)\cdot \hat{L}^{d}+(n-\ell)(1+\varepsilon)\cdot \hat{L}^{d-1}}, e^{\ell(1+\varepsilon)\cdot \hat{L}^{d+1}+(n-\ell)(1+\varepsilon)\cdot \hat{L}^d}\right),
\end{align}
where we used that $\hat{L}^1<...<\hat{L}^{d-1}<\hat{L}^d$ and $\hat{L}^d>\hat{L}^{d+1}>...>\hat{L}^{2d}$, due to $\hat{L}_d>0>\hat{L}_{d+1}$.
Combining the above with \eqref{eq:num1},\eqref{eq:num2} and \eqref{eq:num3} yields the claimed result for Case~1.

\underline{Case 2}. Suppose $(n-1)d\leq x\leq nd-1$. 
Again, we write $y=\ell d+r$ with $3\leq \ell\leq n-2$ and $0\leq r\leq d-1$. We also denote $x-(n-1)d=\tilde{x}$.
In analogy  to \eqref{eq:Rnxy}, we have $|\mu_{n,x,y}(\theta)|=|\det \tilde{R}_{x,y}|$, where
\begin{align} 
\qquad
\tilde{R}_{x,y}
=
&\left(\begin{array}{c|c|c|c|c|c|c|c|c|c}
C_{n-1} & B^* & & &  & & & &B &{\bf e}_{d,\tilde{x}}\\
\hline
B &\ddots &\ddots & & & & & & & \\
\hline
&\ddots &\ddots &\ddots & & & & & &\\
\hline
& &\ddots &\ddots &B^* & & & & &\\
\hline
& & & B &C_{\ell} &B^* & & & &\\
\hline
& & & &B &C_{\ell-1} &\ddots & & &\\
\hline
& & & & &\ddots &\ddots &\ddots & &\\
\hline
& & & & & &\ddots &\ddots &B^* &\\
\hline
B^*& & & & & & &B &C_0 &\\
\hline
& & & &{\bf e}_{d,r}^* & & & & &
\end{array}\right)
\end{align}
Applying exactly the same row operations on $\text{Row}_1$, $\text{Row}_2$ and $\text{Row}_{n-\ell+1}$ as in Case 1, reduces $\tilde{R}_{x,y}$ to the following shape:
\begin{align}\label{eq:tR2xy_big_matrix}
\tilde{R}^{(1)}_{x,y}
=
\left(\begin{array}{c|c|c|c|c|c|c|c|c|c|c|c|c}
0 & 0 &0 &\cdots & &  & & &\cdots &0 &-M_{n-1}^{UL}(1)B &B-M_{n-1}^{UR}(1) & {\bf e}_{d,\tilde{x}} \\
\hline
B &0 &0 &\cdots & & & & &\cdots &0 &-M_{n-2}^{UL}(1)B &-M_{n-2}^{UR}(1) &0 \\
\hline 
0&B &C_{n-3} &B^* & & & & & & & & &\vdots\\
\hline
\vdots& &\ddots &\ddots &\ddots & & & & & & & &\\
\hline
& & & \ddots &\ddots &\ddots & & & & & & &\\
\hline
\vdots & & & &B &C_{\ell} &B^* & & & & & &\vdots\\
\hline
0&\cdots & &\cdots &0 &B &0 &0 &\cdots &0 &-M_{\ell-1}^{UL}(1)B &-M_{\ell-1}^{UR}(1) &0\\
\hline
\vdots & & & & &0 &B &C_{\ell-2} &B^* & & &0 &0\\
\hline
& & & & & & &\ddots &\ddots &\ddots & &\vdots &\vdots\\
\hline
\vdots& & & & & & & &\ddots &\ddots &\ddots &0 &\\
\hline
0& & & & & & & & &\ddots &\ddots &B^* &\vdots\\
\hline
B^* &0 &\cdots & & & & & &\cdots &0 &B &C_0 &0\\
\hline
0 &\cdots & &\cdots & 0 &{\bf e}_{d,r}^* & 0 &\cdots & &\cdots &  0 & 0 & 0 
\end{array}\right)
\end{align}
This implies, similarly to \eqref{eq:num1}, that
\begin{align}
    |\det \tilde{R}_{x,y}|=|\det \tilde{R}^{(1)}_{x,y}|=|\det B|^{n-4}\cdot |\det \tilde{S}_1|,
\end{align}
where in analogy with \eqref{eq:num2} and \eqref{eq:num3} one has 
\begin{align}\notag 
|\det \tilde{S}_1|=
&\left|\det\left(\begin{array}{ccccc}
0 &0 &-M_{n-1}^{UL}(1)B &B-M_{n-1}^{UR}(1) &{\bf e}_{d,\tilde{x}}\\
B &0 &-BM_{n-1}^{LL}(1)B &-BM_{n-1}^{LR}(1) &0\\
0 &B &-M_{\ell-1}^{UL}(1)B &-M_{\ell-1}^{UR}(1) &0\\
B^* &0 &B &C_0 &0\\
0 &{\bf e}^*_{d,r} & 0 &0 &0
    \end{array}\right)\right| \notag\\
=
&|\det B|\cdot \left|\det\left(\begin{array}{ccccc}
0 &0 &-M_{n-1}^{UL}(1) &B-M_{n-1}^{UR}(1) &{\bf e}_{d,\tilde{x}}\\
B &0 &-BM_{n-1}^{LL}(1) &-BM_{n-1}^{LR}(1) &0\\
0 &B &-M_{\ell-1}^{UL}(1) &-M_{\ell-1}^{UR}(1) &0\\
B^* &0 &I_d &C_0 &0\\
0 &{\bf e}^*_{d,r} & 0 &0 &0
    \end{array}\right)\right|\notag\\
=&|\det B|^2\cdot |\det B^*| \cdot
\left|\det\left(\begin{array}{ccccc}
0 &0 &-M_{n-1}^{UL}(1) &B-M_{n-1}^{UR}(1) &{\bf e}_{d,\tilde{x}}\\
I_d &0 &-M_{n-1}^{LL}(1) &-M_{n-1}^{LR}(1) &0\\
0 &B &-M_{\ell-1}^{UL}(1) &-M_{\ell-1}^{UR}(1) &0\\
I_d &0 &(B^*)^{-1} &(B^*)^{-1} C_0 &0\\
0 &{\bf e}^*_{d,r} & 0 &0 &0
    \end{array}\right)\right|\notag
    \end{align}
    Rewriting this in block form yields 
   \begin{align}\label{eq:num2*}
|\det \tilde{S}_1| 
=&|\det B|^3\cdot 
\left|\det\left(\begin{array}{c|c|c|c}
\left(\begin{matrix}0& 0\\
0 &I_d\end{matrix}\right)& &\left(\begin{matrix}0 & B\\ 0 & 0\end{matrix}\right)-M_{n-1}(1) & \left(\begin{matrix} I_d\\ 0\end{matrix}\right) {\bf e}_{d,\tilde{x}}\\
\hline
 & B &-M_{\ell-1}^U(1) &  \\
 \hline
I_{2d} & &-M_1^{-1}(0) &\\
\hline
 &{\bf e}^*_{d,r} & &
    \end{array}\right)\right|\notag\\
=&|\det B|^3\cdot 
\left|\det\left(\begin{array}{c|c|c|c}
\left(\begin{matrix}0& 0\\
0 &I_d\end{matrix}\right)& &\left(\begin{matrix}0 & B\\ 0 & 0\end{matrix}\right)M_1(0)-M_{n}(0) & \left(\begin{matrix} I_d\\ 0\end{matrix}\right) {\bf e}_{d,\tilde{x}}\\
\hline
 & B &-(I_d, 0)\, M_{\ell}(0) &  \\
 \hline
I_{2d} & &-I_{2d} &\\
\hline
 &{\bf e}^*_{d,r} & &
    \end{array}\right)\right|\notag\\
=&|\det B|^3\cdot 
\left|\det\left(\begin{array}{c|c|c|c}
& &I_{2d}-M_{n}(0) & \left(\begin{matrix} I_d\\ 0\end{matrix}\right) {\bf e}_{d,\tilde{x}}\\
\hline
 & B &-(I_d, 0)\, M_{\ell}(0) &  \\
 \hline
I_{2d} & &-I_{2d} &\\
\hline
 &{\bf e}^*_{d,r} & &
    \end{array}\right)\right|\notag\\
=&|\det B|^3\cdot 
\left|\det\left(\begin{array}{c|c|c|c}
& &I_{2d}-M_{n}(0) & \left(\begin{matrix} I_d\\ 0\end{matrix}\right) {\bf e}_{d,\tilde{x}}\\
\hline
 & B &-(I_d, 0)\, M_{\ell}(0) &  \\
 \hline
I_{2d} & &-I_{2d} &\\
\hline
 & &{\bf e}_{d,r}^*B^{-1} (I_d, 0)\, M_{\ell}(0) &
    \end{array}\right)\right|\notag\\
=&|\det B|^3\cdot 
\left|\det\left(\begin{array}{c|c|c|c}
& &I_{2d}-M_{n}(0) & \left(\begin{matrix} I_d\\ 0\end{matrix}\right) {\bf e}_{d,\tilde{x}}\\
\hline
 & B &-(I_d, 0)\, M_{\ell}(0) &  \\
 \hline
I_{2d} & &-I_{2d} &\\
\hline
 &{\bf e}^*_{d,r} & &
    \end{array}\right)\right|\notag\\
=&|\det B|^3\cdot 
\left|\det\left(\begin{array}{c|c}
I_{2d}-M_{n}(0) & \left(\begin{matrix} I_d\\ 0\end{matrix}\right) {\bf e}_{d,\tilde{x}}\\
\hline
{\bf e}_{d,r}^*B^{-1} (I_d, 0)\, M_{\ell}(0) &
    \end{array}\right)\right|\notag\\
=:&|\det B|^4\cdot |\det \tilde{S}_{(3)}|,
\end{align}
where
\begin{align}
    \tilde{S}_3^{(3)}
    =&\left(\begin{array}{c|c} I_{2d} & \\ \hline &{\bf e}_{d,r}^*B^{-1}(I_d, 0)\end{array}\right)_{(2d+1)\times 4d}
    \cdot 
    \left(\begin{array}{c|c}-M_{n-\ell}(\ell) & I_{2d}\\ \hline I_{2d} & \end{array}\right)
    \cdot 
    \left(\begin{array}{c|c}M_{\ell}(0) & \\ \hline &I_{2d}\end{array}\right)
    \cdot\\
    &\qquad\qquad\qquad\qquad\qquad\qquad\qquad\qquad
    \cdot 
    \left(\begin{array}{c|c} I_{2d} & \\ \hline I_{2d} &-\left(\begin{matrix} I_d\\ 0\end{matrix}\right){\bf e}_{d,\tilde{x}}\end{array}\right)_{4d\times (2d+1)}
\end{align}
The remainder of the argument is exactly the same as in Case 1. \qed

\section{Denominator lower bound for periodic boundary condition}\label{sec:Denominator}
In this section we prove Lemma \ref{lem:deno}.
Throughout, we denote $B_0$ by $B$, and $C_0(\theta+jd\alpha)-E$ by $C_j$.
First, we relate the determinant $f_{E,n}(\theta)=\det(P_{\alpha,n}(\theta)-E)$ to the transfer matrix $M_{n,E}$. 

\begin{lemma}\label{lem:P=M-I}
The following identity holds pointwise in $\theta$
\begin{align}
|f_{E,n}(\theta)|=|\det B|^n \cdot |\det (M_{n,E}(\theta)-I_{2d})|.
\end{align}
\end{lemma}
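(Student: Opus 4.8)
The plan is to relate the periodic-boundary-condition determinant $f_{E,n}(\theta)=\det(P_n(\theta)-E)$ directly to the $n$-step transfer matrix $M_{n,E}(\theta)$ by expressing the eigenvalue problem $(P_n(\theta)-E)\Psi=0$ as a periodicity condition on transfer-matrix iterates. Concretely, I would work with the block tridiagonal matrix $P_n(\theta)-E$ written out with the corner blocks $B$, $B^*$ as in \eqref{def:Pn}, and observe that a vector $\Psi=(\Psi_{n-1},\dots,\Psi_0)^T\in(\C^d)^n$ lies in its kernel precisely when the associated two-component sequence $\left(\begin{smallmatrix} B\Psi_{k+1}\\ \Psi_k\end{smallmatrix}\right)$ (with indices read cyclically mod $n$) satisfies the transfer recursion \eqref{eq:hH_transfer_block2}, i.e.\ is fixed by $M_{n,E}(\theta)$. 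The subtlety is that the periodic boundary rows of $P_n(\theta)-E$ encode exactly the ``wrap-around'' identification $\Psi_n=\Psi_0$, $\Psi_{-1}=\Psi_{n-1}$, which is what turns a generic transfer condition into the eigenvalue equation $M_{n,E}(\theta)w=w$ for $w=\left(\begin{smallmatrix}B\Psi_0\\ \Psi_{n-1}\end{smallmatrix}\right)$.

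The cleanest route to the determinant identity (rather than merely the kernel statement) is a direct algebraic manipulation: perform block row/column operations on $P_n(\theta)-E$ — the same kind of ``transfer-matrix telescoping'' used in Section~\ref{sec:Numerator} for the numerator — to bring it into a form whose determinant visibly factors as $\pm(\det B)^n\det(M_{n,E}(\theta)-I_{2d})$. In more detail: first I would left-multiply successive block rows by $B^{-1}$ and eliminate the sub/super-diagonal blocks, pushing all the information into one block equation; each such elimination contributes a factor of $\det B$ (there are $n$ of them, matching the $(\det B)^n$), and introduces products of one-step transfer matrices $M_{1,E}(\theta+jd\alpha)$ whose composition is $M_{n,E}(\theta)$ by \eqref{eq:transfer_inductive_1}–\eqref{eq:transfer_inductive_2}. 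The corner blocks $B$ and $B^*$, which survive the elimination, assemble precisely into the $-I_{2d}$ that appears subtracted from $M_{n,E}(\theta)$. Tracking signs carefully (they will involve $(-1)^{\text{something}}$ from the block structure of $M_{1,E}$, cf.\ the minus signs in \eqref{eq:transfer_inductive_1}) and taking absolute values at the end removes any ambiguity, which is why the statement is phrased with $|\cdot|$.

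The main obstacle I expect is purely bookkeeping: keeping the block row operations, the accumulation of $\det B$ factors, and the sign conventions straight, so that the leftover $2d\times 2d$ matrix is exactly $M_{n,E}(\theta)-I_{2d}$ and not some conjugate or sign-twisted variant. A secondary technical point is handling the case where $B$ (hence $\det B$) might be degenerate — but since $\hat v_{-d}\neq 0$ is implicit (it is the leading coefficient of the trigonometric polynomial $v$, and $B$ is upper triangular with $\hat v_{-d}$ on the diagonal), $B$ is invertible and all the $B^{-1}$ operations are legitimate. An alternative, slightly less computational argument: use the known Dirichlet identity $\det(\hat H_{\alpha,n}(\theta)-E)=(\det B)^n\det M_{n,E}^{UL}(\theta)$ from \cite{BGV} together with the relation between periodic and Dirichlet determinants (the periodic determinant equals the Dirichlet one minus correction terms coming from the corner coupling), and the identity $\det(M-I_{2d})=\det M_{n,E}^{UL} - (\text{lower-order minors})$ expanded via the cofactor structure; but I judge the direct row-reduction to be more transparent and self-contained, so that is the route I would take.
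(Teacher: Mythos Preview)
Your proposal is correct and takes essentially the same approach as the paper: the paper performs exactly the block row-reduction you describe (``the same kind of transfer-matrix telescoping used in Section~\ref{sec:Numerator}''), reducing $P_n(\theta)-E$ until the determinant factors as $|\det B|^{n-4}$ times a $4d\times 4d$ minor, which is then computed to equal $|\det B|^4\,|\det(M_{n,E}(\theta)-I_{2d})|$. Your anticipation that the main work is bookkeeping of signs and $\det B$ factors, with absolute values removing the ambiguity, is spot on.
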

\begin{proof}
The proof is similar to, and is in fact  simpler than, the one of Lemma \ref{lem:numerator}.    
Indeed, following exactly the same row operations (that do not alter $f_{E,n}$) applied to $\text{Row}_1$, $\text{Row}_2$ and $\text{Row}_{n-\ell+1}$ of $P_{\alpha,n}(\theta)-E$, instead of those on $R_{x,y}$ in Section~\ref{sec:Numerator} (note these operations do not involve the terms ${\bf e}_{d,x}$ or ${\bf e}_{d,r}^*$),  in analogy  to \eqref{eq:R2xy_big_matrix} we arrive at the equality 
\begin{align}
P^{(1)}_n(\theta)
=
\left(\begin{array}{c|c|c|c|c|c|c|c|c|c|c|c}
0 & 0 &0 &\cdots & &  & & &\cdots &0 &-M_{n-1}^{UL}(1)B &B-M_{n-1}^{UR}(1) \\
\hline
B &0 &0 &\cdots & & & & &\cdots &0 &-M_{n-2}^{UL}(1)B &-M_{n-2}^{UR}(1) \\
\hline 
0&B &C_{n-3} &B^* & & & & & & & & \\
\hline
\vdots& &\ddots &\ddots &\ddots & & & & & & & \\
\hline
& & & \ddots &\ddots &\ddots & & & & & & \\
\hline
\vdots & & & &B &C_{\ell} &B^* & & & & & \\
\hline
0&\cdots & &\cdots &0 &B &0 &0 &\cdots &0 &-M_{\ell-1}^{UL}(1)B &-M_{\ell-1}^{UR}(1) \\
\hline
\vdots & & & & &0 &B &C_{\ell-2} &B^* & & &0 \\
\hline
& & & & & & &\ddots &\ddots &\ddots & &\vdots \\
\hline
\vdots& & & & & & & &\ddots &\ddots &\ddots &0\\
\hline
0& & & & & & & & &\ddots &\ddots &B^* \\
\hline
B^* &0 &\cdots & & & & & &\cdots &0 &B &C_0
\end{array}\right)
\end{align}
This implies, similarly to \eqref{eq:num1}, that
\begin{align}\label{eq:fn=tildeS1}
    |f_{E,n}(\theta)|=|\det P_n^(1)(\theta)|=|\det B|^{n-4} \cdot |\det \tilde{S}_1|,
\end{align}
where
\begin{align}\label{eq:tildeS1}
    \tilde{S}_1:=\left(\begin{array}{cccc}
0 &0 &-M_{n-1}^{UL}(1)B &B-M_{n-1}^{UR}(1) \\
B &0 &-M_{n-2}^{UL}(1)B &-M_{n-2}^{UR}(1) \\
0 &B &-M_{\ell-1}^{UL}(1)B &-M_{\ell-1}^{UR}(1) \\
B^* &0 &B &C_0
    \end{array}\right).
\end{align}
Furthermore,  in place of \eqref{eq:num2}, we now have
\begin{align}\notag 
    |\det \tilde{S}_1|
    =&|\det B|^2 \cdot 
    \left|\det \left(\begin{array}{cccc}
0 &-M_{n-1}^{UL}(1) &B-M_{n-1}^{UR}(1) \\
B &-M_{n-2}^{UL}(1) &-M_{n-2}^{UR}(1) \\
B^* &I_d &C_0
    \end{array}\right)\right|\\
    =&|\det B|^2\cdot |\det B^*|\cdot
    \left|\det \left(\begin{array}{cccc}
0 &-M_{n-1}^{UL}(1) &B-M_{n-1}^{UR}(1) \\
B &-M_{n-2}^{UL}(1) &-M_{n-2}^{UR}(1) \\
I_d &(B^*)^{-1} &(B^*)^{-1}C_0
    \end{array}\right)\right|\\
    =&|\det B|^4 \cdot 
\left|\det\left(\begin{array}{cccc}
0& 0  &-M_{n-1}^{UL}(1) &B-M_{n-1}^{UR}(1)  \\
0& I_d  &-M_{n-1}^{LL}(1) &-M_{n-1}^{LR}(1) \\
I_d &0  &0 &-B \\
0& I_d  &(B^*)^{-1} &(B^*)^{-1} C_0 
    \end{array}\right)\right|\notag\\
=&|\det B|^4\cdot 
\left|\det\left(\begin{array}{c|c}
\left(\begin{matrix}0& 0\\
0 &I_d\end{matrix}\right) &\left(\begin{matrix}0 & B\\ 0 & 0\end{matrix}\right)-M_{n-1}(1) \\
 \hline
I_{2d} &-M_1^{-1}(0)
    \end{array}\right)\right|\notag
    \end{align}
 Rewriting this in block form yields 
 \begin{align}\label{eq:dem2}
    |\det \tilde{S}_1|
=&|\det B|^4\cdot 
\left|\det\left(\begin{array}{c|c}
\left(\begin{matrix}0& 0\\
0 &I_d\end{matrix}\right) &\left(\begin{matrix}0 & B\\ 0 & 0\end{matrix}\right)M_1(0)-M_{n}(0) \\
 \hline
I_{2d} &-I_{2d}
    \end{array}\right)\right|\notag\\
=&|\det B|^4\cdot 
\left|\det\left(\begin{array}{c|c}
\left(\begin{matrix}0& 0\\
0 &I_d\end{matrix}\right) &I_{2d}-M_{n}(0) \\
 \hline
I_{2d} &0
    \end{array}\right)\right|\notag\\
=:&|\det B|^4\cdot |\det M_n(0)-I_{2d}|.
\end{align}
The claimed result follows by combining \eqref{eq:fn=tildeS1} with \eqref{eq:dem2}.
\end{proof}

To estimate $\det(M_{n,E}(\theta)-I_{2d})$, we need the following elementary lemma concerning the vector spaces associated with the singular value decomposition of a symplectic matrix. We include the proof for the sake of completeness.

\begin{lemma}\label{lem:Omega_v}
Let $M\in \C^{2d\times 2d}$ be a complex symplectic matrix, i.e., $M^*\Omega M=\Omega$. Let $\sigma_1\geq ...\geq \sigma_{2d}$ be the singular values of~$M$. Let $\mu_1>...>\mu_k$ be the non-repeated singular values and $V_j$, $1\leq j\leq k$, be the singular vector space of $M$ associated to $\mu_j$. Similarly, let $W_j$ be the singular vector space of $M^*$ associated to $\mu_j$. Then 
\begin{align}
\Omega V_j=V_{k+1-j}, \text{ and } \Omega W_j=W_{k+1-j}.
\end{align}
\end{lemma}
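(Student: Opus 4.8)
The plan is to exploit the relation $M^{-1} = \Omega^{-1} M^* \Omega$ that follows from $M^*\Omega M = \Omega$, together with the fact that the singular vector spaces of $M$ are the eigenspaces of $M^*M$ (and those of $M^*$ are the eigenspaces of $MM^*$). First I would record the basic consequences of complex symplecticity: since $\Omega^* = -\Omega = \Omega^{-1}$ (using $\Omega^2 = -I_{2d}$), the identity $M^*\Omega M = \Omega$ rearranges to $M^{-1} = -\Omega M^* \Omega$ and, taking adjoints, $(M^*)^{-1} = -\Omega M \Omega$. In particular $\Omega$ is unitary, so it preserves norms and maps orthogonal decompositions to orthogonal decompositions.

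The key computation is to identify how $M^*M$ transforms under conjugation by $\Omega$. From $M^{-1} = -\Omega M^*\Omega$ and $(M^*)^{-1} = -\Omega M\Omega$ we get
\begin{align}
\Omega (M^*M)^{-1} \Omega^{-1} = \Omega M^{-1} (M^*)^{-1}\Omega^{-1} = (\Omega M^{-1}\Omega^{-1})(\Omega (M^*)^{-1}\Omega^{-1}) = (-M^*)(-M) = M^*M,
\end{align}
using $\Omega^{-1} = -\Omega$ in the middle step to convert each $-\Omega \cdot \Omega$ into the stated factor. Hence $\Omega$ conjugates $(M^*M)^{-1}$ to $M^*M$. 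Therefore, if $v \in V_j$, i.e. $M^*M v = \mu_j^2 v$, then $(M^*M)^{-1} v = \mu_j^{-2} v$, so $M^*M(\Omega v) = \Omega (M^*M)^{-1} v = \mu_j^{-2}\, \Omega v$; that is, $\Omega v$ is a singular vector of $M$ with singular value $\mu_j^{-1}$. Now I invoke symplecticity once more at the level of singular values: $\det M = \pm 1$ and, more precisely, the singular values of a complex symplectic matrix come in reciprocal pairs $\sigma_i \sigma_{2d+1-i} = 1$ (this itself follows from $M^{-1} = -\Omega M^*\Omega$, since $M^{-1}$ has singular values $\sigma_i^{-1}$ while being unitarily equivalent to $M^*$, which has the same singular values as $M$). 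Consequently the distinct singular values satisfy $\mu_j^{-1} = \mu_{k+1-j}$, and $\Omega v \in V_{k+1-j}$. This shows $\Omega V_j \subseteq V_{k+1-j}$; applying the same argument with $j$ replaced by $k+1-j$ (or using that $\Omega$ is invertible and dimension-preserving) gives equality $\Omega V_j = V_{k+1-j}$.

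The statement for $W_j$ is obtained by the identical argument applied to $M^*$ in place of $M$, noting that $M^*$ is again complex symplectic up to sign: $(M^*)^* \Omega M^* = M\Omega M^*$, and from $M^{-1} = -\Omega M^*\Omega$ one checks $M\Omega M^* = \Omega$ as well (multiply $M^*\Omega M = \Omega$ on the left by $M$ and right by $M^{-1}=-\Omega M^*\Omega$ and simplify using $\Omega^2=-I$). So $M^*$ satisfies the same hypotheses, its singular vector spaces are exactly the $W_j$, and the previous paragraph yields $\Omega W_j = W_{k+1-j}$. I expect the only mildly delicate point to be bookkeeping the signs and the $\Omega^{-1}=-\Omega$ substitutions so that the conjugation identity $\Omega(M^*M)^{-1}\Omega^{-1}=M^*M$ comes out with the correct sign; everything else is linear algebra that goes through verbatim over $\mathbb{C}$.
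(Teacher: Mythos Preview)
Your proof is correct and follows essentially the same route as the paper: both establish the identities $M^{-1}=-\Omega M^*\Omega$ and $(M^*)^{-1}=-\Omega M\Omega$, and use them to show that $\Omega$ intertwines the $\mu_j^2$- and $\mu_j^{-2}$-eigenspaces of $M^*M$ (the paper does this vector by vector, you package it as the conjugation identity $\Omega(M^*M)^{-1}\Omega^{-1}=M^*M$). One minor slip: in your displayed computation each factor is actually $\Omega M^{-1}\Omega^{-1}=M^*$ and $\Omega(M^*)^{-1}\Omega^{-1}=M$ (no minus signs), but since the two spurious signs cancel, the conclusion $\Omega(M^*M)^{-1}\Omega^{-1}=M^*M$ and the rest of the argument are unaffected.
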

\begin{proof}
Let $v_j\in V_j$ be a unit vector, and $w_j$ be the unit vector such that 
\begin{align}\label{eq:Mw=sv1}
M v_j=\mu_j w_j.
\end{align}
This implies 
\begin{align}\label{eq:Mw=sv2}
\begin{cases}
    M^* w_j=\mu_j v_j\\
    (M^*)^{-1}v_j=\mu_j^{-1} w_j\\
    M^{-1}w_j=\mu_j^{-1}v_j
\end{cases}
\end{align}
Since $M$ is symplectic, we also have 
\begin{align}\label{eq:sym2}
\begin{cases}
    M^*\Omega M=\Omega\\
    \Omega M^*\Omega=-M^{-1}\\
    \Omega M\Omega =-(M^*)^{-1}
\end{cases}
\end{align}
Combining \eqref{eq:Mw=sv1}, \eqref{eq:Mw=sv2} with \eqref{eq:sym2}, yields
\begin{align}
    M^*M\Omega v_j=-M^* \Omega^2 M \Omega v_j=M^* \Omega (M^*)^{-1} v_j
=M^*\Omega \mu_j^{-1} w_j
=&-\mu_j^{-1} \Omega^2 M^* \Omega w_j\\
=&\mu_j^{-1} \Omega M^{-1}w_j\\
=&\mu_j^{-2} \Omega v_j.
\end{align}
This implies that $\Omega v_j$ is an eigenvector of $M^*M$ associated to the eigenvalue $\mu_j^{-2}=\mu_{k+1-j}^2$, whence $\Omega v_j\in V_{k+1-j}$.
Therefore, $\Omega V_j\subset V_{k+1-j}$ for any $1\leq j\leq k$, which implies that $\Omega V_j=V_{k+1-j}$. The claim concerning $W_j$ can be proved analogously.
\end{proof}

Next, we will apply the previous lemma to $M_{n,E}(\theta)$. For simplicity, we shall write $\hat{L}^j(d\alpha, M_E)$, $\hat{L}_j(d\alpha, M_E)$ as $\hat{L}^j$ and $\hat{L}_j$, respectively.
Recall that the large deviation sets $\mathcal{B}_{n,E,\varepsilon}^d$ was defined in~\eqref{def:Bn_LDT}.
By Lemma \ref{lem:LDT}, this set is exponentially small for large enough $n$. 
We fix $0<\varepsilon<\frac{\hat{L}_d}{11\hat{L}^d}$.
Note that for such $\varepsilon$, 
\begin{align}\label{def:epsilon}
    (1-10\varepsilon)\hat{L}^d>(1+\varepsilon)\hat{L}^{d-1}
\end{align}
Let $\theta\in (\mathcal{B}^d_{n,E,\varepsilon})^c$, 
and $v_{j,n}(\theta)$ be the normalized singular vector of $M_{n,E}(\theta)$ corresponding to the singular value $\sigma_j(M_{n,E}(\theta))$. Suppose $w_{j,n}(\theta)$ is the normalized singular vector of $M_{n,E}^*(\theta)$ such that
$$M_{n,E}(\theta)v_{j,n}(\theta)=\sigma_j(M_{n,E}(\theta)) w_{j,n}(\theta).$$
According to Lemmas \ref{lem:upper_semi_cont} and \ref{lem:LDT}, for $\theta\in (\mathcal{B}_{n,E,\varepsilon}^d)^c$, one has 
\begin{align}
    \prod_{j=1}^d \sigma_j(M_{n,E}(\theta))=\|\wedge^d M_{n,E}(\theta)\|\geq (1-\varepsilon)\cdot \hat{L}^d\geq (1+\varepsilon)\cdot \hat{L}^{d+1}\geq \prod_{j=1}^{d+1}\sigma_j(M_{n,E}(\theta)).
\end{align}
It follows that $$\sigma_d(M_{n,E}(\theta))>1>\sigma_{d+1}(M_{n,E}(\theta))=(\sigma_d(M_{n,E}(\theta)))^{-1}.$$
Let $V_{j,n}(\theta)$ and $W_{j,n}(\theta)$ be as defined in Lemma~\ref{lem:Omega_v} with $M$ replaced with $M_{n,E}(\theta)$. Clearly $k$ is even. By that lemma, for any $1\leq j\leq k/2$,
\begin{align}\label{eq:Omega_V_Mn}
    \Omega V_{j,n}(\theta)=V_{k+1-j,n}(\theta), \text{ and } \Omega W_{j,n}(\theta)=W_{k+1-j,n}(\theta).
\end{align}
This implies the following relations between the most expanding and most contracting $d$-planes:
\begin{align}
    &(\wedge^d\Omega) (v_{1,n}(\theta)\wedge\cdots \wedge v_{d,n}(\theta))=\pm v_{d+1,n}(\theta)\wedge\cdots \wedge v_{2d,n}(\theta), \text{ and }\\
    &(\wedge^d\Omega) (w_{1,n}(\theta)\wedge\cdots \wedge w_{d,n}(\theta))=\pm w_{d+1,n}(\theta)\wedge\cdots \wedge w_{2d,n}(\theta),
\end{align}
and in particular
\begin{align}\label{eq:Omega_v_w}
    &\left|\langle v_{1,n}(\theta)\wedge\cdots \wedge v_{d,n}(\theta), w_{1,n}(\theta)\wedge \cdots \wedge w_{d,n}(\theta)\rangle\right|\\
    &\qquad\qquad\qquad=
    \left|\langle v_{d+1,n}(\theta)\wedge\cdots \wedge v_{2d,n}(\theta), w_{d+1,n}(\theta)\wedge \cdots \wedge w_{2d,n}(\theta)\rangle\right|
\end{align}
The following result will play a crucial role in our argument, cf.~\cite{GS3}*{Lemma 3.2}. It will enable us to show that for a subsequence of $n$, the most expanding $d$-planes of $M_{n,E}(\theta)$ and $M_{n,E}^*(\theta)$ are not entirely orthogonal.
\begin{lemma}\label{lem:Mn2}
    Assume that  $\beta(\alpha)=0$. Then for any small $\varepsilon>0$, there exists $N_0=N_0(\varepsilon,\alpha,E,d)$ and $\kappa_0=\kappa_0(\varepsilon,\alpha,E,d,N_0)>0$ such that for any $\kappa_0$-admissible $N_1$ (those satisfying $\|N_1d\alpha\|_{\T}\leq \kappa_0$), and $N_1\geq 8N_0$, we have
    \begin{align}
\mes(\tilde{\mathcal{B}}^d_{N_1,E,\varepsilon}):=\mes\Big(\left\{\theta:\,  \left|\frac{1}{2N_1}\log\|\wedge^d \left(M_{N_1,E}^2(\theta)\right)\|- \hat{L}^d\right|>\varepsilon\cdot \hat{L}^d\right\}\Big)\leq \frac{1}{20}
    \end{align}
\end{lemma}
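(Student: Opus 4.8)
The plan is to reduce the statement about $M_{N_1,E}^2(\theta)$ to the large deviation estimate for $\wedge^d M_{N_1,E}(\theta)$ (Lemma \ref{lem:LDT}) together with the non-orthogonality input coming from \cite{GS3}*{Lemma 3.2}, exactly as in the denominator argument that follows. First I would write $M_{N_1,E}^2(\theta) = M_{N_1,E}(\theta + N_1 d\alpha)\, M_{N_1,E}(\theta)$, and observe that since $N_1$ is $\kappa_0$-admissible, $\|N_1 d\alpha\|_{\T}\le \kappa_0$, so $M_{N_1,E}(\theta+N_1 d\alpha)$ is close (in operator norm, by analyticity and the bound $\|\wedge^j M_{N_1,E}\|\le e^{N_1(\hat L^j+\varepsilon)}$ from Lemma \ref{lem:upper_semi_cont}) to $M_{N_1,E}(\theta)$, provided $\kappa_0$ is chosen small enough depending on $N_0$ and the relevant exponents. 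The upper bound $\tfrac{1}{2N_1}\log\|\wedge^d M_{N_1,E}^2(\theta)\|\le \hat L^d + \varepsilon\hat L^d$ for \emph{all} $\theta$ is immediate from submultiplicativity of $\|\wedge^d(\cdot)\|$ and Lemma \ref{lem:upper_semi_cont}, so the content is the matching lower bound on a set of measure at least $1-\tfrac1{20}$.

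For the lower bound, the key point is that $\|\wedge^d(AB)\|$ can be much smaller than $\|\wedge^d A\|\,\|\wedge^d B\|$ only when the most expanding $d$-plane of $B$ (i.e. the span of $v_{1,N_1}(\theta)\wedge\cdots\wedge v_{d,N_1}(\theta)$) is nearly orthogonal to the most contracting $d$-plane of $A$ — equivalently, when the most expanding $d$-plane of $M_{N_1,E}(\theta)$ is nearly orthogonal to the most expanding $d$-plane of $M_{N_1,E}^*(\theta+N_1 d\alpha)$. Since these two planes depend continuously on $\theta$ and, because $\|N_1 d\alpha\|_\T\le\kappa_0$, the latter is close to the most expanding $d$-plane of $M_{N_1,E}^*(\theta)$, it suffices to show that for $\theta$ outside a small-measure set one has
\begin{align}
\left|\langle v_{1,N_1}(\theta)\wedge\cdots\wedge v_{d,N_1}(\theta),\, w_{1,N_1}(\theta)\wedge\cdots\wedge w_{d,N_1}(\theta)\rangle\right| \ge e^{-\varepsilon N_1 \hat L^d}.
\end{align}
This is precisely the type of estimate \cite{GS3}*{Lemma 3.2} provides: it controls the measure of the set of $\theta$ where the most expanding and most contracting subspaces of an $n$-step transfer matrix cocycle become too orthogonal, in terms of $n$ and the gap $\hat L_d > 0$ between the $d$-th and $(d+1)$-st Lyapunov exponents (note \eqref{eq:Lj=L2d-j} gives $\hat L_d>0>\hat L_{d+1}$ under our running hypothesis $\hat L_d(d\alpha,M_E)>0$). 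Combining this with the symplectic relation \eqref{eq:Omega_v_w}, which forces the contracting-plane overlap to equal the expanding-plane overlap, one concludes that on $(\tilde{\mathcal B}^d_{N_1,E,\varepsilon})^c$,
\begin{align}
\tfrac{1}{2N_1}\log\|\wedge^d M_{N_1,E}^2(\theta)\| \ge \tfrac{1}{2N_1}\Big(2\log\|\wedge^d M_{N_1,E}(\theta)\| - \varepsilon N_1\hat L^d - o(N_1)\Big) \ge (1-2\varepsilon)\hat L^d,
\end{align}
using Lemma \ref{lem:LDT} to bound the exceptional set $\mathcal B^d_{N_1,E,\varepsilon}$ by $e^{-c_1 N_1\varepsilon^2\hat L^d}$. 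Adjusting $\varepsilon$ by a harmless constant factor (or running the argument with $\varepsilon/2$) gives the stated bound with total exceptional measure below $\tfrac1{20}$, once $N_1\ge 8N_0$ with $N_0$ large enough.

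The main obstacle is the second paragraph: making the non-orthogonality estimate quantitative and uniform enough. One must (i) invoke \cite{GS3}*{Lemma 3.2} in the $(d\alpha,\wedge^d M_{N_1,E})$ setting rather than the $\mathrm{SL}(2)$ setting, which requires that the relevant simple-top-Lyapunov-exponent and measure-of-bad-set hypotheses there are verified for the exterior power cocycle — this is where the positivity of the gap $\hat L_d>0$ and the large deviation estimate of Lemma \ref{lem:LDT} enter; and (ii) carry the $\kappa_0$-closeness of $M_{N_1,E}(\theta+N_1d\alpha)$ to $M_{N_1,E}(\theta)$ through the singular-vector/angle estimates, which needs a perturbation bound for the most expanding $d$-plane under a perturbation of size $\kappa_0 e^{N_1(\hat L^1+\varepsilon)}$ — this is why $\kappa_0$ is allowed to depend on $N_0$ (hence on the scale) and why the admissible sequence, not arbitrary $n$, is used. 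Everything else is routine linear algebra with exterior powers plus the already-established Lemmas \ref{lem:upper_semi_cont}, \ref{lem:LDT}, and the symplectic identity \eqref{eq:Omega_v_w}.
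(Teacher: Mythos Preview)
Your proposal has a genuine gap in step (ii), and the overall logical flow is inverted relative to the paper.

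\textbf{The perturbation argument at scale $N_1$ cannot work.} You write that comparing the most expanding $d$-planes of $M_{N_1,E}(\theta+N_1d\alpha)$ and $M_{N_1,E}(\theta)$ ``needs a perturbation bound \ldots under a perturbation of size $\kappa_0 e^{N_1(\hat L^1+\varepsilon)}$'', and then say ``this is why $\kappa_0$ is allowed to depend on $N_0$ (hence on the scale).'' But $\kappa_0=\kappa_0(\varepsilon,\alpha,E,d,N_0)$ is fixed once and for all, independent of $N_1$; it is precisely this that makes the admissible sequence $\{n:\|nd\alpha\|_\T\le\kappa_0\}$ have positive density. For large $N_1$ the matrix perturbation $\|M_{N_1,E}(\theta+\kappa)-M_{N_1,E}(\theta)\|$ is of order $\kappa_0 N_1 e^{N_1\hat L_1}$, which dwarfs the singular value gap $\sigma_d-\sigma_{d+1}\sim e^{N_1\hat L_d}$, so no Davis--Kahan type bound survives. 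This is not a technicality: it is the obstacle the paper's proof is designed to avoid.

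\textbf{The role of \cite{GS3}*{Lemma 3.2} is reversed.} In the paper, the non-orthogonality statement you want (Corollary~\ref{cor:Mn2}) is a \emph{consequence} of Lemma~\ref{lem:Mn2}, proved by contradiction: if $\langle v_1\wedge\cdots\wedge v_d,\,w_1\wedge\cdots\wedge w_d\rangle$ were too small, then $\|\wedge^d M_{N_1,E}^2(\theta)\|$ would be too small, contradicting Lemma~\ref{lem:Mn2}. The reference to \cite{GS3} is for the \emph{idea} of this implication, not as a black-box input. So invoking it to prove Lemma~\ref{lem:Mn2} is circular in this setup.

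\textbf{What the paper actually does.} Lemma~\ref{lem:Mn2} is proved by an inductive Avalanche Principle argument. The $\kappa_0$-closeness is used \emph{only at the fixed base scale $N_0$} (Lemma~\ref{lem:kappa_0}): one shows that $\|\wedge^d(M_{N_0,E}(\theta+N_0d\alpha+\kappa)M_{N_0,E}(\theta))\|$ is close to $\|\wedge^d M_{2N_0,E}(\theta)\|$ when $\|\kappa\|_\T\le\kappa_0$, and this is the one place where the admissibility hypothesis enters. One then decomposes $M_{N_1,E}^2(\theta)$ into blocks of lengths $N_0,N_0,N_0,N_0,2N_0,2N_0,4N_0,\ldots$ straddling the seam at $\theta+N_1d\alpha$, and applies the Avalanche Principle (Theorem~\ref{thm:AP}) together with the LDT (Lemma~\ref{lem:LDT}) at geometrically growing scales $N_0,2N_0,4N_0,\ldots$ to bootstrap up to $N_1$. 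Each AP step contributes an exponentially small error and removes a large-deviation set of exponentially small measure; summing gives measure $\ge 19/20$.
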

\begin{proof}
By Lemma \ref{lem:LDT}, we can choose $N_0:=N_0(\varepsilon,\alpha,E,d)$ so that for $n\geq N_0$, one has $\mathrm{mes}(\mathcal{B}_{n,E,\varepsilon}^d)\leq e^{-n\varepsilon^2 \cdot \hat{L}^d(d\alpha,M_E)}$ and, furthermore, 
\begin{align}\label{eq:N0<1/100}
    \sum_{j=0}^{\infty}e^{-2^j c_1\varepsilon^2 N_0\cdot \hat{L}^d(d\alpha,M_E)}<1/100,
\end{align}
where $c_1$ is the constant in Lemma \ref{lem:LDT},
as well as 
\begin{align}\label{eq:N0<c0}
\sum_{j=0}^{\infty} e^{-2^{j+1}N_0(1-\varepsilon)\hat{L}^d(d\alpha,M_E)}\cdot e^{2^{j+5}N_0\varepsilon\hat{L}^d(d\alpha, M_E)}\cdot e^{8\tilde{C}_0c_0}<c_0,
\end{align}
where $\tilde{C}_0, c_0>0$ are the constants in Theorem~\ref{thm:AP}.
For the purposes of this proof, we let $\mathcal{G}_{n}:=(\mathcal{B}_{n,E,\varepsilon}^d)^c$, omitting the fixed parameters $E$ and $\varepsilon$ for simplicity.
We also denote $\hat{L}^k(d\alpha, M_E)$ by $\hat{L}^k$.
First, we need the following estimate at the fixed scale $N_0$.
\begin{lemma}\label{lem:kappa_0}
    There exists $\kappa_0=\kappa_0(\varepsilon,\alpha, E, d, N_0)$ such that for any $\|\kappa\|_{\T}\leq \kappa_0$, and for any $\theta_0$ satisfying
    \begin{align}\label{eq:kappa_0_assume}
        \|\wedge^d M_{2N_0,E}(\theta_0)\|\geq e^{2N_0(1-\varepsilon)\hat{L}^d}
    \end{align}
    one has the bound
    \begin{align}
        \|\wedge^d \left(M_{N_0,E}(\theta_0+N_0d\alpha+\kappa) M_{N_0,E}(\theta_0)\right)\|\geq e^{2N_0(1-2\varepsilon)\hat{L}^d}.
    \end{align}
\end{lemma}
\begin{proof}
    We use the following estimate from \cite{DK}:
    \begin{lemma}\cite{DK}*{Lemma 2.8}\label{lem:2_8_DK}
        For $k\times k$ complex matrices $h_1, h_2$, for any $1\leq j\leq k$, 
        \begin{align}
            \|\wedge^j h_1-\wedge^j h_2\|\leq j\max\{1, \|h_1\|, \|h_2\|\}^{j-1}\, \|h_1-h_2\|. 
        \end{align}
    \end{lemma}
    We apply Lemma \ref{lem:2_8_DK} to $k=2d$, $j=d$, and
    \begin{align}
        h_1(\theta)=&M_{N_0,E}(\theta+N_0d\alpha+\kappa) M_{N_0,E}(\theta) \\
        h_2(\theta)=&M_{N_0,E}(\theta+N_0d\alpha)M_{N_0,E}(\theta)
    \end{align}
    By Lemma~\ref{lem:upper_semi_cont}, uniformly in $\theta$,
    \begin{align}
        \max(\|h_1(\theta)\|, \|h_2(\theta)\|)\leq e^{2N_0(1+\varepsilon)\hat{L}^d}, \text{ and } \|M_{N_0,E}(\theta)\|\leq e^{N_0(1+\varepsilon)\hat{L}^d}.
    \end{align}
    Hence, Lemma \ref{lem:2_8_DK} implies furthermore that uniformly in $\theta$,
    \begin{align}
        \|\wedge^d h_1(\theta)-\wedge^d h_2(\theta)\|
        \leq &de^{2dN_0(1+\varepsilon)\hat{L}^d}\cdot \|h_1(\theta)-h_2(\theta)\|\\
        \leq &de^{3dN_0(1+\varepsilon)\hat{L}^d}\cdot \|M_{N_0,E}(\theta+N_0d\alpha+\kappa)-M_{N_0,E}(\theta+N_0d\alpha)\|\\
        \leq &C_{\alpha, d, E, N_0} e^{4dN_0(1+\varepsilon)\hat{L}^d}\cdot \|\kappa\|_{\T},
    \end{align}
    where in the last step we used a telescoping argument.
    Hence for $\|\kappa\|_{\T}\leq \kappa_0(\varepsilon, \alpha,E,d,N_0)$, one has  uniformly in $\theta$,
    \begin{align}
        \|\wedge^d h_1(\theta)-\wedge^d h_2(\theta)\|\leq e^{2N_0(1-2\varepsilon)\hat{L}^d}.
    \end{align}
    This, combined with \eqref{eq:kappa_0_assume} at $\theta=\theta_0$, implies that
    \begin{align}
        \|\wedge^d (M_{N_0,E}(\theta_0+N_0d\alpha+\kappa) M_{N_0,E}(\theta_0))\|\geq e^{2N_0(1-\varepsilon)\hat{L}^d}-e^{2N_0(1-2\varepsilon)\hat{L}^d}\geq e^{2N_0(1-2\varepsilon)\hat{L}^d},
    \end{align}
    which is the claimed result.
\end{proof}

Next, we pick an arbitrary $\kappa_0$-admissible $N_1$ (i.e., $\|N_1d\alpha\|_{\T}\leq \kappa_0$), satisfying $N_1\geq 8N_0$.
Let $k\geq 2$ be such that $2^{k+1}N_0\leq N_1<2^{k+2}N_0$, and $n_0=N_1-2^k N_0=N_1-N_0-\sum_{\ell=0}^{k-1}2^{\ell}N_0$ be the remainder. 
Clearly
\begin{align}\label{eq:n0_N0}
2^k N_0\leq n_0\leq (3\cdot 2^k-1) N_0.
\end{align}
Our goal is to estimate 
\begin{align}
    \log \|\wedge^d (M_{N_1,E}^2(\theta))\|=\log \|\wedge^d (M_{N_1,E}(\theta+N_1d\alpha+\kappa)M_{N_1,E}(\theta))\|,
\end{align}
where $\kappa=-N_1d\alpha$ (modulo $\Z$) satisfies $\|\kappa\|_{\T}\leq \kappa_0$.
In order to accomplish this, we decompose
\begin{align}
    &M_{N_1,E}(\theta+N_1d\alpha+\kappa)M_{N_1,E}(\theta)= M_{N_1,E}(\theta)^2\\
    =&
    M_{n_0,E}(\theta+(2N_1-n_0)d\alpha+\kappa)
    M_{2^{k-1}N_0,E}(\theta+(N_1+2^{k-1}N_0)d\alpha+\kappa)
    \cdots\\
    &\cdots 
    M_{4N_0,E}(\theta+(N_1+2^2 N_0)d\alpha+\kappa)M_{2N_0,E}(\theta+(N_1+2N_0)d\alpha+\kappa)\\
    &\,\,\,\,\,\cdot
    M_{N_0,E}(\theta+(N_1+N_0)d\alpha+\kappa)M_{N_0,E}(\theta+N_1d\alpha+\kappa)\cdot \\
    &\,\,\,\,\,\cdot M_{N_0,E}(\theta+(N_1-N_0)d\alpha) M_{N_0,E}(\theta+(N_1-2N_0)d\alpha)\\
    &\,\,\,\,\,\cdot M_{2N_0,E}(\theta+(N_1-2^2 N_0)d\alpha) M_{4N_0,E}(\theta+(N_1-2^3 N_0)d\alpha)\\
    &\cdots 
    M_{2^{k-1}N_0,E}(\theta+n_0d\alpha) M_{n_0,E}(\theta)
\end{align}

\underline{Step 1}. 
Let 
\begin{align}
\mathcal{G}_{N_0}^{(1)}:=&(\mathcal{G}_{N_0}-(N_1-2N_0)d\alpha)\cap (\mathcal{G}_{N_0}-(N_1-N_0)d\alpha)\cap (\mathcal{G}_{N_0}-N_1d\alpha-\kappa)\cap\\ 
&\cap (\mathcal{G}_{N_0}-(N_1+N_0)d\alpha-\kappa)\cap (\mathcal{G}_{2N_0}-(N_1-2N_0)d\alpha)\cap  (\mathcal{G}_{2N_0}-N_1d\alpha-\kappa)\cap \\
&\cap (\mathcal{G}_{2N_0}-(N_1-N_0)d\alpha)
\end{align}
We restrict to $\theta\in \mathcal{G}_{N_0}^{(1)}$ for the remainder of the proof.
Let 
\begin{align}\label{def:gN0}
    g_{N_0,1}(\theta)=&\wedge^d M_{N_0,E}(\theta+(N_1-2N_0)d\alpha)\\
    g_{N_0,2}(\theta)=&\wedge^d M_{N_0,E}(\theta+(N_1-N_0)d\alpha)\\
    g_{N_0,3}(\theta)=&\wedge^d M_{N_0,E}(\theta+N_1d\alpha+\kappa)\\
    g_{N_0,4}(\theta)=&\wedge^d M_{N_0,E}(\theta+(N_1+N_0)d\alpha+\kappa)
\end{align}
The goal of Step 1 is to show the following.
\begin{lemma}\label{lem:AP_1}
For any $\theta\in \mathcal{G}_{N_0}^{(1)}$, one has 
\begin{align}
    \log \|g_{N_0,4}(\theta)\cdots g_{N_0,1}(\theta)\|\geq 4N_0(1-3\varepsilon)\hat{L}^d-4\tilde{C}_0 \tilde{\kappa}_{N_0}\cdot \tilde{\varepsilon}_{N_0}^{-2},
\end{align}
for the constant $\tilde{C}_0>0$ in Theorem \ref{thm:AP} and with 
\begin{align}\label{def:tepsilon_N0}
\tilde{\kappa}_{N_0}=e^{-2N_0(1-\varepsilon)\hat{L}_d+2^2 N_0\varepsilon\hat{L}^{d-1}}, \text{ and }  \tilde{\varepsilon}_{N_0}=e^{-2^3 N_0\varepsilon\hat{L}^d}.
\end{align}
Furthermore, 
\begin{align}\label{eq:mes_GN0_1}
\mathrm{mes}(\mathcal{G}_{N_0}^{(1)})\geq 1-4e^{-c_1\varepsilon^2N_0\hat{L}^d}-3e^{-2c_1\varepsilon^2 N_0\hat{L}^d},
\end{align}
where $c_1>0$ is the constant in Lemma \ref{lem:LDT}.
\end{lemma}
\begin{proof}
By Lemma \ref{lem:LDT}, for all $\theta\in \mathcal{G}_{N_0}^{(1)}$, we have
\begin{align}\label{eq:gN_0_j_sigma1}
    e^{N_0(1-\varepsilon)\hat{L}^d}\leq \sigma_1(g_{N_0,j}(\theta))\leq e^{N_0(1+\varepsilon)\hat{L}^d}, \text{ for } j=1,2,3,4,
\end{align}
and
\begin{align}\label{eq:gN_0_43_21}
    \|g_{N_0,4}(\theta)g_{N_0,3}(\theta)\|\geq e^{2N_0(1-\varepsilon)\hat{L}^d}, \text{ and } \|g_{N_0,2}(\theta)g_{N_0,1}(\theta)\|\geq e^{2N_0(1-\varepsilon)\hat{L}^d}.
\end{align}
Note that by definition $\theta\in \mathcal{G}_{N_0}^{(1)}\subset  \mathcal{G}_{2N_0}-(N_1-N_0)d\alpha$, hence
\begin{align}
    \|\wedge^d (M_{N_0,E}(\theta+N_1d\alpha) M_{N_0,E}(\theta+(N_1-N_0)d\alpha))\|\geq e^{2N_0(1-\varepsilon)\hat{L}^d}.
\end{align}
Combining the above (which verifies the assumptions of Lemma~\ref{lem:kappa_0} for $\theta_0=\theta+(N_1-N_0)d\alpha$) with Lemma~\ref{lem:kappa_0} and $\|\kappa\|_{\T}\leq \kappa_0$, yields 
\begin{align}\label{eq:gN_0_32}
    \|g_{N_0,3}(\theta)g_{N_0,2}(\theta)\|
    =&\|\wedge^d (M_{N_0,E}(\theta+N_1d\alpha+\kappa) M_{N_0,E}(\theta+(N_1-N_0)d\alpha))\|\\
    \geq &e^{2N_0(1-2\varepsilon)\hat{L}^d}.
\end{align}
By Lemma~\ref{lem:upper_semi_cont} applied to the matrix $\wedge^2 (\wedge^d M_{N_0,E})$ and 
\begin{align}\label{eq:L_wedge2_wedged}
 {L}^1(d\alpha, \wedge^2 (\wedge^d M_{E}))
=& {L}^1(d\alpha, \wedge^d M_{E})+ {L}^2(d\alpha, \wedge^d M_{E})\\
=& {L}^{d}+ {L}^{d-1}+ {L}_{d+1}=2 {L}^{d-1},
\end{align} 
one has uniformly in $\theta\in \T$,
\begin{align}
\sigma_1(\wedge^d M_{N_0,E}(\theta))\cdot \sigma_2(\wedge^d M_{N_0,E}(\theta))=
    \sigma_1(\wedge^2(\wedge^d M_{N_0,E}(\theta)))\leq e^{2N_0(1+\varepsilon)\hat{L}^{d-1}}.
\end{align}
Combining the above with the lower bounds in \eqref{eq:gN_0_j_sigma1}, we conclude that 
\begin{align}\label{eq:gN_0_j_sigma2}
    \sigma_2(g_{N_0,j}(\theta))\leq e^{N_0(1-\varepsilon)(\hat{L}^{d-1}+\hat{L}_{d+1})}\cdot e^{4N_0\varepsilon \hat{L}^{d-1}}.
\end{align}
Note that \eqref{eq:gN_0_j_sigma1}, \eqref{eq:gN_0_j_sigma2}, \eqref{eq:gN_0_43_21} and \eqref{eq:gN_0_32} verify the assumptions of Theorem~\ref{thm:AP}, the Avalanche Principle. In fact, for $j=1,2,3,4$,
\begin{align}\label{eq:AP_N0_1}
\frac{\sigma_1(g_{N_0,j}(\theta))}{\sigma_2(g_{N_0,j}(\theta))}\geq e^{2N_0(1-\varepsilon) \hat{L}_d}e^{-4N_0\varepsilon \hat{L}^{d-1}}=:\tilde{\kappa}_{N_0}^{-1},
\end{align}
and for $j=1,2,3$,
\begin{align}\label{eq:AP_N0_2}
    \frac{\|g_{N_0,j+1}(\theta)g_{N_0,j}(\theta)\|}{\sigma_1(g_{N_0,j+1}(\theta))\cdot \sigma_1(g_{N_0,j}(\theta))}\geq e^{-2^3 N_0\varepsilon\hat{L}^d}=:\tilde{\varepsilon}_{N_0}.
\end{align}
Clearly, 
\begin{align}\label{eq:AP_N0_c0}
    \tilde{\kappa}_{N_0}\cdot \tilde{\varepsilon}_{N_0}^{-2}\leq e^{-2N_0(1-\varepsilon) \hat{L}_d} e^{(2^4+2^2) N_0\varepsilon\hat{L}^d}<c_0,
\end{align}
due to \eqref{eq:N0<c0}.
Therefore, by applying Theorem \ref{thm:AP} to $g_{N_0,4}(\theta)\cdots g_{N_0,1}(\theta)$, we obtain with the constant $\tilde{C}_0>0$ in Theorem \ref{thm:AP},
\begin{align}\label{eq:AP_conclusion_N0}
\log \|g_{N_0,4}(\theta)\cdots g_{N_0,1}(\theta)\|
\geq &\sum_{j=1}^3 \log \|g_{N_0,j+1}(\theta)g_{N_0,j}(\theta)\|-\sum_{j=2}^3 \log \|g_{N_0,j}(\theta)\|-4\tilde{C}_0\tilde{\kappa}_{N_0}\cdot \tilde{\varepsilon}_{N_0}^{-2}
\\
\geq &4N_0(1-3\varepsilon)\hat{L}^d-4\tilde{C}_0 \tilde{\kappa}_{N_0}\cdot \tilde{\varepsilon}_{N_0}^{-2}.
\end{align}
Regarding the measure estimate, we note that 
\begin{align}
    \mathrm{mes}(\mathcal{G}_{N_0}^{(1)})\geq 1-4e^{-c_1\varepsilon^2 N_0\hat{L}^d}-3e^{-2c_1\varepsilon^2 N_0\hat{L}^d},
\end{align}
where we used that for any $n\geq N_0$, $\mathrm{mes}(\mathcal{G}_{n})=1-\mathrm{mes}(\mathcal{B}_{n,E,\varepsilon}^d)\geq 1-e^{-c_1\varepsilon^2 n\hat{L}^d}$, due to Lemma~\ref{lem:LDT}.
\end{proof}

To better illustrate the inductive procedure, we present the second step in detail as well.

\medskip

\underline{Step 2}.  Let
\begin{align}\label{def:g2N0}
    g_{2N_0,1}(\theta)=&\wedge^d M_{2N_0,E}(\theta+(N_1-2^2 N_0)d\alpha)\\
    g_{2N_0,2}(\theta)=&\wedge^d M_{2N_0,E}(\theta+(N_1-2N_0)d\alpha)\\
    g_{2N_0,3}(\theta)=&\wedge^d M_{2N_0,E}(\theta+N_1d\alpha+\kappa)\\
    g_{2N_0,4}(\theta)=&\wedge^d M_{2N_0,E}(\theta+(N_1+2N_0)d\alpha+\kappa)
\end{align}
Note that 
\[g_{2N_0,3}(\theta)g_{2N_0,2}(\theta)=g_{N_0,4}(\theta)\cdots g_{N_0,1}(\theta).\]
Hence by Lemma \ref{lem:AP_1}, for $\theta\in \mathcal{G}_{N_0}^{(1)}$, 
\begin{align}\label{eq:g2N_0_32}
    \log \|g_{2N_0,3}(\theta)g_{2N_0,2}(\theta)\|\geq 4N_0(1-3\varepsilon)\hat{L}^d-4\tilde{C}_0\tilde{\kappa}_{N_0}\cdot \tilde{\varepsilon}_{N_0}^{-2}.
\end{align}
We further restrict $\theta$ to the following subset of $\mathcal{G}_{N_0}^{(1)}$:
\begin{align}
    \theta\in \mathcal{G}_{2N_0}^{(1)}=&\mathcal{G}_{N_0}^{(1)}\cap (\mathcal{G}_{2N_0}-(N_1-2^2N_0)d\alpha)\cap (\mathcal{G}_{2N_0}-(N_1+2N_0)d\alpha-\kappa)\cap\\
    &\qquad \cap (\mathcal{G}_{4N_0}-(N_1-2^2N_0)d\alpha)\cap (\mathcal{G}_{4N_0}-N_1d\alpha-\kappa).
\end{align}
Our goal of Step 2 is to prove the following:
\begin{lemma}\label{lem:AP_2}
For $\theta\in \mathcal{G}_{2N_0}^{(1)}$, 
\begin{align}
    \log \|g_{2N_0,4}(\theta)\cdots g_{2N_0,1}(\theta)\|\geq 2^3N_0(1-3\varepsilon)\hat{L}^d-4\tilde{C}_0\sum_{\ell=0}^1 \tilde{\kappa}_{2^{\ell} N_0}\cdot \tilde{\varepsilon}_{2^{\ell}N_0}^{-2},
\end{align}
where 
\begin{align}\label{def:tepsilon_2N0}
\tilde{\kappa}_{2N_0}=e^{-2^2 N_0(1-\varepsilon)\hat{L}_d}\cdot e^{2^3N_0\varepsilon\hat{L}^{d-1}}\, \text{ and } \tilde{\varepsilon}_{2N_0}=e^{-2^4 N_0\varepsilon\hat{L}^d}\cdot e^{-4\tilde{C}_0\tilde{\kappa}_{N_0}\cdot \tilde{\varepsilon}_{N_0}^{-2}}.
\end{align}
Furthermore,  
\begin{align}
    \mathrm{mes}(G_{2N_0}^{(1)})\geq  1-4e^{-c_1\varepsilon^2 N_0\hat{L}^d}-5e^{-2c_1\varepsilon^2 N_0\hat{L}^d}-2e^{-4c_1\varepsilon^2 N_0\hat{L}^d}.
\end{align}
\end{lemma}
\begin{proof}
    Since $\theta\in \mathcal{G}_{2N_0}^{(1)}$, by  Lemma~\ref{lem:LDT} one has for $j=1,2,3,4$ that 
    \begin{align}\label{eq:g2N_0_j_sigma1}
        e^{2N_0(1-\varepsilon)\hat{L}^d}\leq \sigma_1(g_{2N_0,j}(\theta))\leq e^{2N_0(1-\varepsilon)\hat{L}^d},
    \end{align}
    and for $j=1,3$ that
    \begin{align}\label{eq:g2N_0_43_21}
        \|g_{2N_0,j+1}(\theta)g_{2N_0,j}(\theta)\|\geq e^{4N_0(1-\varepsilon)\hat{L}^d}.
    \end{align}
    In analogy to \eqref{eq:gN_0_j_sigma2}, we have for $j=1,2,3,4$ that
    \begin{align}\label{eq:g2N_0_sigma2}
        \sigma_2(g_{2N_0,j}(\theta))\leq e^{2N_0(1-\varepsilon)(\hat{L}^{d-1}+\hat{L}_{d+1})}\cdot e^{8N_0\varepsilon \hat{L}^{d-1}}.
    \end{align}
    We apply the avalanche principle (AP) Theorem \ref{thm:AP} to $g_{2N_0,4}(\theta),...,g_{2N_0,1}(\theta)$. The hypotheses of the AP hold since:
    for $j=1,2,3,4$, by \eqref{eq:g2N_0_j_sigma1} and \eqref{eq:g2N_0_sigma2},
    \begin{align}
        \frac{\sigma_1(g_{2N_0,j}(\theta))}{\sigma_2(g_{2N_0,j}(\theta))}\geq e^{2^2 N_0(1-\varepsilon)\hat{L}_d}e^{-2^3 N_0\varepsilon\hat{L}^{d-1}}=:\tilde{\kappa}_{2N_0}^{-1},
    \end{align}
    and for $j=1,2,3$, by further taking \eqref{eq:g2N_0_43_21} and \eqref{eq:g2N_0_32} into account,
    \begin{align}
        \frac{\|g_{2N_0,j+1}(\theta)g_{2N_0,j}(\theta)\|}{\sigma_1(g_{2N_0,j+1}(\theta))\cdot \sigma_1(g_{2N_0,j}(\theta))}\geq e^{-2^4 N_0\varepsilon\hat{L}^d} e^{-4\tilde{C}_0\tilde{\kappa}_{N_0}\cdot \tilde{\varepsilon}_{N_0}^{-2}}=:\tilde{\varepsilon}_{2N_0}. 
    \end{align}
    Clearly, 
    \begin{align}\label{eq:AP2_c0}
\tilde{\kappa}_{2N_0} \cdot \tilde{\varepsilon}_{2N_0}^{-2}\leq e^{-2^2 N_0(1-\varepsilon)\hat{L}_d}e^{(2^5+2^3)N_0\varepsilon\hat{L}^d}e^{8\tilde{C}_0c_0}<c_0,
    \end{align}
where we used \eqref{eq:AP_N0_c0} to bound $\tilde{\kappa}_{N_0}\cdot \tilde{\varepsilon}_{N_0}^{-2}<c_0$.
Therefore, Theorem \ref{thm:AP} implies that 
\begin{align}
    \log \|g_{2N_0,4}(\theta)\cdots g_{2N_0,1}(\theta)\|
    \geq &\sum_{j=1}^3 \log \|g_{2N_0,j+1}(\theta)g_{2N_0,j}(\theta)\|-\sum_{j=2}^3\log \|g_{2N_0,j}(\theta)\|-4\tilde{C}_0\tilde{\kappa}_{2N_0}\cdot \tilde{\varepsilon}_{2N_0}^{-2}\\
    \geq &2^3N_0(1-3\varepsilon)\hat{L}^d-4\tilde{C}_0 \sum_{\ell=0}^1 \tilde{\kappa}_{2^{\ell} N_0}\cdot \tilde{\varepsilon}_{2^{\ell} N_0}^{-2}.
\end{align}
Regarding the measure estimate,  Lemma \ref{lem:LDT} implies that
\begin{align}
    \mathrm{mes}(\mathcal{G}_{2N_0}^{(1)})
    \geq &\mathrm{mes}(\mathcal{G}_{N_0}^{(1)})-2e^{-2c_1\varepsilon^2 N_0}-2e^{-2c_2\varepsilon^2N_0}\\
    \geq &1-4e^{-c_1\varepsilon^2 N_0\hat{L}^d}-5e^{-2c_1\varepsilon^2 N_0\hat{L}^d}-2e^{-4c_1\varepsilon^2 N_0\hat{L}^d}.
\end{align}
This is the claimed result.
\end{proof}
Next, we carry out the following inductive procedure.
For any $\ell\geq 0$, let 
\begin{align}\label{def:g2ell_N0}
        g_{2^{\ell}N_0,1}(\theta)=&\wedge^d M_{2^{\ell}N_0,E}(\theta+(N_1-2^{\ell+1}N_0)d\alpha)\\
        g_{2^{\ell}N_0,2}(\theta)=&\wedge^d M_{2^{\ell}N_0,E}(\theta+(N_1-2^{\ell}N_0)d\alpha)\\
        g_{2^{\ell}N_0,3}(\theta)=&\wedge^d M_{2^{\ell}N_0,E}(\theta+N_1d\alpha+\kappa)\\
        g_{2^{\ell}N_0,4}(\theta)=&\wedge^d M_{2^{\ell}N_0,E}(\theta+(N_1+2^{\ell}N_0)d\alpha+\kappa)
\end{align}
Note this is consistent with the definitions of $g_{N_0,j}(\theta)$ and $g_{2N_0,j}(\theta)$ in \eqref{def:gN0} and \eqref{def:g2N0}.
For any $\ell\geq 0$, let
\begin{align}
    \tilde{\kappa}_{2^{\ell}N_0}=&e^{-2^{\ell+1}N_0(1-\varepsilon)\hat{L}_d}\cdot  e^{2^{\ell+2}N_0\varepsilon\hat{L}^{d-1}},
\end{align}
and define $\tilde{\varepsilon}_{2^{\ell}N_0}$ inductively;   $\tilde{\varepsilon}_{N_0}$ is defined in~\eqref{def:tepsilon_N0} and for $\ell\geq 1$,
\begin{align}
    \tilde{\varepsilon}_{2^{\ell}N_0}=&e^{- 2^{\ell+3}N_0\varepsilon\hat{L}^d}\cdot  e^{-4\tilde{C}_0\sum_{r=0}^{\ell-1}\tilde{\kappa}_{2^r N_0}\cdot \tilde{\varepsilon}_{2^r N_0}^{-2}}.
\end{align}
Note that these definitions are consistent with those in \eqref{def:tepsilon_N0} and \eqref{def:tepsilon_2N0} for $\ell=0,1$.
Clearly, for any $n\geq 1$,
\begin{align}\label{eq:sum_kp_eps^2}
\sum_{\ell=0}^{n} \tilde{\kappa}_{2^{\ell}N_0}\cdot \tilde{\varepsilon}_{2^{\ell}N_0}^{-2}\leq 
&\sum_{\ell=0}^{n} e^{-2^{\ell+1}N_0(1-\varepsilon)\hat{L}_d}\cdot e^{(2^{\ell+4}+2^{\ell+2})N_0\varepsilon\hat{L}^d}\cdot e^{8\tilde{C}_0\sum_{r=0}^{n-1}\tilde{\kappa}_{2^rN_0}\cdot \tilde{\varepsilon}_{2^rN_0}^{-2}}\\
\leq &\sum_{\ell=0}^{n} e^{-2^{\ell+1}N_0(1-\varepsilon)\hat{L}_d}\cdot e^{(2^{\ell+4}+2^{\ell+2})N_0\varepsilon\hat{L}^d}\cdot e^{8\tilde{C}_0c_0}<c_0,
\end{align}
by \eqref{eq:N0<c0}, \eqref{eq:AP_N0_c0}, and induction.

\medskip 

\underline{Step $m$ for $m=3,...,k$}.
\begin{lemma}
Suppose $\mathcal{G}_{2^{m-2} N_0}^{(1)}$ is defined and satisfies the following two conditions:
\begin{enumerate}
    \item 
     \begin{align}
        \mathcal{G}_{2^{m-2}N_0}^{(1)}\subset (\mathcal{G}_{2^{m-1}N_0}-(N_1-2^{m-1}N_0)d\alpha)\cap (\mathcal{G}_{2^{m-1}N_0}-N_1d\alpha-\kappa)
     \end{align}
    \item 
    \begin{align}
        \mathrm{mes}(G_{2^{m-2}N_0}^{(1)})\geq 1-5\sum_{\ell=0}^{m-2}e^{-2^{\ell} c_1\varepsilon^2 N_0 \hat{L}^d}-2e^{-2^{m-1}c_1\varepsilon^2N_0\hat{L}^d}.
    \end{align}
\end{enumerate}
    Assume further that for any $\theta\in \mathcal{G}_{2^{m-2}N_0}^{(1)}$ the following holds:
    \begin{align}\label{eq:induct_assume_g1-4}
        \log \|g_{2^{m-2}N_0,4}(\theta)\cdots g_{2^{m-2}N_0,1}(\theta)\|\geq 2^m N_0(1-3\varepsilon) \hat{L}^d-4\tilde{C}_0\sum_{\ell=0}^{m-2}\tilde{\kappa}_{2^{\ell}N_0}\cdot \tilde{\varepsilon}_{2^{\ell}N_0}^{-2}.
    \end{align}
Let
\begin{align}\label{def:G2m-1_N0}
    \mathcal{G}_{2^{m-1}N_0}^{(1)}:
    =&\mathcal{G}_{2^{m-2}N_0}^{(1)}\cap (\mathcal{G}_{2^{m-1}N_0}-(N_1-2^m N_0)d\alpha)\cap (\mathcal{G}_{2^{m-1}N_0}-(N_1+2^{m-1}N_0)d\alpha-\kappa)\cap\\
    &\qquad\cap (\mathcal{G}_{2^m N_0}-(N_1-2^m N_0)d\alpha)\cap (\mathcal{G}_{2^m N_0}-N_1d\alpha-\kappa).
\end{align}
Then for any $\theta\in \mathcal{G}_{2^{m-1}N_0}^{(1)}$, one  has
\begin{align}\label{eq:g2m-1_1-4}
        \log \|g_{2^{m-1}N_0,4}(\theta)\cdots g_{2^{m-1}N_0,1}(\theta)\|\geq 2^{m+1} N_0(1-3\varepsilon) \hat{L}^d-4\tilde{C}_0\sum_{\ell=0}^{m-1}\tilde{\kappa}_{2^{\ell}N_0}\cdot \tilde{\varepsilon}_{2^{\ell}N_0}^{-2}.
\end{align}
Furthermore, $\mathcal{G}_{2^{m-1}N_0}^{(1)}$ satisfies the following two conditions:
\begin{enumerate}
    \item     
    \begin{align}\label{eq:G2m-1_1}
        \mathcal{G}_{2^{m-1}N_0}^{(1)}\subset (\mathcal{G}_{2^{m}N_0}-(N_1-2^{m}N_0)d\alpha)\cap (\mathcal{G}_{2^{m}N_0}-N_1d\alpha-\kappa)
    \end{align}
    \item 
    \begin{align}\label{eq:G2m-1_2}
        \mathrm{mes}(G_{2^{m-1}N_0}^{(1)})\geq 1-5\sum_{\ell=0}^{m-1}e^{-2^{\ell} c_1\varepsilon^2 N_0\hat{L}^d}-2e^{-2^{m}c_1\varepsilon^2N_0\hat{L}^d}.
    \end{align}
\end{enumerate}
\end{lemma}
\begin{proof}
    For any $\theta\in \mathcal{G}_{2^{m-1}N_0}^{(1)}$, Lemma \ref{lem:LDT} implies for $j=1,2,3,4$ that 
    \begin{align}\label{eq:induct_1}
        e^{2^{m-1}N_0(1-\varepsilon)\hat{L}^d}\leq \sigma_1(g_{2^{m-1}N_0,j}(\theta))\leq e^{2^{m-1}N_0(1+\varepsilon)\hat{L}^d},
    \end{align}
    and for $j=1,3$ that 
    \begin{align}\label{eq:induct_2}
        \|g_{2^{m-1}N_0,j+1}(\theta)g_{2^{m-1}N_0,j}(\theta)\|\geq e^{2^m N_0(1-\varepsilon)\hat{L}^d}.
    \end{align}
By assumption \eqref{eq:induct_assume_g1-4}, one has 
\begin{align}\label{eq:induct_3}
    \log \|g_{2^{m-1}N_0,3}(\theta)g_{2^{m-1}N_0,2}(\theta)\|
    =&\log \|g_{2^{m-2}N_0,4}(\theta)\cdots g_{2^{m-2}N_0,1}(\theta)\|\\
    \geq &2^m N_0(1-3\varepsilon)\hat{L}^d-4\tilde{C}_0\sum_{\ell=0}^{m-2}\tilde{\kappa}_{2^{\ell}N_0}\cdot \tilde{\varepsilon}_{2^{\ell}N_0}^{-2}.
\end{align}
In analogy to \eqref{eq:gN_0_j_sigma2}, we infer for $j=1,2,3,4$ that
\begin{align}\label{eq:induct_sigma2}
\sigma_2(g_{2^{m-1}N_0,j}(\theta))\leq e^{2^{m-1}N_0(1-\varepsilon)(\hat{L}^{d-1}+\hat{L}_{d+1})}\cdot e^{2^{m+1}N_0\varepsilon\hat{L}^{d-1}}.
\end{align}
In order to apply Theorem \ref{thm:AP} to $g_{2^{m-1}N_0,4}(\theta),...,g_{2^{m-1}N_0,1}(\theta)$, we first check that the assumptions are satisfied.
By \eqref{eq:induct_1} and \eqref{eq:induct_sigma2}, one has for $j=1,2,3,4$ that 
\begin{align}
    \frac{\sigma_1(g_{2^{m-1}N_0,j}(\theta))}{\sigma_2(g_{2^{m-1}N_0,j}(\theta))}\geq e^{2^m N_0(1-\varepsilon)\hat{L}_d}\cdot e^{-2^{m+1}N_0\varepsilon\hat{L}^{d-1}}=\tilde{\kappa}_{2^{m-1}N_0}^{-1}.
\end{align}
By \eqref{eq:induct_1}, \eqref{eq:induct_2} and \eqref{eq:induct_3} we have that for $j=1,2,3$,
\begin{align}
    \frac{\|g_{2^{m-1}N_0,j+1}(\theta) g_{2^{m-1}N_0,j}(\theta)\|}{\sigma_1(g_{2^{m-1}N_0,j+1}(\theta))\cdot \sigma_1(g_{2^{m-1}N_0,j}(\theta))}\geq e^{-2^{m+2} N_0\varepsilon\hat{L}^d} \cdot e^{-4\tilde{C}_0\sum_{\ell=0}^{m-2}\tilde{\kappa}_{2^{\ell}N_0}\cdot \tilde{\varepsilon}_{2^{\ell}N_0}^{-2}}=\tilde{\varepsilon}_{2^{m-1}N_0}.
\end{align}
Clearly,
\begin{align}
    \tilde{\kappa}_{2^{m-1}N_0}\cdot \tilde{\varepsilon}_{2^{m-1}N_0}^{-2}
    \leq &e^{-2^m N_0(1-\varepsilon)\hat{L}_d}\cdot e^{(2^{m+3}+2^{m+1})N_0\varepsilon\hat{L}^d} \cdot e^{8\tilde{C}_0\sum_{\ell=0}^{m-2}\tilde{\kappa}_{2^{\ell}N_0}\cdot \tilde{\varepsilon}_{2^{\ell}N_0}^{-2}}\\
    \leq &e^{-2^m N_0(1-\varepsilon)\hat{L}_d}\cdot e^{(2^{m+3}+2^{m+1})N_0\varepsilon\hat{L}^d}\cdot e^{8\tilde{C}_0c_0}<c_0,
\end{align}
where we used \eqref{eq:sum_kp_eps^2}. Therefore, Theorem~\ref{thm:AP} implies that 
\begin{align}
    \log \|g_{2^{m-1}N_0,4}(\theta)\cdots g_{2^{m-1}N_0,1}(\theta)\|
    \geq &\sum_{j=1}^3 \log \|g_{2^{m-1}N_0,j+1}(\theta)g_{2^{m-1}N_0,j}(\theta)\|-\sum_{j=2}^3 \log \|g_{2^{m-1}N_0,j}(\theta)\|\\
    &\qquad\qquad-4\tilde{C}_0 \tilde{\kappa}_{2^{m-1}N_0} \cdot \tilde{\varepsilon}_{2^{m-1}N_0}^{-2}\\
    \geq &2^{m+1}N_0(1-3\varepsilon)\hat{L}^d-4\tilde{C}_0\sum_{\ell=0}^{m-1}\tilde{\kappa}_{2^{\ell}N_0} \cdot \tilde{\varepsilon}_{2^{\ell}N_0}^{-2}.
\end{align}
It suffices to prove the measure estimate for $\mathcal{G}_{2^{m-1}N_0}^{(1)}$.
Clearly,  Lemma \ref{lem:LDT} implies that 
\begin{align}
    \mathrm{mes}(\mathcal{G}_{2^{m-1}N_0}^{(1)})
    \geq \mathrm{mes}(\mathcal{G}_{2^{m-2}N_0}^{(1)})-2e^{-2^{m-1}c_1\varepsilon^2 N_0\hat{L}^d}-2e^{-2^m c_1\varepsilon^2 N_0\hat{L}^d}\\
    \geq 1-5\sum_{\ell=0}^{m-1}e^{-2^{\ell}c_1\varepsilon^2 N_0\hat{L}^d}-2e^{-2^mc_1\varepsilon^2 N_0\hat{L}^d},
\end{align}
as claimed.
\end{proof}

We now return to the main argument, i.e., the proof of Lemma~\ref{lem:Mn2}. 
Note that the assumptions on $\mathcal{G}_{2^{m-2}N_0}^{(1)}$ are verified in the base case $m=3$ by Lemma~\ref{lem:AP_2}.
Hence, the inductive scheme applies and therefore \eqref{eq:g2m-1_1-4}, \eqref{eq:G2m-1_1} and \eqref{eq:G2m-1_2} hold for $m=k$.

\medskip

\underline{Step $k+1$}.
Recall that $n_0=N_1-2^k N_0$.
Let 
\begin{align}
    \tilde{g}_{n_0,1}(\theta)=&\wedge^d M_{n_0,E}(\theta)\\
    \tilde{g}_{n_0,4}(\theta)=&\wedge^d M_{n_0,E}(\theta+(2N_1-n_0)d\alpha+\kappa).
\end{align}
By \eqref{eq:g2m-1_1-4} with $m=k$, and recalling the definition of $g_{2^kN_0,j}$ in \eqref{def:g2ell_N0}, we note that 
\begin{align}\label{eq:g2kN0_32}
    \log \|g_{2^kN_0,3}(\theta)g_{2^kN_0,2}(\theta)\|
    =&\log \|g_{2^{k-1}N_0,4}(\theta)\cdots g_{2^{k-1}N_0,1}(\theta)\|\\
    \geq &2^{k+1}N_0(1-3\varepsilon)\hat{L}^d-4\tilde{C}_0c_0,
\end{align}
where we used \eqref{eq:sum_kp_eps^2} in the last inequality.
Recall also that
\begin{align}\label{eq:MN_1=gggg}
    M_{N_1,E}(\theta+N_1d\alpha+\kappa)M_{N_1,E}(\theta)=\tilde{g}_{n_0,4}(\theta)g_{2^k N_0,3}(\theta) g_{2^kN_0,2}(\theta)\tilde{g}_{n_0,1}(\theta).
\end{align}
Let 
\begin{align}
    \mathcal{G}_{2^kN_0}^{(1)}
    :=&\mathcal{G}_{2^{k-1}N_0}^{(1)}\cap \mathcal{G}_{n_0} \cap (\mathcal{G}_{n_0}-(2N_1-n_0)d\alpha-\kappa)\cap\\
    &\qquad\cap \mathcal{G}_{n_0+2^kN_0}\cap (\mathcal{G}_{n_0+2^kN_0}-N_1d\alpha-\kappa)
\end{align}
In this last step, we are going to apply Theorem~\ref{thm:AP} to the four matrices on the right-hand side of \eqref{eq:MN_1=gggg}.
We now verify that the hypotheses are satisfied for $\theta\in \mathcal{G}_{2^kN_0}^{(1)}$.
By Lemma \ref{lem:LDT}, one has for $j=2,3$ that 
\begin{align}\label{eq:g2kN0_1}
    e^{2^k N_0(1-\varepsilon)\hat{L}^d} \leq \sigma_1(g_{2^kN_0,j}(\theta))\leq e^{2^k N_0(1+\varepsilon)\hat{L}^d},
\end{align}
and for $j=1,4$ that
\begin{align}\label{eq:g2kN0_2}
    e^{n_0(1-\varepsilon)\hat{L}^d}\leq \sigma_1(\tilde{g}_{n_0,j}(\theta))\leq e^{n_0(1+\varepsilon)\hat{L}^d}.
\end{align}
We also have
\begin{align}\label{eq:g2kN0_3}
    \|g_{2^kN_0,2}(\theta)\tilde{g}_{n_0,1}(\theta)\|\geq e^{(2^kN_0+n_0)(1-\varepsilon)\hat{L}^d},\\
    \|\tilde{g}_{n_0,4}(\theta)g_{2^kN_0,3}(\theta)\|\geq e^{(2^kN_0+n_0)(1-\varepsilon)\hat{L}^d}.
\end{align}
Similar to \eqref{eq:gN_0_j_sigma2},  for $j=2,3$,
\begin{align}\label{eq:g2kN0_4}
    \sigma_2(g_{2^k N_0,j}(\theta))\leq e^{2^k N_0(1-\varepsilon)(\hat{L}^{d-1}+\hat{L}_{d+1})}\cdot e^{2^{k+2}N_0\varepsilon\hat{L}^{d-1}},
\end{align}
and for $j=1,4$,
\begin{align}\label{eq:g2kN0_5}
    \sigma_2(\tilde{g}_{n_0,j}(\theta))\leq e^{n_0(1-\varepsilon)(\hat{L}^{d-1}+\hat{L}_{d+1})}\cdot e^{2^2 n_0\varepsilon\hat{L}^{d-1}}.
\end{align}
Combining \eqref{eq:g2kN0_32}, \eqref{eq:g2kN0_1}, \eqref{eq:g2kN0_2}, \eqref{eq:g2kN0_3}, \eqref{eq:g2kN0_4} and \eqref{eq:g2kN0_5}, we conclude that 
for $j=2,3$ 
\begin{align}
    \frac{\sigma_1(g_{2^kN_0,j}(\theta))}{\sigma_2(g_{2^kN_0,j}(\theta))}\geq e^{2^{k+1}N_0(1-\varepsilon)\hat{L}_d} e^{-2^{k+2}N_0\varepsilon\hat{L}^{d-1}}=:(\tilde{\kappa}_{2^kN_0}^{(1)})^{-1},
\end{align}
while for $j=1,4$, the following holds:
\begin{align}
    \frac{\sigma_1(\tilde{g}_{n_0,j}(\theta))}{\sigma_2(\tilde{g}_{n_0,j}(\theta))}\geq e^{2n_0(1-\varepsilon)\hat{L}_d} e^{-4n_0\varepsilon\hat{L}^{d-1}}\geq e^{2^{k+1}N_0(1-\varepsilon)\hat{L}_d} e^{-2^{k+2}N_0\varepsilon\hat{L}^{d-1}}=(\tilde{\kappa}_{2^kN_0}^{(1)})^{-1},
\end{align}
where we used $n_0\geq 2^kN_0$ as in \eqref{eq:n0_N0}.
Moreover, 
\begin{align} 
\frac{\|g_{2^kN_0,3}(\theta) g_{2^kN_0,2}(\theta)\|}{\sigma_1(g_{2^kN_0,3}(\theta))\cdot \sigma_1(g_{2^kN_0,2}(\theta))}&\geq e^{-2^{k+3}N_0\varepsilon\hat{L}^d} e^{-4\tilde{C}_0c_0}\geq e^{-4N_1\varepsilon\hat{L}^d}e^{-4\tilde{C}_0c_0}=:\tilde{\varepsilon}_{2^kN_0}^{(1)}\\
\frac{\|g_{2^kN_0,2}(\theta) \tilde{g}_{n_0,1}(\theta)\|}{\sigma_1(g_{2^kN_0,2}(\theta))\cdot \sigma_1(\tilde{g}_{n_0,1}(\theta))}&\geq e^{-2N_1\varepsilon \hat{L}^d}\geq \tilde{\varepsilon}_{2^kN_0}^{(1)}\\
\frac{\|\tilde{g}_{n_0,4}(\theta) g_{2^kN_0,3}(\theta)\|}{\sigma_1(\tilde{g}_{n_0,4}(\theta))\cdot \sigma_1(g_{2^kN_0,3}(\theta))}&\geq e^{-2N_1\varepsilon \hat{L}^d}\geq \tilde{\varepsilon}_{2^kN_0}^{(1)},
\end{align}
where we used $2^{k+1}N_0\leq N_1$ in the first inequality.
Clearly, 
\begin{align}
    \tilde{\kappa}_{2^kN_0}^{(1)}\cdot (\tilde{\varepsilon}_{2^kN_0}^{(1)})^{-2}<c_0.
\end{align}
Therefore, Theorem \ref{thm:AP} implies that 
\begin{align}
&    \log \|\tilde{g}_{n_0,4}(\theta)g_{2^kN_0,3}(\theta)g_{2^kN_0,2}(\theta)\tilde{g}_{n_0,1}(\theta)\|\\
&\qquad\geq \log \|\tilde{g}_{n_0,4}(\theta)g_{2^kN_0,3}(\theta)\|+\log\|g_{2^kN_0,3}(\theta)g_{2^kN_0,2}(\theta)\|+\log\|g_{2^kN_0,2}(\theta)\tilde{g}_{n_0,1}(\theta)\|\\
&\qquad-\log\|g_{2^kN_0,3}(\theta)\|-\log\|g_{2^kN_0,2}(\theta)\|-4\tilde{C}_0 \tilde{\kappa}_{2^kN_0}^{(1)}\cdot (\tilde{\varepsilon}_{2^kN_0}^{(1)})^{-2}\\
&\qquad\geq 2N_1(1-\varepsilon)\hat{L}^d-2^{k+3}N_0\varepsilon\hat{L}^d-8\tilde{C}_0c_0\\
&\qquad\geq 2N_1(1-3\varepsilon)\hat{L}^d-8\tilde{C}_0c_0.
\end{align}
Finally, to estimate the measure, recall $n_0\geq 2^k N_0$ and $N_1\geq 2^{k+1}N_0$ whence 
\begin{align}
    \mathrm{mes}(\mathcal{G}_{2^kN_0}^{(1)})
    \geq &\mathrm{mes}(\mathcal{G}_{2^{k-1}N_0}^{(1)})-2e^{-c_1\varepsilon^2 n_0}-2e^{-c_1\varepsilon^2 N_1}\\
    \geq &1-5\sum_{\ell=0}^{k-1}e^{-2^{\ell}c_1\varepsilon^2 N_0\hat{L}^d}-2e^{-2^k c_1\varepsilon^2 N_0\hat{L}^d}-2e^{-c_1\varepsilon^2 n_0\hat{L}^d}-2e^{-c_1\varepsilon^2 N_1\hat{L}^d}\\
    \geq &1-5\sum_{\ell=0}^{\infty} e^{-2^{\ell}c_1\varepsilon^2N_0\hat{L}^d}>\frac{19}{20},
\end{align}
where we used \eqref{eq:G2m-1_2} with $m=k$ and \eqref{eq:N0<1/100}.
This proves the claimed result.
\end{proof}

As a corollary of Lemma \ref{lem:Mn2}, we now establish 
\begin{corollary}\label{cor:Mn2}
Under the same conditions as in Lemma \ref{lem:Mn2}, for large enough $\kappa_0$-admissible $n$, with $n\geq 8N_0$, and for any $\theta\in (\tilde{\mathcal{B}}^d_{n,E,\varepsilon})^c$, we have that 
    \begin{align}
        |\langle v_{1,n}(\theta)\wedge\cdots \wedge v_{d,n}(\theta), w_{1,n}(\theta)\wedge\cdots \wedge w_{d,n}(\theta)\rangle|=|\det (\langle v_{j,n}(\theta), w_{k,n}(\theta)\rangle)_{1\leq j,k\leq d}|\geq e^{-6n\varepsilon\cdot \hat{L}^d}.
    \end{align}
\end{corollary}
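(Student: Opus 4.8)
The plan is to derive the lower bound on the inner product of the most expanding $d$-planes of $M_{n,E}(\theta)$ and $M_{n,E}^*(\theta)$ from the scale-$2n$ lower bound on $\|\wedge^d M_{2n,E}^2(\theta)\|$ furnished by Lemma~\ref{lem:Mn2}, together with the upper bounds on $\|\wedge^d M_{n,E}(\theta)\|$ coming from Lemma~\ref{lem:upper_semi_cont} and the large deviation estimate Lemma~\ref{lem:LDT}. The key algebraic identity is that for $\theta$ outside the (small) exceptional set, $\sigma_d(M_{n,E}(\theta))>1>\sigma_{d+1}(M_{n,E}(\theta))$ so that the top-$d$ singular subspace is well separated; and one can write, for $g=\wedge^d M_{n,E}(\theta)$,
\begin{align}
\|\wedge^d M_{n,E}(\theta+nd\alpha)\cdot M_{n,E}(\theta)\|
=\|(\wedge^d M_{n,E}(\theta+nd\alpha))\cdot(\wedge^d M_{n,E}(\theta))\|,
\end{align}
and then expand the top singular value of the product of two rank-reduced-dominant matrices in terms of the overlap of the expanding plane of the right factor with the contracting directions of the left factor.

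First I would note that since we take $\theta$ in the good set, by Lemma~\ref{lem:LDT} and Lemma~\ref{lem:upper_semi_cont} we have, writing $\Lambda_j(\theta):=\wedge^d M_{n,E}(\theta+jnd\alpha)$ for $j=0,1$ (with $\kappa=-nd\alpha$ absorbed as in Lemma~\ref{lem:Mn2}),
\begin{align}
e^{n(1-\varepsilon)\hat{L}^d}\leq \sigma_1(\Lambda_j)\leq e^{n(1+\varepsilon)\hat{L}^d},\qquad
\sigma_2(\Lambda_j)\leq e^{n(\hat{L}^{d-1}+\hat{L}_{d+1})+O(n\varepsilon)},
\end{align}
so that $\sigma_2(\Lambda_j)/\sigma_1(\Lambda_j)\leq e^{-2n(1-\varepsilon)\hat{L}_d+O(n\varepsilon)}$ is exponentially small. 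The matrix $\Lambda_j$ is thus close to the rank-one operator $\sigma_1(\Lambda_j)\,\langle\cdot,\xi_j\rangle\eta_j$, where $\xi_j=v_{1,n}(\theta+jnd\alpha)\wedge\cdots\wedge v_{d,n}(\theta+jnd\alpha)$ is the top singular vector of $\wedge^d M_{n,E}$ at that phase and $\eta_j$ the corresponding left singular vector. Crucially, since $\kappa_0$-admissibility makes $nd\alpha$ essentially zero modulo $\Z$, by continuity (a telescoping / Lipschitz argument in $\theta$, exactly as in the proof of Lemma~\ref{lem:kappa_0}) one has $\xi_1$ within distance $O(\kappa_0 e^{Cn})\cdot(\text{controlled})$... more precisely, the content of Lemma~\ref{lem:Mn2} is the cleaner route: it directly gives $\|\Lambda_1\Lambda_0\|\geq e^{2n(1-\varepsilon)\hat{L}^d}$.

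Next I would combine these. Using the near-rank-one structure,
\begin{align}
\|\Lambda_1\Lambda_0\|
\leq \sigma_1(\Lambda_1)\,\sigma_1(\Lambda_0)\,|\langle \eta_0,\xi_1\rangle|
+ \big(\sigma_2(\Lambda_1)\sigma_1(\Lambda_0)+\sigma_1(\Lambda_1)\sigma_2(\Lambda_0)+\sigma_2\sigma_2\big),
\end{align}
and the error terms are bounded by $e^{n(\hat{L}^d+\hat{L}^{d-1}+\hat{L}_{d+1})+O(n\varepsilon)}=e^{2n\hat{L}^{d-1}+O(n\varepsilon)}$, which (since $\hat{L}_d>0$ and $\varepsilon$ is small, using $(1-10\varepsilon)\hat L^d>(1+\varepsilon)\hat L^{d-1}$ from \eqref{def:epsilon}) is much smaller than the lower bound $e^{2n(1-\varepsilon)\hat{L}^d}$ from Lemma~\ref{lem:Mn2}. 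Hence
\begin{align}
|\langle \eta_0,\xi_1\rangle|\geq \frac{e^{2n(1-\varepsilon)\hat{L}^d}-e^{2n\hat{L}^{d-1}+O(n\varepsilon)}}{\sigma_1(\Lambda_1)\sigma_1(\Lambda_0)}\geq e^{-O(n\varepsilon)\hat{L}^d}.
\end{align}
Finally, $\eta_0=w_{1,n}(\theta)\wedge\cdots\wedge w_{d,n}(\theta)$ is the left singular vector of $\wedge^d M_{n,E}(\theta)$, and by the admissibility of $n$ the phase shift $nd\alpha$ is negligible, so $\xi_1$ is (up to exponentially small error, again by a Lipschitz-in-$\theta$ estimate) equal to $v_{1,n}(\theta)\wedge\cdots\wedge v_{d,n}(\theta)$; swallowing this error into the $O(n\varepsilon)$ exponent yields the claimed bound $|\langle v_{1,n}(\theta)\wedge\cdots\wedge v_{d,n}(\theta),w_{1,n}(\theta)\wedge\cdots\wedge w_{d,n}(\theta)\rangle|\geq e^{-6n\varepsilon\hat{L}^d}$ after choosing the implicit constants appropriately and absorbing into the factor $6$. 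The identity $\langle \wedge^d v, \wedge^d w\rangle=\det(\langle v_j,w_k\rangle)_{j,k}$ is the standard Gram-determinant formula for exterior products.

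The main obstacle I expect is bookkeeping the phase shift: Lemma~\ref{lem:Mn2} is stated for $M_{N_1,E}^2(\theta)=M_{N_1,E}(\theta+N_1d\alpha+\kappa)M_{N_1,E}(\theta)$ with $\kappa=-N_1d\alpha$, i.e. the product is literally $M_{N_1,E}(\theta)^2$ only after the $\kappa$-correction; one must make sure that the singular subspaces entering the final inner product are those of $M_{n,E}(\theta)$ (no shift) and $M_{n,E}(\theta)^*$, and that replacing $\wedge^d M_{n,E}(\theta+nd\alpha+\kappa)$ by $\wedge^d M_{n,E}(\theta)$ costs only $e^{O(n\varepsilon)}$ — this is exactly the Lipschitz/telescoping estimate with the $\kappa_0$-admissibility controlling $\|\kappa\|_{\T}$, as already carried out in Lemma~\ref{lem:kappa_0}. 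The only subtlety is that the Lipschitz constant is itself exponentially large in $n$, so one needs $\kappa_0$ chosen (depending on the relevant scale) small enough; but since Lemma~\ref{lem:Mn2} already fixes $\kappa_0$ after $N_0$, and $n$ ranges over $\kappa_0$-admissible integers, one instead argues more robustly: the near-rank-one decomposition shows $\|\Lambda_1\Lambda_0\|$ and $|\langle\eta_0,\xi_1\rangle|\cdot\sigma_1(\Lambda_1)\sigma_1(\Lambda_0)$ agree up to the exponentially-smaller error terms above, with no Lipschitz constant needed, and then a separate (crude) continuity bound identifies $\xi_1$ with $v_{1,n}(\theta)\wedge\cdots$ up to $e^{O(n\varepsilon)}$ — or one simply observes that $\langle w_{1,n}(\theta)\wedge\cdots, v_{1,n}(\theta+nd\alpha+\kappa)\wedge\cdots\rangle$ is already what appears in $\|M_{n,E}(\theta)^2\|$ via $M_{n,E}(\theta)^2$, and the identification of the left/right subspaces across the single admissible shift is handled once and for all by the admissibility hypothesis.
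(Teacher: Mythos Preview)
Your approach is essentially the same as the paper's: the paper argues by contradiction, assuming the overlap is at most $e^{-6n\varepsilon\hat L^d}$ and then bounding $\|(\wedge^d M_{n,E}(\theta))^2(v_{1,n}\wedge\cdots\wedge v_{d,n})\|$ from above by expanding $w_{1,n}\wedge\cdots\wedge w_{d,n}$ in the $v$-basis; you argue directly via the near-rank-one decomposition and solve the resulting inequality for $|\langle\eta_0,\xi_1\rangle|$. These are two arrangements of the same estimate.

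The ``main obstacle'' you worry about, however, is not there. In Lemma~\ref{lem:Mn2} the notation $M_{N_1,E}^2(\theta)$ means \emph{exactly} $(M_{N_1,E}(\theta))^2$: the entire purpose of the choice $\kappa=-N_1d\alpha$ in that proof (and of the $\kappa_0$-admissibility hypothesis) is to make the second factor return to the same base point $\theta$, not merely approximately. Consequently your $\Lambda_1$ and $\Lambda_0$ are the \emph{same} matrix, their singular subspaces coincide identically, and $\xi_1=v_{1,n}(\theta)\wedge\cdots\wedge v_{d,n}(\theta)$ on the nose. No Lipschitz-in-$\theta$ estimate on singular subspaces is needed, and the concern about an exponentially large Lipschitz constant is moot. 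Once you drop this detour, your near-rank-one inequality
\[
\|\Lambda_0^2\|\leq \sigma_1(\Lambda_0)^2\,|\langle \eta_0,\xi_0\rangle|+C\,\sigma_1(\Lambda_0)\sigma_2(\Lambda_0)
\]
together with $\sigma_1(\Lambda_0)\leq e^{n(1+\varepsilon)\hat L^d}$, $\sigma_2(\Lambda_0)\leq e^{n(1+\varepsilon)\hat L^{d-1}}$ (from Lemma~\ref{lem:upper_semi_cont} and \eqref{eq:L_wedge2_wedged}), and $\|\Lambda_0^2\|\geq e^{2n(1-\varepsilon)\hat L^d}$ (from $\theta\in(\tilde{\mathcal B}^d_{n,E,\varepsilon})^c$) gives the bound immediately, exactly as in the paper.
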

\begin{proof}
Towards a contradiction, suppose that for some $\theta\in (\tilde{\mathcal{B}}^d_{n,E,\varepsilon})^c$ one has
\begin{align}\label{eq:assume_a1-d_small}
&|\langle v_{1,n}(\theta)\wedge\cdots \wedge v_{d,n}(\theta), w_{1,n}(\theta)\wedge\cdots \wedge w_{d,n}(\theta)\rangle|\\
&\qquad\qquad=|\det (\langle v_{j,n}(\theta), w_{k,n}(\theta)\rangle)_{1\leq j,k\leq d}|\leq e^{-6 n\varepsilon \cdot \hat{L}^d}.
\end{align}
We expand
\begin{align}\label{eq:wedge_w=wedge_v}
    w_{1,n}(\theta)\wedge\cdots\wedge w_{d,n}(\theta)=\sum_{1\leq j_1<...<j_d\leq 2d} a_{j_1,...,j_d}(\theta) \cdot v_{j_1,n}(\theta)\wedge\cdots\wedge v_{j_d,n}(\theta),
\end{align}
in which
\begin{align}
    a_{j_1,...,j_d}(\theta)=\langle w_{1,n}(\theta)\wedge\cdots\wedge  w_{d,n}(\theta), v_{j_1,n}(\theta)\wedge\cdots\wedge  v_{j_d,n}(\theta)\rangle.
\end{align}
For arbitrary $1\leq j_1<...<j_d\leq 2d$, there is  the trivial bound
\begin{align}\label{eq:aj1-jd_trivial}
    |a_{j_1,...,j_d}(\theta)|\leq 1.
\end{align}
By assumption \eqref{eq:assume_a1-d_small}, one has
\begin{align}
    |a_{1,...,d}(\theta)|\leq e^{-6n\varepsilon\cdot \hat{L}^d}.
\end{align}
We also require the following two uniform upper bounds.

\begin{lemma}\label{lem:prod_sigma_j_k_neq_d}
For $n$ large enough, and for {\it any } $1\leq k\leq 2d-1$, $k\neq d$, one has that uniformly in $\theta\in\T$,
\begin{align}
    \prod_{\ell=1}^k \sigma_{j_{\ell}}(M_{n,E}(\theta))\leq e^{n\cdot(1+\varepsilon) \hat{L}^{d-1}}.
\end{align}
Here $\varepsilon>0$ is arbitrary but fixed.
\end{lemma}
\begin{proof}
Clearly, by Lemma \ref{lem:upper_semi_cont},
\begin{align}
    \prod_{\ell=1}^k \sigma_{j_{\ell}}(M_{n,E}(\theta))\leq \prod_{j=1}^k \sigma_j(M_{n,E}(\theta))\leq e^{n(1+\varepsilon)\cdot \hat{L}^k}\leq e^{n(1+\varepsilon)\cdot \hat{L}^{d-1}},
\end{align}
where we used that for any $k\neq d$, $\hat{L}^k\leq \hat{L}^{d-1}$ due to $\hat{L}_d>0$ and $\hat{L}_{d+1}<0$.
\end{proof}

\begin{lemma}\label{lem:prod_sigma_j_k=d}
For $(j_1,...,j_d)\neq (1,...,d)$, and for $n$ large enough,
\begin{align}
    \prod_{\ell=1}^d \sigma_{j_{\ell}}(M_{n,E}(\theta))\leq e^{n(1+\varepsilon)\cdot \hat{L}^{d-1}}.
\end{align}
 uniformly in $\theta\in\T$.
\end{lemma}
\begin{proof}
    If $\sigma_{j_d}(M_{n,E}(\theta))<1$, then
    \begin{align}
        \prod_{\ell=1}^d \sigma_{j_{\ell}}(M_{n,E}(\theta))<\prod_{\ell=1}^{d-1}\sigma_{j_{\ell}}(M_{n,E}(\theta))\leq e^{n(1+\varepsilon)\cdot \hat{L}^{d-1}},
    \end{align}
    by Lemma \ref{lem:prod_sigma_j_k_neq_d}. On the other hand, 
    if $\sigma_{j_d}(M_{n,E}(\theta))\geq 1$, then
    \begin{align}
        \prod_{\ell=1}^d \sigma_{j_{\ell}}(M_{n,E}(\theta))\leq \prod_{\ell=1}^{j_d}\sigma_{\ell}(M_{n,E}(\theta))\leq e^{n(1+\varepsilon)\cdot \hat{L}^{d-1}},
    \end{align}
    where we used $j_d>d$ and Lemma \ref{lem:prod_sigma_j_k_neq_d}.
\end{proof}

By Lemmas \ref{lem:upper_semi_cont}, \ref{lem:LDT}, \ref{lem:prod_sigma_j_k=d} and estimates \eqref{eq:aj1-jd_trivial} and \eqref{eq:assume_a1-d_small}, one has  for $n$ large enough,
\begin{align}
    &\|(\wedge^d M_{n,E}^2(\theta)) (v_{1,n}(\theta)\wedge \cdots \wedge v_{d,n}(\theta))\|\\
=&\prod_{j=1}^d \sigma_j(M_{n,E}(\theta)) \|(\wedge^d M_{n,E}(\theta) (w_{1,n}(\theta))\wedge\cdots \wedge w_{d,n}(\theta))\|\\
\leq &e^{n(1+\varepsilon)\cdot \hat{L}^d} \sum_{1\leq j_1<...<j_d\leq 2d} |a_{j_1,...,j_d}(\theta)|\prod_{\ell=1}^d \sigma_{j_{\ell}}(M_{n,E}(\theta))\\
=&e^{n(1+\varepsilon)\cdot \hat{L}^d} \cdot \left(|a_{1,...,d}(\theta)|\cdot \prod_{j=1}^d\sigma_j(M_{n,E}(\theta))+\sum_{(j_1,...,j_d)\neq (1,...,d)} |a_{j_1,...,j_d}(\theta)|\cdot \prod_{\ell=1}^d\sigma_{j_{\ell}}(M_{n,E}(\theta))\right)\\
\leq &e^{n(1+\varepsilon)\cdot \hat{L}^d} \cdot \left(e^{n(1-5\varepsilon)\cdot \hat{L}^d}+C_d \cdot e^{n(1+\varepsilon)\cdot \hat{L}^{d-1}}\right)\\
\leq &e^{2n(1-2\varepsilon)\cdot \hat{L}^d},
\end{align}
where we used \eqref{def:epsilon}.
This leads to a contradiction with $\theta\in (\tilde{\mathcal{B}}^d_{n,E,\varepsilon})^c$.
\end{proof}
Combining Corollary \ref{cor:Mn2} with \eqref{eq:Omega_v_w}, we infer the following result. 

\begin{corollary}\label{cor:Mn2_2}
Under the same conditions as in Lemma \ref{lem:Mn2}, for large enough $\kappa_0$-admissible $n$, with $n\geq 8N_0$, and for any $\theta\in (\tilde{\mathcal{B}}^d_{n,E,\varepsilon})^c\cap (\mathcal{B}_{n,E,\varepsilon}^d)^c$, one has 
\begin{align}
    &|\langle v_{d+1,n}(\theta)\wedge\cdots \wedge v_{2d,n}(\theta), w_{d+1,n}(\theta)\wedge\cdots \wedge w_{2d,n}(\theta)\rangle|\\
    &\qquad\qquad=|\det (\langle v_{d+j,n}(\theta), w_{d+k,n}(\theta)\rangle)_{1\leq j,k\leq d}|\geq e^{-6n\varepsilon\cdot \hat{L}^d}.
\end{align}
\end{corollary}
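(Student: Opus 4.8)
The plan is to deduce the statement from the lower bound for the \emph{most expanding} $d$-planes already proved in Corollary \ref{cor:Mn2}, transported to the most contracting $d$-planes by the symplectic symmetry recorded in \eqref{eq:Omega_v_w}; no new estimates are needed. First I would note that the objects in the statement are well defined: since $\theta\in(\mathcal{B}^d_{n,E,\varepsilon})^c$, the computation preceding \eqref{eq:Omega_V_Mn} gives $\sigma_d(M_{n,E}(\theta))>1>\sigma_{d+1}(M_{n,E}(\theta))$, so the top and bottom $d$-dimensional singular subspaces of $M_{n,E}(\theta)$ and of $M_{n,E}^*(\theta)$ are unambiguous, and one may choose orthonormal frames $v_{1,n},\dots,v_{2d,n}$, $w_{1,n},\dots,w_{2d,n}$ with $M_{n,E}(\theta)v_{j,n}=\sigma_j(M_{n,E}(\theta))\,w_{j,n}$.

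Next, because $M^0_E$ is complex symplectic by \eqref{eq:symplectic}, so is the finite product $M_{n,E}(\theta)$, and Lemma \ref{lem:Omega_v} applies: it yields $\Omega V_{j,n}(\theta)=V_{k+1-j,n}(\theta)$ and $\Omega W_{j,n}(\theta)=W_{k+1-j,n}(\theta)$, i.e.\ \eqref{eq:Omega_V_Mn}. Applying the unitary map $\wedge^d\Omega$ to $v_{1,n}\wedge\cdots\wedge v_{d,n}$ and to $w_{1,n}\wedge\cdots\wedge w_{d,n}$ therefore returns $\pm\,v_{d+1,n}\wedge\cdots\wedge v_{2d,n}$ and $\pm\,w_{d+1,n}\wedge\cdots\wedge w_{2d,n}$ (the signs are irrelevant after taking absolute values), which is exactly the identity \eqref{eq:Omega_v_w},
\begin{align*}
&|\langle v_{1,n}\wedge\cdots\wedge v_{d,n},\, w_{1,n}\wedge\cdots\wedge w_{d,n}\rangle|\\
&\qquad =|\langle v_{d+1,n}\wedge\cdots\wedge v_{2d,n},\, w_{d+1,n}\wedge\cdots\wedge w_{2d,n}\rangle|.
\end{align*}
The only subtlety here is that if some of $\sigma_1,\dots,\sigma_d$ (or of $\sigma_{d+1},\dots,\sigma_{2d}$) coincide, the individual $v_{j,n}$ are not unique; but the $d$-planes $\mathrm{span}(v_{1,n},\dots,v_{d,n})$ and $\mathrm{span}(v_{d+1,n},\dots,v_{2d,n})$ are, and Lemma \ref{lem:Omega_v} interchanges them, so \eqref{eq:Omega_v_w} still holds.

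Finally, since $\theta\in(\tilde{\mathcal{B}}^d_{n,E,\varepsilon})^c$ and $n$ is a sufficiently large $\kappa_0$-admissible integer with $n\ge 8N_0$, Corollary \ref{cor:Mn2} gives $|\langle v_{1,n}\wedge\cdots\wedge v_{d,n},\, w_{1,n}\wedge\cdots\wedge w_{d,n}\rangle|\ge e^{-6n\varepsilon\,\hat{L}^d}$; combining this with the displayed identity transports the bound to the $(d+1,\dots,2d)$-block. To finish, I would invoke the elementary identity that for orthonormal systems the inner product of wedge products is the Gram-type determinant, namely $\langle v_{d+1,n}\wedge\cdots\wedge v_{2d,n},\, w_{d+1,n}\wedge\cdots\wedge w_{2d,n}\rangle=\det\bigl(\langle v_{d+j,n},w_{d+k,n}\rangle\bigr)_{1\le j,k\le d}$, which gives the claimed equality. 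The main (and only mild) obstacle is the bookkeeping of matching the $\Omega$-action with the correct index blocks and checking that the hypotheses of Corollary \ref{cor:Mn2} and Lemma \ref{lem:Omega_v} are all in force on $(\tilde{\mathcal{B}}^d_{n,E,\varepsilon})^c\cap(\mathcal{B}^d_{n,E,\varepsilon})^c$; the genuinely hard work is already contained in Lemma \ref{lem:Mn2} and Corollary \ref{cor:Mn2}.
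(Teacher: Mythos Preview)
Your proposal is correct and follows essentially the same approach as the paper, which simply states that the result follows by combining Corollary~\ref{cor:Mn2} with \eqref{eq:Omega_v_w}. You have correctly identified that the assumption $\theta\in(\mathcal{B}^d_{n,E,\varepsilon})^c$ is needed to make the top/bottom $d$-planes well defined and to justify the symplectic identity \eqref{eq:Omega_v_w}, while $\theta\in(\tilde{\mathcal{B}}^d_{n,E,\varepsilon})^c$ is needed to invoke Corollary~\ref{cor:Mn2}; your added details (well-definedness of the $d$-planes in the presence of repeated singular values, the Gram-determinant identity) are appropriate elaborations of what the paper leaves implicit.
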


The goal of the remainder of this section is to prove the following lemma.

\begin{lemma}\label{lem:lower}
Under the same conditions as in Lemma \ref{lem:Mn2}, for large enough $\kappa_0$-admissible $n$, with $n\geq 8N_0$, for any $\theta\in (\tilde{\mathcal{B}}^d_{n,E,\varepsilon})^c\cap (\mathcal{B}_{n,E,\varepsilon}^d)^c$, one has 
    \begin{align}
        |\det (M_{n,E}(\theta)-I_{2d})|\geq e^{n(1-8\varepsilon)\cdot \hat{L}^d}.
    \end{align}
\end{lemma}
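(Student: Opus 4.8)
The goal is to lower-bound $|\det(M_{n,E}(\theta)-I_{2d})|$ on the "good" set of $\theta$, using the singular value decomposition of $M:=M_{n,E}(\theta)$ together with the near-non-orthogonality of the most expanding/contracting $d$-planes of $M$ and $M^*$ furnished by Corollaries~\ref{cor:Mn2} and \ref{cor:Mn2_2}. The plan is to work in the orthonormal bases $\{v_{j,n}(\theta)\}$ (right singular vectors) and $\{w_{j,n}(\theta)\}$ (left singular vectors), write $M = \sum_{j=1}^{2d} \sigma_j\, w_{j,n}\langle v_{j,n},\cdot\rangle$, and compute $\det(M-I_{2d})$ by expanding in these bases. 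The heuristic is that on $(\mathcal B^d_{n,E,\varepsilon})^c$ we have $\sigma_1\geq\cdots\geq\sigma_d> 1>\sigma_{d+1}=\sigma_d^{-1}\geq\cdots$, and $\prod_{j=1}^d\sigma_j = \|\wedge^d M\| \geq e^{n(1-\varepsilon)\hat L^d}$; the dominant contribution to $\det(M-I_{2d})$ should come from the $d$ large singular values, giving a main term comparable to $\prod_{j=1}^d\sigma_j$ times the "angle" factor $\langle v_{1,n}\wedge\cdots\wedge v_{d,n}, w_{1,n}\wedge\cdots\wedge w_{d,n}\rangle$, which by Corollary~\ref{cor:Mn2} is $\geq e^{-6n\varepsilon\hat L^d}$.

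The key steps, in order: (1) Fix $\theta\in(\tilde{\mathcal B}^d_{n,E,\varepsilon})^c\cap(\mathcal B^d_{n,E,\varepsilon})^c$ and record the singular value bounds from Lemmas~\ref{lem:upper_semi_cont} and \ref{lem:LDT}: $\sigma_j \leq e^{n(1+\varepsilon)\hat L^1}\leq e^{n(1+\varepsilon)\hat L^{d-1}}$ for the growth, $\prod_{j=1}^d\sigma_j\geq e^{n(1-\varepsilon)\hat L^d}$, and crucially the "intermediate" products $\prod_{\ell=1}^k\sigma_{j_\ell}\leq e^{n(1+\varepsilon)\hat L^{d-1}}$ for any size-$k$ index set with $(j_1,\dots,j_k)\neq(1,\dots,d)$ (this is exactly Lemmas~\ref{lem:prod_sigma_j_k_neq_d}–\ref{lem:prod_sigma_j_k=d}). (2) Express $\det(M - I_{2d})$ via the Cauchy–Binet / Laplace-type expansion: change to the basis $\{v_{j,n}\}$ on the domain and $\{w_{j,n}\}$ on the range, so that $M$ becomes $\mathrm{diag}(\sigma_1,\dots,\sigma_{2d})$ composed with the unitary change-of-basis matrix $U$ with entries $U_{jk}=\langle w_{j,n},v_{k,n}\rangle$; then $\det(M-I_{2d}) = \det(\Sigma U - I)\cdot\overline{\det U'}$ up to a unimodular factor (being careful about which side the unitary sits on). Equivalently, expand $\det(M-I_{2d}) = \sum_{S\subseteq\{1,\dots,2d\}} (-1)^{|S^c|}\det(M_{S,S})$ where $M_{S,S}$ is the compression of $M$ to $\mathrm{span}\{w_j: j\in S\}$ and $\mathrm{span}\{v_j: j\in S\}$. (3) Isolate the term $S=\{1,\dots,d\}$: by the definition of singular vectors, $\det(M_{\{1,\dots,d\},\{1,\dots,d\}})$ in the $w$/$v$ bases equals $\big(\prod_{j=1}^d\sigma_j\big)\cdot\langle v_{1,n}\wedge\cdots\wedge v_{d,n},\, w_{1,n}\wedge\cdots\wedge w_{d,n}\rangle$ up to sign, which by step (1) and Corollary~\ref{cor:Mn2} is $\geq e^{n(1-\varepsilon)\hat L^d}e^{-6n\varepsilon\hat L^d} = e^{n(1-7\varepsilon)\hat L^d}$. (4) Bound every other term $\det(M_{S,S})$ by $\prod_{j\in S}\sigma_j$ times a trivial ($\leq$ some $C_d$) angle factor; for $S\neq\{1,\dots,d\}$ with $|S|=d$ use Lemma~\ref{lem:prod_sigma_j_k=d} to get $\leq C_d e^{n(1+\varepsilon)\hat L^{d-1}}$; for $|S|=k\neq d$ use Lemma~\ref{lem:prod_sigma_j_k_neq_d} to get $\leq C_d e^{n(1+\varepsilon)\hat L^{d-1}}$ as well. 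Since there are at most $2^{2d}$ such terms, the total error is $\leq C_d 2^{2d} e^{n(1+\varepsilon)\hat L^{d-1}}$. (5) Conclude via \eqref{def:epsilon}, which gives $(1-10\varepsilon)\hat L^d>(1+\varepsilon)\hat L^{d-1}$, hence $(1-7\varepsilon)\hat L^d > (1+\varepsilon)\hat L^{d-1}$ and the main term strictly dominates: for $n$ large, $|\det(M-I_{2d})| \geq \tfrac12 e^{n(1-7\varepsilon)\hat L^d}\geq e^{n(1-8\varepsilon)\hat L^d}$.

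A subtlety to handle carefully in step (2)–(3) is that the "compression determinant" $\det(M_{S,S})$ is really a bilinear pairing $\langle (\wedge^{|S|}M)(v_{j_1}\wedge\cdots), w_{k_1}\wedge\cdots\rangle$, and one must justify that for $S=\{1,\dots,d\}$ the pairing factors as (product of singular values)$\times$(Gram-type determinant $\det(\langle v_{j,n},w_{k,n}\rangle)_{1\le j,k\le d}$) — this follows because $M v_{j,n} = \sigma_j w_{j,n}$ so $(\wedge^d M)(v_{1,n}\wedge\cdots\wedge v_{d,n}) = (\prod\sigma_j)(w_{1,n}\wedge\cdots\wedge w_{d,n})$, and then pairing against $w_{1,n}\wedge\cdots\wedge w_{d,n}$ gives $\prod\sigma_j$ times a unit, but pairing against it \emph{again} after the second factor of $M$ — wait, here we only have one copy of $M$; the Gram determinant $\det(\langle v_{j,n},w_{k,n}\rangle)$ enters because the expansion of $\det(M-I)$ in mixed bases produces the overlap matrix $U$. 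The cleanest route is: let $U$ be the unitary with $U_{jk} = \langle w_{j,n}, v_{k,n}\rangle$; then in the $v$-basis on the domain and $w$-basis on the range, $I_{2d}$ has matrix $U$ (since $v_k = \sum_j \bar U_{jk} w_j$... modulo conjugation conventions) while $M$ is diagonal $\Sigma$; so $\det(M-I_{2d}) = \overline{\det U}\cdot\det(\Sigma - U^{-1})$ or similar, and $\det(\Sigma - U^*)$ expanded by multilinearity over columns yields exactly $\sum_S (\pm)\prod_{j\in S}\sigma_j\cdot(\text{minor of }U^* \text{ on }S^c)$, with the $S=\{1,\dots,d\}$ minor of $U^*$ on $\{d+1,\dots,2d\}$ equal (up to sign) to the minor of $U$ on $\{1,\dots,d\}$ by unitarity, i.e.\ $\det(\langle v_{j,n},w_{k,n}\rangle)_{1\le j,k\le d}$ up to unimodular factor — which Corollary~\ref{cor:Mn2} controls.

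\textbf{Main obstacle.} The technical heart is \textbf{step (2)–(3)}: setting up the mixed-basis expansion of $\det(M-I_{2d})$ correctly and identifying the distinguished term with the quantity $\det(\langle v_{j,n},w_{k,n}\rangle)_{1\le j,k\le d}$ that the earlier corollaries bound from below — keeping track of complex conjugates, signs, and which unitary sits on which side. Once that identification is made, steps (4)–(5) are a routine counting argument using the pre-established singular-value product bounds (Lemmas~\ref{lem:prod_sigma_j_k_neq_d}, \ref{lem:prod_sigma_j_k=d}) and the inequality \eqref{def:epsilon}. I do not expect the sign/conjugation bookkeeping to present a genuine difficulty, only a need for care; everything else is already in place in the excerpt.
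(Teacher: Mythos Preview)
Your proposal is correct and follows essentially the same approach as the paper: write $|\det(M-I_{2d})|=|\det(D_n - U_{w,n}^*U_{v,n})|$ via the SVD, expand by multilinearity into $\sum_S (\pm)\big(\prod_{j\in S}\sigma_j\big)\det\big((Q_n)_{S^c,S^c}\big)$ with $Q_n=U_{w,n}^*U_{v,n}$, isolate the main term $S=\{1,\dots,d\}$, and bound the remaining terms with Lemmas~\ref{lem:prod_sigma_j_k_neq_d}--\ref{lem:prod_sigma_j_k=d} and Hadamard. The only clean-up needed is that the minor attached to the main term is the \emph{bottom-right} block $\det\big((Q_n)_{\{d+1,\dots,2d\},\{d+1,\dots,2d\}}\big)$, which the paper controls directly via Corollary~\ref{cor:Mn2_2} (derived from Corollary~\ref{cor:Mn2} through the symplectic relation~\eqref{eq:Omega_v_w}); your unitarity/complementary-minor detour reaches the same conclusion but is unnecessary once Corollary~\ref{cor:Mn2_2} is in hand.
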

\begin{proof}
Let $U_{v,n}(\theta)=(v_{1,n}(\theta),...,v_{2d,n}(\theta))$ and $U_{w,n}(\theta)=(w_{1,n}(\theta),...,w_{2d,n}(\theta))$,
and $$D_n(\theta)=\mathrm{diag}(\sigma_1(M_{n,E}(\theta)),...,\sigma_{2d}(M_{n,E}(\theta)))$$
The singular value decomposition $M_{n,E}(\theta)U_{v,n}(\theta)=U_{w,n}(\theta)D_n(\theta)$ yields
\begin{align}
    \det(M_{n,E}(\theta)-I_{2d})=\det (D_n(\theta)-U_{w,n}^*(\theta)U_{v,n}(\theta)).
\end{align}
Note that
\begin{align}
    U_{w,n}^*(\theta)U_{v,n}(\theta)=\left(\begin{matrix} w_{1,n}^*(\theta)\\ \vdots\\ w_{2d,n}^*(\theta)\end{matrix}\right)\cdot (v_{1,n}(\theta),...,v_{2d,n}(\theta))
    &=
    \left(\langle w_{j,n}(\theta), v_{k,n}(\theta)\rangle\right)_{1\leq j,k\leq 2d}
    =:Q_n(\theta).
\end{align}
Hence, 
\begin{align}\label{eq:D-UU}
&|\det(D_n(\theta)-U_{w,n}^*(\theta)U_{v,n}(\theta))|\\
\leq &|\det(-Q_n(\theta))|\\
&+\sum_{k=1}^{2d}\,\, \sum_{1\leq j_1<...<j_k\leq 2d}\,\, \prod_{\ell=1}^k \sigma_{j_{\ell}}(M_{n,E}(\theta))\cdot \left|\det (Q_n(\theta))_{\{1,...,2d\}\setminus \{j_1,...,j_k\}, \{1,...,2d\}\setminus \{j_1,...,j_k\}}\right|,
\end{align}
recall that for a matrix $M$, $M_{B_1,B_2}$ refers to the submatrix with row numbers in the set $B_1$ and column numbers taken from the set $B_2$.
By Hadamard's inequality,
\begin{align}\label{eq:detQ_0}
|\det Q_n(\theta)|\leq \prod_{\ell=1}^{2d} \|(Q_n(\theta))_{\ell}\|\leq 1,
\end{align}
and
\begin{align}\label{eq:detQ_1}
    \left|\det (Q_n(\theta))_{\{1,...,2d\}\setminus \{j_1,...,j_k\}, \{1,...,2d\}\setminus \{j_1,...,j_k\}}\right|\leq \prod_{\ell\notin \{j_1,..,j_k\}} \| (Q_n(\theta))_{\ell}\|\leq 1
\end{align}
where $(Q_n(\theta))_{\ell}$ refers to the $\ell$-th column of $Q_n(\theta)$. Moreover,   we used the following bound
\begin{align}
    \| (Q_n(\theta))_{\ell}\|\leq \|v_{\ell,n}(\theta)\|=1.
\end{align}
Corollary \ref{cor:Mn2_2} implies that 
\begin{align}\label{eq:detQ_2}
    &\left|\det (Q_n(\theta))_{\{d+1,...,2d\}, \{d+1,...,2d\}}\right|\\
=& \left|\langle w_{d+1,n}(\theta)\wedge\cdots \wedge w_{2d,n}(\theta), v_{d+1,n}(\theta)\wedge \cdots \wedge v_{2d,n}(\theta)\rangle\right|
\geq e^{-6n\varepsilon \cdot \hat{L}^d}.
\end{align}
Combining the estimates \eqref{eq:detQ_0}, \eqref{eq:detQ_1}, \eqref{eq:detQ_2} with \eqref{eq:D-UU}, and using Lemmas \ref{lem:prod_sigma_j_k_neq_d}, \ref{lem:prod_sigma_j_k=d} and Corollary~\ref{cor:Mn2_2}, we infer that  for $\theta\in (\tilde{\mathcal{B}}_{n,E,\varepsilon}^d)^c\cap (\mathcal{B}^d_{n,E,\varepsilon})^c$ 
\begin{align}
    |\det( M_n(\theta)-I_{2d})|
\geq &\prod_{j=1}^d \sigma_j(M_n(\theta)) \cdot e^{-6n\varepsilon \cdot\hat{L}^d}-C_d\, \sup_{k=0,\ldots,2d}\, \sup_{\substack{1\leq j_1<...<j_k\leq 2d\\ (j_1,...,j_k)\neq (1,...,d)}}\, \prod_{\ell=1}^k \sigma_{j_{\ell}}(M_{n,E}(\theta))
\\ 
\geq &e^{n(1-\varepsilon)\cdot \hat{L}^d}\cdot e^{-6n\varepsilon \cdot\hat{L}^d}-C_d\, e^{n(1+\varepsilon)\cdot \hat{L}^{d-1}}\\
\geq &e^{n(1-8\varepsilon)\cdot (\hat{L}^d-\varepsilon)},
\end{align}
where we used \eqref{def:epsilon}.
This is the claimed result.
\end{proof}

Alternatively, Lemma~\ref{lem:lower} can also be proved via Schur complements.
Lemma~\ref{lem:deno} follows immediately from combining Lemmas~\ref{lem:P=M-I} and \ref{lem:lower} with the measure estimates of $\tilde{\mathcal{B}}^d_{n,E,\varepsilon}$ in 
Lemma~\ref{lem:Mn2} and $\mathcal{B}^d_{n,E,\varepsilon}$ in Lemma \ref{lem:LDT}. \qed

\subsection{Proof of Lemma \ref{lem:deno_ARC}}
Let $z=e^{2\pi i\theta}$ and $F_{E,n}(z)=f_{E,n}(\theta)$, and $g_{E,n}(z)=z^{nd} F_{E,n}(z)$. 
It is easy to check that $|g_{E,n}(0)|=1$.
Let $\mathcal{C}$ be the unit circle.
In the following, we use Herman's subharmonic argument:
\begin{align}
    \frac{1}{nd}\int_{\T} \log |f_{E,n}(\theta)|\, \mathrm{d}\theta
    =&\frac{1}{nd}\int_{\mathcal{C}} \log |F_{E,n}(z)|\,\frac{ \mathrm{d}z}{2\pi i z}\\
    =&\frac{1}{nd}\int_{\mathcal{C}} \log |z^{nd} F_{E,n}(z)|\, \frac{\mathrm{d}z}{2\pi i z}\\
    = &\frac{1}{nd}\int_{\mathcal{C}}\log |g_{E,n}(z)|\, \frac{\mathrm{d}z}{2\pi i z}\\
    \geq &\frac{1}{nd} \log |g_{E,n}(0)|=0.
\end{align}
It remains to recall that by Haro-Puig's result, see Theorem \ref{thm:HP}, for $E\in \sum_{\alpha,0}$ in the subcritical regime, 
\begin{align}
    0=L(\alpha, A_E)=\hat{L}^d(\alpha, \hat{A}_E)+\log |\hat{v}_d|.
\end{align}
This proves Lemma \ref{lem:deno_ARC}. \qed

\section{Non-selfadjoint Haro-Puig formula}\label{sec:HP}
\subsection{Proof of Corollary \ref{cor:non-sa-HP}}
We shall write $\hat{L}_j(\alpha, \hat{A}_E^{\varepsilon})$ as $\hat{L}_j(\varepsilon)$ for simplicity.
By Theorem \ref{thm:non-sa-HP} and \eqref{eq:LA_eps=LA_0}  one has that 
\begin{align}
L(\alpha, A_{E,\varepsilon})
=&\sum_{j: \hat{L}_j(\varepsilon)\geq 0} \hat{L}_j(\varepsilon)+\log |\hat{v}_{-d} e^{2\pi d\varepsilon}|\\
=&\sum_{j: \hat{L}_j(0)\geq 2\pi \varepsilon} (\hat{L}_j(0)-2\pi \varepsilon)+\log |\hat{v}_{-d}|+2\pi d\varepsilon\\
=&L(\alpha, A_{E,0})-\sum_{\substack{1\leq j\leq d\\ 0\leq \hat{L}_j(0)<2\pi \varepsilon}} \hat{L}_j(0) +2\pi \varepsilon \cdot \#\{j: 1\leq j\leq d,\, 0\leq \hat{L}_j<2\pi \varepsilon\}\\
=&L(\alpha, A_{E,0})+\sum_{\substack{1\leq j\leq d\\ 0\leq \hat{L}_j(0)<2\pi\varepsilon}} (2\pi d\varepsilon-\hat{L}_j(0))
\end{align}
This can be further be simplified in the form:
\begin{align}
L(\alpha, A_{E,\varepsilon})
=&L(\alpha, A_{E,0})+\int_0^{2\pi \varepsilon} \sum_{\substack{1\leq j\leq d\\ 0\leq \hat{L}_j(0)<2\pi \varepsilon}} \chi_{[\hat{L}_j(0), 2\pi \varepsilon]}(y)\, \mathrm{d}y\\
=&L(\alpha,A_{E,0})+\int_0^{2\pi \varepsilon} \#\{j: 1\leq j\leq d,\, 0\leq \hat{L}_j(0)<y\}\, \mathrm{d}y.
\end{align}
This implies
\begin{align}
\kappa(\alpha, A_{E,\varepsilon})=\#\{j: 1\leq j\leq d,\, 0\leq \hat{L}_j(0)\leq 2\pi d\varepsilon\},
\end{align}
whence, in particular,
\begin{align}
\kappa(\alpha, A_{E,0})=\frac{1}{2}\cdot \#\{1\le j\le 2d\;:\; \hat{L}_j(0)=0\}.
\end{align}
This is the claimed result. \qed

\medskip

The remainder of this section consists of a proof of Theorem~\ref{thm:non-sa-HP}.

\subsection{Preparation}
In the following, let $(\alpha, A)\in \T\times C^{\omega}(\T,\mathrm{Mat}(d,\C))$ be an analytic cocycle.
The following lemma and theorems from \cite{AJS} play an important role in our proof. All fractions $p/q$ are reduced (in lowest terms), and we may assume in addition that $d|q$ (we will need this at some point below). Indeed, if $|\frac{p_n}{q_n}-\alpha|\to 0$ as $n\to\infty$ for an irrational~$\alpha$, then also $|\frac{dp_n+1}{dq_n}-\alpha|\to 0$. No prime dividing $d$ can divide $dp_n+1$, so the denominator remains divisible by~$d$ after removing common factors.

\begin{definition}\cite{AJS}
We say $(\alpha,A)$ is $k$-regular if $t\to L^k (\alpha, A(\cdot+it))$ is an affine function of $t$ in a neighborhood of $0$.
\end{definition}
We will only use the definition of $1$-regular for the cocycle $(\alpha,A_{E,\varepsilon})\in \T\times C^{\omega}(\T, \mathrm{Mat}(2,\C))$.

\begin{lemma}\label{lem:Lpq_uniform_upper}\cite{AJS}*{Lemma 5.1}
If $L^k(\alpha,A)>-\infty$, then for arbitrary small $\varepsilon_0>0$ and $p/q$ sufficiently close to $\alpha$, one has uniformly for all $\theta$ that
\begin{align}
L^k(p/q, A, \theta)\leq L^k(\alpha, A)+\varepsilon_0.
\end{align}
\end{lemma}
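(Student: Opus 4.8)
\textbf{Plan of proof for Lemma \ref{lem:Lpq_uniform_upper}.}

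The strategy is a standard subadditivity/upper-semicontinuity argument, carried out uniformly in the base point $\theta$, exactly as in the proof of Lemma 5.1 of \cite{AJS}. I will state it for $k$-regular or general $k$; the point is that we only need a one-sided (upper) bound, which does not use regularity. First I would fix a strip width $\delta>0$ such that $A$ extends holomorphically and boundedly to $|\Im z|\le\delta$, so that $\log\|\wedge^k A_n(\alpha,\theta)\|$ is (pluri)subharmonic in the relevant variables and $L^k(\alpha,A)=\lim_n \tfrac1n\int_{\T}\log\|\wedge^k A_n(\alpha,\theta)\|\,d\theta=\inf_n(\cdots)$ by subadditivity of $n\mapsto \int_{\T}\log\|\wedge^k A_n\|$. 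Choose $n=n(\varepsilon_0)$ large enough that $\tfrac1n\int_{\T}\log\|\wedge^k A_n(\alpha,\theta)\|\,d\theta< L^k(\alpha,A)+\varepsilon_0/2$.

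Next I would transfer this averaged bound to a pointwise-in-$\theta$ bound at rational frequency. For $\alpha=p/q$ with $q$ large, $A_q(p/q,\theta)$ is a periodic product, and $L^k(p/q,A,\theta)=\tfrac1q\log\rho(\wedge^k A_q(p/q,\theta))\le \tfrac1q\log\|\wedge^k A_q(p/q,\theta)\|$ by \eqref{def:Lpq} and the Gelfand spectral radius inequality $\rho(T)\le\|T\|$. Then I would break the length-$q$ product into blocks of length $n$: writing $q=mn+s$ with $0\le s<n$, submultiplicativity of $\|\wedge^k(\cdot)\|$ gives
\begin{align}
\log\|\wedge^k A_q(p/q,\theta)\|\le \sum_{j=0}^{m-1}\log\|\wedge^k A_n(p/q,\theta+jn\,p/q)\|+\log\|\wedge^k A_s(p/q,\theta+mn\,p/q)\|.
\end{align}
The last term is $O(1)$ uniformly. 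For the main sum, the key observation is that averaging $\tfrac1q$ times this sum is a Riemann sum for $\tfrac1n\int_{\T}\log\|\wedge^k A_n(\alpha,\cdot)\|$: as $p/q\to\alpha$ the orbit $\{\theta+jn\,p/q\}$ equidistributes, and the integrand $\log\|\wedge^k A_n(\alpha,\cdot)\|$ is continuous (indeed subharmonic, hence bounded above) on $\T$, and $\log\|\wedge^k A_n(p/q,\cdot)\|\to\log\|\wedge^k A_n(\alpha,\cdot)\|$ uniformly as $p/q\to\alpha$ by continuity of $A$ in the frequency. Hence for $p/q$ close enough to $\alpha$,
\begin{align}
\frac{1}{q}\sum_{j=0}^{m-1}\log\|\wedge^k A_n(p/q,\theta+jn\,p/q)\|\le \frac1n\int_{\T}\log\|\wedge^k A_n(\alpha,\cdot)\|\,d\theta+\frac{\varepsilon_0}{4}< L^k(\alpha,A)+\frac{3\varepsilon_0}{4},
\end{align}
uniformly in $\theta$; adding the $O(1)/q$ remainder term finishes the bound $L^k(p/q,A,\theta)\le L^k(\alpha,A)+\varepsilon_0$ for all $\theta$.

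The main obstacle is making the ``Riemann sum'' step genuinely uniform in $\theta$: one must control the discrepancy of the orbit $\{\theta+jn\,p/q\}_{j<m}$ against the measure $d\theta$ against a function that is merely subharmonic (continuous but with possibly steep logarithmic behavior), and do so with a rate that does not depend on $\theta$. This is handled by the standard device of replacing the integrand by a continuous majorant via the submean-value property of subharmonic functions (convolving $\log\|\wedge^k A_n(\alpha,\cdot+iy)\|$ with a bump in $y$ gives a continuous upper bound whose $\T$-average is within $\varepsilon_0/8$ of $L^k$), together with Weyl equidistribution with an effective rate, or — more simply — by quoting the subadditivity argument in \cite{AJS}*{proof of Lemma 5.1} verbatim, since the only structural inputs are submultiplicativity of exterior powers, the spectral-radius bound, and joint continuity of $(\alpha,\theta)\mapsto A$. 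I would therefore present the block decomposition explicitly and then refer to \cite{AJS} for the uniform passage to the limit.
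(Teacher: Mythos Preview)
The paper does not give its own proof of this lemma; it is quoted directly as Lemma 5.1 of \cite{AJS}. Your proposal reproduces the standard subadditivity argument from that reference: fix a good scale $n$, bound $L^k(p/q,A,\theta)$ via the spectral-radius inequality and a block decomposition of $A_q(p/q,\theta)$ into length-$n$ pieces, and control the resulting average by comparison with the irrational Birkhoff sum, using a continuous (subharmonic) majorant to handle possible $-\infty$ values. This is exactly the approach of \cite{AJS}, and your identification of the one nontrivial point---uniformity in $\theta$ of the Riemann-sum step when $\log\|\wedge^k A_n\|$ is merely upper semicontinuous---together with the subharmonic-majorant fix, is correct. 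Nothing further is needed.
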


\begin{theorem}\label{thm:Lpq_regular_no_dev}\cite{AJS}*{Theorem 5.2}
Assume $L^k(\alpha,A)>-\infty$ and that $(\alpha,A)$ is $k$-regular. Then for arbitrary small $\varepsilon_0>0$ and $p/q$ sufficiently close to $\alpha$, one has uniformly for all $\theta\in \T$,
\begin{align}
|L^k(p/q, A, \theta)-L^k(\alpha, A)|\leq \varepsilon_0.
\end{align}
\end{theorem}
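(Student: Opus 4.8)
The upper bound $L^k(p/q,A,\theta)\le L^k(\alpha,A)+\varepsilon_0$ is already Lemma~\ref{lem:Lpq_uniform_upper}, so the plan is to establish the matching lower bound $L^k(p/q,A,\theta)\ge L^k(\alpha,A)-\varepsilon_0$, uniformly in~$\theta$, by complexifying the frequency-$p/q$ exponent. Since $A$ is holomorphic on a strip $\{|\Im\zeta|<\eta\}$, I would work with $\ell(\zeta):=L^k(p/q,A,\zeta)=\tfrac1q\log\rho\big(\wedge^k A_q(p/q,\zeta)\big)$, which (in the case relevant here, $A$ taking values in $\mathrm{SL}(2,\C)$) is subharmonic on that strip: it is the decreasing limit of the subharmonic functions $\tfrac1{nq}\log\|(\wedge^k A_q(p/q,\zeta))^n\|$, and it is finite because $\det\wedge^k A_q$ never vanishes. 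Its average over a horizontal circle is $\int_\T\ell(\theta+it)\,\mathrm d\theta=L^k(p/q,A(\cdot+it))$, a convex function of~$t$. (The general, possibly non-invertible, case should reduce to this one by approximation, the sub-mean value inequality being insensitive to $-\infty$ values on polar sets.)

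Next I would assemble two ingredients. First, $k$-regularity provides $t_0\in(0,\eta)$ with $\Lambda(t):=L^k(\alpha,A(\cdot+it))=\Lambda(0)+ct$ affine on $[-t_0,t_0]$. Applying Lemma~\ref{lem:Lpq_uniform_upper} to the analytic cocycles $A(\cdot+it)$ at the three heights $t\in\{0,\pm t_0\}$, for any prescribed $\delta>0$ and $p/q$ close enough to $\alpha$ one gets $\ell(\theta+it)\le\Lambda(t)+\delta$ at those heights; since $\zeta\mapsto\Lambda(0)+c\,\Im\zeta+\delta$ is harmonic and $\ell$ is subharmonic, the maximum principle on the rectangles $\T\times[0,t_0]$ and $\T\times[-t_0,0]$ upgrades this to
\begin{align}\label{eq:plan-strip-majorant}
\ell(\zeta)\ \le\ \Lambda(0)+c\,\Im\zeta+\delta\qquad\text{for }|\Im\zeta|\le t_0 .
\end{align}
Second, I would use the nontrivial half of the continuity of the Lyapunov exponent at the irrational point $\alpha$, namely $\liminf_{p/q\to\alpha}L^k(p/q,A)\ge L^k(\alpha,A)$ (for $\mathrm{SL}(2,\C)$ this is the Bourgain--Jitomirskaya theorem and its extensions); together with \eqref{eq:plan-strip-majorant} at $t=0$ it gives $L^k(p/q,A)=\Lambda(0)+O(\delta)$ for $p/q$ close to $\alpha$.

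With these in hand the endgame is a short potential-theoretic estimate. Set $g(\zeta):=\Lambda(0)+c\,\Im\zeta+\delta-\ell(\zeta)$, a nonnegative superharmonic function on $S:=\T\times(-t_0,t_0)$ by \eqref{eq:plan-strip-majorant}. Its line-average $G(t):=\int_\T g(\theta+it)\,\mathrm d\theta$ is concave and nonnegative, with $G(0)<2\delta$ for $p/q$ close enough; concavity then forces $G(t)<4\delta$ on all of $[-t_0,t_0]$, so $\iint_S g\,\mathrm dA=\int_{-t_0}^{t_0}G(t)\,\mathrm dt<8t_0\delta$. For any $\theta_0\in\T$ the disk $D_r(\theta_0)$ with $r:=t_0/4$ lies inside $S$, and the sub-mean value inequality for $\ell$, using that $\int_{D_r(\theta_0)}\Im\zeta\,\mathrm dA=0$ (the disk is centered on the real axis), gives
\begin{align}
\ell(\theta_0)\ \ge\ \frac1{\pi r^2}\int_{D_r(\theta_0)}\ell\,\mathrm dA\ =\ \Lambda(0)+\delta-\frac1{\pi r^2}\int_{D_r(\theta_0)}g\,\mathrm dA\ \ge\ \Lambda(0)-\frac{128\,\delta}{\pi t_0}.
\end{align}
Choosing $\delta:=\pi t_0\varepsilon_0/128$ at the outset (legitimate, as $t_0$ depends only on $\alpha$, $A$, $k$) and then taking $p/q$ sufficiently close to $\alpha$ would give $\ell(\theta_0)\ge L^k(\alpha,A)-\varepsilon_0$, uniformly in $\theta_0$.

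The hard part will be the inputs rather than the endgame. First one must verify that $\ell$ is genuinely subharmonic and finite --- routine when $A$ is invertible (the situation in this paper) but needing an approximation argument in general. Second, and more seriously, the lower semicontinuity $\liminf_{p/q\to\alpha}L^k(p/q,A)\ge L^k(\alpha,A)$ is the deepest ingredient; it cannot be dispensed with, since a lower bound for $\iint_S g$ requires control of $G$ at an interior height, and control at the endpoints alone does not suffice for a concave function. The maximum-principle majorant \eqref{eq:plan-strip-majorant}, the concavity bound for $G$, and the final sub-mean value step are all elementary.
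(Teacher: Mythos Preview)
The paper does not supply a proof of this statement; it is quoted from \cite{AJS}. Your plan correctly isolates the three inputs --- the uniform upper bound of Lemma~\ref{lem:Lpq_uniform_upper}, the affine behaviour of $t\mapsto L^k(\alpha,A(\cdot+it))$ coming from $k$-regularity, and the continuity $L^k(p/q,A)\to L^k(\alpha,A)$ at irrational~$\alpha$ --- and the passage to the nonnegative superharmonic function $g$ with small strip integral is fine. The final step, however, is wrong in a decisive way: you invoke ``the sub-mean value inequality for~$\ell$'' to obtain $\ell(\theta_0)\ge\frac1{\pi r^2}\int_{D_r(\theta_0)}\ell\,\mathrm dA$, but $\ell$ is \emph{subharmonic}, so the sub-mean value inequality reads $\ell(\theta_0)\le\frac1{\pi r^2}\int_{D_r(\theta_0)}\ell\,\mathrm dA$, which gives no lower bound on~$\ell(\theta_0)$. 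Equivalently, your $g$ is superharmonic, hence $g(\theta_0)\ge\frac1{\pi r^2}\int_{D_r}g$, and a small $L^1$ norm of a nonnegative superharmonic function does not force small pointwise values.

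This is not a repairable sign slip; the potential-theoretic skeleton you build is genuinely too weak to close the argument. A concrete obstruction: set $\ell(\zeta)=\max\bigl(0,\;L+\epsilon\log\lvert e^{2\pi iq\zeta}-e^{2\pi iq\theta_0}\rvert\bigr)$ with $L>0$ and $\epsilon=\delta/(2\pi qt_0)$. This is subharmonic and even $1/q$-periodic in~$\theta$; one checks directly that $\ell\le L+\delta$ on the whole strip $\{|\Im\zeta|\le t_0\}$ (so your majorant \eqref{eq:plan-strip-majorant} holds with $c=0$) and $\int_\T\ell(\theta)\,\mathrm d\theta\ge L$, yet $\ell(\theta_0)=0$. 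Thus the hypotheses you assemble --- harmonic majorant within~$\delta$, line average within~$\delta$ of the majorant, hence $\iint_S g<8t_0\delta$ --- do \emph{not} imply $\ell(\theta_0)\ge L-O(\delta)$. The argument in \cite{AJS} must therefore use more than this abstract framework; your ``endgame'' is not, contrary to the closing remark, the easy part.
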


\begin{theorem}\label{thm:conti_L_alpha}\cite{AJS}*{Theorem 1.5}
The functions $(\T, C^{\omega}(\T, \mathrm{Mat}(d,\C)))\ni (\alpha, A)\to L_k(\alpha,A)\in [-\infty, \infty)$ are continuous at any $(\alpha', A')$ with $\alpha'\in \R\setminus \Q$.
\end{theorem}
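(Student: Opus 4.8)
The final statement in the excerpt, Theorem~\ref{thm:conti_L_alpha}, is quoted verbatim from \cite{AJS}*{Theorem 1.5}, so the intended ``proof'' here is simply a citation. What follows is therefore a sketch of how one would \emph{reprove} this continuity statement, should one wish to make the paper self-contained; the argument is the one underlying \cite{AJS}.

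\textbf{Plan.} The statement to establish is that the finite-dimensional Lyapunov exponents $L_k(\alpha,A)$ are jointly continuous at every point $(\alpha',A')$ with $\alpha'$ irrational, as functions into $[-\infty,\infty)$. The plan is to prove this first for the top exponent $L^1 = L_1$, then bootstrap to $L^k$ via exterior powers $\wedge^k A$ (noting $L^k(\alpha,A) = L_1(\alpha,\wedge^k A)$ and that $(\alpha,A)\mapsto(\alpha,\wedge^k A)$ is continuous $\T\times C^\omega(\T,\mathrm{Mat}(d,\C))\to \T\times C^\omega(\T,\mathrm{Mat}(\binom dk,\C))$), and finally get $L_k = L^k - L^{k-1}$. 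For $L^1$, the upper semicontinuity is the easy half: by subadditivity $L^1(\alpha,A) = \inf_n \frac1n\int_\T \log\|A_n(\alpha,\theta)\|\,d\theta$, and for each fixed $n$ the integrand $\frac1n\int_\T\log\|A_n(\alpha,\theta)\|\,d\theta$ depends continuously (even real-analytically) on $(\alpha,A)$ in the $C^\omega$ topology; an infimum of continuous functions is upper semicontinuous, and this holds at all $\alpha$, rational or not. The real work is lower semicontinuity at irrational $\alpha'$.

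For lower semicontinuity, the first step is to reduce to cocycles with $L^1(\alpha',A')>-\infty$: if $L^1(\alpha',A')=-\infty$ there is nothing to prove (the target is lower-bounded trivially in the extended topology on $[-\infty,\infty)$ — one argues $\liminf \ge -\infty$ is automatic, so in fact continuity into $[-\infty,\infty)$ at such points only requires checking $\limsup\le -\infty$, which is the upper semicontinuity already shown). So assume $L^1(\alpha',A')>-\infty$, i.e.\ $A'(\theta)$ is not identically singular. The key tool is the Avalanche Principle (Theorem~\ref{thm:AP}) combined with large-deviation control at a fixed finite scale. Concretely: fix a large scale $n_0$ at which $\frac1{n_0}\int_\T\log\|A'_{n_0}(\alpha',\theta)\|\,d\theta$ is close to $L^1(\alpha',A')$ and at which a suitable finite-scale large-deviation / regularity estimate holds (here one uses that $\log\|A'_{n_0}(\alpha',\cdot)\|$ is a fixed subharmonic-type function of bounded variation, so the set where it is much smaller than its average has small measure — a Łojasiewicz-type inequality gives this uniformly). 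Since $\alpha'$ is irrational, for $(\alpha,A)$ close to $(\alpha',A')$ one has, at scale $n_0$, the same quantitative lower bounds on $\|A_{n_0}(\alpha,\theta)\|$ off a small-measure set, and the same on products of consecutive blocks (using continuity of $\theta\mapsto A_{n_0}(\alpha,\theta)$ jointly in $(\alpha,A)$ and the fact that orbit points $\{\theta + jn_0\alpha\}$ for irrational $\alpha$ do not collapse). Telescoping the product $A_n(\alpha,\theta)$ into $\sim n/n_0$ blocks of length $n_0$ and applying the AP yields a lower bound $\frac1n\int_\T\log\|A_n(\alpha,\theta)\|\,d\theta \ge L^1(\alpha',A') - \delta(n_0) - o(1)$ as $n\to\infty$, hence $L^1(\alpha,A)\ge L^1(\alpha',A')-\delta(n_0)$; letting $n_0\to\infty$ forces $\liminf_{(\alpha,A)\to(\alpha',A')} L^1(\alpha,A)\ge L^1(\alpha',A')$.

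\textbf{Main obstacle.} The delicate point is making the finite-scale large-deviation estimate \emph{uniform} in the perturbed data $(\alpha,A)$ and \emph{robust} against the frequency change $\alpha'\rightsquigarrow\alpha$ — this is exactly where irrationality of $\alpha'$ is essential (for $\alpha'$ rational the finite orbit has exact recurrence, $\|q\alpha'\|=0$, and the blocks in the AP telescoping are rigidly correlated, so the argument and in fact the conclusion fail). One must quantify that $\|A_{n_0}(\alpha,\theta)\|$ stays subharmonic-with-controlled-Riesz-mass and that the ``bad'' set where the AP hypotheses (the ratio $\sigma_1/\sigma_2 > 1/\kappa$ and the non-degeneracy $\|g_jg_{j-1}\|/(\|g_j\|\|g_{j-1}\|)>\varepsilon$) fail has measure bounded uniformly in a neighborhood; the cleanest route is to phase everything through the Cartan/Łojasiewicz estimates for the single analytic function $\det A'_{n_0}$ and its perturbations, exactly as in the proof of \cite{AJS}*{Theorem 1.5} and \cite{GS1}. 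Once that uniformity is in hand, the AP telescoping and the passage $n_0\to\infty$ are routine, and the exterior-power and difference ($L_k = L^k - L^{k-1}$) reductions are immediate. For the purposes of the present paper we simply invoke \cite{AJS}*{Theorem 1.5}.
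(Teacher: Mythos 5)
The paper offers no proof of this statement: it is imported verbatim from \cite{AJS}*{Theorem 1.5} and used as a black box (notably in Section~\ref{sec:HP} to pass from rational approximants to irrational $\alpha$). You correctly identify this, and since you too ultimately defer to the citation, your submission is acceptable as-is. Your accompanying sketch, however, follows the Goldstein--Schlag/Duarte--Klein route (finite-scale large deviations plus the Avalanche Principle, Theorem~\ref{thm:AP}), which is \emph{not} the argument underlying \cite{AJS}: there the proof runs through subharmonicity and almost invariance of $\theta\mapsto\frac1n\log\|A_n(\theta)\|$ in the spirit of Bourgain--Jitomirskaya, convexity of $\epsilon\mapsto L^k(\alpha,A(\cdot+i\epsilon))$, and an inductive treatment of the exterior powers together with the possibility $L_k=-\infty$; your reductions $L^k(\alpha,A)=L_1(\alpha,\wedge^kA)$ and $L_k=L^k-L^{k-1}$, and the disposal of the $L^1=-\infty$ case via upper semicontinuity alone, do match the standard reductions and are fine. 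The one substantive caveat about your route: the large-deviation estimate you need must be uniform in $(\alpha,A)$ in a neighborhood of an \emph{arbitrary} irrational $\alpha'$, including Liouville frequencies, whereas the LDT actually available in this paper (Lemma~\ref{lem:LDT}, from \cite{HZ}) is stated only for $\beta(\alpha)=0$; so you cannot borrow the paper's LDT, and producing the required frequency-uniform version is essentially the whole difficulty, which you flag as ``the main obstacle'' but do not resolve. As a blueprint for a self-contained proof your sketch is therefore incomplete at exactly the critical step, but as a justification for the citation it is adequate and consistent with how the paper itself proceeds.
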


The case of $L_k=-\infty$ does not arise here. 
Since there are only finitely many $\varepsilon$'s such that $(\alpha, A_{E,\varepsilon})$ is {\it not} $1$-regular, it suffices to prove Theorem \ref{thm:non-sa-HP} for any $\varepsilon$ at which $(\alpha, A_{E,\varepsilon})$ is $1$-regular.
Also, due to \eqref{eq:LA_eps=LA_0}, there are only finitely many values of $\varepsilon$ for which some of the exponents $\hat{L}_j(\alpha, \hat{A}_E^{\varepsilon})$ vanish. Hence it suffices to assume that {\it none} of the $\hat{L}_E^{\varepsilon}(\alpha, \hat{A}_E^{\varepsilon})$ are zero. The identity \eqref{eq:non-sa_HP} for arbitrary $\varepsilon$ will then follow from continuity in $\varepsilon$ by Theorem \ref{thm:conti_L_alpha}.
We will thus make the following assumption below.
\begin{assumption}\label{assume_regular}
Let $\varepsilon$ be such that $(\alpha, A_{E,\varepsilon})$ is $1$-regular.    
\end{assumption}

\subsection{Proof of Theorem \ref{thm:non-sa-HP}}
The goal is to prove the following rational frequency $\alpha=p/q$ version of Theorem \ref{thm:non-sa-HP}.
Let $k_0\in \{1,...,2d\}$ be such that
\begin{align}
\hat{L}_1(\alpha,\hat{A}_{E}^{\varepsilon})\geq\cdots \geq \hat{L}_{k_0}(\alpha,\hat{A}_{E}^{\varepsilon})>0> \hat{L}_{k_0+1}(\alpha,\hat{A}_{E}^{\varepsilon})\geq \cdots \geq \hat{L}_{2d}(\alpha,\hat{A}_{E}^{\varepsilon}),
\end{align}
assuming that $\hat{L}_1(\alpha,\hat{A}_{E}^{\varepsilon})>0$. We allow for  all exponents to be positive here. We will deal with the remaining case $\hat{L}_1(\alpha,\hat{A}_{E}^{\varepsilon})<0$ at the end of this section (the vanishing case can be ignored, see above).

\begin{lemma}\label{lem:non-sa-HP_pq}
For any $\varepsilon$ such that $(\alpha, A_{E}^{\varepsilon})$ is $1$-regular and for arbitrary small $\varepsilon_0$ such that
$$0<\varepsilon_0<\frac{1}{10}\min(\hat{L}_{k_0}(\alpha,\hat{A}_{E}^{\varepsilon}), -\hat{L}_{k_0+1}(\alpha,\hat{A}_{E}^{\varepsilon})),$$ 
the following holds for some constant $C>0${\em:}
\begin{align}
\big|L^1(\alpha,A_{E,\varepsilon})-\hat{L}^{k_0}(\alpha, \hat{A}_{E}^{\varepsilon}) -\log|\hat{v}_{d}e^{2\pi d\varepsilon}| \big|\leq C\varepsilon_0\left( L^{k_0}(\alpha,\hat{A}_{E}^{\varepsilon})+\big|\log|\hat{v}_{-d}e^{2\pi d\varepsilon}|\big|\right).
\end{align}
\end{lemma}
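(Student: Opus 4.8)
\textbf{Proof plan for Lemma \ref{lem:non-sa-HP_pq}.} The strategy is to prove the identity at rational frequency $\alpha = p/q$ (with $d\mid q$) by establishing an Aubry-type duality between two finite-dimensional operators and computing both sides via a finite-scale Thouless formula. First I would fix $\theta$ and work with the $q$-step transfer matrix $\hat A_{q,E}^\varepsilon(p/q,\theta)$, whose Lyapunov data is governed by the spectral radii of its exterior powers (see \eqref{def:Lpq}). The determinant $\det(\hat A_{q,E}^\varepsilon(p/q,\theta) - z\,I_{2d})$, or equivalently $\det(M_{q,E}^\varepsilon(\theta) - z\,I_{2d})$ after the conjugation \eqref{eq:M=A2}, is (up to the explicit factor $(\hat v_{-d}e^{2\pi d\varepsilon})^{q}$) the characteristic polynomial of the finite-scale periodic operator $P_{q/d}(\theta)$ restricted appropriately — this is the analogue of Lemma \ref{lem:P=M-I} at rational frequency, and it identifies the relevant determinant with $\det(\hat H^{(q)}_{p/q,\theta,\varepsilon} - E)$ for a $q\times q$ (scalar) periodic Hamiltonian. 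Next I would introduce the \emph{finite-scale dual}: on $\ell^2(\Z/q\Z)$, conjugating $\hat H^{(q)}_{p/q,\theta,\varepsilon}$ by the discrete Fourier transform (the characters of the cyclic group $\Z/q\Z$) turns the long-range cosine operator into the finite-scale Schrödinger-type operator $H^{(q)}_{p/q,\cdot,\varepsilon}$ with potential $v(\cdot + i\varepsilon)$; this is the key new ingredient replacing the infinite-volume Aubry duality. The two finite operators are then unitarily equivalent after an average over the dual phase, so they share the same normalized eigenvalue-counting measure — a genuine probability measure since everything is finite-dimensional — even though $\varepsilon\neq 0$ and no spectral theorem is available.

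Second, I would prove a \emph{finite-scale Thouless formula} on both sides. For the scalar Schrödinger side, $\frac1q\log|\det(H^{(q)}_{p/q,\theta,\varepsilon} - E)| = \int \log|E-E'|\,\mathrm d\nu^{(q)}_\theta(E')$ trivially (it's a finite product over eigenvalues), and averaging over $\theta$ and using \eqref{def:Lpq} gives $L^1(p/q, A_{E,\varepsilon}) = \int\log|E-E'|\,\mathrm d\bar\nu^{(q)}(E')$ up to the controlled error from Lemma \ref{lem:Lpq_uniform_upper} / Theorem \ref{thm:Lpq_regular_no_dev} (this is where $1$-regularity enters, to rule out large-deviation-type drops in the finite-scale exponent). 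For the block/dual side, the analogous computation, together with the determinant identity from the previous paragraph, gives $\hat L^{k_0}(p/q, \hat A_E^\varepsilon) + \log|\hat v_{-d}e^{2\pi d\varepsilon}| = \int \log|E-E'|\,\mathrm d\widehat\nu^{(q)}(E')$, where $k_0$ appears because the sum $\sum_{j:\hat L_j\ge 0}\hat L_j$ is exactly $\hat L^{k_0}$ and, at finite scale with $p/q$ close to $\alpha$ and none of the $\hat L_j$ vanishing, the singular-value ordering of $M_{q,E}^\varepsilon$ matches that of the infinite-scale object up to $\varepsilon_0$. Since $\bar\nu^{(q)} = \widehat\nu^{(q)}$ by the duality, subtracting the two Thouless formulas yields the claimed inequality at $\alpha = p/q$, with the error $C\varepsilon_0(L^{k_0}(\alpha,\hat A_E^\varepsilon) + |\log|\hat v_{-d}e^{2\pi d\varepsilon}||)$ absorbing: (i) the regularity/continuity errors in passing from $p/q$ to $\alpha$ in the two exterior-power exponents, and (ii) the mismatch between $\frac1q\log\|\wedge^{k}M_q\|$ and $\frac1q\log\rho(\wedge^k M_q)$.

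\textbf{The main obstacle.} I expect the delicate point to be controlling the \emph{denominator on the block side} uniformly in $\theta$ — i.e., ensuring $\frac1q\log|\det(M_{q,E}^\varepsilon(\theta)-I_{2d})|$ is close to $\hat L^{k_0}(\alpha,\hat A_E^\varepsilon)$ and not much smaller, for $\alpha=p/q$ close enough to the irrational $\alpha$. This is precisely the finite-scale analogue of the difficulty resolved by Lemma \ref{lem:lower} in the irrational case, but now one wants it in $L^1(\mathrm d\theta)$ (or in the Thouless-averaged sense) rather than on a positive-measure set, and one must prevent the subtraction of $I_{2d}$ from killing the top $d$-plane contribution. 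The combination of $1$-regularity (Assumption \ref{assume_regular}, which gives two-sided control of the finite-scale top exponent via Theorem \ref{thm:Lpq_regular_no_dev}) with the symplectic structure \eqref{eq:symplectic} of $M_{q,E}^\varepsilon$ at $\varepsilon=0$ — extended perturbatively, or handled directly via the explicit conjugation \eqref{def:ME^eps2} to the $\varepsilon=0$ symplectic matrix — should be what makes the error collapse to the stated $O(\varepsilon_0)$ form; after that, the passage from rational to irrational $\alpha$ in the full Theorem \ref{thm:non-sa-HP} is routine by Theorem \ref{thm:conti_L_alpha} and \eqref{eq:LA_eps=LA_0}, together with the remark that only finitely many $\varepsilon$ are non-regular or produce a vanishing $\hat L_j$.

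\textbf{The case $\hat L_1(\alpha,\hat A_E^\varepsilon) < 0$.} Here $k_0 = 0$, every $\hat L_j^\varepsilon < 0$, and the right-hand side of \eqref{eq:non-sa_HP} is just $\log|\hat v_{-d}e^{2\pi d\varepsilon}|$; I would handle this separately at the end by the same finite-scale argument, noting that in this regime $\det(M_{q,E}^\varepsilon(\theta)-I_{2d})$ is dominated by the constant term $\det(I_{2d})=1$ up to $e^{o(q)}$, so the Thouless integral on the block side reduces to the pure scaling factor, matching $L^1(\alpha,A_{E,\varepsilon})$ via the scalar Thouless formula as before.
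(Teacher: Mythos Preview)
Your high-level architecture (rational approximation, finite-scale duality via characters of $\Z_q$, determinant identity of Lemma~\ref{lem:P=M-I} type, separate handling of $\hat L_1<0$) matches the paper's. But the plan has a real gap at the step you flag as the ``main obstacle,'' and a related vagueness on the Schr\"odinger side. The Fourier transform on $\Z_q$ conjugates $\hat H_{p/q,\theta,\varepsilon}$ to the \emph{twisted} operator $\tilde H_{p/q,\theta,\varepsilon}$ (hoppings $e^{\pm 2\pi i\theta}$, potential $v(np/q+i\varepsilon)$), not to $H_{p/q,\theta,\varepsilon}$; these agree only at $\theta=0$. So the averaged eigenvalue-counting measures of $H$ and $\hat H$ are \emph{not} equal in general, and your ``subtract the two Thouless formulas'' step does not go through as stated. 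What the paper actually uses is a Chambers-type formula: $\hat D_{p/q,E,\varepsilon}(\theta)=\tilde D_{p/q,E,\varepsilon}(\theta)=D_{p/q,E,\varepsilon}(0)+2(-1)^{q+1}(\cos(2\pi q\theta)-1)$. This explicit $\theta$-dependence is what lets one compute $\int_\T\log|\hat D|\,\mathrm d\theta$ via Jensen's formula and tie it to $L^1(\alpha,A_{E,\varepsilon})$ through the eigenvalues of $A_{q,E,\varepsilon}(0)$ (this is where $1$-regularity of the Schr\"odinger cocycle enters, at a \emph{single} phase $\theta=0$).

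Your proposed resolution of the block-side obstacle --- $1$-regularity plus symplectic structure --- does not work: Assumption~\ref{assume_regular} is only about $(\alpha,A_{E,\varepsilon})$, giving no two-sided control of $\hat L^{k_0}(p/q,\hat A_E^\varepsilon,\theta)$, and the symplectic identity~\eqref{eq:symplectic} (valid only at $\varepsilon=0$) does not prevent eigenvalues of $\hat A_{q,E}^\varepsilon$ from sitting near~$1$. The paper instead controls $\frac1q\int_\T\log|\det(\hat A_{q,E}^\varepsilon-I_{2d})|\,\mathrm d\theta$ by a good-set/bad-set decomposition: on the good set $\mathcal G=\{\theta:\frac1q\sum_{j\le k_0}\log|\hat\lambda_j(\theta)|>\hat L^{k_0}-\delta\}$ one has $\prod_j|\hat\lambda_j-1|\asymp\prod_{j\le k_0}|\hat\lambda_j|$ by upper-semicontinuity alone, and $|\mathcal G^c|=O(\varepsilon_0)$ by continuity of $\hat L^{k_0}$ at irrational $\alpha$; on the bad set $\mathcal G^c$ the \emph{lower} bound for $\int_{\mathcal G^c}\log|\det(\hat A_q-I)|\,\mathrm d\theta$ again comes from the Chambers formula, rewriting the integrand as $\log|D_{p/q,E,\varepsilon}(0)+2(-1)^{q+1}\cos(2\pi q\theta)|-q\log|\hat v_{-d}e^{2\pi d\varepsilon}|$ and using either $|D(0)|\ge 3$ or an $L^2$ bound on $\log|\cdot|$. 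Without this Chambers-formula input you have no mechanism to control the bad set from below, which is precisely the difficulty you identify but do not resolve.
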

Clearly, Theorem \ref{thm:non-sa-HP} follows from taking $\varepsilon_0\to 0^+$ in Lemma \ref{lem:non-sa-HP_pq}.
To prove Lemma \ref{lem:non-sa-HP_pq}, we consider the following three periodic operators acting on $\ell^2(\Z_q)$: 
\begin{align}\label{def:Hp/q_1}
(H_{p/q,\theta,\varepsilon}u)_{[n]} &=u_{[n+1]}+u_{[n-1]}+v(\theta+i\varepsilon+np/q) u_{[n]},
\end{align}
\begin{align}
\label{def:Hp/q_2}
(\tilde{H}_{p/q,\theta,\varepsilon}u)_{[n]} &=e^{2\pi i\theta} u_{[n+1]}+e^{-2\pi i\theta} u_{[n-1]}+v(np/q+i\varepsilon) u_{[n]},
\end{align}
and
\begin{align}\label{def:tH_p/q}
(\hat{H}_{p/q,\theta,\varepsilon}\phi)_{[n]}=\sum_{|k|=1}^d \hat{v}_k e^{-2\pi k\varepsilon} \phi_{[n-k]}+2\cos(2\pi(\theta+np/q)) \phi_{[n]}.
\end{align}
It is clear that we have 
\begin{fact}\label{fact:Hpq=tHpq_theta=0}
$H_{p/q,0,\varepsilon}=\tilde{H}_{p/q,0,\varepsilon}$.
\end{fact}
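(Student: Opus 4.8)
The plan is to verify \textbf{Fact \ref{fact:Hpq=tHpq_theta=0}} by direct comparison of the two finite-dimensional operators on $\ell^2(\Z_q)$ at $\theta=0$. First I would write out both operators explicitly. Setting $\theta=0$ in \eqref{def:Hp/q_1} gives $(H_{p/q,0,\varepsilon}u)_{[n]}=u_{[n+1]}+u_{[n-1]}+v(i\varepsilon+np/q)u_{[n]}$, while setting $\theta=0$ in \eqref{def:Hp/q_2} gives $(\tilde{H}_{p/q,0,\varepsilon}u)_{[n]}=u_{[n+1]}+u_{[n-1]}+v(np/q+i\varepsilon)u_{[n]}$. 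Since $v(i\varepsilon+np/q)=v(np/q+i\varepsilon)$ as the argument of $v$ is the same complex number, the two operators coincide coefficient by coefficient. This is essentially immediate.

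The only point worth spelling out is that the two hopping terms genuinely agree: in $\tilde H$ the off-diagonal coefficients are $e^{2\pi i\theta}$ and $e^{-2\pi i\theta}$, which at $\theta=0$ both equal $1$, matching the constant hopping $u_{[n+1]}+u_{[n-1]}$ of $H$. Thus there is no discrepancy between the free Laplacian of $H_{p/q,\theta,\varepsilon}$ and the twisted Laplacian of $\tilde H_{p/q,\theta,\varepsilon}$ once the twist is trivialized at $\theta=0$. The diagonal potentials are literally the same function of $n$ evaluated at the same points, so equality holds as operators on $\ell^2(\Z_q)$.

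There is essentially no obstacle here; the statement is a bookkeeping identity recording that the analytic potential $v$, which is real on $\R$ but here evaluated at the complex shift $i\varepsilon$, enters both the operator $H_{p/q,\theta,\varepsilon}$ (via $v(\theta+i\varepsilon+np/q)$) and its ``Aubry partner'' $\tilde H_{p/q,\theta,\varepsilon}$ (via $v(np/q+i\varepsilon)$ with the $\theta$-dependence moved into the hopping) in precisely the same way once $\theta$ is set to $0$. I would therefore simply present the two explicit formulas side by side and observe that they are identical, which completes the proof of Fact \ref{fact:Hpq=tHpq_theta=0}.
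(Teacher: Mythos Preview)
Your proof is correct and matches the paper's approach: the paper states this as a fact without proof, treating it as immediate from the definitions, which is exactly the direct verification you carry out.
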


We will work with   the Fourier transform on the cyclic group $\Z_q$. We write the characters on this finite Abelian group in the form $\chi_m(n)=e^{-2\pi i mnp/q}$, Haar measure being counting measure. Note that $np$ for $1\le n\le q$ is just a relabeling of the congruence classes modulo~$q$. The Fourier transform is thus defined in the form
\begin{align}\label{def:FT_Zq}
(\mathcal{F}\phi)_{[n]}&=\frac{1}{\sqrt{q}}\sum_{m\in \Z_q} e^{2\pi i mnp/q} \phi_{[m]}, \\
(\mathcal{F}^{-1}u)_{[n]} &=\frac{1}{\sqrt{q}}\sum_{m\in \Z_q} e^{-2\pi i mnp/q} u_{[m]},
\end{align}
and gives a unitary transformation on $\ell^2(\Z_q)$.
Lemma~\ref{lem:tH=hH} establishes Aubry duality for the periodic operators by means of this Fourier transform. 

\begin{lemma}\label{lem:tH=hH}
$
\mathcal{F}\tilde{H}_{p/q,\theta,\varepsilon}\mathcal{F}^{-1}
=\hat{H}_{p/q,\theta,\varepsilon}.
$
\end{lemma}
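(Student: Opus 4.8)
\textbf{Proof plan for Lemma~\ref{lem:tH=hH}.}
The plan is to verify the operator identity $\mathcal{F}\tilde H_{p/q,\theta,\varepsilon}\mathcal{F}^{-1}=\hat H_{p/q,\theta,\varepsilon}$ by a direct computation, conjugating each of the three terms of $\tilde H_{p/q,\theta,\varepsilon}$ (the forward shift weighted by $e^{2\pi i\theta}$, the backward shift weighted by $e^{-2\pi i\theta}$, and the multiplication by $v(np/q+i\varepsilon)$) through the Fourier transform $\mathcal{F}$ on $\ell^2(\Z_q)$ defined in~\eqref{def:FT_Zq}. Since $\tilde H$ and $\hat H$ are both bounded operators on the finite-dimensional space $\ell^2(\Z_q)$, it suffices to check that they have the same matrix entries, or equivalently that they agree on the Fourier basis.

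First I would handle the multiplication operator. Writing $v(x)=\sum_{1\le|k|\le d}\hat v_k e^{2\pi ikx}$, the term $v(np/q+i\varepsilon)u_{[n]}$ becomes, after expanding, $\sum_{1\le|k|\le d}\hat v_k e^{-2\pi k\varepsilon}e^{2\pi iknp/q}u_{[n]}$; the factor $e^{2\pi iknp/q}$ is precisely a translation by $k$ on the character index, so under $\mathcal{F}$ this multiplication operator turns into the off-diagonal hopping $\sum_{|k|=1}^d \hat v_k e^{-2\pi k\varepsilon}\phi_{[n-k]}$ (one checks the direction of the shift from~\eqref{def:FT_Zq}). Second, the shift terms: conjugating the shift $u_{[n]}\mapsto u_{[n+1]}$ by $\mathcal{F}$ produces a multiplication operator, and the weight $e^{2\pi i\theta}$ combines with the phase $e^{-2\pi i\cdot (\cdot) p/q}$ coming from the Fourier kernel to give $e^{2\pi i(\theta+ mp/q)}$ at site $m$; adding the contribution of the backward shift with weight $e^{-2\pi i\theta}$ yields $2\cos(2\pi(\theta+mp/q))$. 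Assembling the three pieces reproduces exactly~\eqref{def:tH_p/q}, which is the claim.

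The only genuine bookkeeping issue — and the step I expect to be most error-prone rather than deep — is keeping track of signs and of the relabeling $n\mapsto np$ of residues modulo $q$ that is built into the choice of characters $\chi_m(n)=e^{-2\pi i mnp/q}$. Because $\gcd(p,q)=1$, multiplication by $p$ is a bijection of $\Z_q$, so this relabeling is harmless, but one must be consistent about whether the potential in~\eqref{def:Hp/q_2} is evaluated at $np/q$ or at $n/q$, and about which of $\mathcal{F}$, $\mathcal{F}^{-1}$ carries the $+$ sign in the exponent. I would fix conventions once at the start (as in~\eqref{def:FT_Zq}) and then simply verify that, with $\phi=\mathcal{F}^{-1}u$ or $u=\mathcal{F}\phi$, both sides act identically on an arbitrary $\phi\in\ell^2(\Z_q)$. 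No result beyond the definition of $\mathcal{F}$ and the Fourier expansion of the trigonometric polynomial $v$ is needed; in particular the lemma is purely algebraic and independent of whether $\alpha=p/q$ is close to the irrational frequency of interest.
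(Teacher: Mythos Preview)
Your proposal is correct and follows essentially the same approach as the paper: a direct Fourier computation on $\ell^2(\Z_q)$ that sends shifts to multiplications and vice versa. The only cosmetic difference is that the paper computes $\mathcal{F}\hat H\mathcal{F}^{-1}$ and identifies it with $\tilde H$ (the equivalent reverse direction), carrying out the full expansion in one pass rather than term by term, but the content and level of difficulty are identical.
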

\begin{proof}
For simplicity, we write $\hat{H}_{p/q,\theta,\varepsilon}$ as $\hat{H}$. Then
\begin{align}
&(\mathcal{F}\hat{H}\mathcal{F}^{-1}u)_{[n]}\\
=&\frac{1}{\sqrt{q}} \sum_{m\in \Z_q} e^{2\pi i mnp/q} (\hat{H}\mathcal{F}^{-1}u)_{[m]}\\
=&\frac{1}{\sqrt{q}} \sum_{m\in \Z_q}e^{2\pi i mnp/q}\left( \sum_{|k|=1}^d \hat{v}_k e^{-2\pi k\varepsilon} (\mathcal{F}^{-1}u)_{[m-k]}+2\cos(2\pi(\theta+mp/q))(\mathcal{F}^{-1}u)_{[m]}\right)\\
=&\frac{1}{q}\sum_{m,\ell\in \Z_q}\sum_{|k|=1}^d e^{2\pi i mnp/q} e^{-2\pi i \ell(m-k)p/q}\, \hat{v}_k\, e^{-2\pi k\varepsilon} u_{[\ell]}\\
&+\frac{1}{q}\sum_{m,\ell\in \Z^q} e^{2\pi imnp/q} e^{-2\pi i m\ell p/q} (e^{2\pi i(\theta+mp/q)}+e^{-2\pi i(\theta+mp/q)})u_{[\ell]}\\
=&\sum_{|k|=1}^d \hat{v}_k e^{-2\pi k\varepsilon} e^{2\pi i nkp/q} u_{[n]}+e^{2\pi i\theta} u_{[n+1]}+e^{-2\pi i\theta} u_{[n-1]}\\
=&v(np/q+i\varepsilon)u_{[n]}+e^{2\pi i\theta} u_{[n+1]}+e^{-2\pi i\theta} u_{[n-1]}\\
=&(\tilde{H}_{p/q,\theta,\varepsilon}u)_{[n]},
\end{align}
as claimed.
\end{proof}

For $E\in \C$, let $D_{p/q,E,\varepsilon}(\theta):=\det(H_{p/q,\theta,\varepsilon}-E)$ and define $\tilde{D}$, $\hat{D}$ analogously for the operators $\tilde{H}$ and $\hat{H}$, respectively. The Aubry duality 
Lemma~\ref{lem:tH=hH} implies the following invariance property.

\begin{corollary}\label{cor:det_tH=hH}
For any $E\in \C$ and any $\theta\in \T$, we have
\begin{align}
\hat{D}_{p/q,E,\varepsilon}(\theta)=\tilde{D}_{p/q,E,\varepsilon}(\theta).
\end{align}
\end{corollary}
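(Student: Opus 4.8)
Corollary~\ref{cor:det_tH=hH} is an immediate consequence of Lemma~\ref{lem:tH=hH}, so the plan is simply to unwind the relationship between the two operators through the Fourier transform $\mathcal{F}$ and exploit the multiplicativity of the determinant under conjugation. First I would recall that, by Lemma~\ref{lem:tH=hH}, we have the operator identity $\mathcal{F}\tilde{H}_{p/q,\theta,\varepsilon}\mathcal{F}^{-1}=\hat{H}_{p/q,\theta,\varepsilon}$ on $\ell^2(\Z_q)$. Since $\mathcal{F}$ is a unitary (in particular invertible) transformation on the finite-dimensional space $\ell^2(\Z_q)$, conjugation by $\mathcal{F}$ also intertwines the shifted operators: $\mathcal{F}(\tilde{H}_{p/q,\theta,\varepsilon}-E)\mathcal{F}^{-1}=\hat{H}_{p/q,\theta,\varepsilon}-E$ for every $E\in\C$, because $\mathcal{F}\,(E\cdot\mathrm{Id})\,\mathcal{F}^{-1}=E\cdot\mathrm{Id}$.

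Next I would take determinants of both sides. Because $\det(\mathcal{F}M\mathcal{F}^{-1})=\det(\mathcal{F})\det(M)\det(\mathcal{F}^{-1})=\det(M)$ for any linear operator $M$ on $\ell^2(\Z_q)$ (the determinants of $\mathcal{F}$ and $\mathcal{F}^{-1}$ being reciprocal scalars), we conclude
\begin{align}
\hat{D}_{p/q,E,\varepsilon}(\theta)=\det(\hat{H}_{p/q,\theta,\varepsilon}-E)=\det\big(\mathcal{F}(\tilde{H}_{p/q,\theta,\varepsilon}-E)\mathcal{F}^{-1}\big)=\det(\tilde{H}_{p/q,\theta,\varepsilon}-E)=\tilde{D}_{p/q,E,\varepsilon}(\theta),
\end{align}
which is precisely the claimed identity, valid for every $E\in\C$ and every $\theta\in\T$.

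There is essentially no obstacle here: the only point requiring any care is the elementary observation that the determinant is a conjugation invariant, which holds on a finite-dimensional space without any hypothesis on $\mathcal{F}$ beyond invertibility (and here $\mathcal{F}$ is in fact unitary). The substantive content has already been carried out in Lemma~\ref{lem:tH=hH}; this corollary is the routine transfer of that statement from operators to their characteristic polynomials. I would present the argument in the three short steps above — intertwining of the shifted operators, conjugation invariance of the determinant, and the resulting equality of $\hat{D}$ and $\tilde{D}$ — and leave it at that.
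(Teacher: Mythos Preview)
Your proposal is correct and matches the paper's own reasoning: the paper simply states that the corollary follows from the Aubry duality Lemma~\ref{lem:tH=hH}, and your argument---conjugating $\tilde{H}_{p/q,\theta,\varepsilon}-E$ by the invertible $\mathcal{F}$ and using conjugation-invariance of the determinant---is exactly the intended (and only natural) route.
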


By inspection we have

\begin{fact}\label{fact:det_Hpq_periodic}
For any $E\in \C$, $D_{p/q,E,\varepsilon}(\theta)$ is a $(p/q)$-periodic (hence $(1/q)$-periodic) function in $\theta$. Analogous results hold for $\tilde{D}$ and $\hat{D}$.
\end{fact}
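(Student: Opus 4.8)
The plan is to exploit the covariance of each periodic operator under a suitable unitary conjugation of $\ell^2(\Z_q)$, under which the determinant is invariant, and then to upgrade the resulting $(p/q)$-periodicity to $(1/q)$-periodicity by an elementary congruence argument.

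First I would treat $H_{p/q,\theta,\varepsilon}$ and $\hat{H}_{p/q,\theta,\varepsilon}$ together. Let $S$ be the cyclic shift on $\ell^2(\Z_q)$, $(Su)_{[n]}=u_{[n-1]}$, which is unitary and in particular invertible on this finite-dimensional space. A direct computation gives the intertwining identities $S^{-1}H_{p/q,\theta,\varepsilon}S=H_{p/q,\theta+p/q,\varepsilon}$ and $S^{-1}\hat{H}_{p/q,\theta,\varepsilon}S=\hat{H}_{p/q,\theta+p/q,\varepsilon}$: in both operators the entire $\theta$-dependence sits in the diagonal term $v(\theta+i\varepsilon+np/q)$, respectively $2\cos(2\pi(\theta+np/q))$, and conjugating by $S$ replaces $n$ by $n+1$ in that argument, which is the same as replacing $\theta$ by $\theta+p/q$. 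Since $\det(S^{-1}(M-E)S)=\det(M-E)$ for any $E\in\C$, we conclude $D_{p/q,E,\varepsilon}(\theta+p/q)=D_{p/q,E,\varepsilon}(\theta)$ and $\hat{D}_{p/q,E,\varepsilon}(\theta+p/q)=\hat{D}_{p/q,E,\varepsilon}(\theta)$.

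For $\tilde{H}_{p/q,\theta,\varepsilon}$ the $\theta$-dependence instead sits in the off-diagonal factors $e^{\pm 2\pi i\theta}$, so I would conjugate by the diagonal gauge unitary $V$ on $\ell^2(\Z_q)$ defined by $(Vu)_{[n]}=e^{2\pi i np/q}u_{[n]}$ (well defined, since $e^{2\pi i np/q}$ depends only on $n$ modulo $q$). A short computation yields $V^{-1}\tilde{H}_{p/q,\theta,\varepsilon}V=\tilde{H}_{p/q,\theta+p/q,\varepsilon}$, hence $\tilde{D}_{p/q,E,\varepsilon}(\theta+p/q)=\tilde{D}_{p/q,E,\varepsilon}(\theta)$; alternatively, periodicity of $\tilde D$ also follows from Corollary~\ref{cor:det_tH=hH} combined with that of $\hat D$. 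Finally, to pass from $(p/q)$-periodicity to $(1/q)$-periodicity, note that each of $D,\tilde D,\hat D$ is automatically $1$-periodic in $\theta$, because the potentials $v$ and $2\cos(2\pi\,\cdot\,)$ are $1$-periodic. Since $\gcd(p,q)=1$, choose $m\in\Z$ with $mp\equiv 1\ (\mathrm{mod}\ q)$; iterating the $(p/q)$-periodicity $m$ times gives $D(\theta+mp/q)=D(\theta)$, and $mp/q$ differs from $1/q$ by an integer, so $1$-periodicity yields $D(\theta+1/q)=D(\theta)$, and likewise for $\tilde D$ and $\hat D$.

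I do not expect a substantive obstacle here. The only points requiring any care are verifying that the conjugating operators $S$ and $V$ genuinely act on the finite-dimensional space $\ell^2(\Z_q)$ and are invertible there, and carrying out the congruence argument that turns $(p/q)$-periodicity together with $1$-periodicity into $(1/q)$-periodicity.
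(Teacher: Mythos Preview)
Your proposal is correct and is exactly the standard covariance argument that the paper's phrase ``by inspection'' is pointing at: conjugation by the cyclic shift (respectively a diagonal gauge) implements the $\theta\mapsto\theta+p/q$ covariance for $H,\hat H$ (respectively $\tilde H$), and the passage from $(p/q)$-periodicity to $(1/q)$-periodicity via $\gcd(p,q)=1$ together with $1$-periodicity is the intended triviality. There is nothing to add.
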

The following Chamber's type formula is an easy consequence of Fact \ref{fact:det_Hpq_periodic}.
\begin{lemma}\label{lem:tD_expression}
For any $E\in \C$, we have
\begin{align}
\hat{D}_{p/q,E,\varepsilon}(\theta)=\tilde{D}_{p/q,E,\varepsilon}(\theta)=a(p/q,E,\varepsilon)+2(-1)^{q+1} \cos(2\pi q\theta),
\end{align}
where $a(p/q,E,\varepsilon)$ is independent of $\theta$.
This, combined with Fact \ref{fact:Hpq=tHpq_theta=0} implies that
\begin{align}\label{eq:tD=D_0+cos}
\hat{D}_{p/q,E,\varepsilon}(\theta)=\tilde{D}_{p/q,E,\varepsilon}(\theta)=D_{p/q,E,\varepsilon}(0)+(-1)^{q+1}(2\cos(2\pi q\theta)-2).
\end{align}
\end{lemma}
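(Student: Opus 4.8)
The statement has two parts: the Chambers-type identity for $\tilde D_{p/q,E,\varepsilon}(\theta)$ (the equality with $\hat D_{p/q,E,\varepsilon}(\theta)$ being already Corollary~\ref{cor:det_tH=hH}), and its consequence \eqref{eq:tD=D_0+cos}. The plan is to analyze $\tilde D_{p/q,E,\varepsilon}(\theta)=\det(\tilde H_{p/q,\theta,\varepsilon}-E)$ as a function of $w=e^{2\pi i\theta}$. Writing $\tilde H_{p/q,\theta,\varepsilon}-E$ as a $q\times q$ matrix on $\ell^2(\Z_q)$ with rows and columns indexed by $n\in\{0,\dots,q-1\}$, its nonzero entries are: the diagonal $v(np/q+i\varepsilon)-E$; the superdiagonal entries, all equal to $w$; the subdiagonal entries, all equal to $w^{-1}$; and the two corner entries $(\tilde H_{p/q,\theta,\varepsilon})_{q-1,0}=w$ and $(\tilde H_{p/q,\theta,\varepsilon})_{0,q-1}=w^{-1}$. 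In particular every entry is a Laurent monomial in $w$ of degree in $\{-1,0,1\}$, so $\tilde D_{p/q,E,\varepsilon}$ is a Laurent polynomial in $w$ whose degrees lie in $[-q,q]$, i.e. a trigonometric polynomial in $\theta$ of degree at most $q$.

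Next I would use periodicity. By Fact~\ref{fact:det_Hpq_periodic} (applied to $\tilde H$, or to $\hat H$ together with Corollary~\ref{cor:det_tH=hH}) the function $\theta\mapsto\tilde D_{p/q,E,\varepsilon}(\theta)$ is $(1/q)$-periodic. A trigonometric polynomial $\sum_{|k|\le q}c_k e^{2\pi i k\theta}$ that is $(1/q)$-periodic must have $c_k=0$ unless $q\mid k$, hence
\[
\tilde D_{p/q,E,\varepsilon}(\theta)=c_0+c_q e^{2\pi i q\theta}+c_{-q}e^{-2\pi i q\theta}.
\]
It then remains to identify $c_{\pm q}$. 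In the Leibniz expansion $\det(\tilde H_{p/q,\theta,\varepsilon}-E)=\sum_{\sigma\in S_q}\mathrm{sgn}(\sigma)\prod_{n=0}^{q-1}(\tilde H_{p/q,\theta,\varepsilon}-E)_{n,\sigma(n)}$, each $\sigma$ with a nonvanishing product contributes a monomial $w^{d(\sigma)}$ times a $\theta$-independent scalar, where $d(\sigma)$ is the number of factors of degree $+1$ minus the number of factors of degree $-1$. Since there are $q$ factors, $d(\sigma)=q$ forces every factor to have degree $+1$, i.e. $\sigma(n)=n+1\pmod q$ for all $n$; this $q$-cycle uses only off-diagonal entries (so its $\theta$-independent scalar is $1$) and has sign $(-1)^{q-1}$, whence $c_q=(-1)^{q-1}=(-1)^{q+1}$. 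Symmetrically $c_{-q}=(-1)^{q+1}$, coming from the opposite $q$-cycle $\sigma(n)=n-1\pmod q$. Therefore $\tilde D_{p/q,E,\varepsilon}(\theta)=a(p/q,E,\varepsilon)+2(-1)^{q+1}\cos(2\pi q\theta)$ with $a(p/q,E,\varepsilon):=c_0$, and combined with Corollary~\ref{cor:det_tH=hH} this gives the first displayed identity of the lemma.

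Finally, for \eqref{eq:tD=D_0+cos}: evaluating at $\theta=0$ gives $\tilde D_{p/q,E,\varepsilon}(0)=a(p/q,E,\varepsilon)+2(-1)^{q+1}$, while Fact~\ref{fact:Hpq=tHpq_theta=0} (i.e.\ $\tilde H_{p/q,0,\varepsilon}=H_{p/q,0,\varepsilon}$) gives $\tilde D_{p/q,E,\varepsilon}(0)=D_{p/q,E,\varepsilon}(0)$. Hence $a(p/q,E,\varepsilon)=D_{p/q,E,\varepsilon}(0)-2(-1)^{q+1}$, and substituting back yields $\tilde D_{p/q,E,\varepsilon}(\theta)=D_{p/q,E,\varepsilon}(0)+(-1)^{q+1}\bigl(2\cos(2\pi q\theta)-2\bigr)$, as claimed. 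The only delicate point is the sign and corner bookkeeping in the Leibniz expansion — pinning down that the extreme $w$-powers arise only from the two full $q$-cycles, that the corner entries $(\tilde H)_{q-1,0},(\tilde H)_{0,q-1}$ carry $w,w^{-1}$ respectively, and that a $q$-cycle has sign $(-1)^{q-1}$ — but this is routine, and there is no essential obstacle beyond it.
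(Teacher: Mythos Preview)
Your proof is correct and follows essentially the same approach as the paper's: both arguments combine the $(1/q)$-periodicity from Fact~\ref{fact:det_Hpq_periodic} with the identification of the extreme Fourier coefficients $c_{\pm q}=(-1)^{q+1}$ via the two full $q$-cycles in the Leibniz expansion, and then evaluate at $\theta=0$ using Fact~\ref{fact:Hpq=tHpq_theta=0}. Your write-up is simply more detailed about the Leibniz bookkeeping than the paper's terse version.
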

\begin{proof}
The $\hat{D}=\tilde{D}$ part is due to Corollary \ref{cor:det_tH=hH}.
The term of highest degree (in $e^{2\pi i\theta}$) in $\tilde{D}_{p/q,E,\varepsilon}(\theta)$ equals $(-1)^{q+1}e^{2\pi i q\theta}$, and that of lowest degree is $(-1)^{q+1}e^{-2\pi iq\theta}$.
The lemma now follows from Fact~\ref{fact:det_Hpq_periodic}.
\end{proof}

Combining \eqref{eq:tD=D_0+cos}, Corollary \ref{cor:det_tH=hH} with Jensen's formula implies the following.

\begin{lemma}\label{lem:hD=D}
Denote the two zeros of \[z^2+((-1)^{q+1}D_{p/q,E,\varepsilon}(0)-2)z+1=0\] by $z_1, z_2$, respectively. Then 
for any $E\in \C$,
\begin{align}\label{eq:hD=D}
\int_{\T} \log |\hat{D}_{p/q,E,\varepsilon}(\theta)|\, \mathrm{d}\theta=\log\max(|z_1|, |z_2|, 1).
\end{align}
\end{lemma}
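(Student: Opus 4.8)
\textbf{Proof plan for Lemma \ref{lem:hD=D}.}
The plan is to rewrite the $\theta$-integral using the explicit Chambers-type formula \eqref{eq:tD=D_0+cos} and then apply Jensen's formula to the resulting degree-$2q$ Laurent polynomial in $z=e^{2\pi i\theta}$. First I would set $w = (-1)^{q+1}D_{p/q,E,\varepsilon}(0)-2$ so that by \eqref{eq:tD=D_0+cos} and Corollary \ref{cor:det_tH=hH}, $\hat D_{p/q,E,\varepsilon}(\theta) = \tilde D_{p/q,E,\varepsilon}(\theta) = (-1)^{q+1}\big(z^q + w + z^{-q}\big)$ where $z=e^{2\pi i\theta}$, since the highest and lowest degree terms in $e^{2\pi i\theta}$ of $\tilde D$ are $(-1)^{q+1}e^{\pm 2\pi i q\theta}$ and the only other surviving term is the constant. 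Factoring out $z^{-q}$ gives $\hat D_{p/q,E,\varepsilon}(\theta) = (-1)^{q+1}z^{-q}\big(z^{2q} + w z^q + 1\big)$, and the polynomial $t^2 + wt + 1$ has roots $z_1,z_2$ by the definition in the statement, with $z_1 z_2 = 1$.

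Next I would compute $\int_{\T}\log|\hat D_{p/q,E,\varepsilon}(\theta)|\,\mathrm d\theta$ directly. Since $|z^{-q}| = 1$ on $\T$ and $|(-1)^{q+1}|=1$, we have
\begin{align}
\int_{\T}\log|\hat D_{p/q,E,\varepsilon}(\theta)|\,\mathrm d\theta = \int_{\T}\log\big|z^{2q} + w z^q + 1\big|\,\mathrm d\theta = \int_{\T}\log\big|(z^q - z_1)(z^q - z_2)\big|\,\mathrm d\theta.
\end{align}
Splitting the logarithm and using that $\theta \mapsto z^q = e^{2\pi i q\theta}$ pushes normalized Lebesgue measure on $\T$ forward to normalized Lebesgue measure on the unit circle $\mathcal C$ (the map is a $q$-fold cover), each term reduces to $\int_{\mathcal C}\log|\zeta - z_j|\,\frac{|\mathrm d\zeta|}{2\pi}$, which by Jensen's formula equals $\log^+|z_j| = \log\max(|z_j|,1)$. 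Therefore
\begin{align}
\int_{\T}\log|\hat D_{p/q,E,\varepsilon}(\theta)|\,\mathrm d\theta = \log\max(|z_1|,1) + \log\max(|z_2|,1).
\end{align}
Finally, using $z_1 z_2 = 1$: if both $|z_j|=1$ the right side is $0 = \log\max(|z_1|,|z_2|,1)$; otherwise exactly one of them, say $|z_1|>1>|z_2|$, so $\log\max(|z_1|,1)+\log\max(|z_2|,1) = \log|z_1| + 0 = \log\max(|z_1|,|z_2|,1)$. In all cases this equals $\log\max(|z_1|,|z_2|,1)$, which is \eqref{eq:hD=D}.

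The steps here are all routine; the only points requiring a little care are (i) confirming that no intermediate-degree terms in $e^{2\pi i\theta}$ appear in $\tilde D_{p/q,E,\varepsilon}$ beyond the constant — this is precisely the content of Lemma \ref{lem:tD_expression}, which I may cite — and (ii) justifying the change of variables $\theta \mapsto e^{2\pi i q\theta}$ in the integral, which is elementary since $q$-fold covering maps of the circle preserve normalized arc length. I do not anticipate a genuine obstacle; the substantive input (the Chambers formula and Aubry duality at rational frequency) has already been established in Lemma \ref{lem:tD_expression} and Corollary \ref{cor:det_tH=hH}.
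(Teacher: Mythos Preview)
Your proof is correct and follows exactly the approach indicated by the paper, which simply states that the lemma follows from \eqref{eq:tD=D_0+cos}, Corollary \ref{cor:det_tH=hH}, and Jensen's formula. You have supplied the routine details of that argument.
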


Next, we compare the right-hand side of \eqref{eq:hD=D} to  $qL^1(\alpha, A_{E,\varepsilon})$ for all large~$q$. Note that $A_{E,\varepsilon}\in \mathrm{SL}(2,\C)$ implies that $L^1(\alpha, A_{E,\varepsilon})\ge0$.
\begin{lemma}\label{lem:hD=L1}
If $(\alpha, A_{E,\varepsilon})$ is $1$-regular, for some constant $C>0$, we have
\begin{align}
\left|\frac{1}{q}\int_{\T} \log |\hat{D}_{p/q,E,\varepsilon}(\theta)|\, \mathrm{d}\theta-L^1(\alpha, A_{E,\varepsilon})\right|\leq C\varepsilon_0
\end{align}
provided $q$ is large. 
\end{lemma}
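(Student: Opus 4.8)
\textbf{Proof plan for Lemma \ref{lem:hD=L1}.}
The plan is to relate the $\theta$-average of $\log|\hat D_{p/q,E,\varepsilon}(\theta)|$ to the transfer matrix of the selfadjoint Schr\"odinger operator $H_{p/q,\theta,\varepsilon}$ via the Chambers-type formula just established. By Lemma \ref{lem:hD=D} we have
\begin{align}
\int_{\T} \log |\hat{D}_{p/q,E,\varepsilon}(\theta)|\, \mathrm{d}\theta=\log\max(|z_1|, |z_2|, 1),
\end{align}
where $z_1,z_2$ are the roots of $z^2+((-1)^{q+1}D_{p/q,E,\varepsilon}(0)-2)z+1=0$. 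Since $z_1z_2=1$, this is exactly $\log^+\|$(the larger root)$\|=\tfrac12|\log|z_1z_2^{-1}||$ when both are real, and equals $0$ otherwise; in all cases $\log\max(|z_1|,|z_2|,1)=\max(0,\log\rho)$ where $\rho$ is the spectral radius of the matrix $\left(\begin{matrix} 2-(-1)^{q+1}D_{p/q,E,\varepsilon}(0) & -1\\ 1 & 0\end{matrix}\right)$. First I would identify this $2\times2$ matrix, up to a harmless sign/conjugation, with the one-period transfer matrix $A_{E,\varepsilon,q}(p/q,0)=A_{E,\varepsilon}(\tfrac{(q-1)p}{q}+i\varepsilon)\cdots A_{E,\varepsilon}(i\varepsilon)$ of the Schr\"odinger cocycle: indeed $D_{p/q,E,\varepsilon}(0)=\det(H_{p/q,0,\varepsilon}-E)$ is, by the standard computation relating the determinant of a periodic Jacobi matrix to its transfer matrix (the $d=1$ analogue of Lemma \ref{lem:P=M-I}), equal to $(-1)^{q+1}(\mathrm{tr}\,A_{E,\varepsilon,q}(p/q,0)-2)$ — this is precisely the content of the sign bookkeeping in \eqref{eq:tD=D_0+cos}. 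Hence $\rho$ is the spectral radius of $A_{E,\varepsilon,q}(p/q,0)$, and by \eqref{def:Lpq},
\begin{align}
\frac{1}{q}\log\max(|z_1|,|z_2|,1)=\frac1q\log^+\rho(A_{E,\varepsilon,q}(p/q,0))=\max\left(0,\ L^1(p/q,A_{E,\varepsilon},0)\right)=L^1(p/q,A_{E,\varepsilon},0),
\end{align}
the last equality because $A_{E,\varepsilon}\in\mathrm{SL}(2,\C)$ forces $L^1\ge0$ (the product has determinant $1$, so its spectral radius is $\ge1$).

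Next I would invoke Assumption \ref{assume_regular}: $(\alpha,A_{E,\varepsilon})$ is $1$-regular and $L^1(\alpha,A_{E,\varepsilon})>-\infty$. Then Theorem \ref{thm:Lpq_regular_no_dev} applies with $k=1$ and gives, for $p/q$ sufficiently close to $\alpha$ (in particular all large $q$ along the continued-fraction approximants, where moreover $d\,|\,q$ can be arranged as discussed before Assumption \ref{assume_regular}),
\begin{align}
|L^1(p/q,A_{E,\varepsilon},0)-L^1(\alpha,A_{E,\varepsilon})|\leq \varepsilon_0
\end{align}
\emph{uniformly in the base point}, hence in particular at the base point $0$. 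Combining the two displays yields
\begin{align}
\left|\frac1q\int_{\T}\log|\hat D_{p/q,E,\varepsilon}(\theta)|\,\mathrm{d}\theta-L^1(\alpha,A_{E,\varepsilon})\right|\leq \varepsilon_0,
\end{align}
which is the asserted bound (with $C=1$, or any $C\ge1$).

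\textbf{Main obstacle.} The routine-looking but genuinely careful step is the algebraic identification in the first paragraph: pinning down the exact relation
\begin{align}
(-1)^{q+1}D_{p/q,E,\varepsilon}(0)=\mathrm{tr}\,A_{E,\varepsilon,q}(p/q,0)-2,
\end{align}
with the correct sign and constant, and then arguing that $\log\max(|z_1|,|z_2|,1)=\log^+\rho(A_{E,\varepsilon,q}(p/q,0))$ regardless of whether the eigenvalues are real, complex-conjugate, or coincident. The subtlety is that $A_{E,\varepsilon}$ is only in $\mathrm{SL}(2,\C)$ (not $\mathrm{SL}(2,\R)$), so the trace need not be real and one must verify that the characteristic polynomial $z^2-(\mathrm{tr})z+1$ indeed matches $z^2+((-1)^{q+1}D_{p/q,E,\varepsilon}(0)-2)z+1$ from Lemma \ref{lem:hD=D}; once matched, $\log\max(|z_1|,|z_2|,1)=\max(\log|z_1|,\log|z_2|,0)=\log^+\rho$ is immediate since $|z_1z_2|=1$. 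Everything else is a direct appeal to Lemmas \ref{lem:hD=D}, the definition \eqref{def:Lpq} of the rational-frequency Lyapunov exponent, and Theorem \ref{thm:Lpq_regular_no_dev}.
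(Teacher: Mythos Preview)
Your proposal is correct and uses the same three ingredients as the paper: Lemma~\ref{lem:hD=D}, the identity relating $D_{p/q,E,\varepsilon}(0)$ to the one-period transfer matrix (which the paper proves separately as Lemma~\ref{lem:D=Aq}), and Theorem~\ref{thm:Lpq_regular_no_dev}. The organization is slightly different, and in fact a bit cleaner. The paper passes through the crude bound $\big|\max(|z_1|,|z_2|)-|D_{p/q,E,\varepsilon}(0)|\big|\le 3$, then uses $|D_{p/q,E,\varepsilon}(0)|=|(\lambda_1-1)(\lambda_2-1)|$ to compare $\tfrac1q\log|D_{p/q,E,\varepsilon}(0)|$ with $L^1(\alpha,A_{E,\varepsilon})$; this last step requires a small case analysis when $L^1$ is near~$0$. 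You instead identify $\{z_1,z_2\}$ directly with the eigenvalues $\{\lambda_1,\lambda_2\}$ of $A_{q,E,\varepsilon}(p/q,0)$ (same trace, same determinant~$1$), obtaining the exact identity $\tfrac1q\int_\T\log|\hat D|\,d\theta=L^1(p/q,A_{E,\varepsilon},0)$ with no error term; then a single appeal to Theorem~\ref{thm:Lpq_regular_no_dev} finishes. This bypasses the $L^1\approx 0$ edge case entirely. One cosmetic correction: the sign in your ``main obstacle'' display should read $(-1)^{q+1}D_{p/q,E,\varepsilon}(0)=2-\mathrm{tr}\,A_{q,E,\varepsilon}(p/q,0)$ (check $q=1,2$); this is exactly what makes $z_1+z_2=\mathrm{tr}\,A_q$, so your first-paragraph argument is unaffected.
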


\begin{proof} 
First, simple considerations yield
\begin{align}
\left|\max(|z_1|, |z_2|)-|D_{p/q,E,\varepsilon}(0)|\right|\leq 3.
\end{align}
Hence by Lemma \ref{lem:hD=D}, we have
\begin{align}\label{eq:hD=D0}
    \frac{1}{q}\log \left(\max(|D_{p/q,E,\varepsilon}(0)|-3, 1\right)\leq \frac{1}{q}\int_{\T} \log |\hat{D}_{p/q,E,\varepsilon}(\theta)|\, \mathrm{d}\theta\leq \frac{1}{q}\log (|D_{p/q,E,\varepsilon}(0)|+3).
\end{align}

Next, we relate $D_{p/q,E,\varepsilon}(0)$ to $L^1(\alpha, A_{E,\varepsilon})$.
Let 
$$|\lambda_{1,E,\varepsilon}(\theta)|\geq |\lambda_{1,E,\varepsilon}(\theta)|^{-1}= |\lambda_{2,E,\varepsilon}(\theta)|$$ 
be the eigenvalues of 
$A_{q,E,\varepsilon}(p/q,\theta)$.
Applying \eqref{def:Lpq} to $A=A_{E,\varepsilon}$, we have
\begin{align}\label{eq:L1=lambda1}
L^1(p/q,A_{E,\varepsilon},\theta)=\frac{1}{q}\log |\lambda_{1,E,\varepsilon}(\theta)|.
\end{align}
The following lemma relates $D_{p/q,E,\varepsilon}(\theta)$ to $A_{q,E,\varepsilon}(p/q,\theta)$.
\begin{lemma}\label{lem:D=Aq}
We have for all $\theta\in\mathbb{T}$
\begin{align}
|D_{p/q,E,\varepsilon}(\theta)|=|\det(A_{q,E,\varepsilon}(p/q,\theta)-I_2)|=|\lambda_{1,E,\varepsilon}(\theta)-1|\cdot |\lambda_{2,E,\varepsilon}(\theta)-1|.
\end{align}
\end{lemma}
\begin{proof}
The proof is very similar to that of Lemma \ref{lem:P=M-I}.
As in the previous sections, we perform elementary row operations (which leave the determinant invariant), on the $q\times q$ matrix 
\begin{align}
H_{p/q,\theta,\varepsilon}-E=
\left(\begin{matrix}
v_{q-1} & 1 & & & & &1 \\
1 &v_{q-2} &\ddots & & & &\\
&\ddots &\ddots & & & &\\
\\
& & & & &\ddots &\\
& & & &\ddots &\ddots &1\\
1 & & & & &1 &v_0
\end{matrix}\right)=:
\left(\begin{matrix}
\text{Row}_1\\
\vdots\\
\text{Row}_q
\end{matrix}\right)
\end{align}
where we write $v(\theta+i\varepsilon+jp/q)-E$ as $v_j$ for simplicity.
We also denote $A_{n,E,\varepsilon}(p/q,\theta+jp/q):=A_n(j)$, and 
\begin{align}
A_n(j)=\left(\begin{matrix}
A_n^{11}(j)\, &A_n^{12}(j)\\
A_n^{21}(j) &A_n^{22}(j)
\end{matrix}\right).
\end{align}
Our calculations are based on the following recursive relations: 
\begin{align}
\begin{cases}
A_k^{11}(q-k)=-v_{q-k}A_{k-1}^{11}(q-k+1)+A_{k-1}^{12}(q-k+1)\\
A_k^{12}(q-k)=-A_{k-1}^{11}(q-k+1)\\
A_{1}^{11}(k)=-v_k\\
A_{1}^{12}(k)=-1
\end{cases}
\end{align}
Replacing $\text{Row}_1$ with $\text{Row}_1-v_{q-1}\cdot \text{Row}_2$ yields
\begin{align}
\text{Row}_1^{(1)}
=&(\begin{matrix}
0, &1-v_{q-1}v_{q-2}, &-v_{q-1}, &0, \cdots ,&0 ,&1
\end{matrix})\\
=&(\begin{matrix}
0, &-A_2^{11}(q-2), &-A_2^{12}(q-2), &0, \cdots , & 0 ,&1
\end{matrix})
\end{align}
Replacing $\text{Row}_1^{(1)}$ with $\text{Row}_1^{(1)}+A_2^{11}(q-2)\cdot \text{Row}_3$ yields
\begin{align}
\text{Row}_1^{(2)}
=&(\begin{matrix}
0, & 0, &-A_2^{12}(q-2)+v_{q-3}A_2^{11}(q-2), &A_2^{11}(q-2), & 0, \dots ,& 0, & 1
\end{matrix})\\
=&(\begin{matrix}
0, & 0, &-A_3^{11}(q-3), &-A_3^{12}(q-3), &0, \cdots ,& 0, & 1
\end{matrix})
\end{align}
We repeat this process until
\begin{align}
\text{Row}_1^{(q-2)}
=
(\begin{matrix}
0, \cdots &0, &-A_{q-1}^{11}(1), &1-A_{q-1}^{12}(1)
\end{matrix}).
\end{align}
Next, we carry out an analogous  row reduction on the second row.
Replacing $\text{Row}_2$ with $\text{Row}_2-v_{q-2}\cdot \text{Row}_3$ yields
\begin{align}
\text{Row}_2^{(1)}=
(\begin{matrix}
1, &0, &-A_2^{11}(q-3), &-A_2^{12}(q-3)\, , &0, \cdots  , &0    
\end{matrix}).
\end{align}
We repeat this process until
\begin{align}
\text{Row}_2^{(q-3)}=
(\begin{matrix}
1, &0, \cdots , &0, &-A_{q-2}^{11}(1), &-A_{q-2}^{12}(1))
\end{matrix}.
\end{align}
Note that the first, second and last rows, respectively, of the matrix we started with now read 
\begin{align}
\left(\begin{matrix}
\text{Row}_1^{(q-2)}\\
\text{Row}_2^{(q-3)}\\
\text{Row}_q
\end{matrix}\right)=
\left(\begin{matrix}
0 &0 &\cdots &0 &-A^{11}_{q-1}(1) &1-A^{12}_{q-1}(1)\\
1 &0 &\cdots &0 &-A^{11}_{q-2}(1) &-A^{12}_{q-2}(1)\\
1 &0 &\cdots &0 &1 &v_0
\end{matrix}\right).
\end{align}
We expand the determinant of the transformed matrix $H_{p/q,\theta,\varepsilon}-E$ along the second column using the identities $A_{q-2}^{11}(1)=A_{q-1}^{21}(1)$, $A_{q-2}^{12}(1)=A_{q-1}^{22}(1)$. This  yields
\begin{align}
|D_{p/q,E,\varepsilon}(\theta)|
=&\left|\det\left(\begin{matrix}
0 &-A_{q-1}^{11}(1) &1-A_{q-1}^{12}(1)\\
1 &-A_{q-2}^{11}(1) &-A_{q-2}^{12}(1)\\
1 &1 &v_0
\end{matrix}\right)\right|\\
=&\left|\det\left(\begin{matrix}
0 &-A_{q-1}^{11}(1) &1-A_{q-1}^{12}(1)\\
1 &-A_{q-1}^{21}(1) &-A_{q-1}^{22}(1)\\
1 &1 &v_0
\end{matrix}\right)\right|\\
=&\left|\det\left(\begin{matrix}
0& 0 &-A_{q-1}^{11}(1) &1-A_{q-1}^{12}(1)\\
0& 1 &-A_{q-1}^{21}(1) &-A_{q-1}^{22}(1)\\
1& 0 &0 &-1\\
0& 1 &1 &v_0
\end{matrix}\right)\right|\\
=&\left|\det\left(\begin{matrix}
\left(\begin{matrix}0 &0\\ 0 &1 \end{matrix}\right) 
&-A_{q-1}(1)+\left(\begin{matrix} 0 & 1\\ 0 &0\end{matrix}\right)\\
I_2 &-(A_1(0))^{-1}
\end{matrix}\right)\right|\\
=&\left|\det \left(\left(\begin{matrix}0 & 0 \\ 0 &1\end{matrix}\right)-\left(A_{q-1}(1)-\left(\begin{matrix} 0 & 1\\ 0 &0\end{matrix}\right)\right)A_1(0)\right)\right|\\
=&\left|\det(A_q(0)-I_2)\right|,
\end{align}
as claimed.
\end{proof}

By Assumption \ref{assume_regular}, $(\alpha, A_{E,\varepsilon})$ is 1-regular, hence by Theorem \ref{thm:Lpq_regular_no_dev} and \eqref{eq:L1=lambda1}, for $p/q$ sufficiently close to $\alpha$,
\begin{align}\label{eq:lambda1=L1}
\left|\frac{1}{q}\log |\lambda_{1,E,\varepsilon}(0)|-L^1(\alpha, A_{E,\varepsilon})\right|\leq \varepsilon_0.
\end{align}
This implies, by $\lambda_{1,E,\varepsilon}\lambda_{2,E,\varepsilon}= 1$, that
\begin{align}\label{eq:lambda2=-L1}
\left|\frac{1}{q}\log |\lambda_{2,E,\varepsilon}(0)|+L^1(\alpha, A_{E,\varepsilon})\right|\leq \varepsilon_0.
\end{align}
Hence by Lemma \ref{lem:D=Aq}, we have for all sufficiently large $q$
\begin{equation}\label{eq:D0L1}
\left|\frac{1}{q} \log |D_{p/q,E,\varepsilon}(0)|-L^1(\alpha,A_{E,\varepsilon})\right|\leq C\varepsilon_0.
\end{equation}
Combining the above with \eqref{eq:hD=D0} yields Lemma \ref{lem:hD=L1}.
\end{proof}

Finally it remains to relate $\hat{D}_{p/q,E,\varepsilon}$ to the dual Lyapunov exponents.
As stated, the following argument requires $q$ to be divisible by~$d$ (see above why can assume this).   

\begin{lemma}\label{lem:hD=hAq}
For $q=rd$ with $r\in \N$, we have
\begin{align}
\log |\hat{D}_{p/q,E,\varepsilon}(\theta)|
=&\log |\det(M_{r,E}^{\varepsilon}(p/r,\theta)-I_{2d})|+q\log |\hat{v}_{-d}e^{2\pi d\varepsilon}| \label{eq:hD=M}\\
=&\log |\det(\hat{A}_{q,E}^{\varepsilon}(p/q,\theta)-I_{2d})|+q\log |\hat{v}_{-d}e^{2\pi d\varepsilon}|. \label{eq:hD=A}
\end{align}
\end{lemma}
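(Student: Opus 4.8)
The strategy is to run the same type of determinant computation as in the proofs of Lemma~\ref{lem:P=M-I} and Lemma~\ref{lem:D=Aq}, but now applied to the periodic long-range operator $\hat H_{p/q,\theta,\varepsilon}$ on $\ell^2(\Z_q)$, writing it first in its block-Jacobi form on $\ell^2(\Z_r,\C^d)$ (with $q=rd$) and then directly in its scalar $q\times q$ form. For the block form, observe that $\hat H_{p/q,\theta,\varepsilon}$ viewed through $\Phi_n=(\phi_{nd+d-1},\dots,\phi_{nd})^T$ is exactly the $r$-periodic block-Jacobi matrix $B_\varepsilon\Phi_{n+1}+\tilde B_\varepsilon\Phi_{n-1}+C_\varepsilon(\theta+nd(p/q))\Phi_n$ with frequency $d\alpha=dp/q=p/r$; this is the same operator whose periodic finite-volume determinant is computed in Lemma~\ref{lem:P=M-I}. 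Repeating verbatim the row-reduction argument there — which only uses the recursions \eqref{eq:transfer_inductive_1}, \eqref{eq:transfer_inductive_2}, and never touches the corner terms ${\bf e}_{d,x}$ — gives $|\hat D_{p/q,E,\varepsilon}(\theta)| = |\det B_\varepsilon|^{\,r}\cdot|\det(M^{\varepsilon}_{r,E}(p/r,\theta)-I_{2d})|$. Since $\det B_\varepsilon = (\hat v_{-d}e^{2\pi d\varepsilon})^d$ (the matrix $B_\varepsilon$ is upper triangular with all diagonal entries $\hat v_{-d}e^{2\pi d\varepsilon}$, and it is $d\times d$), we get $|\det B_\varepsilon|^{\,r} = |\hat v_{-d}e^{2\pi d\varepsilon}|^{\,rd} = |\hat v_{-d}e^{2\pi d\varepsilon}|^{\,q}$, which yields \eqref{eq:hD=M}.

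For \eqref{eq:hD=A}, I would use the identity \eqref{eq:M=A2} at frequency $\alpha=p/q$, namely $M^\varepsilon_{r,E}(p/r,\theta)=\mathrm{diag}(B_\varepsilon,I_d)\cdot\hat A^\varepsilon_{d r,E}(p/q,\theta)\cdot\mathrm{diag}(B_\varepsilon^{-1},I_d) = \mathrm{diag}(B_\varepsilon,I_d)\cdot\hat A^\varepsilon_{q,E}(p/q,\theta)\cdot\mathrm{diag}(B_\varepsilon^{-1},I_d)$, since the $r$-step block transfer matrix at frequency $p/r$ is the $q=rd$-step scalar transfer matrix at frequency $p/q$. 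Conjugation by $\mathrm{diag}(B_\varepsilon,I_d)$ leaves $\det(\,\cdot\,-I_{2d})$ invariant, so $|\det(M^\varepsilon_{r,E}(p/r,\theta)-I_{2d})| = |\det(\hat A^\varepsilon_{q,E}(p/q,\theta)-I_{2d})|$, and \eqref{eq:hD=A} follows from \eqref{eq:hD=M}. Alternatively, one can bypass the block form entirely and derive \eqref{eq:hD=A} directly: perform the scalar analogue of the row reduction on the $q\times q$ matrix $\hat H_{p/q,\theta,\varepsilon}-E$ exactly as in the proof of Lemma~\ref{lem:D=Aq}, using the recursion for $\hat A^\varepsilon_{E}$ from \eqref{def:hAE}; the leading coefficient $\hat v_{-d}e^{2\pi d\varepsilon}$ appearing in the denominator of each one-step transfer matrix accounts for the $q\log|\hat v_{-d}e^{2\pi d\varepsilon}|$ term, and the periodic (wrap-around) corners produce the $-I_{2d}$.

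The routine part is the bookkeeping of the row operations; the one genuine subtlety — and the step I would be most careful about — is matching the power of $\hat v_{-d}e^{2\pi d\varepsilon}$ exactly, since the scalar one-step transfer matrix $\hat A^\varepsilon_E$ carries a prefactor $1/(\hat v_{-d}e^{2\pi d\varepsilon})$ while each block $B_\varepsilon$ contributes $\det B_\varepsilon=(\hat v_{-d}e^{2\pi d\varepsilon})^d$; one must check these are consistent across the two formulations and that the total is precisely $q$, not $q\pm O(1)$. A secondary point to verify is that the periodic boundary terms in the block-Jacobi matrix really do reproduce the $\det(M-I_{2d})$ structure at the reduced frequency $p/r$ with the right number $r$ of blocks — i.e. that $q=rd$ divisibility is genuinely used and the wrap-around closes up correctly — which is exactly why the hypothesis $d\mid q$ is imposed.
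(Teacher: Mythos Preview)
Your proposal is correct and follows essentially the same approach as the paper: the paper derives \eqref{eq:hD=A} from \eqref{eq:hD=M} via the conjugation identity \eqref{eq:M=A2} exactly as you suggest, and proves \eqref{eq:hD=M} by carrying out explicitly the block row-reduction on $\hat H_{p/q,\theta,\varepsilon}-E$ in its $r\times r$ block-Jacobi form, which is precisely the computation of Lemma~\ref{lem:P=M-I} with $\tilde B_\varepsilon$ in place of $B^*$. Your identification of the power count $|\det B_\varepsilon|^r=|\hat v_{-d}e^{2\pi d\varepsilon}|^q$ and of where $d\mid q$ is used is accurate.
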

\begin{proof}
\eqref{eq:hD=A} follows from \eqref{eq:hD=M}, due to \eqref{eq:M=A2}.
The proof of \eqref{eq:hD=M} is similar to those of Lemmas~\ref{lem:D=Aq} and \ref{lem:P=M-I}, and is based on elementary row operations.
Consider the operator
\begin{align}
\hat{H}_{p/q,\theta,\varepsilon}-E=
\left(\begin{matrix}
C_{r-1} & \tilde{B} & & & & &B\\
B &C_{r-2} &\ddots & & & &\\
   &\ddots &\ddots & & & &\\
\\
 & & & &\ddots &\ddots &\\
 & & & &\ddots &C_1 &\tilde{B}\\
\tilde{B} & & & & &B &C_0
\end{matrix}\right)
=:
\left(\begin{matrix}
\text{Row}_1\\
\vdots\\
\text{Row}_r
\end{matrix}\right),
\end{align}
where $C_j=C_{\varepsilon}(\theta+jp/r)-E$ and $B=B_{\varepsilon}$, $\tilde{B}=\tilde{B}_{\varepsilon}$. Denote $M_{r,E}^{\varepsilon}(p/r,\theta+jp/r)=:M_r(j)$, and
\begin{align}
M_{r}(j)=
\left(\begin{matrix}
M_r^{11}(j) & M_r^{12}(j)\\
M_r^{21}(j) & M_r^{22}(j)
\end{matrix}\right).
\end{align}
We will use the following recursive relations repeatedly:
\begin{align}
\begin{cases}
M_r^{11}(r-k)=-M_{r-1}^{11}(r-k+1)C_{r-k}B^{-1}+M_{r-1}^{12}(r-k+1)B^{-1}\\
M_r^{12}(r-k)=-M_{r-1}^{11}(r-k+1)\tilde{B}\\
M_1^{11}(k)=-C_k B^{-1}\\
M_1^{12}(k)=-\tilde{B}
\end{cases}
\end{align}
Replacing $\text{Row}_1$ with $\text{Row}_1-C_{r-1}B^{-1}\cdot \text{Row}_2$ yields
\begin{align}
\text{Row}_1^{(1)}=
&(\begin{matrix}
0, &\tilde{B}-C_{r-1}B^{-1}C_{r-2}, &-C_{r-1}B^{-1}\tilde{B}, &0, \cdots, &0, &B
\end{matrix})\\
=&(\begin{matrix}
0, &-M_2^{11}(r-2)B, &-M_2^{12}(r-2), &0, \cdots, &0, &B).
\end{matrix}
\end{align}
Replacing $\text{Row}_1^{(1)}$ with $\text{Row}_1^{(1)}+M_2^{11}(r-2)\cdot \text{Row}_3$ yields
\begin{align}
\text{Row}_1^{(2)}=
&(\begin{matrix}
0, &0, &-M_2^{12}(r-2)+M_2^{11}(r-2)C_{r-3}, &M_2^{11}(r-2)\tilde{B}, &0, \cdots , &0, & B
\end{matrix})\\
=&(\begin{matrix}
0, &0, &-M_3^{11}(r-3)B, &-M_3^{12}(r-3), &0, \cdots, &0, & B
\end{matrix}).
\end{align}
Repeating this process  we arrive at
\begin{align}
\text{Row}_1^{(r-2)}=
(\begin{matrix}
0, \cdots, &0, & -M_{r-1}^{11}(1)B, &B-M_{r-1}^{12}(1)
\end{matrix}).
\end{align}
Next we replace $\text{Row}_2$ with $\text{Row}_2-C_{r-2}B^{-1}\cdot \text{Row}_3$, which yields
\begin{align}
\text{Row}_2^{(1)}=
&(\begin{matrix}
B, &0, &\tilde{B}-C_{r-2}B^{-1}C_{r-3}, &-C_{r-2}B^{-1}\tilde{B}, &0, \cdots, &0
\end{matrix})\\
=&(\begin{matrix}
B, &0, &-M_2^{11}(r-3)B, &-M_2^{12}(r-3), &0, \cdots, &0
\end{matrix})
\end{align}
Repeating this process we obtain
\begin{align}
\text{Row}_2^{(r-3)}
=(\begin{matrix}
B, &0, \cdots &0, &-M_{r-2}^{11}(1)B, &-M_{r-2}^{12}(1)
\end{matrix})
\end{align}
The first, second and last rows are
\begin{align}
\left(\begin{matrix}
\text{Row}_1^{(q-2)}\\
\text{Row}_2^{(q-3)}\\
\text{Row}_{r}
\end{matrix}\right)
=
\left(\begin{matrix}
0 &0 &\cdots &0 &-M_{r-1}^{11}(1)B &B-M_{r-1}^{12}(1)\\
B &0 &\cdots &0 &-M_{r-2}^{11}(1)B &-M_{r-2}^{12}(1)\\
\tilde{B} &0 &\cdots &0 &B &C_0
\end{matrix}\right)
\end{align}
Expanding determinants and using that $M_{r-2}^{11}(1)=BM_{r-1}^{21}(1)$, $M_{r-2}^{12}(1)=BM_{r-1}^{22}(1)$, we have
\begin{align}
|\hat{D}_{p/q,E,\varepsilon}(\theta)|
=&|\det B|^{r-2}\cdot 
\left|\det\left(\begin{matrix}
0 &-M_{r-1}^{11}(1)B &B-M_{r-1}^{12}(1)\\
B &-M_{r-2}^{11}(1)B &-M_{r-2}^{12}(1)\\
\tilde{B} &B &C_0
\end{matrix}\right)\right|\\
=&|\det B|^{r-2}\cdot 
\left|\det\left(\begin{matrix}
0 &-M_{r-1}^{11}(1)B &B-M_{r-1}^{12}(1)\\
B &-BM_{r-1}^{21}(1)B &-BM_{r-1}^{22}(1)\\
\tilde{B} &B &C_0
\end{matrix}\right)\right|\\
=&|\det B|^{r-1}\cdot |\det \tilde{B}|\cdot 
\left|\det\left(\begin{matrix}
0 &-M_{r-1}^{11}(1) &B-M_{r-1}^{12}(1)\\
B &-BM_{r-1}^{21}(1) &-BM_{r-1}^{22}(1)\\
I_2 &\tilde{B}^{-1} &\tilde{B}^{-1}C_0
\end{matrix}\right)\right|\\
=&|\det B|^{r-1}\cdot |\det \tilde{B}|\cdot 
\left|\det\left(\begin{matrix}
0 &0 &-M_{r-1}^{11}(1) &B-M_{r-1}^{12}(1)\\
0 &I_d &-M_{r-1}^{21}(1) &-M_{r-1}^{22}(1)\\
I_d &0 &0 &-B\\
0& I_d &\tilde{B}^{-1} &\tilde{B}^{-1}C_0
\end{matrix}\right)\right|\\
=&|\det B|^{r-1}\cdot |\det \tilde{B}|\cdot 
\left|\det\left(\begin{matrix}
\left(\begin{matrix}0 &0\\ 0 &I_d\end{matrix}\right)
&-M_{r-1}(1)+\left(\begin{matrix} 0 &B\\ 0 &0\end{matrix}\right)\\
I_{2d} &-M_1(0)^{-1}
\end{matrix}\right)\right|\\
=&|\det B|^{r-1}\cdot |\det \tilde{B}|\cdot |\det M_1(0)^{-1}|\cdot 
\left|\det\left(\left(\begin{matrix}
0 & 0\\
0 &I_d
\end{matrix}\right)-M_r(0)+\left(\begin{matrix} 0 &B \\ 0 &0\end{matrix}\right)M_1(0)\right)\right|\\
=&|\det B|^r |\det(M_r(0)-I_{2d})|.
\end{align}
This implies the claimed result by noting $|\det B|=|\hat{v}_{-d}e^{2\pi d\varepsilon}|^d$.
\end{proof}

Taking Lemmas \ref{lem:hD=L1} and \ref{lem:hD=hAq} into account, Lemma \ref{lem:non-sa-HP_pq} will follow from the following.
\begin{lemma}\label{lem:Aq_L^k0}
Under the assumptions of Lemma \ref{lem:non-sa-HP_pq} one has 
\begin{align}
&\left|\frac{1}{q}\int_{\T} \log |\det(\hat{A}_{q,E}^{\varepsilon}(p/q,\theta)-I_{2d})|\, \mathrm{d}\theta-\hat{L}^{k_0}(\alpha,\hat{A}_{E}^{\varepsilon})\right|\\
&\qquad\leq C\varepsilon_0\left(q^{-1}\varepsilon_0^{-\frac12}+\hat{L}^{k_0}(\alpha, \hat{A}_{E}^{\varepsilon})+\left|\log|\hat{v}_{-d}e^{2\pi d\varepsilon}|\right|\right)
\end{align}
for all sufficiently large $q$.
\end{lemma}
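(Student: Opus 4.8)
The plan is to evaluate the left‑hand side through the eigenvalues of the periodic transfer matrix $\hat{A}_q(\theta):=\hat{A}_{q,E}^{\varepsilon}(p/q,\theta)$ and to isolate the contribution of its expanding eigenvalues. Let $\mu_1(\theta),\dots,\mu_{2d}(\theta)$ be the eigenvalues of $\hat{A}_q(\theta)$, ordered so that $|\mu_1(\theta)|\geq\cdots\geq|\mu_{2d}(\theta)|$. Then pointwise $\log|\det(\hat{A}_q(\theta)-I_{2d})|=\sum_{j=1}^{2d}\log|\mu_j(\theta)-1|$, and by the definition \eqref{def:Lpq} one has $\tfrac1q\log\prod_{i\leq j}|\mu_i(\theta)|=\tfrac1q\log\rho(\wedge^j\hat{A}_q(\theta))=L^j(p/q,\hat{A}_E^{\varepsilon},\theta)$. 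Applying Lemma~\ref{lem:Lpq_uniform_upper} to all exterior powers $\wedge^m\hat{A}_E^{\varepsilon}$, $m=1,\dots,2d$, gives, uniformly in $\theta$ and for $p/q$ close to $\alpha$, the upper bounds $L^m(p/q,\hat{A}_E^{\varepsilon},\theta)\leq\hat{L}^m(\alpha,\hat{A}_E^{\varepsilon})+\varepsilon_0$. Moreover $\sum_j\log|\mu_j(\theta)|=\log|\det\hat{A}_q(\theta)|$ is constant in $\theta$ (indeed $-4\pi dq\varepsilon$ by \eqref{eq:LA_eps=LA_0}), and the eigenvalues pair up with $|\mu_k(\theta)\mu_{2d+1-k}(\theta)|$ constant in $\theta$ because $M_E^{0}$ is complex symplectic, cf.\ \eqref{eq:symplectic}, \eqref{eq:LM_eps=LM-eps}, \eqref{eq:M=A2}.

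The first step is to show that off a bad set $\mathcal{B}_q$ of small measure the eigenvalues split according to the dual spectral gap $\hat{L}_{k_0}(\alpha,\hat{A}_E^{\varepsilon})>10\varepsilon_0>0>\hat{L}_{k_0+1}(\alpha,\hat{A}_E^{\varepsilon})$: precisely $|\mu_j(\theta)|\geq e^{q\varepsilon_0}$ for $1\leq j\leq k_0$ and $|\mu_j(\theta)|\leq e^{-q\varepsilon_0}$ for $k_0<j\leq 2d$. Granting a lower bound $L^{k_0}(p/q,\hat{A}_E^{\varepsilon},\theta)\geq\hat{L}^{k_0}(\alpha,\hat{A}_E^{\varepsilon})-\varepsilon_0^2$ for $\theta\in\mathcal{B}_q^{c}$, the split follows by differencing: $\tfrac1q\log|\mu_{k_0}(\theta)|=L^{k_0}(p/q,\cdot,\theta)-L^{k_0-1}(p/q,\cdot,\theta)\geq\hat{L}_{k_0}(\alpha,\hat{A}_E^{\varepsilon})-\varepsilon_0-\varepsilon_0^2>8\varepsilon_0$, and likewise $\tfrac1q\log|\mu_{k_0+1}(\theta)|<-8\varepsilon_0$ using $-\hat{L}_{k_0+1}(\alpha,\hat{A}_E^{\varepsilon})>10\varepsilon_0$. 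The required lower bound on $L^{k_0}(p/q,\hat{A}_E^{\varepsilon},\theta)=\tfrac1q\log\rho(\wedge^{k_0}\hat{A}_q(\theta))$ is a moderate deviation estimate at the rational frequency $p/q$ for the exterior‑power cocycle, the rational‑frequency analogue of Lemma~\ref{lem:LDT}; combined with $L^{k_0}(p/q,\hat{A}_E^{\varepsilon})\to L^{k_0}(\alpha,\hat{A}_E^{\varepsilon})$ (Theorem~\ref{thm:conti_L_alpha}) it yields $\mathrm{mes}(\mathcal{B}_q)\leq e^{-cq\varepsilon_0^4}$. On $\mathcal{B}_q^{c}$ one then has $\bigl|\log|\mu_j(\theta)-1|-\log|\mu_j(\theta)|\bigr|\leq 2e^{-q\varepsilon_0}$ for $j\leq k_0$ and $\bigl|\log|\mu_j(\theta)-1|\bigr|\leq 2e^{-q\varepsilon_0}$ for $j>k_0$, so that $\bigl|\log|\det(\hat{A}_q(\theta)-I_{2d})|-qL^{k_0}(p/q,\hat{A}_E^{\varepsilon},\theta)\bigr|\leq 2d\,e^{-q\varepsilon_0}$ there.

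For the integration, on $\mathcal{B}_q^{c}$ the last estimate together with $\int_{\T}L^{k_0}(p/q,\hat{A}_E^{\varepsilon},\theta)\,\mathrm{d}\theta=L^{k_0}(p/q,\hat{A}_E^{\varepsilon})\to\hat{L}^{k_0}(\alpha,\hat{A}_E^{\varepsilon})$ produces the main term. On $\mathcal{B}_q$, the positive part of $\tfrac1q\log|\det(\hat{A}_q-I_{2d})|$ is bounded uniformly by $\hat{L}^{k_0}(\alpha,\hat{A}_E^{\varepsilon})+\varepsilon_0+O(1/q)$, since $\prod_j(|\mu_j|+1)\leq 2^{2d}\prod_{|\mu_j|\geq 1}|\mu_j|=2^{2d}e^{q\max_k L^k(p/q,\cdot,\theta)}$ and $\max_k\hat{L}^k(\alpha,\hat{A}_E^{\varepsilon})=\hat{L}^{k_0}(\alpha,\hat{A}_E^{\varepsilon})$; hence its contribution is $\leq e^{-cq\varepsilon_0^4}(\hat{L}^{k_0}(\alpha,\hat{A}_E^{\varepsilon})+\varepsilon_0)$. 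The negative part on $\mathcal{B}_q$, coming from the $\theta$ near which $1\in\mathrm{spec}(\hat{A}_q(\theta))$, is controlled by the integrability of $\log|\det(\hat{A}_q(\cdot)-I_{2d})|$: this function is a scalar multiple (the scalar being a power of $(\hat{v}_{-d}e^{2\pi d\varepsilon})^{-1}$) of a trigonometric polynomial in $e^{2\pi i\theta}$ of degree $\leq q$, so a Jensen/Cartan‑type bound makes its negative‑part integral over any small set controlled by $\tfrac1q\log\sup_{\theta}|\det(\hat{A}_q(\theta)-I_{2d})|$ and by $|\log|\hat{v}_{-d}e^{2\pi d\varepsilon}||$. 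For $q$ large all these error terms are absorbed into $C\varepsilon_0\bigl(q^{-1}\varepsilon_0^{-1/2}+\hat{L}^{k_0}(\alpha,\hat{A}_E^{\varepsilon})+|\log|\hat{v}_{-d}e^{2\pi d\varepsilon}||\bigr)$. The degenerate case $\hat{L}_1(\alpha,\hat{A}_E^{\varepsilon})<0$ ($k_0=0$, $\hat{L}^{k_0}=0$) is simpler: off a small set all $|\mu_j(\theta)|\leq e^{-q\varepsilon_0}$, so $\log|\det(\hat{A}_q(\theta)-I_{2d})|=O(e^{-q\varepsilon_0})$.

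The main obstacle is the first step. The uniform upper bounds of Lemma~\ref{lem:Lpq_uniform_upper} alone do not produce the clean eigenvalue split: since the gap $\hat{L}_{k_0}(\alpha,\hat{A}_E^{\varepsilon})$ may exceed $10\varepsilon_0$ by only a little, a Chebyshev argument from $L^{k_0}(p/q,\hat{A}_E^{\varepsilon})\to L^{k_0}(\alpha,\hat{A}_E^{\varepsilon})$ gives a bad set of measure comparable to $1$; and Assumption~\ref{assume_regular} supplies only $1$-regularity of $(\alpha,A_{E,\varepsilon})$, so Theorem~\ref{thm:Lpq_regular_no_dev} cannot be applied to the dual cocycle. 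Hence one genuinely needs a rational‑frequency deviation estimate for $\tfrac1q\log\rho(\wedge^{k_0}\hat{A}_q(p/q,\theta))$, with the control of the negative part of $\log|\det(\hat{A}_q(\cdot)-I_{2d})|$ near its zeros as the accompanying technical point.
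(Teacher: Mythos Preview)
Your overall strategy is right, and in fact quite close to the paper's, but the obstacle you flag at the end is illusory. You write that the gap $\hat{L}_{k_0}$ ``may exceed $10\varepsilon_0$ by only a little,'' but this inverts the logic of Lemma~\ref{lem:non-sa-HP_pq}: the gap $\min(\hat{L}_{k_0},-\hat{L}_{k_0+1})$ is \emph{fixed} (it depends only on $\alpha,E,\varepsilon$), while $\varepsilon_0$ is a free small parameter taken to $0$ afterwards. Set $\delta:=\tfrac12\min(\hat{L}_{k_0},-\hat{L}_{k_0+1})>5\varepsilon_0$ and take the good set to be $\mathcal{G}=\{\theta:\tfrac1q\sum_{j\le k_0}\log|\mu_j(\theta)|>\hat{L}^{k_0}-\delta\}$. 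The Chebyshev argument you dismissed---uniform upper bound $\hat{L}^{k_0}+\varepsilon_0$ from Lemma~\ref{lem:Lpq_uniform_upper}, average $\ge\hat{L}^{k_0}-\varepsilon_0$ from Theorem~\ref{thm:conti_L_alpha}---already gives $|\mathcal{G}^c|\le\frac{2\varepsilon_0}{\delta+\varepsilon_0}=O(\varepsilon_0)$, with the implied constant depending only on the fixed gap. No rational-frequency large-deviation estimate is needed; in particular your requested lower bound $L^{k_0}(p/q,\cdot,\theta)\ge\hat{L}^{k_0}-\varepsilon_0^2$ is far stronger than necessary. Your eigenvalue split in fact follows on $\mathcal{G}$ by differencing against the uniform upper bounds for $L^{k_0\pm1}$, exactly as you wrote but with the macroscopic threshold $\delta$ in place of $\varepsilon_0^2$: one gets $\tfrac1q\log|\mu_{k_0}|\ge\hat{L}_{k_0}-\delta-\varepsilon_0>4\varepsilon_0$ and $\tfrac1q\log|\mu_{k_0+1}|\le\hat{L}_{k_0+1}+\delta+\varepsilon_0<-4\varepsilon_0$.

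The paper takes a slightly different route on $\mathcal{G}$: rather than split the eigenvalues individually, it expands $\prod_j(\mu_j-1)=\sum_S(-1)^{2d-|S|}\prod_{j\in S}\mu_j$ and shows (Lemma~\ref{lem:prod_lambda_not_largest}) that every subproduct with $S\ne\{1,\dots,k_0\}$ satisfies $\prod_{j\in S}|\mu_j|\le e^{q(\max(\hat{L}^{k_0-1},\hat{L}^{k_0+1})+\varepsilon_0)}=e^{q(\hat{L}^{k_0}-2\delta+\varepsilon_0)}$ \emph{uniformly in $\theta$}, using only Lemma~\ref{lem:Lpq_uniform_upper}. On $\mathcal{G}$ the main term $\prod_{j\le k_0}|\mu_j|$ exceeds $e^{q(\hat{L}^{k_0}-\delta)}$, so it dominates the other $2^{2d}-1$ terms and one obtains $\tfrac12\prod_{j\le k_0}|\mu_j|\le\prod_j|\mu_j-1|\le 2\prod_{j\le k_0}|\mu_j|$ directly. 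On $\mathcal{G}^c$, instead of an abstract Cartan-type bound, the paper exploits the already-established identity (Lemmas~\ref{lem:hD=hAq} and~\ref{lem:tD_expression})
\[
|\det(\hat{A}_q(\theta)-I_{2d})|=|\hat{v}_{-d}e^{2\pi d\varepsilon}|^{-q}\,\bigl|D_{p/q,E,\varepsilon}(0)+2(-1)^{q+1}\cos(2\pi q\theta)\bigr|
\]
and controls the negative part by Cauchy--Schwarz on $\mathcal{G}^c$ against the uniform $L^2(\T)$-bound for $\log|a+2\cos(2\pi\cdot)|$; this is precisely where the $q^{-1}\varepsilon_0^{-1/2}$ and $|\log|\hat{v}_{-d}e^{2\pi d\varepsilon}||$ terms in the target inequality originate.
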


\begin{proof} 
Let 
$$|\hat{\lambda}_{1,E,\varepsilon}(\theta)|\geq \cdots \geq |\hat{\lambda}_{2d,E,\varepsilon}(\theta)|$$
be the eigenvalues of $\hat{A}_{q,E}^{\varepsilon}(p/q,\theta)$.
We need the following elementary estimate.
\begin{lemma}\label{lem:spe_rad_eig}
For any $1\leq m\leq 2d$, we have
\begin{align}
\hat{L}^m(p/q, \hat{A}_{E}^{\varepsilon},\theta)=\frac{1}{q}\sum_{j=1}^m \log |\hat{\lambda}_{j,E,\varepsilon}(\theta)|.
\end{align}
\end{lemma}
\begin{proof}
Note by \eqref{eq:L1=lambda1} the claimed is true for $m=1$. Let 
\begin{align}
\hat{A}_{q,E}^{\varepsilon}(p/q,\theta)=U_q^{-1}(\theta) B_q(\theta) U_q(\theta),
\end{align}
be the Jordan form of $\hat{A}_{q,E}^{\varepsilon}(p/q,\theta)$, where we omit the dependence of $U$ and $B$ on $E,\varepsilon$ for simplicity.
Clearly $U_q, U_q^{-1}, B_q$ are all continuous in $\theta$.
Thus there exists some constants $C_m>0$ such that
\begin{align}\label{eq:wedge_Uq<Cm}
\sup_{\theta\in \T} \max(\|\wedge^m U_q^{-1}(\theta)\|, \|\wedge^m U_q(\theta)\|)\leq C_m.
\end{align}
It is also easy to see that
\begin{align}\label{eq:wedge_Bq}
\|\wedge^m (B_q(\theta))^k\|=\prod_{j=1}^m |\hat{\lambda}_{j,E,\varepsilon}(\theta)|^k
\end{align}
Combining \eqref{eq:wedge_Bq} with \eqref{eq:wedge_Uq<Cm}, we have for any $k\geq 1$,
\begin{align}
&\frac{1}{k}\log \|\wedge^m (\hat{A}_{q,E}^{\varepsilon}(p/q,\theta))^k\|\\
\leq &\frac{1}{k}\left(\log \|\wedge^m U_q^{-1}(\theta)\|
+\log \|\wedge^m (B_q(\theta))^k\|+\log \|(\wedge^m U_q(\theta))^k\|\right)\\
\leq &\sum_{j=1}^m \log |\hat{\lambda}_{j,E,\varepsilon}(\theta)|+\frac{2C_m}{k},
\end{align}
which implies 
\begin{align}
\hat{L}^m(p/q,\hat{A}_{E}^{\varepsilon},\theta)\leq \frac{1}{q}\sum_{j=1}^m \log |\hat{\lambda}_{j,E,\varepsilon}(\theta)|.
\end{align}
This other direction of the inequality can be proved similarly.
\end{proof}

\begin{lemma}\label{lem:prod_lambda_not_largest}
If $p/q$ is sufficiently close to $\alpha$, then for any $1\leq j_1<j_2<...<j_m\leq 2d$ such that $(j_1,...,j_{m})\neq (1,...,k_0)$, one has
\begin{align}\label{eq:prod_lambda_1}
\sup_{\theta\in \T} \frac{1}{q} \sum_{\ell=1}^m \log 
|\hat{\lambda}_{j_{\ell},E,\varepsilon}(\theta)|\leq 
\max(\hat{L}^{k_0-1}(\alpha, \hat{A}_{E}^{\varepsilon}), \hat{L}^{k_0+1}(\alpha, \hat{A}_{E}^{\varepsilon}))+\varepsilon_0,
\end{align}
and
\begin{align}\label{eq:prod_lambda_2}
\sup_{\theta\in \T} \frac{1}{q}\sum_{j=1}^{k_0}\log |\hat{\lambda}_{j,E,\varepsilon}(\theta)|\leq \hat{L}^{k_0}(\alpha,\hat{A}_{E}^{\varepsilon})+\varepsilon_0.    
\end{align}
\end{lemma}
\begin{proof}
{\em Case 1.} If $m\neq k_0$, we estimate by Lemmas~\ref{lem:Lpq_uniform_upper} and \ref{lem:spe_rad_eig} that uniformly in $\theta$
\begin{align}\label{eq:m_neq_k0_1}
 \frac{1}{q} \sum_{\ell=1}^m \log|\hat{\lambda}_{j_{\ell},E,\varepsilon}(\theta)|
\leq \frac{1}{q} \sum_{j=1}^m \log |\hat{\lambda}_{j,E,\varepsilon}(\theta)|=\hat{L}^m(p/q,\hat{A}_{E}^{\varepsilon},\theta)\leq \hat{L}^m(\alpha,\hat{A}_{E}^{\varepsilon})+\varepsilon_0.
\end{align}

{\em Case 2.1.} If $m=k_0$ and $|\hat{\lambda}_{j_{k_0},E,\varepsilon}(\theta)|\leq 1$, we estimate
\begin{align}
\frac{1}{q}\sum_{\ell=1}^{k_0} \log |\hat{\lambda}_{j_{\ell},E,\varepsilon}(\theta)|
\leq \frac{1}{q}\sum_{\ell=1}^{k_0-1} \log |\hat{\lambda}_{j_{\ell},E,\varepsilon}(\theta)|
\leq \hat{L}^{k_0-1}(\alpha, \hat{A}_{E}^{\varepsilon})+\varepsilon_0,
\end{align}
where we used the bound in Case 1.

{\em Case 2.2.} If $m=k_0$ and $|\hat{\lambda}_{j_{k_0},E,\varepsilon}(\theta)|>1$. We have $|\hat{\lambda}_{j,E,\varepsilon}(\theta)|>1$ for any $1\leq j\leq k_0$. Since $(j_1,...,j_{k_0})\neq (1,...,k_0)$, we have $j_{k_0}\geq k_0+1$. 
We estimate, in this case,
\begin{align}
\frac{1}{q}\sum_{\ell=1}^{k_0}\log |\hat{\lambda}_{j_{\ell},E,\varepsilon}(\theta)|
\leq \frac{1}{q}\sum_{j=1}^{j_{k_0}} \log|\hat{\lambda}_{j,E,\varepsilon}(\theta)|
\leq \hat{L}^{j_{k_0}}(\alpha, \hat{A}_{E}^{\varepsilon})+\varepsilon_0,
\end{align}
where we used again the estimate in Case 1.
Finally, note that due to the definition of $k_0$, we have
\begin{align}
\max_{\substack{m=1,\ldots,2d \\ m\neq k_0}} \hat{L}^m(\alpha, \hat{A}_{E}^{\varepsilon})\leq \max(\hat{L}^{k_0-1}(\alpha,\hat{A}_{E}^{\varepsilon}), \hat{L}^{k_0+1}(\alpha,\hat{A}_{E}^{\varepsilon})),
\end{align}
which proves \eqref{eq:prod_lambda_1}. Finally, 
\eqref{eq:prod_lambda_2} follows directly from Lemmas \ref{lem:spe_rad_eig} and \ref{lem:Lpq_uniform_upper}.
\end{proof}

Let 
\begin{align}\label{def:delta_L_dif}
\delta
:=&\frac{1}{2}\left(\hat{L}^{k_0}(\alpha,\hat{A}_{E}^{\varepsilon})-\max(\hat{L}^{k_0-1}(\alpha,\hat{A}_{E}^{\varepsilon}), \hat{L}^{k_0+1}(\alpha,\hat{A}_{E}^{\varepsilon}))\right)\\
=&\frac{1}{2}\min(\hat{L}_{k_0}(\alpha,\hat{A}_{E}^{\varepsilon}), -\hat{L}_{k_0+1}(\alpha,\hat{A}_{E}^{\varepsilon}))>5\varepsilon_0, 
\end{align}
and introduce the {\em good set} for fixed $q$
\begin{align}\label{def:G_Lpq}
\mathcal{G}_q=\mathcal{G}:=\left\{\theta\in \T: \frac{1}{q}\sum_{j=1}^{k_0} \log |\hat{\lambda}_{j,E,\varepsilon}(\theta)|>\hat{L}^{k_0}(\alpha,\hat{A}_{E}^{\varepsilon})-\delta\right\}.
\end{align}

\begin{lemma}\label{lem:mes_G}
The Lebesgue measure of $\mathcal{G}$ satisfies
\begin{align}
|\mathcal{G}|\geq 1-\frac{2\varepsilon_0}{\delta+\varepsilon_0}.
\end{align}
\end{lemma}
\begin{proof}
By Lemma \ref{lem:prod_lambda_not_largest}, we have
\begin{align}\label{eq:spe_rad_upp}
\sup_{\theta} \frac{1}{q}\sum_{j=1}^{k_0} \log |\hat{\lambda}_{j,E,\varepsilon}(\theta)|\leq \hat{L}^{k_0}(\alpha,\hat{A}_{E}^{\varepsilon})+\varepsilon_0.
\end{align}
Hence by the definition of $\mathcal{G}$ in \eqref{def:G_Lpq} and \eqref{eq:spe_rad_upp}, we have
\begin{align}
|\mathcal{G}|\cdot (\hat{L}^{k_0}(\alpha,\hat{A}_{E}^{\varepsilon})+\varepsilon_0)+(1-|\mathcal{G}|)\cdot (\hat{L}^{k_0}(\alpha,\hat{A}_{E}^{\varepsilon})-\delta) 
\geq &\int_{\T}\hat{L}^{k_0}(p/q,\hat{A}_{E}^{\varepsilon},\theta)\, \mathrm{d}\theta\\
=&\hat{L}^{k_0}(p/q,\hat{A}_{E}^{\varepsilon})>
\hat{L}^{k_0}(\alpha,\hat{A}_{E}^{\varepsilon})-\varepsilon_0.
\end{align}
This implies
\begin{align}\label{eq:mea_G_varepsilon0}
|\mathcal{G}|\geq 1-\frac{2\varepsilon_0}{\delta+\varepsilon_0},
\end{align}
as desired.
\end{proof}

On the good set we have the following approximation result. 

\begin{lemma}\label{lem:lam_comp}
If $p/q$ is sufficiently close to $\alpha$, then
\begin{align}\label{eq:intG_dif}
\frac{1}{q} \int_{\mathcal{G}} \sum_{j=1}^{2d} \log|\hat{\lambda}_{j,E,\varepsilon}(\theta)-1|\, \mathrm{d}\theta
=\frac{1}{q}
\int_{\mathcal{G}}\sum_{j=1}^{k_0}\log |\hat{\lambda}_{j,E,\varepsilon}(\theta)|\, \mathrm{d}\theta+O(1/q).
\end{align}
\end{lemma}
\begin{proof}
    For each $\theta\in \mathcal{G}$, for $q$ large, we estimate by Lemma \ref{lem:prod_lambda_not_largest} that
\begin{align}\label{eq:prod_lambda-1_1}
\prod_{j=1}^{2d}|\hat{\lambda}_{j,E,\varepsilon}(\theta)-1|\geq 
&\prod_{j=1}^{k_0} |\hat{\lambda}_{j,E,\varepsilon}(\theta)|-\sum_{\substack{j_1<j_2<...<j_m\\ (j_1,...,j_m)\neq (1,...,k_0)}} \prod_{\ell=1}^m |\hat{\lambda}_{j_{\ell},E,\varepsilon}(\theta)|\\
\geq &\frac{1}{2}\prod_{j=1}^{k_0} |\hat{\lambda}_{j,E,\varepsilon}(\theta)|.
\end{align}
Analogously, for all $\theta\in \mathcal{G}$ we have the upper bound
\begin{align}\label{eq:prod_lambda-1_2}
\prod_{j=1}^{2d}|\hat{\lambda}_{j,E,\varepsilon}(\theta)-1|\leq 2\prod_{j=1}^{k_0} |\hat{\lambda}_{j,E,\varepsilon}(\theta)|.
\end{align}
The result follows by \eqref{eq:prod_lambda-1_1} and \eqref{eq:prod_lambda-1_2}. 
\end{proof}

Next, we estimate 
\begin{lemma}\label{lem:Gc_1}
Under the assumptions of Lemma~\ref{lem:non-sa-HP_pq}, 
\begin{align}
\left|\frac{1}{q}\int_{\mathcal{G}^c}\log |\det(\hat{A}_{q,E}^{\varepsilon}(p/q,\theta)-I_{2d})|\, \mathrm{d}\theta\right|\leq C\varepsilon_0
\left(q^{-1}\varepsilon_0^{-\frac12}+\hat{L}^{k_0}(\alpha, \hat{A}_{E}^{\varepsilon})+\big|\log |\hat{v}_{d}e^{2\pi d\varepsilon}|\big|\right).
\end{align}
\end{lemma}
\begin{proof}
By Lemma \ref{lem:prod_lambda_not_largest}, we have
\begin{align}
|\det(\hat{A}_{q,E}^{\varepsilon}(p/q,\theta)-I_{2d})|\leq \sum_{1\leq j_1<...<j_m\leq 2d}\,\, \prod_{\ell=1}^m |\hat{\lambda}_{j_{\ell},E,\varepsilon}(\theta)|\leq C_d \exp(q\cdot (\hat{L}^{k_0}(\alpha, \hat{A}_{E}^{\varepsilon})+\varepsilon_0)).
\end{align}
This implies, after combining with \eqref{eq:mea_G_varepsilon0},
\begin{align}\label{eq:int_Gc_upp}
\frac{1}{q}\int_{\mathcal{G}^c}\log |\det(\hat{A}_{q,E}^{\varepsilon}(p/q,\theta)-I_{2d})|\, \mathrm{d}\theta
\leq &|\mathcal{G}^c|\cdot(\hat{L}^{k_0}(\alpha,\hat{A}_{E}^{\varepsilon})+2\varepsilon_0)\\
\leq &C\varepsilon_0(\hat{L}^{k_0}(\alpha,\hat{A}_{E}^{\varepsilon})+2\varepsilon_0).
\end{align}
By Lemmas \ref{lem:hD=hAq}, \ref{lem:tD_expression},  and \eqref{eq:mea_G_varepsilon0}, we have 
\begin{align}
&\frac{1}{q}\int_{\mathcal{G}^c}\log |\det(\hat{A}_{q,E}^{\varepsilon}(p/q,\theta)-I_{2d})|\, \mathrm{d}\theta\\
=&\frac{1}{q}\int_{\mathcal{G}^c}\left(\log |\hat{D}_{p/q,E,\varepsilon}(\theta)|-q\log |\hat{v}_{-d}e^{2\pi d\varepsilon}|\right)\, \mathrm{d}\theta\\
=&\frac{1}{q}\int_{\mathcal{G}^c}\log |D_{p/q,E,\varepsilon}(0)+2(-1)^{q+1}\cos(2\pi q\theta)|\, \mathrm{d}\theta-|\mathcal{G}^c| \cdot \log |\hat{v}_{-d}e^{2\pi d\varepsilon}|\\
\geq &-C\varepsilon_0 \left|\log |\hat{v}_{-d}e^{2\pi d\varepsilon}|\right| %-C\varepsilon_0 \log (1/\varepsilon_0^{-1})-
\end{align}
provided $|D_{p/q,E,\varepsilon}(0)|\ge 3$. On the other hand, if $|D_{p/q,E,\varepsilon}(0)|\le 3$, then 
\begin{align}
&\frac{1}{q}\int_{\mathcal{G}^c}\log |D_{p/q,E,\varepsilon}(0)+2(-1)^{q+1}\cos(2\pi q\theta)|\, \mathrm{d}\theta \ge  \\
&-\frac{1}{q}|\mathcal{G}^c|^{\frac12} \Big( \int_{\mathbb{T}} \big(\log |D_{p/q,E,\varepsilon}(0)+2(-1)^{q+1}\cos(2\pi \theta)|\big)^2\, \mathrm{d}\theta \Big)^{\frac12} \ge -Cq^{-1}\sqrt{\varepsilon_0},  
\end{align}
as desired.
\end{proof}
We also have
\begin{lemma}\label{lem:Gc_2}
\begin{align}
\left|\frac{1}{q}\int_{\mathcal{G}^c} \sum_{j=1}^{k_0} \log |\hat{\lambda}_{j,E,\varepsilon}(\theta)|\, \mathrm{d}\theta-|\mathcal{G}^c|\cdot \hat{L}^{k_0}(\alpha,\hat{A}_{E}^{\varepsilon})\right|\leq 2\varepsilon_0.
\end{align}
\end{lemma}
\begin{proof}
For simplicity, we write $g(\theta):=\frac{1}{q}\log \sum_{j=1}^{k_0} |\hat{\lambda}_{j,E,\varepsilon}(\theta)|$ and $\hat{L}^{k_0}:=\hat{L}^{k_0}(\alpha,\hat{A}_{E}^{\varepsilon})$.
By Lemmas~\ref{lem:spe_rad_eig} and~\ref{lem:prod_lambda_not_largest}, we have
\begin{align}
\hat{L}^{k_0}-\varepsilon_0\leq \int_{\T} g(\theta)\, \mathrm{d}\theta
\leq \int_{\mathcal{G}^c}g(\theta)\, \mathrm{d}\theta+|\mathcal{G}|\cdot (\hat{L}^{k_0}+\varepsilon_0),
\end{align}
which implies
\begin{align}\label{eq:Gc_1}
\int_{\mathcal{G}^c} (g(\theta)-\hat{L}^{k_0})\, \mathrm{d}\theta\geq -2\varepsilon_0.
\end{align}
We also have by \eqref{eq:mea_G_varepsilon0} and Lemma \ref{lem:prod_lambda_not_largest} that
\begin{align}
\int_{\mathcal{G}^c}(g(\theta)-\hat{L}^{k_0})\, \mathrm{d}\theta\leq \varepsilon_0 \cdot |\mathcal{G}^c|\leq \frac{2\varepsilon_0^2}{\delta+\varepsilon_0}.
\end{align}
This proves the claimed result.
\end{proof}
Finally, combining \eqref{eq:intG_dif} with Lemmas \ref{lem:Gc_1} and \ref{lem:Gc_2}, we have
\begin{align}
&\left|\frac{1}{q}\int_{\T} \log |\det(\hat{A}_{q,E}^{\varepsilon}(p/q,\theta)-I_{2d})|\, \mathrm{d}\theta-\hat{L}^{k_0}(\alpha,\hat{A}_{E}^{\varepsilon})\right|\\
\leq &\left|\frac{1}{q}\int_{\mathcal{G}}\sum_{j=1}^{2d}\log |\hat{\lambda}_{j,E,\varepsilon}(\theta)-1|\, \mathrm{d}\theta-\frac{1}{q}\int_{\mathcal{G}}\sum_{j=1}^{k_0}\log |\hat{\lambda}_{j,E,\varepsilon}(\theta)|\, \mathrm{d}\theta\right|\\
&+\left|\frac{1}{q}\int_{\mathcal{G}^c}\log |\det(\hat{A}_{q,E}^{\varepsilon}(p/q,\theta)-I_{2d})|\, \mathrm{d}\theta\right|+\left|\frac{1}{q}\int_{\T}\sum_{j=1}^{k_0}\log |\hat{\lambda}_{j,E,\varepsilon}(\theta)|\, \mathrm{d}\theta-\hat{L}^{k_0}(\alpha,\hat{A}_{E,\varepsilon})\right|\\
&+\left|\frac{1}{q}\int_{\mathcal{G}^c}\sum_{j=1}^{k_0}\log |\hat{\lambda}_{j,E,\varepsilon}(\theta)|\, \mathrm{d}\theta-|\mathcal{G}^c|\cdot \hat{L}^{k_0}(\alpha,\hat{A}_{E}^{\varepsilon})\right|
+|\mathcal{G}^c|\cdot \hat{L}^{k_0}(\alpha,\hat{A}_{E}^{\varepsilon})\\
\leq &C\varepsilon_0(q^{-1}\varepsilon_0^{-\frac12}+\hat{L}^{k_0}(\alpha, \hat{A}_{E}^{\varepsilon})+\big|\log|\hat{v}_{-d}e^{2\pi d\varepsilon}|\big|).
\end{align}
Combining Lemmas \ref{lem:hD=L1}, \ref{lem:hD=hAq}, \ref{lem:Aq_L^k0}, we have proved Lemma \ref{lem:non-sa-HP_pq}.
\end{proof}

It remains to deal with the case $\hat{L}^{1}(\alpha, \hat{A}_{E}^{\varepsilon})<0$. By Lemmas~\ref{lem:Lpq_uniform_upper} and~\ref{lem:Aq_L^k0}, there exists~$\delta>0$ so that uniformly in~$\theta$, and provided $p/q$ is close to~$\alpha$, one has
\[
-\delta\ge \frac{1}{q}\sum_{j=1}^m \log |\hat{\lambda}_{j,E,\varepsilon}(\theta)|
\]
 for all $1\le m\le 2d$. 
Hence, for all $1\le j_1\le \ldots \le j_m\le 2d$ we have
\[
\sup_\theta\prod_{\ell=1}^m |\hat{\lambda}_{j_{\ell},E,\varepsilon}(\theta)|\le e^{-\delta q}
\]
Therefore, 
\[
\prod_{j=1}^{2d}|\hat{\lambda}_{j,E,\varepsilon}(\theta)-1| =1 + O_d(e^{-\delta q})
\]
and 
\[
\frac{1}{q}\int_{\T} \log |\det(\hat{A}_{q,E}^{\varepsilon}(p/q,\theta)-I_{2d})|\, \mathrm{d}\theta = O_d(e^{-\delta q})
\]
We now conclude Theorem~\ref{thm:non-sa-HP} from Lemmas \ref{lem:hD=L1} and \ref{lem:hD=hAq} as before.

\appendix

\numberwithin{equation}{section}

\setcounter{equation}{0}

\section{Proof of Theorem \ref{thm:ARC}}\label{app:A}

The argument closely follows~\cite{AJ2}, with a simplification from~\cite{A_ar_ac}.
Throughout this section we write $\hat{L}^j(\alpha,\hat{A}_E), \hat{L}_j(\alpha,\hat{A}_E)$ as $\hat{L}^j(\hat{A}_E), \hat{L}_j(\hat{A}_E)$ for simplicity.

\subsection{Preparation}
\begin{theorem}\cite{AJ2}*{Theorem 3.3}\label{thm:bdd_solution}
If $E\in\sigma(H_{\alpha,\theta})$, then there exist $\theta\in \R$ and a bounded solution of $\hat{H}_{\alpha,\theta}\hat{u}=E\hat{u}$ with
$\hat{u}_0=1$ and $|\hat{u}_n|\leq 1$.
\end{theorem}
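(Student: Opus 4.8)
The plan is to produce $\hat u$ as a pointwise limit of centered and normalized eigenvectors of periodic finite-volume truncations of $\hat H_{\alpha,\theta}$, allowing the phase $\theta$ to drift along the orbit of $\alpha$.

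First I would transfer membership in the spectrum from the Schr\"odinger operator to its dual. Since $\alpha$ is irrational, $\sigma(H_{\alpha,\theta})$ is independent of $\theta$ and equals the topological support of the integrated density of states $\mathcal N$; by the identity $\mathcal N=\widehat{\mathcal N}$ recalled in Section~\ref{sec:Pre}, we get $E\in\operatorname{supp}\widehat{\mathcal N}$. For a.e.\ $\theta_0$ the normalized eigenvalue counting measures $\hat\mu_{n,\alpha,\theta_0}$ converge weak-$*$ to $\widehat{\mathcal N}$, and the periodic truncations $P_{\alpha,n}(\theta_0)$ have the same weak-$*$ limit since the two finite matrices differ by a perturbation of bounded rank. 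Fixing such a $\theta_0$, for every $\delta>0$ and all large $n$ the operator $P_{\alpha,n}(\theta_0)$ has an eigenvalue in $(E-\delta,E+\delta)$; taking $\delta=1/j$ and a corresponding sequence $n_j\to\infty$ produces eigenvalues $E_j\to E$ with scalar-indexed eigenvectors $\phi^{(j)}$ of $P_{\alpha,n_j}(\theta_0)$ on the cyclic index set of cardinality $n_jd$.

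The next step is to center. Let $m_j$ be an index at which $|\phi^{(j)}|$ attains its maximum. Relabeling the $n_jd$ sites by the shift $\ell\mapsto\ell+m_j$ turns $\phi^{(j)}$ into an eigenvector, with the same eigenvalue $E_j$, of $P_{\alpha,n_j}(\theta_j)$ with $\theta_j:=\theta_0+m_j\alpha$, now attaining its maximum at the site $0$. After multiplying by a unimodular scalar and rescaling I may assume $\phi^{(j)}_0=1$ and $|\phi^{(j)}_n|\le1$ for every site $n$, and I extend $\phi^{(j)}$ to all of $\Z$ by zero. The key bookkeeping is that, because the hopping is of range $d$ and only the hopping was wrapped around, the genuine equation $(\hat H_{\alpha,\theta_j}\phi^{(j)})_n=E_j\phi^{(j)}_n$ holds for every $n$ whose stencil $\{n-d,\dots,n+d\}$ avoids the cyclic seam, hence for all $|n|\le\tfrac12 n_jd-d$, and this interior region exhausts $\Z$ as $j\to\infty$.

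Finally I would pass to the limit. Along a subsequence, $\theta_j\to\theta$ in $\T$, and by a diagonal argument using $|\phi^{(j)}_n|\le1$ we obtain $\phi^{(j)}_n\to\hat u_n$ for every $n\in\Z$. Then $|\hat u_n|\le1$ and $\hat u_0=1$, so $\hat u$ is a bounded nonzero sequence; and for each fixed $n$ the interior equation $(\hat H_{\alpha,\theta_j}\phi^{(j)})_n=E_j\phi^{(j)}_n$ holds for all large $j$, and since the only $\theta$-dependence of the coefficients is through $2\cos(2\pi(\theta+n\alpha))$ (the $\hat v_k$ being constant) and $E_j\to E$, it passes to the limit to give $(\hat H_{\alpha,\theta}\hat u)_n=E\hat u_n$. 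This yields the desired bounded solution. None of the ingredients is deep; the point that needs the most care is the cyclic boundary bookkeeping in the centering step, ensuring that the centered finite eigenvectors converge to a genuine bi-infinite solution whose maximum sits at the origin — which is precisely why periodic rather than Dirichlet truncations are the convenient choice here.
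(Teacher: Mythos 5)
First, a remark on the comparison: the paper does not prove this statement at all --- it quotes it verbatim from \cite{AJ2}*{Theorem 3.3} --- so there is no internal proof to measure you against, and your proposal has to stand on its own. Your overall architecture (approximate $E$ by eigenvalues of finite-volume truncations, recenter each eigenvector at its maximum while letting the phase drift along the orbit of $\alpha$, normalize, and take a pointwise limit) is the right kind of argument, and your first and third steps are fine: $E\in\operatorname{supp}\widehat{\mathcal N}$ follows from $\mathcal{N}=\widehat{\mathcal{N}}$, the periodic and Dirichlet truncations have the same limiting eigenvalue distribution because they differ by a perturbation of rank at most $2d$, and the portmanteau inequality produces $E_j\to E$.

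The gap is in the centering step. The assertion that the cyclic relabeling $\ell\mapsto\ell+m_j$ turns an eigenvector of $P_{\alpha,n_j}(\theta_0)$ into an eigenvector of $P_{\alpha,n_j}(\theta_0+m_j\alpha)$ is false: the periodic truncation \eqref{def:Pn} is not covariant under cyclic shifts, because the diagonal $2\cos(2\pi(\theta_0+\ell\alpha))$, $\ell=0,\dots,N-1$ with $N=n_jd$, is not $N$-periodic in $\ell$ (that would require $N\alpha\in\Z$). Concretely, the shifted vector $\psi_n=\phi^{(j)}_{(n+m_j)\bmod N}$ carries the diagonal entry $2\cos(2\pi(\theta_0+((n+m_j)\bmod N)\alpha))$ at site $n$, which equals $2\cos(2\pi(\theta_j+n\alpha))$ only when $0\le n+m_j\le N-1$. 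Combined with the requirement that the hopping stencil avoid the seam, the genuine equation $(\hat H_{\alpha,\theta_j}\psi)_n=E_j\psi_n$ holds only for $-m_j+d\le n\le N-1-m_j-d$, not for $|n|\le\tfrac12 N-d$ as you claim. Nothing forces the maximizer $m_j$ to sit near the middle of the box: it can lie within $O(1)$ of the seam, and if $m_j<d$ the equation already fails at $n=0$; in general the limit function is only guaranteed to solve the dual equation on a half-line. Two standard repairs: (i) keep your scheme but run it along box sizes with $\|n_jd\,\alpha\|_{\T}\to0$ (e.g.\ $n_j=q_m$), extend the recentered eigenvector $N$-periodically rather than by zero, and absorb the resulting $O(\|n_jd\,\alpha\|_{\T})$ diagonal error into the limit; or (ii) follow the route of \cite{AJ2}: for a dense set of energies take a polynomially bounded generalized eigenfunction $u$ of the full-line dual equation (Shnol), recenter at a maximizer of $n\mapsto|u_n|e^{-|n|/j}$ (which is always attained), so that translation covariance of the \emph{full-line} operator makes the recentered solution exact on all of $\Z$ with $|\tilde u_n|\le e^{|n|/j}$, and then pass to the limit twice.
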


Given Fourier coefficients $\hat{w}=(\hat{w}_k)_{k\in \Z}$ and an interval $I\subset \Z$, we let $w_I=\sum_{k\in I} \hat{w}_k e^{2\pi i kx}$. The length of the interval $I= [a, b]$ is $|I|= b-a$.
We will say that a trigonometrical polynomial $p:\T \to \C$ has {\em essential degree at most} $k$ if its
Fourier coefficients outside an interval $I$ of length $k$  vanish. We let $\frac{p_n}{q_n}$ be the convergents of~$\alpha$. 

\begin{theorem}\cite{AJ2}*{Theorem 6.1}\label{thm:p_LI}
Let $1\leq r\leq \lfloor q_{n+1}/q_n\rfloor$. If $p$ has essential degree at most $k= rq_n-1$ and $x_0\in \T$,
then for some absolute constant $C$
\begin{align}
\|p\|_{0}\leq Cq_{n+1}^{Cr}\sup_{0\leq j\leq k} |p(x_0+j\alpha)|.
\end{align} 
\end{theorem}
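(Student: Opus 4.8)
\textbf{Proof plan for Theorem \ref{thm:p_LI}.} This is the quantitative Lagrange-interpolation estimate from \cite{AJ2}*{Theorem 6.1}, and the plan is to follow its proof, reducing the $L^\infty$ bound of the trigonometric polynomial $p$ (of essential degree at most $k=rq_n-1$, so supported on an interval $I$ of length $k$) to control of its values on the arithmetic progression $\{x_0+j\alpha\}_{0\le j\le k}$. Since $p$ has essential degree at most $k$, multiplying by an exponential $e^{2\pi i \ell x}$ we may assume $p(x)=\sum_{j=0}^{k}\hat p_j e^{2\pi i j x}$; setting $z=e^{2\pi i x}$, the function $P(z)=z^{-0}\sum_{j=0}^k \hat p_j z^j$ is a polynomial of degree at most $k$ in $z$. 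The values $p(x_0+j\alpha)$, $0\le j\le k$, are the values of $P$ at the $k+1$ distinct points $z_j=e^{2\pi i(x_0+j\alpha)}$ on the unit circle, so $P$ is recovered from these by Lagrange interpolation:
\begin{align}\label{eq:Lagrange_p}
p(x)=\sum_{j=0}^{k} p(x_0+j\alpha)\prod_{\substack{0\le \ell\le k\\ \ell\ne j}}\frac{e^{2\pi i x}-e^{2\pi i(x_0+\ell\alpha)}}{e^{2\pi i(x_0+j\alpha)}-e^{2\pi i(x_0+\ell\alpha)}}.
\end{align}
Taking absolute values and using $|e^{2\pi i x}-e^{2\pi i y}|=2|\sin\pi(x-y)|\le 2$ for the numerators, the whole problem is reduced to bounding, uniformly in $j$, the quantity
\begin{align}\label{eq:denom_product}
\prod_{\substack{0\le \ell\le k\\ \ell\ne j}}\frac{1}{|\sin\pi((j-\ell)\alpha)|}=\prod_{\substack{0\le \ell\le k\\ \ell\ne j}}\frac{1}{\|(j-\ell)\alpha\|_{\T}'},
\end{align}
where I write $\|t\|_\T'=|\sin\pi t|$, which is comparable to $\|t\|_{\T}$. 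Note the denominators do not depend on $x_0$, which is what makes the estimate uniform in $x_0$.

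The heart of the argument is then a purely arithmetic estimate: for $k=rq_n-1$ and any fixed $j\in[0,k]$, the product $\prod_{\ell\ne j}\|(j-\ell)\alpha\|_\T^{-1}$ is at most $Cq_{n+1}^{Cr}$. The key step is to split $\{0,\dots,k\}\setminus\{j\}$ into $O(r)$ blocks, each a translate of an interval of length $q_n$ (or fewer), and apply the estimate of Lemma \ref{smallest} (the $\mathrm{AJ1}$ lemma, stated in the excerpt) to each block with $x$ replaced by the appropriate $j\alpha$: within a block of $q_n$ consecutive integers $\ell$, the differences $j-\ell$ run over $q_n$ consecutive integers, the minimal value of $\|(j-\ell)\alpha\|_\T$ over that block is controlled from below by $\|q_n\alpha\|_\T\gtrsim q_{n+1}^{-1}$ via \eqref{eq:qn_alpha_min}, and the product over the $q_n-1$ non-minimal terms is $e^{O(\log q_n)}\cdot 2^{-(q_n-1)}$ by Lemma \ref{smallest}. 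Since the numerator side of \eqref{eq:Lagrange_p} contributes a factor $2^{k}=2^{rq_n-1}$ which cancels the $2^{-(q_n-1)}$ factors from the $\approx r$ blocks, what survives is $q_{n+1}^{O(r)}\cdot e^{O(r\log q_n)}=q_{n+1}^{O(r)}$, using $q_n\le q_{n+1}$. Summing over the $k+1\le rq_n\le q_{n+1}$ values of $j$ in \eqref{eq:Lagrange_p} costs only another polynomial factor in $q_{n+1}$, so the total bound is $\|p\|_0\le Cq_{n+1}^{Cr}\sup_{0\le j\le k}|p(x_0+j\alpha)|$, as claimed.

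\textbf{Main obstacle.} The analytic reduction via \eqref{eq:Lagrange_p} is routine; the real work is the bookkeeping in the arithmetic estimate \eqref{eq:denom_product} — organizing the block decomposition so that each block has length exactly $q_n$ except possibly one short remainder block, correctly tracking that the minimal term in each block is bounded below using $\|q_n\alpha\|_\T=\min_{1\le m<q_{n+1}}\|m\alpha\|_\T$ (which requires each block's index differences to stay below $q_{n+1}$ in absolute value, hence the restriction $r\le\lfloor q_{n+1}/q_n\rfloor$), and checking that the accumulated error terms $C_4\log q_n$ from Lemma \ref{smallest}, once there are $O(r)$ of them, still only produce $q_{n+1}^{Cr}$ rather than something worse. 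One subtlety worth care: when $j$ itself sits near the middle of the progression, the index differences $j-\ell$ range over roughly $[-k,k]$, so one may need up to $\approx 2r$ blocks rather than $r$; this only affects the absolute constant $C$ in the exponent and not the form of the bound.
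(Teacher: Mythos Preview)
The paper does not give its own proof of this statement: Theorem~\ref{thm:p_LI} is simply quoted from \cite{AJ2}*{Theorem~6.1} and used as a black box in the almost-reducibility argument of Appendix~\ref{app:A}. So there is nothing to compare against on the paper's side.

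That said, your outline is the correct one and matches the proof in \cite{AJ2}. The reduction to a polynomial in $z=e^{2\pi i x}$, Lagrange interpolation at the nodes $z_j=e^{2\pi i(x_0+j\alpha)}$, and the block decomposition of $\{0,\dots,k\}$ into intervals of length $q_n$ with Lemma~\ref{smallest} applied on each block is exactly the argument. One small correction to your bookkeeping: you also need Lemma~\ref{smallest} on the \emph{numerator} side, not just the trivial bound $|e^{2\pi ix}-e^{2\pi iy}|\le 2$. The numerator contributes $2^k\prod_{\ell\ne j}|\sin\pi(x-x_0-\ell\alpha)|$, and blockwise application of Lemma~\ref{smallest} bounds $\prod_{\ell}|\sin\pi(x-x_0-\ell\alpha)|\le q_n^{Cr}2^{-k+O(r)}$, which is what actually cancels the $2^{k}$ coming from the denominator lower bound; the naive numerator bound $2^k$ alone would leave an uncompensated $2^{rq_n}$. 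Your observation that $r\le\lfloor q_{n+1}/q_n\rfloor$ guarantees all index differences satisfy $0<|j-\ell|\le k<q_{n+1}$, hence $\|(j-\ell)\alpha\|_\T\ge\|q_n\alpha\|_\T\gtrsim q_{n+1}^{-1}$ for the minimal term in each block, is the right mechanism and the reason for that hypothesis.
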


\subsection{Proof of Theorem \ref{thm:ARC}}
Fix $E\in \sigma(H_{\alpha,\theta})$, and let $\theta(E)$ be as in Theorem~\ref{thm:bdd_solution}. Let $\varepsilon_0$ be such that $0<\varepsilon_0<\frac{\hat{L}_d(\hat{A}_E)}{30\hat{L}^d(\hat{A}_E)+100C_3d}$ where $C_3$ is the absolute constant in Lemma~\ref{lem:uniform}. Finally, let $\{n_j\}$ be the $\varepsilon_0$-resonances of $\theta(E)$ and $\hat{u}$ be the almost localized solution as in Theorem~\ref{thm:almost_AL}. Hence for $3(|n_j|+1)<|n|<|n_{j+1}|/3$ and some $C_2=C_2(v,d)>0$, one has 
\begin{align}
|\hat{u}_n|\leq C_2 e^{-\frac{\hat{L}_d(\hat{A}_E)}{10} |n|}.
\end{align}
For simplicity we shall write $\theta(E)$ as~$\theta$ for the remainder of this appendix.

The cocycle $(\alpha, A_E)$ being subcritical, which precisely means that $L(\alpha, A_{E,\eta})= 0$ for all~$|\eta|\leq \hat{L}_d/(2\pi)$, implies via Lemma~\ref{lem:upper_semi_cont} the following subexponential growth \begin{align}\label{eq:A_sub_exp_upp}
\|A_{k,E}\|_{\frac{\hat{L}_d}{2\pi}}\leq C_{\delta} e^{\delta |k|},
\end{align} 
for arbitrary $\delta>0$.
Henceforth, we fix $\delta=\min(\frac{\hat{L}_d(\hat{A}_E)}{5000}, \frac{\varepsilon_0}{25})$.
Theorem~\ref{thm:Ak_poly} improves this subexponential bound to a polynomial one.

\begin{theorem}\label{thm:Ak_poly}
For some absolute constants $C_4>0$, and $C_5=C_5(\alpha,v,E,\varepsilon_0)>0$,  
\begin{align}
\|A_{k,E}\|_{\frac{\delta}{2\pi}}\leq C_5(1+|k|)^{C_4},
\end{align}
for any $k\geq 1$.
\end{theorem}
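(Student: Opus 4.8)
The plan is to bootstrap the subexponential bound \eqref{eq:A_sub_exp_upp} to a polynomial one by exploiting the almost localized solution $\hat u$ of the dual eigenvalue equation, following the scheme of \cite{AJ2}*{Theorem 5.1} (with Avila's simplification). First I would recall that, by Aubry duality at the level of formal eigenfunctions, the Fourier coefficients of $\hat u$ produce a quasi-periodic vector-valued object that is covariant under the Schr\"odinger cocycle $(\alpha, A_E)$; concretely, the sequence built from $\hat u$ and its translate solves $H_{\alpha,\theta}u = Eu$ in the appropriate weak sense, so the transfer matrix $A_{k,E}(x)$ transports the corresponding $2$-vector $U_0(x)$ to $U_k(x)$, where $U_n(x) = \sum_j \hat u_j e^{2\pi i j(x+n\alpha)}\cdot(\text{two consecutive entries})$. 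The almost localization estimate $|\hat u_n|\le C_2 e^{-(\hat L_d/10)|n|}$ for $3(|n_j|+1)<|n|<|n_{j+1}|/3$ then shows that $U_n$ extends analytically to a strip of width comparable to $\hat L_d/(2\pi)$, with norm controlled on the pieces of $\Z$ lying between consecutive resonances.

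The key step is then the following dichotomy, carried out at a scale $N$ with $dq_n \le N < dq_{n+1}$: either the point $x$ (or its orbit) is far from all the resonant phases, in which case $\|U_0(x)\|$ is bounded below and the covariance $U_k = A_{k,E}U_0$ together with the upper bound on $U_k$ (from \eqref{eq:A_sub_exp_upp} applied to $U$, i.e. the analytic extension of $\hat u$) forces $\|A_{k,E}(x)\|$ to be at most polynomial via Theorem \ref{thm:p_LI}; or $x$ is near a resonance, in which case one uses a second, linearly independent solution — obtained from the resonant structure (the near-repetition of $\hat u$ translated by $n_j\alpha$) — to again span $\C^2$ and conclude. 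More precisely, I would fix $k\ge 1$, choose $n$ with $dq_n\le |k| < dq_{n+1}$, and consider the trigonometric polynomial $p(x) = \det$ (or an entry) of a $2\times 2$ matrix whose columns are truncations of $U_0(x)$ and of the resonant partner $U_0^{(j)}(x)$ at essential degree $\sim rq_n-1$; the almost localization bound shows the tails are negligible on the relevant strip, and $p$ does not vanish identically because the two solutions are linearly independent (this is where the resonance gap $|n_{j+1}| > 250|n_j|$ from Lemma \ref{eq:nj_jump} and the choice of $\varepsilon_0$ enter). Applying Theorem \ref{thm:p_LI} to $p$ gives a lower bound $|p(x_0)|\ge c\, q_{n+1}^{-Cr}\|p\|_0$ at some well-chosen $x_0$, which combined with $\|U_k\|_{\delta/(2\pi)} \le C_\delta e^{\delta|k|}$ and $e^{\delta|k|}\le e^{\delta d q_{n+1}} \le q_{n+1}^{O(1)}$ (using $\beta(\alpha)=0$, so $q_{n+1}\le e^{o(q_n)}$, hence $q_{n+1}^{Cr}$ and $e^{\delta|k|}$ are both $\le (1+|k|)^{C_4}$ after taking logarithms) yields the claimed polynomial bound $\|A_{k,E}\|_{\delta/(2\pi)}\le C_5(1+|k|)^{C_4}$.

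The main obstacle I anticipate is twofold. First, making the duality transport rigorous in the analytic-strip category: one must verify that the formal dual eigenfunction $U$, assembled from $\hat u$, genuinely extends holomorphically to the strip of width $\delta/(2\pi)$ with the quantitative norm control coming from the decay rate $\hat L_d/10 \gg \delta$, and that truncating it to essential degree $\sim r q_n$ incurs only a subpolynomial error on that strip — this requires carefully splitting the Fourier sum according to the resonance intervals $[3|n_j|, |n_{j+1}|/3]$ and summing the geometric tails, with the delicate point being the "bad" windows near each $n_j$ where decay fails. Second, the non-vanishing of $p$: one needs that $U_0(x)$ and its resonant partner are not proportional as analytic functions, which ultimately rests on the solution $\hat u$ being nontrivial ($\hat u_0 = 1$) and on the resonances being well-separated so that the partner carries independent information; quantifying "not proportional" into a usable lower bound on $\|p\|_0$ is the crux, and here the precise inequality $\varepsilon_0 < \hat L_d/(30\hat L^d + 100 C_3 d)$ and $\delta \le \varepsilon_0/25$ are exactly what is needed to beat the various error terms. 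The rest — converting $q_{n+1}^{Cr}e^{\delta|k|}$ into $(1+|k|)^{C_4}$ using $\beta(\alpha)=0$ and $r \le q_{n+1}/q_n$, and absorbing $\alpha,v,E,\varepsilon_0$-dependent constants into $C_5$ — is routine once the two obstacles above are cleared.
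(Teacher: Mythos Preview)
Your high-level plan---use the dual almost-localized eigenfunction to build an approximate conjugacy and upgrade subexponential growth to polynomial---is the right one, but the execution has a real gap in the final step and also differs in structure from the paper's argument.

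The paper does \emph{not} use a second ``resonant partner'' solution at this stage. It takes a single truncation $U^I$ of the dual solution on an interval $I$ of length $n$ with $9|n_j|<n<|n_{j+1}|/9$ and $n=rq_m-1<q_{m+1}$, proves a uniform lower bound $\|U^I(x)\|\ge e^{-2\delta n}$ on the strip by contradiction (if small at one point, iterate via the covariance relation $A_E U^I=e^{2\pi i\theta}U^I(\cdot+\alpha)+G^I$ together with \eqref{eq:A_sub_exp_upp} to make $u^I$ small along an orbit of length $n$, then apply Theorem~\ref{thm:p_LI} to force $\|u^I\|_\infty$ small, contradicting $\int u^I=\hat u_0=1$), and then invokes Avila's abstract construction (Lemma~\ref{lem:2nd_col}) to complete $U^I$ to a full $\mathrm{SL}(2,\C)$ matrix $M^I$. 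The two-column determinant argument you sketch is closer to what the paper does \emph{later}, in Theorem~\ref{thm:det_MJ}, to establish almost reducibility itself; for the polynomial bound it is neither needed nor used.

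The more serious issue is your final conversion. The claim $e^{\delta|k|}\le q_{n+1}^{O(1)}$ is false: $\delta>0$ is fixed and $|k|$ may be comparable to $q_{n+1}$. Likewise $q_{n+1}^{Cr}\le (1+|k|)^{C_4}$ would require $q_{n+1}$ to be polynomially bounded in $q_n$, which is Diophantine and strictly stronger than $\beta(\alpha)=0$. The polynomial bound does \emph{not} arise by converting these subexponential quantities; it comes from the conjugate-then-iterate mechanism you omit. After rescaling $M^I$ to $\Phi^I$, the conjugated cocycle satisfies $\|\Phi^I(\cdot+\alpha)^{-1}A_E\Phi^I-R_{\pm\theta}\|_{\delta/(2\pi)}\le e^{-\hat L_d n/400}$, so iterating $s$ times keeps the product bounded as long as $s\le e^{\hat L_d n/800}$, yielding $\|A_{s,E}\|_{\delta/(2\pi)}\le 2\|\Phi^I\|\,\|(\Phi^I)^{-1}\|\le 2e^{\hat L_d n/100}$ for all such $s$. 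For a \emph{given} large $s$ one then selects the scale $n$ (subject to the arithmetic constraints, which leave a positive-density set of admissible $n$) so that $c_4 n<(800/\hat L_d)\log s<n$; this turns $e^{\hat L_d n/100}$ into $(1+|s|)^{8/c_4}$. Without this scale-matching step your argument cannot close, and your proposed arithmetic would at best recover a subexponential bound, not a polynomial one.
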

\begin{proof}
The proof proceeds by constructing an $\mathrm{SL}(2,\C)$ almost conjugacy of $A_{E}(x)$.
Thus, let $n$ be such that 
\begin{align}\label{eq:n=rqm}
\begin{cases}
9|n_j|<n<|n_{j+1}|/9, \text{ for some }j,\\ 
n=rq_m-1<q_{m+1}, \text{ for some } r,m\geq 1
\end{cases}
\end{align}
To see that such an $n$ exists, note that Lemma~\ref{eq:nj_jump} implies that
\begin{align}
\frac{|n_{j+1}|}{81|n_j|}>3,
\end{align}
Moreover, for any $r,m\geq 1$, one has 
\begin{align}
\frac{(r+1)q_m-1}{rq_m-1}\leq 2.
\end{align}
Let $I=[x_1,x_2]:=[-\lfloor n/2\rfloor+1, \lfloor n/2\rfloor]\cap\Z$, and define
\begin{align}
U^I(x)=\left(\begin{matrix} e^{2\pi i\theta} u^I(x)\\ u^I(x-\alpha)\end{matrix}\right).
\end{align}
By  direct computation one obtains 
\begin{align}\label{eq:AU=U+}
A_E(x) U^I(x)=e^{2\pi i\theta} U^I(x+\alpha)+G^I(x),
\end{align}
where 
\begin{align}\label{eq:G_upp}
\|G^I(x)\|_{\frac{\hat{L}_d(\hat{A}_E)}{40\pi}}\leq C_2 e^{-\frac{\hat{L}_d(\hat{A}_E)}{40}n}.
\end{align} 
We select $U^I(x)$ as the first column of our putative conjugacy matrix. In order to construct the second column, we first establish a lower bound on $\|U^I(x)\|$.

\begin{lemma}\label{lem:UI_lower}
For $n>n_0(\alpha,v,E,\varepsilon_0)$ satisfying \eqref{eq:n=rqm}, we have
\begin{align}
\inf_{|\mathrm{Im}(x)|\leq \frac{\hat{L}_d(\hat{A}_E)}{40\pi}} \|U^I(x)\|\geq e^{-2\delta n}.
\end{align}
\end{lemma}
\begin{proof}
If the lemma fails, then there exists $x_0$ with 
$|\mathrm{Im}(x_0)|\leq \frac{\hat{L}_d(\hat{A}_E)}{40\pi}$ so that
\begin{align}\label{assume:UIx0}
\|U^I(x_0)\|<e^{-2\delta n}.
\end{align}
Iterating \eqref{eq:AU=U+} yields 
\begin{align}
e^{2\pi i\ell \theta}U^I(x_0+\ell\alpha)=&A_{\ell,E}(x_0)U_I(x_0)\\
&-\sum_{m=1}^{\ell} e^{2\pi i (m-1)\theta} A_{\ell-m,E}(x_0+m\alpha) G^I(x_0+(m-1)\alpha).
\end{align}
Combining the above with \eqref{eq:A_sub_exp_upp}, \eqref{assume:UIx0} and \eqref{eq:G_upp}, we infer that for $0\leq \ell\leq n$,
\begin{align}
|u^I(x_0+\ell\alpha)|\leq \|U^I(x_0+\ell\alpha)\|
\leq C_{\delta} e^{\delta n} e^{-2\delta n}+\sum_{m=1}^{\ell} C_{\delta} C_2 e^{\delta(\ell-m)} e^{-\frac{\hat{L}_d(\hat{A}_E)}{40}n}
\leq C_{\delta}C_2 e^{-\delta n}.
\end{align}
By Theorem \ref{thm:p_LI}, we finally arrive at the uniform bound 
\begin{align}\label{eq:uI_uni_upper}
\|u^I(x+i\mathrm{Im}(x_0))\|_{\T}\leq Cq_{m+1}^{Cr}\sup_{0\leq \ell\leq n} |u^I(x_0+\ell\alpha)|\leq C_{\delta}C_2 e^{-\delta n/2},
\end{align}
where we used that $q_{m+1}\leq C_{\delta} e^{\delta q_m/(2C)}$, see~\eqref{eq:n_alpha_varepsilon_1}. 
Clearly, \eqref{eq:uI_uni_upper} contradicts  
$$\int_{\T} u(x+i\mathrm{Im}(x_0))\, \mathrm{d}x=\hat{u}_0=1$$ for $n$ large enough.
\end{proof}

The now invoke the following lemma to construct the second column of our conjugacy matrix.

\begin{lemma}\label{lem:2nd_col}\cite{A_ar_ac}*{Lemma 4.2}
Let $V: \T\to \C^2$ be analytic in $|\mathrm{Im}(x)|\leq \eta$. Assume 
$\delta_1<\|V(x)\|<\delta_2^{-1}$ holds on $|\mathrm{Im}(x)|<\eta$. Then there exists $M: \T\to \mathrm{SL}(2,\C)$ analytic on $|\mathrm{Im}(x)|<\eta$ with the first column equals to $V$ and $\|M\|_{\eta}\leq C\delta_1^{-2}\delta_2^{-1}(1-\log (\delta_1\delta_2))$.
\end{lemma}
Let $M^I(x)$ be the matrix obtained by means of Lemma~\ref{lem:2nd_col} whose first column equals~$U^I$. 
By Lemma~\ref{lem:UI_lower} and the fact that $|\hat{u}_k|\leq 1$,  for any $|\mathrm{Im}(x)|\leq \frac{\delta}{2\pi}$,
\begin{align}\label{eq:UI_lower_upper}
e^{-2\delta n}\leq \|U^I(x)\|\leq e^{\delta n}.
\end{align}
By Lemma \ref{lem:2nd_col}, 
\begin{align}\label{eq:M_upper}
\|M^I\|_{\frac{\delta}{2\pi}}=\|(M^I)^{-1}\|_{\frac{\delta}{2\pi}}\leq e^{6\delta n}.
\end{align}
and by \eqref{eq:AU=U+}, 
\begin{align}\label{eq:AM=M+}
A_E(x)M^I(x)=M^I(x+\alpha)
\left(\begin{matrix}
e^{2\pi i\theta} & a_{12}(x)\\
0 & e^{-2\pi i\theta}+a_{22}(x)\end{matrix}\right)
+\left(\begin{matrix} G^I(x) &0\end{matrix}\right).
\end{align}
Thus, 
\begin{align}\label{eq:MAM}
M^I(x+\alpha)^{-1} A_E(x)M^I(x)
=&
\left(\begin{matrix}
e^{2\pi i\theta} & a_{12}(x)\\
0 & e^{-2\pi i\theta}+a_{22}(x)\end{matrix}\right)
+M^I(x+\alpha)^{-1} \left(\begin{matrix} G^I(x) &0\end{matrix}\right) \notag\\
=:&
\left(\begin{matrix}
e^{2\pi i\theta} & a_{12}(x)\\
0 & e^{-2\pi i\theta}+a_{22}(x)
\end{matrix}\right)+
\left(\begin{matrix}
a_{11}(x) & 0\\
a_{21}(x) & 0
\end{matrix}\right)
\end{align}
In view of \eqref{eq:G_upp} and \eqref{eq:M_upper}, 
\begin{align}\label{eq:a11_a21}
\max(\|a_{11}\|_{\frac{\delta}{2\pi}}, \|a_{21}\|_{\frac{\delta}{2\pi}})\leq \|M^I(x+\alpha)^{-1} \left(\begin{matrix} G^I(x) &0\end{matrix}\right)\|_{\frac{\delta}{2\pi}}\leq e^{-\frac{\hat{L}_d(\hat{A}_E)}{100} n},
\end{align}
whence 
\begin{align}\label{eq:a12}
\|a_{12}\|_{\frac{\delta}{2\pi}}\leq \|(M^I)^{-1}\|_{\frac{\delta}{2\pi}}\cdot \|A_E\|_{\frac{\delta}{2\pi}}\cdot \|M^I\|_{\frac{\delta}{2\pi}}+\max(\|a_{11}\|_{\frac{\delta}{2\pi}}, \|a_{21}\|_{\frac{\delta}{2\pi}})\leq C_{\delta}e^{12\delta n}.
\end{align}
Passing to determinants in \eqref{eq:MAM}, and using \eqref{eq:a11_a21}, we conclude that 
\begin{align}
(e^{2\pi i\theta}+a_{11}(x)) (e^{-2\pi i\theta}+a_{22}(x))-a_{12}(x)a_{21}(x)=1,
\end{align}
which implies
\begin{align}
(e^{2\pi i\theta}+a_{11}(x)) a_{22}(x)=a_{12}(x)a_{21}(x)-e^{-2\pi i\theta}a_{11}(x).
\end{align}
Combining the above equation with the bounds in \eqref{eq:a11_a21} and \eqref{eq:a12} implies that 
\begin{align}\label{eq:a22}
\|a_{22}\|_{\frac{\delta}{2\pi}}\leq e^{-\frac{\hat{L}_d(\hat{A}_E)}{200}n}.
\end{align}
Let 
\begin{align}\label{def:Phi=M}
\Phi^I(x)=M^I(x)
\left(\begin{matrix} 
e^{\frac{\hat{L}_d(\hat{A}_E)}{400}n} & 0\\
0 &e^{-\frac{\hat{L}_d(\hat{A}_E)}{400}n}
\end{matrix}\right).
\end{align}
Then by \eqref{eq:MAM}, \eqref{eq:a11_a21}, \eqref{eq:a12} and \eqref{eq:a22},
\begin{align}\label{eq:Phi conj}
\Phi^I(x+\alpha)^{-1} A_E(x) \Phi^I(x)=
\left(\begin{matrix}
e^{2\pi i\theta} &0\\
0 &e^{-2\pi i\theta}
\end{matrix}\right)
+H(x),
\end{align}
where 
\begin{align}
\|H\|_{\frac{\delta}{2\pi}}\leq &\max(\|a_{12}\|_{\frac{\delta}{2\pi}}e^{-\frac{\hat{L}_d(\hat{A}_E)}{200}}, \|a_{11}\|_{\frac{\delta}{2\pi}}, \|a_{22}\|_{\frac{\delta}{2\pi}}, \|a_{21}\|_{\frac{\delta}{2\pi}}e^{\frac{\hat{L}_d(\hat{A}_E)}{200}n})\\
\leq &C_{\delta} e^{12\delta n} e^{-\frac{\hat{L}_d(\hat{A}_E)}{200}n}\leq e^{-\frac{\hat{L}_d(\hat{A}_E)}{400}n}.
\end{align}
Moreover,  by \eqref{eq:M_upper} and \eqref{def:Phi=M},
\begin{align}
\|\Phi^I\|_{\frac{\delta}{2\pi}}=\|(\Phi^I)^{-1}\|_{\frac{\delta}{2\pi}}\leq \|M^I\|_{\frac{\delta}{2\pi}} e^{\frac{\hat{L}_d(\hat{A}_E)}{400}}\leq e^{\frac{\hat{L}_d(\hat{A}_E)}{200}}. 
\end{align}
In view of~\eqref{eq:Phi conj}, this implies that for all $1\leq s\leq e^{\frac{\hat{L}_d(\hat{A}_E)}{800}n}$ one has 
\begin{align}\label{eq:AsE}
\|A_{s,E}(x)\|_{\frac{\delta}{2\pi}}\leq 2\|\Phi^I\|_{\frac{\delta}{2\pi}}\cdot \|(\Phi^I)^{-1}\|_{\frac{\delta}{2\pi}} \leq 2e^{\frac{\hat{L}_d(\hat{A}_E)}{100}n}.
\end{align}
There exists an absolute constant $c_4>0$, such that for $s>s(\alpha,v,E,\varepsilon_0)$ large enough, there exists $n$, with $9|n_j|<n<|n_{j+1}|/9$, and satisfying the requirement of Lemma~\ref{lem:UI_lower},  such that $c_4 n<\frac{800}{\hat{L}_d(\hat{A}_E)}\log s<n$.
Thus by~\eqref{eq:AsE}, we finally obtain 
\begin{align}
\|A_{s,E}\|_{\frac{\delta}{2\pi}}\leq C_5 (1+|s|)^{8/c_4}, 
\end{align}
as claimed.
\end{proof}

Next, let $J=[-\lfloor |n_{j+1}|/9\rfloor, \lfloor |n_{j+1}|/9\rfloor]\cap \Z$, and define  
\begin{align}
U^J(x)=\left(\begin{matrix}
e^{2\pi i\theta} u^J(x)\\
u^J(x-\alpha)
\end{matrix}\right).
\end{align}
By the {\em almost localization} Theorem~\ref{thm:almost_AL}, one has 
\begin{align}\label{eq:AUJ=UJ+}
A_E(x)U^J(x)=e^{2\pi i\theta} U^J(x+\alpha)+G^J(x), \text{ with } \|G^J \|_{\frac{\hat{L}_d}{32\pi}}\leq e^{-\frac{\hat{L}_d(\hat{A}_E)}{150}|n_{j+1}|},
\end{align}
and
\begin{align}\label{eq:UJ_upper}
\|U^J\|_{\frac{\delta}{2\pi}}\leq e^{3\delta |n_j|}.
\end{align}
Let $M_J(x)$ be the matrix with the first column equal to $U^J(x)$, and the second column equal to~$\overline{U^J(x)}$, respectively. 
By the preceding, 
\begin{align}\label{eq:AMJ=MJ+}
A_E(x)M_J(x)=M_J(x+\alpha)
\left(\begin{matrix}
e^{2\pi i\theta} & 0\\
0 &e^{-2\pi i\theta}
\end{matrix}\right)
+H(x),
\end{align}
where $H(x)=\left(\begin{matrix} G(x) & \overline{G(x)}\end{matrix}\right)$ satisfies $\|H\|_{\frac{\hat{L}_d}{32\pi}}\leq e^{-\frac{\hat{L}_d(\hat{A}_E)}{150}|n_{j+1}|}.$

We also define $\tilde{U}^J(x):=e^{\pi in_jx} U^J(x)$. 
When $n_j$ is even, $\tilde{U}^J$ is well-defined on $\T$, and when $n_j$ is odd, $\tilde{U}^J(x+1)=-\tilde{U}^J(x)$.
By \eqref{eq:AUJ=UJ+}, with $\tilde{\theta}:=\theta-n_j\alpha/2$, 
\begin{align}\label{eq:AtUJ=tUJ+}
A_E(x)\tilde{U}^J(x)=e^{2\pi i\tilde{\theta}} \tilde{U}^J(x+\alpha)+\tilde{G}^J(x), \text{ where } \|\tilde{G}^J\|_{\frac{\hat{L}_d(\hat{A}_E)}{32\pi}}\leq e^{-\frac{\hat{L}_d(\hat{A}_E)}{200}|n_{j+1}|}.
\end{align}
By Theorem \ref{thm:almost_AL}, we also have
\begin{align}\label{eq:tUJ_upper}
\|\tilde{U}^J\|_{\frac{\delta}{2\pi}}\leq e^{4\delta |n_j|}.
\end{align}
Let $\tilde{M}_J(x)$ be the matrix with columns $\tilde{U}^J(x)$ and $\overline{\tilde{U}^J(x)}$.
Clearly, 
\begin{align}\label{eq:dtMJ=dMJ}
|\det \tilde{M}_J(x)|=e^{-2\pi n_j \mathrm{Im}(x)} |\det M_J(x)|.
\end{align}

\begin{theorem}\label{thm:det_MJ}
Let $L=\|2\theta-n_j\alpha\|_{\T}^{-1}$. Then for $j>j_0(\alpha,v,E,\varepsilon_0)$,  
\begin{align}
|\det M_J(x)|\geq L^{-5C_4}, \text{ for } |\mathrm{Im}(x)|\leq \frac{\delta}{2\pi},
\end{align}
where $C_4$ is the constant in Theorem \ref{thm:Ak_poly}.
\end{theorem}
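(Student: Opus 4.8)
The plan is to bound $|\det M_J(x)|$ from below by exploiting that $M_J$ has columns $U^J$ and $\overline{U^J}$, which are two ``almost invariant'' directions for $A_E$ associated to the rotation numbers $e^{2\pi i\theta}$ and $e^{-2\pi i\theta}$; these decouple unless $e^{2\pi i\theta}$ and $e^{-2\pi i\theta}$ are resonant, i.e.\ unless $\|2\theta-n_j\alpha\|_{\T}$ is small, which is precisely what the loss $L^{-5C_4}$ accounts for. First I would argue by contradiction: suppose $|\det M_J(x_0)|<L^{-5C_4}$ for some $x_0$ with $|\mathrm{Im}(x_0)|\leq \frac{\delta}{2\pi}$. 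Since each column has norm controlled above and below (from \eqref{eq:UJ_upper} and the lower bound on $U^J$ coming, as in Lemma~\ref{lem:UI_lower}, from $\hat u_0=1$ combined with Theorem~\ref{thm:p_LI}), a small determinant forces $\overline{U^J(x_0)}$ to be nearly parallel to $U^J(x_0)$; equivalently there is a scalar $\lambda(x_0)$ with $\overline{U^J(x_0)}=\lambda(x_0)U^J(x_0)+(\text{small})$, where $|\lambda(x_0)|\sim 1$.

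Next I would iterate the relation \eqref{eq:AUJ=UJ+} forward. Applying $A_{\ell,E}(x_0)$ and using that $U^J$ is almost invariant with cocycle $e^{2\pi i\theta}$, while $\overline{U^J}$ is almost invariant with cocycle $e^{-2\pi i\theta}$ — both with error $G^J$ controlled by \eqref{eq:AUJ=UJ+} — the near-proportionality propagates with a twist:
\begin{align}\label{eq:detMJ_twist}
\overline{U^J(x_0+\ell\alpha)}\approx \lambda(x_0)\, e^{-4\pi i\ell\theta}\, U^J(x_0+\ell\alpha) + O\!\left(\ell\, e^{2\delta|n_j|}\, e^{-\frac{\hat L_d(\hat A_E)}{200}|n_{j+1}|}\right),
\end{align}
where the factor $e^{-4\pi i\ell\theta}=(e^{-2\pi i\theta}/e^{2\pi i\theta})^{\ell}$ records the mismatch of the two rotation numbers. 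Integrating (or averaging) the resulting scalar identity for the components of $U^J$ over $\ell$ in a range of length $\sim q_m$ for a suitable denominator $q_m$ of $\alpha$, and using that $\hat u_0=\int_{\T}u^J=1$ forces $u^J$ not to be too small on average while Theorem~\ref{thm:p_LI} controls it from above, one gets that the averaged oscillatory sum $\frac1N\sum_{\ell=0}^{N-1}e^{-4\pi i\ell\theta}$ must be non-negligible. But this sum is bounded by $\|2\theta-n_j\alpha\|_{\T}^{-1}/N = L/N$ up to the resonance structure of $\theta$; choosing $N$ a multiple of the resonance scale $|n_{j+1}|$ (using that $n_{j+1}$ is the next $\varepsilon_0$-resonance, so $e^{-2\pi i\ell\theta}$ behaves nearly periodically only up to scale $|n_{j+1}|$) forces $L$ to be large, and comparing the polynomial loss $L^{-5C_4}$ from Theorem~\ref{thm:Ak_poly} against the exponential gain $e^{-c|n_{j+1}|}$ yields the contradiction for $j$ large.

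More precisely, the mechanism I expect to use is the following: from \eqref{eq:detMJ_twist} and the lower bound $\|U^J(x_0+\ell\alpha)\|\gtrsim e^{-C\delta|n_j|}$ one derives that $|\lambda(x_0)\,e^{-4\pi i\ell\theta}-\overline{\lambda(x_0+\ell\alpha)}|$ is small for all $\ell$ up to $\sim q_{m+1}$; in particular $\ell\mapsto e^{-4\pi i\ell\theta}$ is almost the complex conjugate of a function arising from $\lambda$, hence almost of modulus one and ``almost periodic'' on that scale, which by the standard resonance dichotomy (as encoded in Lemma~\ref{lem:resonant_smallest} and Lemma~\ref{eq:nj_jump}) means either $\|2\theta-k\alpha\|_{\T}$ is large for all $|k|\leq \text{that scale}$ (so $L$ is small and there is nothing to prove, $|\det M_J|\gtrsim 1$), or the first resonance beyond $n_j$ occurs at scale $\geq$ that scale, forcing $|n_{j+1}|\gtrsim q_{m+1}$ and hence the error term in \eqref{eq:detMJ_twist} to be exponentially small. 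Then $|\det M_J(x)|$ can only fail to be $\gtrsim 1$ by a factor reflecting how close $e^{2\pi i\theta}$ is to $e^{-2\pi i\theta}$, i.e.\ by a power of $L=\|2\theta-n_j\alpha\|_{\T}^{-1}$, and tracking the polynomial bound from Theorem~\ref{thm:Ak_poly} through the construction of $M_J$ (degrees of the entries being $O(|n_{j+1}|)$, norms being $e^{O(\delta|n_j|)}$) gives the explicit exponent $5C_4$.

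The main obstacle, I expect, is making the propagation estimate \eqref{eq:detMJ_twist} quantitatively tight enough: one must iterate over a range of length comparable to $|n_{j+1}|$ while keeping the accumulated error from $G^J$ below the target $L^{-5C_4}$, and this requires that the almost-localization error $e^{-\frac{\hat L_d}{150}|n_{j+1}|}$ from \eqref{eq:AUJ=UJ+} beats $L^{-5C_4}\cdot\|A_{k,E}\|_{\frac{\delta}{2\pi}}$, which in turn forces the use of the polynomial transfer matrix bound of Theorem~\ref{thm:Ak_poly} rather than the crude subexponential one \eqref{eq:A_sub_exp_upp} — so the logical dependence on Theorem~\ref{thm:Ak_poly} is essential and must be invoked carefully, together with the relation $L\leq e^{\varepsilon_0|n_j|}$ coming from the definition of $\varepsilon_0$-resonance and the gap estimate $|n_{j+1}|>250|n_j|$ of Lemma~\ref{eq:nj_jump}, to close the numerics.
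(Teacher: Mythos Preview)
Your setup is right—near-parallelism of the two columns of $M_J$ at some $x_0$ propagates along the orbit with a relative phase twist—and this matches the paper's opening move (there stated for $\tilde U^J(x)=e^{\pi i n_j x}U^J(x)$, so that the twist is governed by $\tilde\theta=\theta-n_j\alpha/2$ with $\|2\tilde\theta\|_{\T}=L^{-1}$ exactly). But your endgame is where the argument breaks down. The averaging step you sketch does not work: the relation $\overline{u^J(x_0+\ell\alpha)}\approx\lambda\,e^{-4\pi i\ell\theta}u^J(x_0+\ell\alpha)$ does \emph{not}, after summing in $\ell$, factor into an average of $u^J$ times an oscillatory sum, so non-negligibility of $\frac1N\sum_{\ell} e^{-4\pi i\ell\theta}$ cannot be inferred from $\hat u_0=1$. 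Moreover, the geometric sum you write is controlled by $\|2\theta\|_{\T}^{-1}$, not by $L=\|2\theta-n_j\alpha\|_{\T}^{-1}$; the two are unrelated until one passes to the recentered phase $\tilde\theta$, which you never introduce.

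The paper's contradiction is concrete and hinges on two ideas absent from your sketch. After propagating $\|\tilde U^J-\kappa\,\overline{\tilde U^J}\|$ small for all $|\ell|\le L^2$ (this is where Theorem~\ref{thm:Ak_poly} is used, as you anticipated), one evaluates at the single time $\ell_0=\lfloor L/2\rfloor$: since $2\tilde\theta=m_0\pm L^{-1}$ one gets $e^{-4\pi i\ell_0\tilde\theta}\approx -1$, so at $x_0+\ell_0\alpha$ the relation \emph{flips sign} and yields $\|\tilde U^J+\kappa\,\overline{\tilde U^J}\|$ small there. Separately, for $|\ell|\le L^{1/2}$ the phase stays $\approx 1$, so the ``minus'' relation holds at $L^{1/2}$ consecutive orbit points; truncating to $\tilde U^{\tilde I}$ of essential degree $\tilde n=rq_m-1\sim|n_j|<L^{1/2}$ and applying Theorem~\ref{thm:p_LI} promotes this to a uniform bound on the whole horizontal line through $x_0$, in particular at $x_0+\ell_0\alpha$. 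There one now has both $\tilde U^{\tilde I}\pm\kappa\,\overline{\tilde U^{\tilde I}}$ small, hence $\tilde U^{\tilde I}$ itself small, contradicting Lemma~\ref{lem:UI_lower}. The sign-flipping choice $\ell_0\approx L/2$ and the truncation-plus-Lagrange step are precisely what your averaging proposal would need to be replaced by.
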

\begin{proof}
We first  establish the following estimates on $L$.
\begin{lemma}\label{lem:n<L<N}
For $j>j_0(\alpha,\hat{L}_d)$,
\begin{align}
e^{\varepsilon_0 |n_j|}\leq L\leq e^{\frac{\hat{L}_d(\hat{A}_E)}{4000 C_4}|n_{j+1}|}.
\end{align}
\end{lemma}
\begin{proof}
For $\varepsilon_1=\frac{\hat{L}_d(\hat{A}_E)}{4000C_4}$,
we have by \eqref{eq:n_alpha_varepsilon_1} that for $n>N(\alpha,\hat{L}_d(\hat{A}_E))$,
\begin{align}
C(\alpha,\hat{L}_d)e^{-2\varepsilon_1 |n_{j+1}|} &\leq C(\alpha,\hat{L}_d) e^{-\varepsilon_1|n_{j+1}-n_j|}\leq \|(n_{j+1}-n_j)\alpha\|_{\T}\\
& \leq 2\|2\theta-n_j\alpha\|_{\T}=2L^{-1}\leq 2e^{-\varepsilon_0 |n_j|}.
\end{align}
This implies the claimed result.
\end{proof}
We now prove Theorem~\ref{thm:det_MJ} by contradiction,  based on the next lemma. 

\begin{lemma}\label{lem:U-kappaU}
For any $\kappa\in \C$ such that $|\kappa|=1$,  
\begin{align}
\inf_{x: |\mathrm{Im}(x)|\leq \frac{\delta}{2\pi}} \|\tilde{U}^J(x)-\kappa\cdot \overline{\tilde{U}^J(x)}\|\geq L^{-4C_4}.
\end{align}
\end{lemma}
\begin{proof}
Suppose for some $x_0$, $|\mathrm{Im}(x_0)|\leq \frac{\delta}{2\pi}$, and some constant $|\kappa|=1$ we have 
\begin{align}
\|\tilde{U}^J(x_0)-\kappa\cdot  \overline{\tilde{U}^J(x_0)}\|<L^{-4C_4}.
\end{align}
By \eqref{eq:AtUJ=tUJ+}, Theorem \ref{thm:Ak_poly} and Lemma~\ref{lem:n<L<N}, for all $0\leq |\ell|\leq L^2$ one concludes that  
\begin{align}\label{eq:UJ_ell-tUJ_ell}
&\|\tilde{U}^J(x_0+\ell\alpha) e^{2\pi i\ell\tilde{\theta}}-\kappa\cdot \overline{\tilde{U}^J(x_0+\ell\alpha)} e^{-2\pi i\ell\tilde{\theta}}\| \notag\\
&\,\,\,\,\leq \left\|\sum_{m=0}^{\ell-1} A_{\ell-m,E}(x_0+m\alpha) \tilde{G}^J(x_0+m\alpha)-\kappa \sum_{m=0}^{\ell-1}A_{\ell-m,E}(x_0+m\alpha) \overline{\tilde{G}^J(x_0+m\alpha)}\right\|\\
&\,\,\,\,+\|A_{\ell,E}(x_0)\|L^{-4C_4}\notag\\
&\,\,\,\,\leq C_5 L^{2C_4+3}e^{-\frac{\hat{L}_d(\hat{A}_E)}{200}|n_{j+1}|}+C_5 L^{-2C_4}<L^{-C_4}.
\end{align}
Setting $\ell=\ell_0=\lfloor L/2\rfloor$ in \eqref{eq:UJ_ell-tUJ_ell}, we conclude that 
\begin{align}\label{eq:tUJ_ell0}
\|\tilde{U}^J(x_0+\ell_0\alpha)e^{2\pi i\ell_0\tilde{\theta}}-\kappa\cdot \overline{\tilde{U}^J(x_0+\ell_0\alpha)} e^{-2\pi i\ell_0 \tilde{\theta}}\|<L^{-C_4}.
\end{align}
Writing $\ell_0=L/2+\delta$ with $|\delta|\leq 1/2$, and $2\theta-n_j\alpha=m_0\pm \|2\theta-n_j\alpha\|_{\T}$, we infer that 
\begin{align}\label{eq:ell0_ttheta}
2\pi \ell_0\tilde{\theta}=\pi(L/2+\delta)(2\theta-n_j\alpha)=\pi\ell_0m_0\pm \frac{\pi}{2}\pm \pi \delta L^{-1}=\pi\ell_0m_0\pm \frac{\pi}{2}+O(L^{-1}).
\end{align}
Plugging \eqref{eq:ell0_ttheta} into \eqref{eq:tUJ_ell0}, and combining the resulting estimate with \eqref{eq:tUJ_upper}, yields
\begin{align}\label{eq:tUJ_ell0--}
\|\tilde{U}^J(x_0+\ell_0\alpha)+\kappa \cdot \overline{\tilde{U}^J(x_0+\ell_0\alpha)}\| &\leq  L^{-C_4}+L^{-1} \|\tilde{U}^J\|_{\frac{\delta}{2\pi}}\leq L^{-C_4}+L^{-1} e^{4\delta |n_j|}\notag\\
&<L^{-1/2}.
\end{align}
For $|\ell|\leq L^{1/2}$, we have in analogy to \eqref{eq:ell0_ttheta} that
\begin{align}\label{eq:ell_ttheta}
2\pi \ell\tilde{\theta}=\pi\ell m_0+O(L^{-1/2}).
\end{align}
Plugging \eqref{eq:ell_ttheta} into \eqref{eq:AtUJ=tUJ+} yields
\begin{align}\label{eq:tUJ_ell-tUJ_ell}
\|\tilde{U}^J(x_0+\ell \alpha)-\kappa \cdot \overline{\tilde{U}^J(x_0+\ell \alpha)}\| &\leq L^{-C_4}+L^{-1/2} \|\tilde{U}^J\|_{\frac{\delta}{2\pi}}
\leq L^{-C_4}+L^{-1/2}e^{4\delta |n_j|} \notag\\
&<L^{-1/3}.
\end{align}
Next, we let $\tilde{n}$ be such that
\begin{align}
\begin{cases}
9|n_j|<\tilde{n}<\min(27|n_j|, |n_{j+1}|/9)\\
\tilde{n}=rq_m-1<q_{m+1}, \text{ for some } r,m\geq 1
\end{cases}
\end{align}
Let $\tilde{I}=[-\lfloor \tilde{n}/2\rfloor+1, \lfloor \tilde{n}/2\rfloor]\cap \Z$ and define
\begin{align}
U^{\tilde{I}}(x)=
\left(\begin{matrix}
e^{2\pi i\theta} u^{\tilde{I}}(x)\\
u^{\tilde{I}}(x-\alpha)
\end{matrix}\right),
\end{align}
as well as  $\tilde{U}^{\tilde{I}}(x)=e^{\pi in_jx}U^{\tilde{I}}(x)$.
Then by Theorem \ref{thm:almost_AL}, we have
\begin{align}\label{eq:tUJ-tUJ_9n}
\|\tilde{U}^J-\tilde{U}^{\tilde{I}}\|_{\frac{\delta}{2\pi}}\leq e^{-\frac{\hat{L}_d(\hat{A}_E)}{2} |n_j|}
\end{align}
Combining the above with \eqref{eq:tUJ_ell-tUJ_ell} yields
\begin{align}\label{eq:tUJ_9n-tUJ_9n}
\|\tilde{U}^{\tilde{I}}(x_0+\ell \alpha)-\kappa \cdot \overline{\tilde{U}^{\tilde{I}}(x_0+\ell \alpha)}\|\leq L^{-1/3}+2e^{-\frac{\hat{L}_d(\hat{A}_E)}{2} |n_j|}.
\end{align}
Denoting
\begin{align}
g(x)=e^{-\pi n_j\mathrm{Im}(x_0)} e^{2\pi i n_j x}\sum_{|k|\leq 9|n_j|} e^{-2\pi k\mathrm{Im}(x_0)} \left(\hat{u}_k e^{2\pi i kx}-\kappa \, \overline{\hat{u}_k}e^{-2\pi i kx}\right),
\end{align}
we obtain from \eqref{eq:tUJ_9n-tUJ_9n} that 
\begin{align}
|g(\mathrm{Re}(x_0)+\ell\alpha)|
=&\left|e^{\pi i n_j(x_0+\ell\alpha)} u^{\tilde{I}}(x_0+\ell\alpha)-\kappa \;\overline{e^{\pi i n_j(x_0+\ell\alpha)} u^{\tilde{I}}(x_0+\ell\alpha)}\right|\\
\leq &\|\tilde{U}^{\tilde{I}}(x_0+\ell \alpha)-\kappa \cdot \overline{\tilde{U}^{\tilde{I}}(x_0+\ell \alpha)}\|\leq L^{-1/3}+e^{-\frac{\hat{L}_d(\hat{A}_E)}{2}|n_j|}
\end{align}
for any $|\ell|\leq L^{1/2}$.
 Lemma \ref{lem:n<L<N} implies that $L^{1/2}>e^{\varepsilon_0 |n_j|}>\tilde{n}$, whence by Theorem \ref{thm:p_LI}, 
\begin{align}
\sup_{x\in \T}|g(x)|\leq C q_{m+1}^{Cr} (L^{-1/3}+e^{-\frac{\hat{L}_d}{2}|n_j|})<L^{-1/4}+e^{-\frac{\hat{L}_d(\hat{A}_E)}{4}|n_j|}.
\end{align}
Here we used that $q_{m+1}<C_{\varepsilon_0} e^{\varepsilon_0 q_m/350 C}$, see \eqref{eq:n_alpha_varepsilon_1}, and $\tilde{n}=rq_m-1<27|n_j|$.
This implies
\begin{align}
\sup_{x\in \T} \|\tilde{U}^{\tilde{I}}(x+\mathrm{Im}(x_0))-\kappa \cdot \overline{\tilde{U}^{\tilde{I}}}(x+\mathrm{Im}(x_0))\|\leq 2\sup_{x\in \T} |g(x)|\leq 2L^{-1/4}+2e^{-\frac{\hat{L}_d(\hat{A}_E)}{4}|n_j|},
\end{align}
and in particular
\begin{align}\label{eq:tUtI_1}
\|\tilde{U}^{\tilde{I}}(x_0+\ell_0\alpha)-\kappa \cdot \overline{\tilde{U}^{\tilde{I}}(x_0+\ell_0\alpha)}\|
\leq 2L^{-1/4}+2e^{-\frac{\hat{L}_d(\hat{A}_E)}{4}|n_j|}.
\end{align}
Combining \eqref{eq:tUJ_ell0--} with \eqref{eq:tUJ-tUJ_9n}, we have
\begin{align}\label{eq:tUJ_ell0++}
\|\tilde{U}^{\tilde{I}}(x_0+\ell_0\alpha)+\kappa \cdot \overline{\tilde{U}^{\tilde{I}}(x_0+\ell_0\alpha)}\| \leq L^{-1/2}+e^{-\frac{\hat{L}_d(\hat{A}_E)}{2}n}.
\end{align}
Finally combining \eqref{eq:tUJ_ell0++} with \eqref{eq:tUtI_1}, we conclude that
\begin{align}
\|\tilde{U}^{\tilde{I}}(x_0+\ell_0\alpha)\|\leq 2L^{-1/4}+2e^{-\frac{\hat{L}_d(\hat{A}_E)}{4}|n_j|}.
\end{align}
This leads to a contradiction with Lemma~\ref{lem:UI_lower} which implies 
\begin{align}\label{eq:inf_tUtI}
\inf_{|\mathrm{Im}(x)|\leq \frac{\delta}{2\pi}}\|\tilde{U}^{\tilde{I}}(x)\|\geq e^{-2\delta \tilde{n}} e^{-\pi |n_j \mathrm{Im}(x)|}\geq e^{-55\delta |n_j|}>L^{-1/2}+e^{-\frac{\hat{L}_d(\hat{A}_E)}{2}|n_j|},
\end{align}
where we used Lemma~\ref{lem:n<L<N} and $\tilde{n}<27|n_j|$.
\end{proof}
Next, we return to the estimate of $\det M_J$.
Combining \eqref{eq:tUJ-tUJ_9n} with \eqref{eq:inf_tUtI}, we obtain
\begin{align}
\inf_{|\mathrm{Im}(x)|\leq \frac{\delta}{2\pi}}\left( \max(|e^{\pi i n_j x}u^J(x)|, |e^{\pi i n_j x}u^J(x-\alpha)|) \right) \geq \inf_{|\mathrm{Im}(x)|\leq \frac{\delta}{2\pi}} \|U^J(x)\|\geq &e^{-55\delta |n_j|}-e^{-\frac{\hat{L}_d(\hat{A}_E)}{2}|n_j|} \\ \geq &e^{-60\delta |n_j|}.
\end{align}
For an arbitrary $x$ with $|\mathrm{Im}(x)|\leq \frac{\delta}{2\pi}$, suppose we have (the other case is similar)
\begin{align}\label{eq:assume_uJx}
|e^{\pi i n_j x}u^J(x)|\geq e^{-60\delta |n_j|}.
\end{align}
Setting $\kappa:=\overline{e^{\pi i n_j x}u^J(x)}/(e^{\pi i n_j x}u^J(x))$, which satisfies $|\kappa|=1$, one has
\begin{align}
|\det \tilde{M}_J(x)|
=&|\det (\begin{matrix}\tilde{U}^J(x),\, \, \, \overline{\tilde{U}^J(x)}-\kappa\cdot \tilde{U}^J(x)\end{matrix})|
=|e^{\pi i n_j x}u^J(x)|\cdot \|\tilde{U}^J(x)-\overline{\kappa}\cdot \overline{\tilde{U}^J(x)}\|\\
\geq &L^{-4C_4} e^{-60\delta |n_j|},
\end{align}
where we used \eqref{eq:assume_uJx} and Lemma~\ref{lem:U-kappaU} in the last line.
The claimed result on $|\det M_J(x)|$ follows by combining \eqref{eq:dtMJ=dMJ} with Lemma~\ref{lem:n<L<N}.
\end{proof}

Next, we construct an $\mathrm{SL}(2,\R)$ conjugacy. 
Let $S(x)=\mathrm{Re}\, U^J(x)$ and $T(x)=\mathrm{Im}\, U^J(x)$, and define 
 $W_1(x)$ to be the matrix with columns $S(x)$ and $T(x)$.
Thus
\begin{align}
M_J(x)=W_1(x)\left(\begin{matrix} 1\ & 1\\ i &-i\end{matrix}\right).
\end{align}
Clearly \eqref{eq:UJ_upper} implies 
\begin{align}\label{eq:W1_upper}
\|W_1\|_{\frac{\delta}{2\pi}}\leq e^{3\delta |n_j|}.
\end{align}
By Theorem \ref{thm:det_MJ},  $\det W_1(x)\neq 0$ for $|\mathrm{Im}(x)|\leq \frac{\delta}{2\pi}$. Without loss of generality we assume $\det W_1(x)>0$ on $\{x: |\mathrm{Im}(x)|\leq \frac{\delta}{2\pi}\}$ (otherwise we take $W_1(x)$ be the matrix with columns $S(x)$ and $-T(x)$). By \eqref{eq:AMJ=MJ+}, we conclude that 
\begin{align}\label{eq:AW1=W1+}
\|A_E(x) W_1(x)-W_1(x+\alpha) R_{-\theta}\|_{\frac{\hat{L}_d(\hat{A}_E)}{32\pi}}\leq e^{-\frac{\hat{L}_d(\hat{A}_E)}{150}|n_{j+1}|}.
\end{align}
Combining this estimate with \eqref{eq:W1_upper}, we infer that for $|\mathrm{Im}(x)|\leq \frac{\delta}{2\pi}$,
\begin{align}\label{eq:detW1_invariance}
\det W_1(x)=\det W_1(x+\alpha)+O(e^{-\frac{\hat{L}_d(\hat{A}_E)}{200}|n_{j+1}|}).
\end{align}
By Theorem \ref{thm:det_MJ}, Lemma \ref{lem:n<L<N} and \eqref{eq:detW1_invariance},   for all $|\mathrm{Im}(x)|\leq \frac{\delta}{2\pi}$,
\begin{align}\label{eq:detW1_constant}
\det W_1(x)=w_0+O(e^{-\frac{\hat{L}_d(\hat{A}_E)}{300}|n_{j+1}|}),
\end{align}
with $w_0\geq L^{-5C_4}$. 
Combining \eqref{eq:AW1=W1+} with \eqref{eq:nj_jump}, \eqref{eq:W1_upper}, \eqref{eq:detW1_constant} and Lemma \ref{lem:n<L<N}, we have
\begin{align}\label{eq:W1AW1}
\|W_1(x+\alpha)^{-1} A_E(x)W_1(x)-R_{-\theta}\|_{\frac{\delta}{2\pi}}
\leq &\frac{1}{\inf_{|\mathrm{Im}(x)|\leq \frac{\delta}{2\pi}} \det W_1(x)} \|W_1\|_{\frac{\delta}{2\pi}} e^{-\frac{\hat{L}_d(\hat{A}_E)}{150}|n_{j+1}|}\notag\\
= &O(L^{6C_4} e^{3\delta |n_j|} e^{-\frac{\hat{L}_d(\hat{A}_E)}{150}|n_{j+1}|})=O(e^{-\frac{\hat{L}_d(\hat{A}_E)}{200}|n_{j+1}|}).
\end{align}
Define $W_2(x)=(\det W_1(x))^{-1/2} W_1(x) \in C^{\omega}_{\frac{\delta}{2\pi}}(\T, \mathrm{SL}(2,\R))$.
By \eqref{eq:W1AW1} and \eqref{eq:detW1_constant}, we have that for $|\mathrm{Im}(x)|\leq \frac{\delta}{2\pi}$,
\begin{align}
W_2(x+\alpha)^{-1} A_E(x) W_2(x)
=&\frac{(\det W_1(x+\alpha))^{1/2}}{(\det W_1(x))^{1/2}} 
R_{-\theta}+O(e^{-\frac{\hat{L}_d(\hat{A}_E)}{200}|n_{j+1}|})\\
=&R_{-\theta}+O(e^{-\frac{\hat{L}_d(\hat{A}_E)}{200}|n_{j+1}|}).
\end{align}
Taking \eqref{eq:W1_upper}, \eqref{eq:detW1_constant} and Lemma \ref{lem:n<L<N} into account, one has
\begin{align}
\|W_2\|_{\frac{\delta}{2\pi}}\leq L^{6C_4} e^{3\delta |n_j|}\leq L^{7C_4}.
\end{align}
The estimate of $\mathrm{deg}(W_2)$, see Definition \ref{def:Degree}, follows the same  Rouch\'e argument as the proof of Theorem~3.5 in \cite{AJ2}. In fact, in the subcritical case, one should be able to obtain an improved bound $|\mathrm{deg}(W_2)|\lesssim \log |n_j|$. We leave those details to interested readers.

\section{Proof of Theorem \ref{thm:dryTen}}
\subsection{Rotation number}
\begin{definition}\label{def:Degree}
Any $A:\T\to \mathrm{SL}(2,\R)$ is homotopic to $x\to R_{nx}$ for some integer $n$, called the degree of $A$ and denoted $\mathrm{deg}(A)=n$.
\end{definition}

Assume that $A:\T\to \mathrm{SL}(2,\R)$ is homotopic to the identity. There exist $\psi: \T^2\to \R$ (lift of $A$) and $v: \T^2\to \R^+$ such that
\begin{align}
A(x) \left(\begin{matrix}
    \cos(2\pi y)\\
    \sin(2\pi y)
\end{matrix}\right)
=
v(x,y)
\left(\begin{matrix}
    \cos(2\pi(y+\psi(x,y)))\\
    \sin(2\pi(y+\psi(x,y)))
\end{matrix}\right).
\end{align}
Let $\mu$ be any probability measure on $\T^2$ that is invariant under the map $$T: (x,y)\to (x+\alpha, y+\psi(x,y)),$$ projecting over the Lebesgue measure on the first coordinate. 
Then the number
\begin{align}
\rho(\alpha,A)=\int \psi\, \mathrm{d}\mu \qquad (\mathrm{mod} \Z),
\end{align}
is independent of $\psi$ and $\mu$, and is called the fibered rotation number of $(\alpha, A)$.
It is clear from the definition that
\begin{align}\label{eq:rho1}
|\rho(\alpha, A)-\theta|\leq C \|A-R_{\theta}\|_0.
\end{align}

If the cocycles $(\alpha, A^{(1)})$ and $(\alpha, A^{(2)})$ are conjugate to each, i.e.,  $B(x+\alpha)^{-1}A^{(1)}(x)B(x)=A^{(2)}(x)$ for some  continuous $B:\T\to \mathrm{PSL}(2,\R)$ with $\mathrm{deg}(B)=k$, then
\begin{align}\label{eq:rho2}
\rho(\alpha, A^{(1)})=\rho(\alpha, A^{(2)})+\frac{k}{2}\alpha.
\end{align}
The following relation between the rotation number on the one hand,  and integrated density of states (see \eqref{def:IDS}) on the other hand, is well-known:
\begin{align}\label{eq:N=2rho}
\mathcal{N}(E)=1-2\rho(\alpha, A_E).
\end{align}

\subsection{Almost localization and almost reducibility}
The following connection between localization and reduciblity was first explored in \cite{Puig}, leading to the proof the Ten Martini problem for the almost Mathieu operator for all Diophantine $\alpha$'s.
The  exact form cited below appeared in \cite{AJ2}.
\begin{theorem}\cite{AJ2}*{Theorem 2.5}\label{thm:L_R}
For $\alpha\in \R\setminus \Q$, let $H_{\alpha,\theta,v}$ be the Schr\"odinger operator with a real-valued analytic potential $v$. Let $\theta,E\in \R$ be such that there exists a non-trivial exponentially decaying solution of the dual operator $\hat{H}\hat{u}=E\hat{u}$. Then
\begin{itemize}
    \item If $2\theta\notin \Z\alpha+\Z$, then there is a $B: \T\to \mathrm{SL}(2,\R)$ analytic such that $B(x+\alpha)^{-1}A_E(x)B(x)=R_{\pm \theta}$. 
    \item If $2\theta-n\alpha\in \Z$ for some $n\in \Z$, then there exists $B: \T\to \mathrm{PSL}(2,\R)$ and $w: \T\to \R$ analytic such that
    \begin{align}
    B(x+\alpha)^{-1}A_E(x)B(x)=
    \left(
    \begin{matrix}
        \pm 1\, &w(x)\\
        0 &\pm 1
    \end{matrix}
    \right).
    \end{align}
    If moreover $\beta(\alpha)=0$, then $w$ can be chosen to be a constant.
\end{itemize}
In either case, $\rho(\alpha,A_E)=\pm \theta+\frac{m}{2}\alpha$ for some $m\in \Z$.
\end{theorem}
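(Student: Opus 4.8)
The plan is to run the Aubry-duality argument of Puig, in the refined form used in \cite{AJ2} and \cite{A_ar_ac}. Starting from the exponentially decaying solution $\hat{u}$ of $\hat{H}_{\alpha,\theta}\hat{u}=E\hat{u}$, I would set $u(x)=\sum_{n}\hat{u}_n e^{2\pi i n x}$, which extends to an analytic function on a strip $|\mathrm{Im}\,x|<\eta$. As in the computation preceding \eqref{eq:AU=U+}, but now \emph{without} error term since $\hat{u}$ is an exact eigenfunction, the vector $U(x):=(e^{2\pi i\theta}u(x),\,u(x-\alpha))^{T}$ satisfies $A_E(x)U(x)=e^{2\pi i\theta}U(x+\alpha)$ identically on the strip. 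Since $v$ is real, $A_E(x)\in\mathrm{SL}(2,\R)$ for real $x$, so $\overline{U(x)}$ obeys $A_E(x)\overline{U(x)}=e^{-2\pi i\theta}\overline{U(x+\alpha)}$, and hence the real analytic vector fields $S:=\mathrm{Re}\,U$ and $T:=\mathrm{Im}\,U$ are the natural building blocks of the conjugacy.

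The next step is a dichotomy for the scalar analytic function $h(x):=\det(U(x),\overline{U(x)})$. From $A_E\in\mathrm{SL}(2)$ and the two transformation rules one gets $h(x)=h(x+\alpha)$ pointwise; combined with $1$-periodicity of $u$ and irrationality of $\alpha$, the analytic function $h$ must be constant. If $2\theta\notin\Z\alpha+\Z$ I claim this constant is nonzero: otherwise $\overline{U}=\lambda U$ for a meromorphic $\lambda$ with $|\lambda|\equiv 1$ on $\R$, and the transformation rules force $\lambda(x+\alpha)=e^{4\pi i\theta}\lambda(x)$; writing $\lambda(x)=e^{2\pi i m x}\mu(x)$ with $\mu$ meromorphic and $1$-periodic, the functional equation forces $\mu$ to be a nonzero constant, since its zeros and poles would otherwise form a dense $\alpha$-invariant set, and hence $e^{2\pi i m\alpha}=e^{4\pi i\theta}$, i.e.\ $2\theta-m\alpha\in\Z$, a contradiction. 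Thus $W_1(x)=(S(x),T(x))$ has a nonzero constant determinant; after scaling to $B\in C^{\omega}(\T,\mathrm{SL}(2,\R))$, the exact version of \eqref{eq:AW1=W1+} reads $A_E(x)W_1(x)=W_1(x+\alpha)R_{\pm\theta}$, giving $B(x+\alpha)^{-1}A_E(x)B(x)=R_{\pm\theta}$.

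If instead $2\theta-n\alpha\in\Z$, I would pass to $\tilde{U}(x)=e^{\pi i n x}U(x)$, which satisfies $A_E(x)\tilde{U}(x)=\pm\,\tilde{U}(x+\alpha)$ with sign $e^{\pi i(2\theta-n\alpha)}=\pm1$; the real fields $\mathrm{Re}\,\tilde{U}$ and $\mathrm{Im}\,\tilde{U}$ satisfy the same relation, at least one is not identically zero, and such a field cannot vanish since a zero would propagate to a dense set ($\det A_E=1$). Taking it as the first column of $B$ and completing it to $\mathrm{SL}(2,\R)$ (to $\mathrm{PSL}(2,\R)$ when $n$ is odd, where $\tilde{U}$ is only anti-periodic) forces the conjugate to be upper triangular $\bigl(\begin{smallmatrix}\pm1 & w(x)\\ 0 & \pm1\end{smallmatrix}\bigr)$ with $w$ analytic. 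When $\beta(\alpha)=0$, I would remove the $x$-dependence of $w$ by a further conjugation by $\bigl(\begin{smallmatrix}1 & \phi(x)\\ 0 & 1\end{smallmatrix}\bigr)$, which replaces $w$ by $w\pm(\phi(\cdot+\alpha)-\phi)$; one then must solve the cohomological equation $\phi(x+\alpha)-\phi(x)=\pm(\hat{w}_0-w(x))$, which on the Fourier side amounts to dividing the Fourier coefficients of $w$ by $e^{2\pi i k\alpha}-1$, and since $|\hat{w}_k|$ decays exponentially while $\beta(\alpha)=0$ gives $\|k\alpha\|_{\T}\ge e^{-\epsilon|k|}$ for every $\epsilon>0$ and large $|k|$, the resulting coefficients still decay exponentially, so $\phi$ is analytic on a smaller strip.

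Finally, the rotation number statement follows from \eqref{eq:rho2}: in the first case, if $B$ has degree $m_0$ then $\rho(\alpha,A_E)=\rho(\alpha,R_{\pm\theta})+\tfrac{m_0}{2}\alpha=\pm\theta+\tfrac{m_0}{2}\alpha$; in the second case $\rho(\alpha,A_E)$ equals the rotation number of the parabolic normal form, which is $0$ or $\tfrac12$, shifted by $\tfrac{m_1}{2}\alpha$ coming from the degree of the $\mathrm{PSL}(2,\R)$ conjugacy, and since $2\theta\equiv n\alpha\ (\mathrm{mod}\ \Z)$ this again has the form $\pm\theta+\tfrac{m}{2}\alpha$. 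I expect the dichotomy in the second paragraph to be the main obstacle: ruling out $h\equiv0$ when $2\theta\notin\Z\alpha+\Z$, and carefully handling the possible vanishing of the Bloch vectors $U$ and $\tilde U$, is exactly the place where analyticity together with the Diophantine-type arithmetic of $\alpha$ is used essentially.
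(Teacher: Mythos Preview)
The paper does not give its own proof of this statement: Theorem~\ref{thm:L_R} is quoted verbatim from \cite{AJ2}*{Theorem~2.5} and used as a black box in the appendix. Your sketch is the standard Puig/Avila--Jitomirskaya duality argument that underlies that reference, and it is correct in outline: the exact Bloch relation $A_E(x)U(x)=e^{2\pi i\theta}U(x+\alpha)$, the constancy of $\det(U,\overline U)$, the dichotomy according to whether $2\theta\in\Z\alpha+\Z$, the completion to an $\mathrm{SL}(2,\R)$ (or $\mathrm{PSL}(2,\R)$) conjugacy, and the solution of the cohomological equation under $\beta(\alpha)=0$ are exactly the ingredients used in \cite{AJ2}. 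One small point worth tightening: in your treatment of the case $h\equiv0$, the function $\lambda$ is in fact analytic and nowhere vanishing on $\T$ (since $U$ and $\overline U$ are both nowhere zero by the orbit argument you give later), so the ``zeros and poles'' remark is unnecessary; the contradiction then follows cleanly by writing $\lambda(x)=e^{2\pi i m x}e^{2\pi i\psi(x)}$ with $\psi:\T\to\R$ analytic and integrating the resulting cohomological identity over~$\T$.
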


\subsection{Proof of Theorem \ref{thm:dryTen}}
By Theorem \ref{thm:ARC}, we have the following spectral consequence, which is the analogue of Theorem~4.1 in~\cite{AJ2}.
\begin{theorem}\label{thm:ARC_consequence}
    For $\beta(\alpha)=0$ and $E\in \Sigma_{\alpha,0}$, there exists a constant $c>0$ (depending on $\hat{L}_d(\alpha, \hat{A}_E)$) with the following properties:
    \begin{enumerate}
        \item If $\rho(\alpha, A_E)$ is $c-$resonant, then there exists a sequence $B^{(n)}: \T\to \mathrm{SL}(2,\R)$ such that $B^{(n)}(x+\alpha)^{-1}A_E(x)B^{(n)}(x)$ converges to a constant rotation uniformly in $\{x: |\mathrm{Im}(x)|<c\}$.
        \item If $\rho(\alpha, A_E)$ is not $c-$resonant and $2\rho(\alpha, A_E)\notin \Z\alpha+\Z$, then there exists $B\in C^{\omega}_c(\T,\mathrm{SL}(2,\R))$ such that $B(x+\alpha)^{-1}A_E(x)B(x)$ is a constant rotation.
        \item If $2\rho(\alpha,A_E)\in \Z\alpha+\Z$, then there exists $B: \T\to \mathrm{PSL}(2,\R)$ analytic such that \linebreak $B(x+\alpha)^{-1}A_E(x)B(x)$ is a constant.
    \end{enumerate}
\end{theorem}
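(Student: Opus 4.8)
The statement is a packaging of the quantitative almost reducibility from Theorem~\ref{thm:ARC} into three mutually exclusive regimes determined by the arithmetic of the rotation number $\rho=\rho(\alpha,A_E)$. The plan is to fix the constant $c=c(\hat{L}_d(\alpha,\hat{A}_E))$ to be a small multiple of $\delta=\min(\hat{L}_d/5000,\varepsilon_0/25)$ from Theorem~\ref{thm:ARC} (say $c=\delta/(4\pi)$), and to translate ``$\varepsilon_0$-resonant'' statements about $\theta=\theta(E)$ into ``$c$-resonant'' statements about $\rho$. The key bridge is the last line of Theorem~\ref{thm:L_R} together with \eqref{eq:N=2rho}: since $E\in\Sigma_{\alpha,0}$, we have $\hat{L}_d(d\alpha,M_E^0)>0$ by Corollary~\ref{cor:non-sa-HP}, so Theorem~\ref{thm:AL_cos}/Theorem~\ref{thm:almost_AL} together with Theorem~\ref{thm:bdd_solution} provide either an exponentially decaying dual eigenfunction (when $\theta$ is $\varepsilon_0$-nonresonant, i.e.\ only finitely many resonances) or the almost localized solution $\hat u$ of Theorem~\ref{thm:almost_AL}; in the former case Theorem~\ref{thm:L_R} gives outright analytic reducibility and $\rho=\pm\theta+\tfrac m2\alpha$.

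\textbf{Step 1: the non-resonant/$\mathrm{PSL}$ cases (2) and (3).} If $\theta(E)$ is $\varepsilon_0$-nonresonant, then $\hat H_{\alpha,\theta,0}$ has an exponentially decaying eigenfunction by Theorem~\ref{thm:AL_cos} (combined with Theorem~\ref{thm:bdd_solution}), and Theorem~\ref{thm:L_R} applies directly: if $2\theta\notin\Z\alpha+\Z$ we get $B\in C^\omega(\T,\mathrm{SL}(2,\R))$ conjugating $A_E$ to $R_{\pm\theta}$, which is case (2) (after noting $2\rho\in\Z\alpha+\Z\iff 2\theta\in\Z\alpha+\Z$ via $\rho=\pm\theta+\tfrac m2\alpha$); if $2\theta-n\alpha\in\Z$ we get, since $\beta(\alpha)=0$, a $\mathrm{PSL}(2,\R)$ conjugacy to a constant parabolic or $\pm\mathrm{Id}$, which is case (3). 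I must also check that ``$\rho$ not $c$-resonant'' forces $\theta$ to be $\varepsilon_0$-nonresonant for the appropriate choice of constants: if $\theta$ had infinitely many $\varepsilon_0$-resonances $n_j$ then $\|2\theta-n_j\alpha\|_\T\le e^{-\varepsilon_0|n_j|}$, and since $\rho=\pm\theta+\tfrac m2\alpha$ modulo $\Z$ only in the reducible case I instead argue the contrapositive via Theorem~\ref{thm:ARC} itself (see Step 2): the conjugacy $W$ produced there has $\|W(x+\alpha)^{-1}A_E(x)W(x)-R_{-\theta}\|\le e^{-\tfrac{\hat L_d}{300}|n_{j+1}|}$, so by \eqref{eq:rho1} and \eqref{eq:rho2}, $\rho=-\theta+\tfrac{\mathrm{deg}(W)}2\alpha+O(e^{-c|n_{j+1}|})$; combined with $\|2\theta-n_j\alpha\|_\T\le e^{-\varepsilon_0|n_j|}$ and $|\mathrm{deg}(W)|\le C|n_j|$ this exhibits $\rho$ as $c$-resonant, which is the needed contrapositive.

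\textbf{Step 2: the resonant case (1).} Here $\rho$ is $c$-resonant, hence (by the contrapositive just discussed, read the other way) $\theta=\theta(E)$ has an infinite sequence of $\varepsilon_0$-resonances $\{n_j\}$. For each large $j$, Theorem~\ref{thm:ARC} furnishes $W=W_j\in C^\omega_{\delta/2\pi}(\T,\mathrm{SL}(2,\R))$ with $\|W_j\|_{\delta/2\pi}\le L_j^{-C}$ and $\|W_j(x+\alpha)^{-1}A_E(x)W_j(x)-R_{-\theta}\|_{\delta/2\pi}\le e^{-\tfrac{\hat L_d}{300}|n_{j+1}|}$. Since $L_j=\|2\theta-n_j\alpha\|_\T\ge C_0e^{-\varepsilon_0|n_j|}$ and $|n_{j+1}|>250|n_j|$ by Lemma~\ref{eq:nj_jump}, the error $e^{-\tfrac{\hat L_d}{300}|n_{j+1}|}$ beats $L_j^{-C}$, so $B^{(j)}:=W_j$ conjugates $A_E$ to within $e^{-\tfrac{\hat L_d}{300}|n_{j+1}|}\to 0$ of the constant rotation $R_{-\theta}$, uniformly on $|\mathrm{Im}(x)|\le\delta/(2\pi)$, in particular on $|\mathrm{Im}(x)|<c$. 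This is exactly conclusion~(1). Strictly, Theorem~\ref{thm:ARC} produces a single rotation $R_{-\theta}$ independent of $j$, which is even stronger than ``converges to a constant rotation''; I would just state this.

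\textbf{Main obstacle.} The only genuinely delicate point is the two-way translation between resonances of $\theta$ and resonances of $\rho$, i.e.\ pinning down the constant $c$ so that ``$c$-resonant $\rho$'' $\iff$ ``$\varepsilon_0$-resonant $\theta$'' and ``$2\rho\in\Z\alpha+\Z$'' $\iff$ ``$2\theta\in\Z\alpha+\Z$''. The forward direction uses only $\rho=-\theta+\tfrac{\mathrm{deg}(W_j)}2\alpha+O(e^{-c|n_{j+1}|})$ with $|\mathrm{deg}(W_j)|\le C|n_j|$ and $\beta(\alpha)=0$ to control $\|k\alpha\|_\T$ from below; the reverse direction in the non-resonant case is where one must be careful that finitely many resonances of $\theta$ plus Theorem~\ref{thm:L_R} really do give the exact equality $\rho=\pm\theta+\tfrac m2\alpha$, from which the arithmetic of $\rho$ and $\theta$ coincide. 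I expect this to be a short but slightly fiddly lemma; the rest is bookkeeping with already-proved results. Everything else — the existence of the conjugacies, the exponential-decay hypotheses, the degree and norm bounds — is supplied verbatim by Theorems~\ref{thm:ARC}, \ref{thm:almost_AL}, \ref{thm:AL_cos}, \ref{thm:L_R}, \ref{thm:bdd_solution}, Corollary~\ref{cor:non-sa-HP}, Lemma~\ref{eq:nj_jump}, and the identities \eqref{eq:rho1}, \eqref{eq:rho2}, \eqref{eq:N=2rho}, \eqref{eq:n_alpha_varepsilon_1}.
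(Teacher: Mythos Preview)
Your overall strategy matches the paper's: establish the one-way implication ``$\theta$ is $\varepsilon_0$-resonant $\Rightarrow$ $\rho$ is $c$-resonant'' via Theorem~\ref{thm:ARC} and \eqref{eq:rho1}--\eqref{eq:rho2}, then feed the non-resonant $\theta$ cases into Theorem~\ref{thm:L_R}. However, Step~2 contains a genuine logical error. You write ``$\rho$ is $c$-resonant, hence (by the contrapositive just discussed, read the other way) $\theta$ has an infinite sequence of $\varepsilon_0$-resonances.'' That is not the contrapositive --- it is the \emph{converse}, and it does not follow from what you proved. The implication you established is ``$\theta$ resonant $\Rightarrow$ $\rho$ resonant''; its contrapositive is ``$\rho$ nonresonant $\Rightarrow$ $\theta$ nonresonant,'' which you correctly use in Step~1. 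Since $c<\varepsilon_0$, it is entirely possible that $\rho$ is $c$-resonant while $\theta$ is $\varepsilon_0$-nonresonant (a shift by the fixed integer $m$ in $\rho=\pm\theta+\tfrac{m}{2}\alpha$ does not upgrade $\varepsilon_0$-nonresonance of $\theta$ to $c$-nonresonance of $\rho$).

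The paper handles this by splitting conclusion~(1) into \emph{two} sub-cases rather than one. If $\rho$ is $c$-resonant and $\theta$ is also $\varepsilon_0$-resonant, your Step~2 argument via Theorem~\ref{thm:ARC} works verbatim. If $\rho$ is $c$-resonant but $\theta$ is $\varepsilon_0$-nonresonant, then the almost-localized bounded solution from Theorems~\ref{thm:bdd_solution} and~\ref{thm:almost_AL} is genuinely exponentially decaying, so Theorem~\ref{thm:L_R} applies. Since $\rho$ $c$-resonant forces $2\rho\notin\Z\alpha+\Z$ (an exact rational relation would terminate the resonance sequence), and hence $2\theta\notin\Z\alpha+\Z$ via $\rho=\pm\theta+\tfrac{m}{2}\alpha$, the first bullet of Theorem~\ref{thm:L_R} gives an exact analytic conjugacy $B$ to $R_{\pm\theta}$; taking $B^{(n)}\equiv B$ then yields conclusion~(1) trivially. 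Once you add this missing sub-case, your argument is complete and essentially identical to the paper's.
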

\begin{proof}
Let $\varepsilon_0:=\frac{\hat{L}_d(\alpha,\hat{A}_E)}{50C_3}$ and $\theta=\theta(E), \{n_j\}, \delta, L_j, W$ be defined as in Theorem \ref{thm:ARC}.
First we show that if $\theta$ is resonant, then $\rho(\alpha,A_E)$ must be resonant.
Thus suppose that $\theta$ is $\varepsilon_0$-resonant. By Theorem \ref{thm:ARC}, for $j>j_0(\alpha,v,E)$ there exists $W_j\in C^{\omega}_{\frac{\delta}{2\pi}}(\T, \mathrm{SL}(2,\R))$ such that
\begin{align}\label{eq:rho3}
\|W_j(x+\alpha)^{-1}A_E(x)W(x)-R_{-\theta}\|_{\frac{\delta}{2\pi}}\leq e^{-\frac{\hat{L}_d(\alpha,\hat{A}_E)}{300}|n_{j+1}|},
\end{align}
and $|\mathrm{deg}(W_j)|\leq C_0|n_j|$ with absolute constant $C_0>0$.
By \eqref{eq:rho1}, \eqref{eq:rho2} and \eqref{eq:rho3},
\begin{align}
|\rho(\alpha,A_E)-\theta+\mathrm{deg}(W_j)\, \alpha |\leq Ce^{-\frac{\hat{L}_d(\alpha,\hat{A}_E)}{300}|n_{j+1}|}.
\end{align}
This implies, after combining the preceding with Lemma~\ref{lem:n<L<N}, that 
\begin{align}
\|2\rho(\alpha,A_E)-(n_j-2\mathrm{deg}(W_j))\alpha\|_{\T}
\geq &L_j^{-1}-Ce^{-\frac{\hat{L}_d(\alpha,\hat{A}_E)}{300}|n_{j+1}|}\\
\geq &e^{-\frac{\hat{L}_d(\alpha, \hat{A}_E)}{4000C_4}|n_{j+1}|}-Ce^{-\frac{\hat{L}_d(\alpha,\hat{A}_E)}{300}|n_{j+1}|}>0,
\end{align}
and
\begin{align}
\|2\rho(\alpha,A_E)-(n_j-2\mathrm{deg}(W_j))\alpha\|_{\T}
\leq &L_j^{-1}+Ce^{-\frac{\hat{L}_d(\alpha,\hat{A}_E)}{300}|n_{j+1}|}\\
\leq &e^{-\varepsilon_0|n_j|}+Ce^{-\frac{\hat{L}_d(\alpha,\hat{A}_E)}{300}|n_{j+1}|}\\
\leq &e^{-(2C_0+2)^{-1} \varepsilon_0 |2\mathrm{deg}(W_j)-n_j|}.
\end{align}
Together with \eqref{eq:res_suf} this implies that 
$\rho(\alpha,A_E)$ is $(2C_0+2)^{-1}\varepsilon_0$-resonant at $n_j-2\mathrm{deg}(W_j)$.
Let $c=(2C_0+2)^{-1}\varepsilon_0, \frac{\delta}{2\pi}$. 
We divide the analysis into three cases.

Case 1. If $\rho(\alpha, A_E)$ is $c$-resonant and $\theta_0$ is non-$\varepsilon_0$-resonant, then  Theorem \ref{thm:AL_cos} implies that there is an exponentially decaying eigenfunction. Theorem \ref{thm:L_R} applies and yields the claimed result.

Case 2. If $\rho(\alpha, A_E)$ is $c$-resonant and $\theta_0$ is $\varepsilon_0$-resonant, then \eqref{eq:rho3} is the claimed result. 

Case 3. If $\rho(\alpha, A_E)$ is non-$c$-resonant, then by the arguments above $\theta$ is non-$\varepsilon_0$ resonant. Theorem~\ref{thm:L_R} implies the claimed result.
\end{proof}

Finally, Theorem \ref{thm:dryTen} is the consequence of the following statement.

\begin{theorem}
For $\beta(\alpha)=0$, $|\lambda|<1$ and $|\nu|<\nu_0(\alpha,\lambda,w)$ sufficiently small. Suppose $E\in \sigma(H^{\mathrm{PAMO}}_{\alpha,\theta,\nu,w})$ is such that $\mathcal{N}(E)\in \Z\alpha+\Z$, then $E$ belongs to the boundary of a component of $\R\setminus \sigma(H^{\mathrm{PAMO}}_{\alpha,\theta,\nu,w})$.
\end{theorem}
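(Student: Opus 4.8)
The plan is to deduce this from Theorem~\ref{thm:ARC_consequence} (case (3)) together with the standard gap-edge argument of \cite{AJ2}. First I would recall that for $|\lambda|<1$ and $|\nu|<\nu_0(\alpha,\lambda,w)$ the entire spectrum of $H^{\mathrm{PAMO}}_{\alpha,\theta,\nu,w}$ is subcritical, i.e. $\sigma(H^{\mathrm{PAMO}}_{\alpha,\theta,\nu,w})=\Sigma_{\alpha,0}$, as noted in the Example following Theorem~\ref{thm:ARC}; hence every $E\in\sigma(H^{\mathrm{PAMO}})$ lies in the regime where $\hat{L}_d(\alpha,\hat{A}_E^{\mathrm{PAMO}})>0$ by Corollary~\ref{cor:non-sa-HP}, and Theorem~\ref{thm:ARC_consequence} applies to the cocycle $(\alpha,A_E^{\mathrm{PAMO}})$.

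Next, suppose $E\in\sigma(H^{\mathrm{PAMO}}_{\alpha,\theta,\nu,w})$ satisfies $\mathcal{N}(E)\in\Z\alpha+\Z$. By \eqref{eq:N=2rho}, this is equivalent to $2\rho(\alpha,A_E^{\mathrm{PAMO}})\in\Z\alpha+\Z$, which is precisely case (3) of Theorem~\ref{thm:ARC_consequence}: there exists $B:\T\to\mathrm{PSL}(2,\R)$ analytic with $B(x+\alpha)^{-1}A_E^{\mathrm{PAMO}}(x)B(x)$ equal to a constant parabolic or elliptic matrix. I would then argue that this constant cannot be elliptic: an elliptic constant conjugate to $A_E^{\mathrm{PAMO}}$ would make $(\alpha,A_E^{\mathrm{PAMO}})$ uniformly hyperbolic after a further conjugacy only if the rotation number is rationally related to $\alpha$ --- more precisely, when $2\rho\in\Z\alpha+\Z$ the reduced constant cocycle has a rational fibered rotation number, so if it were elliptic it would be conjugate (by a constant) to a rational rotation, hence $A_E^{\mathrm{PAMO}}$ would be analytically reducible to the identity up to a finite cover; this forces $E$ to lie in a gap, contradiction with $E\in\sigma$. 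Therefore the reduced cocycle is the constant $\pm I$ (parabolic), i.e. $B(x+\alpha)^{-1}A_E^{\mathrm{PAMO}}(x)B(x)=\pm I$ since $\beta(\alpha)=0$ lets us take $w$ constant and a further argument shows $w=0$ at a non-degenerate gap-edge; in any case the cocycle is $C^\omega$-reducible to a parabolic constant.

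Now I would run the Rouch\'e/perturbation argument of \cite{AJ2}*{proof of Theorem~1.1}: since $(\alpha,A_E^{\mathrm{PAMO}})$ is reducible to $\pm I$, the family $E'\mapsto (\alpha,A_{E'}^{\mathrm{PAMO}})$ near $E$, conjugated by the same analytic $B$, becomes a small analytic perturbation of $\pm I$, and the fibered rotation number $\rho(\alpha,A_{E'}^{\mathrm{PAMO}})$ is monotonic and locally constant on the spectrum precisely at the energies where the perturbed cocycle is uniformly hyperbolic. One shows that for $E'$ on one side of $E$ the cocycle $(\alpha,A_{E'}^{\mathrm{PAMO}})$ is uniformly hyperbolic (the perturbed parabolic becomes hyperbolic), hence $E'\notin\sigma(H^{\mathrm{PAMO}})$, so $E$ is the endpoint of a component of $\R\setminus\sigma(H^{\mathrm{PAMO}})$. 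This uses the monotonicity of $\mathcal{N}$ (equivalently \eqref{eq:N=2rho} plus monotonicity of the rotation number in $E$) to guarantee the gap actually opens rather than the cocycle staying parabolic on a whole interval --- the latter is excluded because the spectrum has empty interior in no uniformly hyperbolic region, or more directly because $\mathcal N$ is strictly monotone across $E$ there.

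The main obstacle I anticipate is the dichotomy in the reducibility case (3): ruling out that the constant to which $A_E^{\mathrm{PAMO}}$ is reduced is elliptic (which would correspond to $E$ being an interior point of the spectrum with a reducible cocycle, a situation that genuinely occurs for non-gap-edge energies with $\mathcal N(E)\in\Z\alpha+\Z$ --- except that for the almost Mathieu-type operator one knows such energies are always gap edges). Handling this requires invoking that when $2\rho\in\Z\alpha+\Z$ and the cocycle is reducible, the reduced constant has zero Lyapunov exponent and rotation number in $\frac12\Z\alpha+\frac12\Z$; combined with the fact that an elliptic reduction with such a rotation number would make the cocycle analytically conjugate to a finite-order rotation, hence $L=0$ on a neighborhood and the cocycle NOT uniformly hyperbolic on either side, forcing $E$ to be isolated in the spectrum --- impossible for these operators whose spectrum is a Cantor set with no isolated points. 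The cleanest route is to quote \cite{AJ2}*{Theorem~4.1 and its proof} essentially verbatim, since our Theorem~\ref{thm:ARC_consequence} is designed to be the exact analogue of their Theorem~4.1; I would note this and keep the self-contained part to the verification that the hypotheses (subcriticality of the full PAMO spectrum, $\beta(\alpha)=0$) are met.
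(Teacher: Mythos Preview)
Your approach is essentially the same as the paper's: both establish reducibility of $(\alpha,A_E^{\mathrm{PAMO}})$ via Theorem~\ref{thm:ARC_consequence} and then defer the gap-edge conclusion to the argument of \cite{Puig} (as carried out in \cite{AJ2}). The paper is more terse---it argues by contradiction (if not reducible, then by Theorem~\ref{thm:ARC_consequence} the rotation number would be $c$-resonant, which is impossible when $2\rho\in\Z\alpha+\Z$ and $\beta(\alpha)=0$)---while you invoke case~(3) directly; these are equivalent.

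One point where your write-up is confused: your concern about ruling out an \emph{elliptic} reduced constant is misplaced. Once $2\rho(\alpha,A_E)\in\Z\alpha+\Z$ and $B\in C^\omega(\T,\mathrm{PSL}(2,\R))$ conjugates $A_E$ to a constant $C$, relation~\eqref{eq:rho2} forces $\rho(\alpha,C)\in\tfrac12\Z$, hence $\mathrm{tr}\,C=\pm2$; so $C$ is automatically parabolic or $\pm I$, never strictly elliptic. The genuine issue---which you do eventually identify---is whether the off-diagonal entry $w$ in $\left(\begin{smallmatrix}\pm1&w\\0&\pm1\end{smallmatrix}\right)$ vanishes (collapsed gap) or not. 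That is precisely what the Puig argument settles, and the paper, like you in your final paragraph, simply cites \cite{Puig} and \cite{AJ2} rather than reproducing it.
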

\begin{proof}
Due to the arguments in \cite{Puig}, it is enough to show that $(\alpha, A^{\mathrm{PAMO}}_E)$ is reducible. Suppose $(\alpha, A^{\mathrm{PAMO}}_E)$ is not reducible, then Theorem~\ref{thm:ARC_consequence} implies that $\rho(\alpha, A_E)$ is $c$-resonant for some $c>0$. But this contradicts the property $2\rho(\alpha, A_E)\in \Z\alpha+\Z$, which follows from \eqref{eq:N=2rho}.
\end{proof}

\end{document}